\theoremstyle{plain}
\newtheorem{thm}{Theorem}[section]
\newtheorem{thmph}{The\-o\-rem\textsuperscript{ph}}[section]
\newtheorem{prop}[thm]{Proposition}
\newtheorem{propph}[thm]{Propo\-si\-tion\textsuperscript{ph}}
\newtheorem{cor}[thm]{Corollary}
\newtheorem{lem}[thm]{Lemma}
\newtheorem{conj}[thm]{Conjecture}
\newtheorem{conjph}[thm]{Con\-jec\-ture\textsuperscript{ph}}
\newtheorem*{thm*}{Theorem}
\newtheorem*{conj*}{Conjecture}
\newtheorem*{claim*}{Claim}
\newtheorem*{prop*}{Proposition}
\theoremstyle{definition}
\newtheorem{defi}[thm]{Definition}
\newtheorem{defiph}[thm]{Def\-i\-ni\-tion\textsuperscript{ph}}
\newtheorem*{nota*}{Notation}
\newtheorem{rem}[thm]{Remark}
\newtheorem{remph}[thm]{Re\-mark\textsuperscript{ph}}
\newtheorem{ex}[thm]{Example}
\newtheorem{exph}[thm]{Ex\-am\-ple\textsuperscript{ph}}
\newtheorem{prob}[thm]{Problem}
\newcommand{\Q}{\mathbb{Q}}
\newcommand{\Z}{\mathbb{Z}}
\newcommand{\Ns}{\mathbb{Z}_{>0}}
\newcommand{\N}{\mathbb{Z}_{\geq0}}
\newcommand{\C}{\mathbb{C}}
\newcommand{\R}{\mathbb{R}}
\newcommand{\tr}{\operatorname{tr}}
\newcommand{\Aut}{\operatorname{Aut}}
\newcommand{\Hom}{\operatorname{Hom}}
\newcommand{\rk}{\operatorname{rk}}
\newcommand{\sign}{\operatorname{sign}}
\newcommand{\voa}{vertex operator algebra}
\newcommand{\Voa}{Vertex operator algebra}
\newcommand{\VOA}{Vertex Operator Algebra}
\newcommand{\vosa}{vertex operator subalgebra}
\newcommand{\fpvosa}{fixed-point vertex operator subalgebra}
\newcommand{\vac}{\textbf{1}}
\newcommand{\ch}{\operatorname{ch}}
\newcommand{\Ch}{\operatorname{Ch}}
\newcommand{\id}{\operatorname{id}}
\newcommand{\ee}{\mathfrak{e}}
\newcommand{\g}{\mathfrak{g}}
\newcommand{\hh}{\mathfrak{h}}
\newcommand{\Inn}{\operatorname{Inn}}
\renewcommand{\sl}{\mathfrak{sl}}
\newcommand{\Irr}{\operatorname{Irr}}
\newcommand{\orb}{\operatorname{orb}}
\newcommand{\strat}{strong\-ly rational}
\newcommand{\II}{I\!I}
\newcommand{\spn}{\operatorname{span}}
\newcommand{\Com}{\operatorname{Com}}
\newcommand{\no}{\,{\raise0.25em\hbox{$\mathop{\hphantom{\cdot}}\limits^{_{\circ}}_{^{\circ}}$}}\,}
\newcommand{\Rep}{\operatorname{Rep}}
\newcommand{\Vect}{\operatorname{Vect}}
\newcommand{\Cat}{\mathcal{C}}
\newcommand{\loc}{\mathrm{loc}}
\newcommand{\gDD}{\II_{12,0}(2_{\II}^{-10}4_{\II}^{-2})}
\newcommand{\gJJ}{\II_{6,0}(2_{\II}^{+4}4_{\II}^{-2}3^{+5})}
\newcommand*{\addFileDependency}[1]{
\typeout{(#1)}
\@addtofilelist{#1}
\IfFileExists{#1}{}{\typeout{No file #1.}}
}\makeatother
\begin{document}

\title[Equivalence Relations on Vertex Operator Algebras, II]{Equivalence Relations on Vertex Operator Algebras, II: Witt Equivalence and Orbifolds}
\author[Sven Möller and Brandon~C. Rayhaun]{Sven Möller\textsuperscript{\lowercase{a}} and Brandon~C. Rayhaun\textsuperscript{\lowercase{b}}}

\begin{abstract}

When can two \strat{} \voa{}s or 1+1d rational conformal field theories (RCFTs) be related by topological manipulations? For \voa{}s, the term ``topological manipulations'' refers to operations like passing to a conformal extension or restricting to a conformal subalgebra; for RCFTs, topological manipulations include operations like gauging (or orbifolding) a finite subpart of a generalized global symmetry or interpolating to a new theory via a topological line interface of finite quantum dimension.

Inspired by results in the theory of even lattices and tensor categories, we say that two \strat{} \voa{}s are \emph{Witt equivalent} if their central charges agree and if their modular tensor categories are Witt equivalent. Two RCFTs are said to be Witt equivalent if their central charges agree and if their associated 2+1d topological field theories can be separated by a topological surface. We argue that Witt equivalence is necessary for two theories to be related by topological manipulations. We conjecture that it is also sufficient, and give proofs in various special cases. 

We relate this circle of ideas to the problem of classifying RCFTs, and to lore concerning deformation classes of quantum field theories. We use the notion of Witt equivalence to argue, assuming the conjectural classification of unitary, $c=1$ RCFTs, that all of the finite symmetries of the $\mathrm{SU}(2)_1$ Wess--Zumino--Witten model are invertible. We also sketch a ``quantum Galois theory'' for chiral CFTs, which generalizes prior mathematical literature by incorporating non-invertible symmetries; we illustrate this non-invertible Galois theory in the context of the monster CFT, for which we produce two Fibonacci lines. Finally, we discuss $p$-neighborhood of \voa{}s, which is a special topological manipulation related to $\mathbb{Z}_p$-orbifolding.
\end{abstract}

\thanks{\textsuperscript{a}{Universität Hamburg, Hamburg, Germany}}
\thanks{\textsuperscript{b}{Yang Institute for Theoretical Physics, State University of New York, Stony Brook, NY, United States of America}}

\thanks{Email: \href{mailto:math@moeller-sven.de}{\nolinkurl{math@moeller-sven.de}}, \href{mailto:brandonrayhaun@gmail.com}{\nolinkurl{brandonrayhaun@gmail.com}}}

\maketitle

\setcounter{tocdepth}{1}
\tableofcontents
\setcounter{tocdepth}{2}

%%%%%%%%%%%%%%%%%%%%%%%%%%%%%%%
%%%%%%%%%%%%%%%%%%%%%%%%%%%%%%%

\pagebreak
\section{Introduction}

%%%%%%%%%%%%%%%%%%%%%%%%%%%%%%%

\subsection{For Mathematicians}

(An introduction for physicists follows in \autoref{sec:physintro}.) \Voa{}s and their categories of representations provide an axiomatization of chiral algebras and fusion rings of $2$-dimensional conformal field theories in physics. While originally introduced to solve the monstrous moonshine conjecture \cite{Bor86,FLM88,Bor92}, by now, \voa{}s are ubiquitous in mathematics and appear, e.g., in the context of algebraic geometry, group theory, Lie theory and the theory of automorphic forms. Some applications to mathematical physics will be addressed in \autoref{sec:physintro}.

The structure theory of (suitably regular) \voa{}s bears many similarities to the theory of even lattices over the integers. This is especially manifest when considering classification problems in not too large central charges, as studied in, e.g., \cite{MS23,MR23,Ray23,HM23}. For larger central charges, a complete classification is not realistic. However, one can ask how \voa{}s can be organized into equivalence classes under equivalence relations that are coarser than isomorphy. In a series of two papers, the first part being \cite{MR24a}, we aim to address this question.

\medskip

While in \cite{MR24a} we studied \voa{} analogs of the lattice genus \cite{Hoe03,Mor21}, in this work we generalize the coarser notions (see \autoref{fig:lat}) of Witt and rational equivalence to (\strat{}) \voa{}s (see also \cite{Hoe03,SY24}).

Recall (see \autoref{defi:lattice_witt}) that two even lattices $L$ and $M$ are \emph{Witt equivalent} if their discriminant forms $L'/L$ and $M'/M$ are Witt equivalent (\cite{Wit37}, see \autoref{defi:metric_witt}) as metric groups, i.e.\ as finite abelian groups equipped with a quadratic form, and if they have the same signature over $\R$, $\sign(L)=\sign(M)$. This can be reformulated as saying that the even lattices $L$ and $M$ admit same-rank (or finite-index) extensions to even lattices in the same genus (see \autoref{fig:wittlat}).

There is also the notion of rational equivalence (see \autoref{defi:lattice_rat}). Two even lattices $L$ and $M$ are said to be \emph{rationally equivalent} if $L\otimes_\Z\Q\cong M\otimes_\Z\Q$ as $\Q$-vector spaces equipped with quadratic forms. Equivalently, there is an even lattice~$K$ that is isomorphic to same-rank sublattices of $L$ and $M$ (see \autoref{fig:ratlat}).

In fact, it is not difficult to see that the notions of Witt and rational equivalence of even lattices are actually the same (see \autoref{prop:lat_ratwit} and \autoref{fig:lat}); however in view of the generalization to \voa{}s, which we shall discuss momentarily, it is useful to keep them separate.

\medskip

Both Witt and rational equivalence can be generalized to \strat{} \voa{}s. In parallel to the discussion in \cite{MR24a} regarding the lattice genus, the pattern that emerges is that each equivalence relation on even lattices (genus, Witt and rational equivalence) splits up into two equivalence relations for \voa{}s, one that
is a more ``classical'' analog, while the other
is a more honest ``quantum'' analog. The former will always be a refinement of the latter.
We depict this in our main diagram, \autoref{fig:summary}.

\begin{figure}[ht]
\begin{tikzpicture}[decoration={snake,post length=3pt,amplitude=1pt,segment length=4pt}]
\def \tt {-2};
\tikzstyle{equiv}=[draw,rectangle, rounded corners, minimum width=3cm, minimum height=1cm]
\tikzstyle{a}=[thick,->,>=stealth,shorten >=2pt,shorten <=2pt]
\tikzstyle{t}=[thick,<->,>=stealth,shorten >=2pt,shorten <=2pt]
\tikzstyle{b}=[thick,<->,>=stealth,shorten >=2pt,shorten <=2pt]
\tikzstyle{d}=[thick,dashed,->,>=stealth,shorten >=2pt,shorten <=2pt]
\tikzstyle{dd}=[thick,double equal sign distance, -Implies]
\tikzstyle{s}=[draw,decorate,->]
%%%%%%%%%%%%%%%%%%%
\node[equiv,align=center] (witt) at (0,0) {Witt Equiv.\ \\ (Def.\ \ref{defi:weakwittequivalence})};
\node[equiv,align=center] (orbifold) at (8.8,0) {Orbifold Equiv.\ \\ (Def.\ \ref{defi:orbequiv})};
\node[equiv,align=center] (rational) at (0,-5) {Strong Witt Equiv.\ \\ (Def.\ \ref{defi:strongwittequivalence})};
\node[equiv,align=center] (inner) at (8.8,-5) {Inner Orb.\ Equiv.\ \\ (Def.\ \ref{defi:innerorbequ})};
\node[equiv,align=center] (bulk) at (4.4,-2.5) {Bulk Genus \\
(Def.\ \ref{defi:bulkgenus})};
\node[equiv,align=center] (hyperbolic) at (4.4,-7.5) {Hyperbolic Genus \\ (Def.\ \ref{defi:hypgen})};
\node[] at (5.95,-.8) {Prop.\ \ref{prop:wittbulkorb}};
%%%%%%%%%%%%%%%%%%%
\node[] at (0,-6.5) {\emph{VOA Concepts}};
%%%%%%%%%%%%%%%%%%%
\draw[b] (5.7,-1.73) to [bend left=20] (5,-.15);
\draw[d] ([shift={(0,-.1)}]witt.east) to node[anchor=north,shift={(-1,0)}] {Conj.\ \ref{conj:wittorb}} ([shift={(0,-.1)}]orbifold.west);
\draw[a] ([shift={(0,.1)}]orbifold.west) to node[anchor=south] {Thm.\ \ref{prop:orbimplieswitt}}  ([shift={(0,.1)}]witt.east);
\draw[d] (bulk) -- node[anchor=west,shift={(-.5,-.4)}] {Conj.\ \ref{conj:anisobulkorb}} (orbifold);
\draw[a] (hyperbolic) -- node[anchor=west,shift={(-.5,-.4)}] {Prop.\ \ref{cor:orbifoldrelation}} (inner);
\draw[a] (rational) -- node[anchor=west] {Prop.\ \ref{prop:weakstrongwitt}} (witt);
\draw[a] (inner) -- node[anchor=east] {Rem.\ \ref{rem:innerorbtoorb}} (orbifold);
\draw[a] (hyperbolic) -- node[anchor=west,shift={(-.3,.35)}] {Prop.\ \ref{prop:hypstrongwitt}} (rational);
\draw[a] (bulk) -- node[anchor=east,shift={(.32,-.35)}] {Rem.\ \ref{rem:bulkimplieswitt}} (witt);
\draw[t] (rational) -- node[anchor=south,shift={(1.4,0)}] {Thm.\ \ref{thm:rationalinnerorb}} (inner);
\draw[draw=white,fill=white] (4.25,-5.15) rectangle ++(0.3,0.3);
\draw[a] (hyperbolic) -- node[anchor=east,shift={(0,.85)}] {\cite[Thm.\ \ref*{MR1cor:hypbulk}]{MR24a}} (bulk);
%%%%%%%%%%%%%%%%%%%
\path[s] (4.4,-10.5) to node[anchor=west,shift={(0.1,0)}] {\emph{Generalize}} (4.4,-8.5);
%%%%%%%%%%%%%%%%%%%
\node[] at (0,-10.75+\tt) {\emph{Lattice Concepts}};
\node[equiv,align=center] (lattice_witt) at (0,-9+\tt) {Witt Equiv.\ \\ (Def.\ \ref{defi:lattice_witt})};
\node[equiv,align=center] (lattice_rat) at (8.8,-9+\tt) {Rational\ Equiv.\ \\ (Def.\ \ref{defi:lattice_rat})};
\node[equiv,align=center] (lattice_gen) at (4.4,-11.5+\tt) {Genus \\ (Def.\ \ref{defi:latticegenus})};
%%%%%%%%%%%%%%%%%%%
\draw[a] (lattice_gen) -- (lattice_witt);
\draw[a] (lattice_gen) -- (lattice_rat);
\draw[t] (lattice_witt) -- (lattice_rat);
\end{tikzpicture}
\caption{Interrelations between the various equivalence relations on \strat{} \voa{}s studied in our work. The solid arrows are rigorously established. The dashed arrows are conjectural. 
}
\label{fig:summary}
\end{figure}
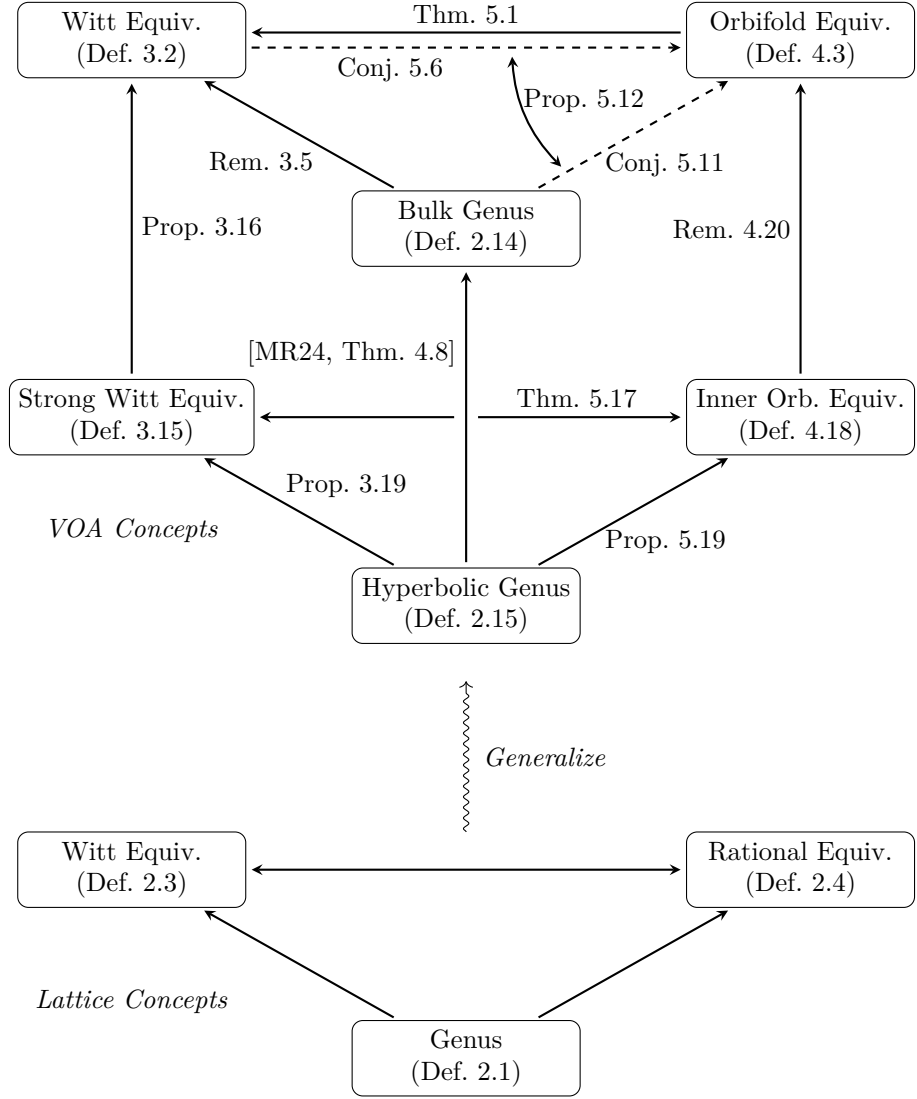

To understand Witt equivalence of \voa{}s, we first need to recall the corresponding notion for modular tensor categories (here, in the usual semisimple sense) \cite{DMNO13,FSV13,Kon14}.
Two modular tensor categories $\Cat$ and $\mathcal{D}$ are \emph{Witt equivalent} if there is an equivalence of ribbon fusion categories of the form
\begin{equation*}
\Cat\boxtimes\overline{\mathcal{D}}\cong Z(\mathcal{F})
\end{equation*}
for some spherical fusion category $\mathcal{F}$ (see \autoref{defn:Wittequivalencecategories}).  We denote the Witt class of a modular tensor category $\Cat$ by $[\Cat]$. Without motivating this definition here much further, we note that it reduces exactly to Witt equivalence of metric groups, \autoref{defi:metric_witt}, if one considers the (pseudo-unitary) pointed modular tensor categories $\Cat(D)$, which are in bijection with metric groups $D$ \cite{JS93,EGNO15}.

Witt equivalence generalizes to \voa{}s as follows. (Take note of the formal analogy to Witt equivalence of lattices in \autoref{defi:lattice_witt}.) Recall that the representation category $\Rep(V)$ of a \strat{} \voa{} $V$ is a modular tensor category \cite{Hua08b}. Two \strat{} \voa{}s $V$ and $V'$ are \emph{(weakly) Witt equivalent} if they have the same central charge, $c(V)=c(V')$, and if $\Rep(V)$ and $\Rep(V')$ are Witt equivalent as modular tensor categories (see \autoref{defi:weakwittequivalence}). We denote the Witt class of a \strat{} \voa{} by the pair $([\Rep(V)],c(V))$.

Based on the correspondence between conformal extensions of \strat{} \voa{}s and condensable algebras in modular tensor categories \cite{KO02,HKL15,CMSY24} (see \autoref{prop:algext}), we can reformulate Witt equivalence as follows, also assuming for simplicity that the \voa{}s are \emph{positive}, i.e.\ that all irreducible modules except for the adjoint module are positively graded. This is depicted in \autoref{fig:witt}.

\smallskip

\noindent \textbf{\autoref{prop:alternativecharacterizationWittgenus}.} \emph{Two positive, \strat{} \voa{}s $V$ and~$V'$ are Witt equivalent if and only if there exist \strat{} \voa{}s $W$ and $W'$ in the same bulk genus that contain $V$ and $V'$, respectively, as conformal subalgebras (implying that $W$ and $W'$ are positive as well).}

\smallskip

While the ``quantum'' (or weak) version of Witt equivalence is based on the corresponding notion for modular tensor categories, the ``classical'' (or strong) version, by contrast, applies Witt equivalence only to the associated lattice of the \voa{}. (This dichotomy is analogous to the difference between bulk and hyperbolic genus, as given in \cite[\autoref*{MR1defi:bulkgenus}]{MR24a} and \cite[\autoref*{MR1defi:commgenus}]{MR24a}, which becomes even more apparent in the unpointed case \cite[\autoref*{MR1cor:unpointed}]{MR24a}.) Recall from \cite{Mas14,HM23} that any \strat{} \voa{} $V$ can be written as a simple-current extension of a dual pair $C\otimes V_L\subset V$, where $L$ is the associated lattice of $V$, $V_L$ the corresponding lattice \voa{} and $C$ the Heisenberg commutant (with $C_1=\{0\}$). Now, two \strat{} \voa{}s are \emph{strongly Witt equivalent} if their Heisenberg commutants are isomorphic and their associated lattices are Witt equivalent as lattices (see \autoref{defi:strongwittequivalence}).

\medskip

Rational equivalence of lattices generalizes to what we call orbifold equivalence of \voa{}s (see \autoref{defi:orbequiv}). Two \strat{} \voa{}s $V$ and $V'$ are \emph{orbifold equivalent} if there is a sequence of \strat{} \voa{}s $V=V_1,V_2,\dots,V_n=V'$ such that $V_i$ and $V_{i+1}$ share a common (up to isomorphism) \strat{} conformal subalgebra $W_i$,
as depicted in \autoref{fig:orbequiv}.

This definition is supposed to capture what it means for two \strat{} \voa{}s to be related under (repeated) ``generalized orbifolds'', such as gauging a fusion category of topological line operators. However, as these notions are not yet mathematically rigorous, we shall content ourselves with the above definition.

We point out (see \autoref{rem:nontrans}) that being related by a common \strat{} conformal subalgebra is likely not an equivalence relation, forcing us to explicitly build in the transitivity into our definition. The potential counterexample follows from the conjectural classification of the positive, \strat{} \voa{}s of central charge $c=1$ (see \autoref{ex:cc1}) and is ultimately a consequence of the famous ADE-classification of the finite subgroups of $\mathrm{SL}(2,\C)$, which correspond to Du Val singularities.

There is also the ``classical'' (or strong) version of orbifold equivalence (see \autoref{defi:innerorbequ} and \autoref{rem:innerorbtrans}). Two \strat{} \voa{}s $V$ and $V'$ are \emph{inner orbifold equivalent} if they share a \strat{} conformal subalgebra $W$ such that the concrete realization of $W$ inside $V$ (and similarly for $V'$) can be chosen such that $W$ and $V$ have the same Heisenberg commutant. Here, it actually suffices to demand that $V$ and $V'$ be connected ``in one step'' in order to obtain an equivalence relation. We have chosen the terminology ``inner'' because, in some special cases, inner orbifold equivalence reduces to a (group-like) orbifold construction under inner automorphisms.

\medskip

We then study how Witt and rational equivalence of \voa{}s are related. The following is the main conjecture of this text, where we again for simplicity assume that the \voa{}s are positive.

\smallskip

\noindent\textbf{\autoref{conj:wittorb}.} \emph{Two positive, \strat{} \voa{}s belong to the same Witt class if and only if they are orbifold equivalent.}

\smallskip

A variant of this in the special case of holomorphic \voa{} was conjectured in \cite{Joh21}.

With the machinery developed in this text and in \cite{MR24a} it is relatively straightforward to prove the reverse implication in the conjecture.

\smallskip

\noindent\textbf{\autoref{prop:orbimplieswitt}.} \emph{If two \strat{} \voa{}s $V$ and $V'$ are orbifold equivalent, then they are Witt equivalent.}

\smallskip

We expect the forward direction in \autoref{conj:wittorb} to be much harder to establish. Indeed, we argue in \autoref{rem:fakemoonshine} that, as a very narrow special case, this would likely imply the uniqueness of the moonshine module \cite{FLM88}, \autoref{conj:moonshineuniqueness}.

Instead, we prove the following partial result, which shows that strongly Witt equivalent \voa{}s are necessarily orbifold equivalent, and in fact inner orbifold equivalent.

\smallskip

\noindent\textbf{\autoref{thm:rationalinnerorb}.} \emph{Two \strat{} \voa{}s $V$ and $V'$ are inner orbifold equivalent if and only if they are strongly Witt equivalent. }

\smallskip

We also argue that, in order to prove \autoref{conj:wittorb}, it suffices to show that any two positive, \strat{} \voa{}s in the same bulk genus are orbifold equivalent (see \autoref{conj:anisobulkorb} and \autoref{prop:wittbulkorb}).

The advantage of the formulation in \autoref{conj:anisobulkorb} is that it is possible to collect evidence in its favor for small central charges. Indeed, it follows from \cite{DM04,ELMS21,MS23} that all \strat{}, \emph{holomorphic} \voa{}s~$V$ (i.e.\ with $\Rep(V)\cong\Vect$) of central charge $c\leq24$, except for possible ``fake'' moonshine modules, are orbifold equivalent. Hence:

\smallskip

\noindent\textbf{\autoref{prop:witttrivialorbifoldingVOAs}.} \emph{Assuming \autoref{conj:moonshineuniqueness} on the uniqueness of the moonshine module, \autoref{conj:wittorb} is true for the infinitely many positive, \strat{} \voa{}s belonging to the Witt classes $([\Vect],c)$ with $c\leq 24$.}

\smallskip

Evidently, bulk equivalence for \voa{}s is a refinement of Witt equivalence (see \autoref{rem:bulkimplieswitt}), and conjecturally, as we just wrote, one of orbifold equivalence as well (see \autoref{conj:anisobulkorb}). Moreover, we show in \autoref{prop:hypstrongwitt} and \autoref{thm:rationalinnerorb} that hyperbolic equivalence implies strong Witt and inner orbifold equivalence, thus completing the main diagram in \autoref{fig:summary}.

\medskip

We study the concepts introduced in this work in a number of instructive examples. We consider the Witt classes $([\Vect],c)$ with Witt trivial representation categories in \autoref{ex:trivialwitt}. Then, in \autoref{ex:discretewitt}, we consider the positive, \strat{} \voa{}s of central charge $c<1$, i.e.\ the discrete series Virasoro minimal models. In \autoref{ex:cc1}, we study the positive, \strat{} \voa{}s of central charge $c=1$, conjecturally given by rank-$1$ lattice \voa{}s and their conformal subalgebras.

The (conjectural) relation between Witt and orbifold equivalence is explored for the positive, \strat{} \voa{}s of central charge $c=1$ in \autoref{ex:cc1wittorbifold}. Moreover, we consider the \strat{}, holomorphic \voa{}s of central charge $c=16,24$ in \autoref{ex:c=16orbifolds} and \autoref{ex:deepholes}.

\medskip

In \autoref{sec:wittgensym}, we discuss the relationship between Witt equivalence and generalized global symmetries, through a concept called symmetry-subalgebra duality. While largely written at a physics level of rigor, we expect that much of this can be made into mathematics. In particular, we argue that all the finite symmetries of the $\mathrm{SU}(2)_1$ Wess--Zumino--Witten model
are invertible \cite{CLR24} (see \autoref{exph:SU(2)1symmetries}), and we also find a Fibonacci symmetry of the moonshine module $V^\natural$ (see \autoref{exph:fibmonster}, and also \cite{FH24}, which appeared independently during the work on this manuscript).

\medskip

Finally, in \autoref{sec:neighborhood} we study $p$-neighborhood of \voa{}s. This is a specialization of orbifold equivalence where we allow two \strat{} \voa{}s $V$ and $V'$ to be related by \emph{one} common \strat{} conformal subalgebra $W$ of index~$p$ for a prime $p$ (see \autoref{defi:neighbor}).

This definition is inspired by the notion of neighborhood for lattices \cite{Kne57}. For even lattices, the connected component of the $p$-neighborhood graph for a fixed prime $p$ is usually identical to the genus (see \autoref{sec:lat}).

However, for \strat{} \voa{}s the picture is less clear. For simplicity, we restrict to the case of holomorphic \voa{}s. Naively, based on the dichotomy between orbifold and Witt equivalence and their inner versions (see \autoref{fig:summary}), one would expect that iterated $p$-neighborhood corresponds to the bulk genus and iterated inner $p$-neighborhood to the hyperbolic genus. However, the $p$-neighborhood graph of a bulk genus of holomorphic \voa{}s is usually disconnected (see \autoref{ex:schellekens3} and \cite{Mon98,HM23}), while one might suspect that it becomes connected if one is allowed to mix the primes~$p$ (see \autoref{rem:neighborbulk}).

The picture becomes clearer in the inner case. In \autoref{conj:hypgenpneighbor} we essentially
propose that, except for finitely many primes, the inner $p$-neighborhood graph of a hyperbolic genus is connected for a fixed prime~$p$. We are able to prove a slightly weaker statement in \autoref{prop:hypgenpneighbor2}.

In \autoref{subsec:neighborexamples}, we provide empirical evidence for many of the formulated conjectures pertaining to $p$-neighborhood.

%%%%%%%%%%%%%%%%%%%%%%%%%%%%%%%

\subsection{For Physicists}\label{sec:physintro}

When are two quantum field theories (QFTs) related by topological manipulations? In this work, performing ``topological manipulations'' on a QFT $\mathcal{T}$ refers to operations like gauging a finite subpart of a generalized global symmetry of $\mathcal{T}$, or (relatedly) constructing a topological interface of finite quantum dimension between $\mathcal{T}$ and another QFT.

It is generally challenging to answer this question for a given pair of theories $\mathcal{T}$ and $\mathcal{T}'$. For example, the full symmetry structure of $\mathcal{T}$ will often be unknown, and
even for its symmetries that have been identified, it can be quite involved to determine all the ways to gauge them, and whether any of these gaugings will connect $\mathcal{T}$ with $\mathcal{T}'$. 

One natural way to circumvent such difficulties is to attempt to define readily computable physical invariants of a QFT that do not change when it is subjected to topological manipulations. Should one succeed, the matching of the invariants of $\mathcal{T}$ with those of $\mathcal{T}'$ would furnish a necessary condition for $\mathcal{T}'$ to be obtained from $\mathcal{T}$ by gauging. Ideally, these invariants would further be \emph{complete} in the sense that matching them would also be sufficient to demonstrate that $\mathcal{T}$ and $\mathcal{T}'$ are connected by topological manipulations, but this is much harder to achieve.

In the context of 2+1d topological quantum field theories (TQFTs), it was recently rigorously established that two theories can be related by a kind of generalized orbifold procedure \cite{MR23b} if and only if they can be separated by a topological interface \cite{Mul24}. Whether two TQFTs admit a topological interface between them is in turn sharply characterized within the framework of tensor categories via the notion of \emph{Witt equivalence} \cite{DNO13,DMNO13}, see \autoref{defn:Wittequivalencecategories}. 

Witt equivalence is useful because it can be practically tested. For instance, there are computable observables of 2+1d TQFTs, known as higher central charges \cite{NSW19,NRWZ22,KKOSS22}, which define quantities that are constant along a Witt class. Therefore, given a pair of TQFTs, one can concretely check whether they have the same higher central charges: if some of the higher central charges disagree, then one can rigorously conclude that the theories cannot be related by topological manipulations, whereas if all of them agree, the test is generally inconclusive.\footnote{However, in the case of abelian TQFTs, the matching of just a finite number of higher central charges is both necessary and sufficient to establish that a pair of theories can be separated by a topological interface \cite{KKOSS22}.}

For 1+1d (not necessarily topological) QFTs, it was also recently shown that two theories are related by orbifolding \cite{DHVW85,DHVW86,BT18} if and only if they can be separated by a topological line interface of finite quantum dimension \cite{DLWW24}.\footnote{As we discuss in \autoref{sec:orb}, in order for this statement to be true, one must expand their notion of ``orbifolding'' to include the gauging of algebra objects of finite quantum dimension inside categories with \emph{infinitely} many simple topological line operators.} However, 1+1d QFTs are much more complicated objects than 2+1d TQFTs: for example, an infinite amount of data is generally required to determine a 1+1d QFT (like the scaling dimensions and operator product expansions of local operators in the case of conformal field theories), whereas the specification of a 2+1d topological phase requires only finitely many numbers (fusion coefficients, F-symbols, R-matrices, and so on). Correspondingly, criteria as tractable as Witt equivalence are generally lacking, and therefore testing whether two 1+1d QFTs are related by orbifolding, or can be separated by a topological line interface, is typically a hard problem, for the reasons described a few paragraphs ago.
For example, there are almost no conformal field theories (CFTs) for which the \emph{full} category of topological line operators is known, setting aside the minimal models.\footnote{It is often erroneously assumed that the category of Verlinde lines of an RCFT is the full category of topological line defects of the theory. A simple counterexample is the $\mathrm{SU}(2)_1$ Wess--Zumino--Witten model: the unique non-trivial Verlinde line generates just a $\Z_2$ subgroup of the $\mathrm{SO}(4)$ symmetry group.}

\medskip

It is reasonable to expect that the situation is more favorable for 1+1d \emph{rational} conformal field theories (RCFTs), given their close relationship with 2+1d TQFTs \cite{Wit89,EMSS89}. To explain this relationship, recall that any 1+1d CFT possesses a left-moving chiral algebra $V$, consisting of the holomorphic local operators of the theory, as well as a right-moving chiral algebra $W$, consisting of the anti-holomorphic local operators of the theory \cite{Zam85,MS89}. In an RCFT, $V$ can be thought of as living on the boundary of a 2+1d TQFT, and likewise for $W$. The defining data of a 2+1d TQFT is a tuple $(\mathcal{C},c)$ consisting of a modular tensor category (MTC) $\mathcal{C}$, which encodes the algebraic properties of the anyons, and a rational number $c$ congruent to the chiral central charge of $\mathcal{C}$ modulo $8$ \cite{BK01,Tur10}. (We sometimes abusively abbreviate $(\mathcal{C},c)$ to just $\mathcal{C}$ when specifying $c$ is unimportant.) The 2+1d TQFT supporting $V$ on its boundary is $(\Rep(V),c(V))$, where $c(V)$ is the central charge of $V$ and $\Rep(V)$ is the representation category of $V$, which is known to admit the structure of a modular tensor category \cite{MS89,Hua08b}. Similar remarks apply to $W$. 

The full $S^1$ Hilbert space $\mathcal{H}$ of an RCFT decomposes into finitely many irreducible representations of the chiral algebras, 
\begin{equation*}
\mathcal{H}\cong \bigoplus_{i,j} M_{ij} V^i\otimes\overline{W}^j,
\end{equation*}
and $\mathcal{H}$ actually defines a Lagrangian algebra object in $\Rep(V)\boxtimes \overline{\Rep(W)}$ (see \autoref{sec:cat} for the definition of Lagrangian algebra). The defining data of an RCFT are $V$, $W$ and this Lagrangian algebra $\mathcal{H}$ \cite{FRS02}, and we compactly denote the corresponding CFT using the symbol ${_V}\mathcal{H}_W$, or simply $\mathcal{H}$ if we do not need to specify the precise chiral algebras.

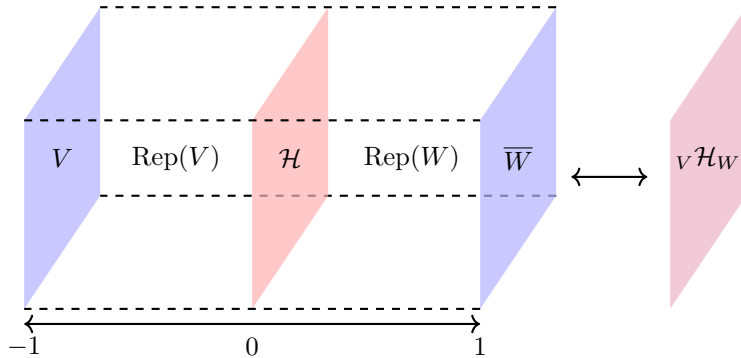
\begin{figure}[ht]
\begin{center}
\begin{tikzpicture}
\filldraw[black,fill=blue!30,, fill opacity=.7,  thick,draw=none] (-6,0) -- (-6,-1-1.5) -- (-5,-1) -- (1-6,1.5)    -- cycle;
\draw[black, thick,dashed](1,-1)--(-5,-1);
\filldraw[black,fill=red!30, fill opacity=.7, thick,draw=none] (-3,0) -- (1-3,1.5) -- (1-3,-1) -- (-3,-1-1.5) -- cycle;
\filldraw[black,fill=blue!30,fill opacity=.7,  thick,draw=none] (-3+3,0) -- (1-3+3,1.5) -- (1-3+3,-1) -- (-3+3,-1-1.5) -- cycle;
\draw[black,  thick,dashed] (-6,0)--(0,0);
\draw[black,  thick,dashed] (0,-2.5) -- (-6,-1-1.5);
\draw[black, thick,dashed] (1,1.5)--(1-6,1.5);
\node[] at (-.5-.4,-.5) {$\Rep(W)$};
\node[] at (-.5-4+.5,-.5) {$\Rep(V)$};
\node[] at (-2.5,-.5) {$\mathcal{H}$};
\node[] at (-2.5-3,-.5) {$V$};
\node[] at (-2.5+3,-.5) {$\overline{W}$};
\draw[<->,thick] (-6,-2.7) -- (0,-2.7);
\node[] at (-6,-3) {$-1$};
\node[] at (-3,-3) {$0$};
\node[] at (0,-3) {$1$};

\tikzstyle{s}=[draw,decorate,<->,thick]
\path[s] (1.2,-.75) to node[anchor=south,shift={(0.1,0)}] {} (2.2,-.75);
\filldraw[black,fill=purple!30,fill opacity=.7,  thick,draw=none] (-3+6-.5,0) -- (1-3+6-.5,1.5) -- (1-3+6-.5,-1) -- (-3+6-.5,-1-1.5) -- cycle;
\node[] at (-2.5+6-.5,-.5) {${_V}\mathcal{H}_W$};
\end{tikzpicture}
\end{center}
\caption{The 2+1d representation of an RCFT ${_V}\mathcal{H}_W$.}\label{fig:KS}
\end{figure}

There is a useful 2+1d representation of a 1+1d RCFT \cite{FRS02,KS11}, depicted in \autoref{fig:KS}. For ease of exposition, let us restrict to theories with vanishing gravitational anomaly, $c(V)-c(W)=0$. The ingredients of this picture are as follows.
\begin{enumerate}
\item One works with a spacetime manifold of the form $\Sigma\times [-1,1]$, with $\Sigma$ a $2$-dimensional surface. The region $\Sigma\times [0,1]$ is in the phase defined by the TQFT $\Rep(W)$, and $\Sigma\times [-1,0]$ is in the phase defined by $\Rep(V)$.
\item At the boundaries $\Sigma\times\{1\}$ and $\Sigma\times \{-1\}$, we impose gapless chiral boundary conditions defined by $\overline{W}$ and $V$, respectively.
\item On the surface $\Sigma \times \{0\}$, a topological interface determined by $\mathcal{H}$ interpolates between the two configurations.\footnote{A Lagrangian algebra of $\Rep(V)\boxtimes \overline{\Rep(W)}$ defines a boundary condition of the corresponding TQFT. By unfolding along this boundary, one obtains a topological interface between $\Rep(V)$ and $\Rep(W)$.}
\end{enumerate} 
Upon dimensional reduction of this configuration to 1+1d along the interval direction, one recovers the RCFT ${_V}\mathcal{H}_W$.

Now, the existence of the topological interface determined by $\mathcal{H}$ demonstrates that, by definition, $\Rep(V)$ is Witt equivalent to $\Rep(W)$. This allows us to introduce the following important invariant of ${_V}\mathcal{H}_W$,
\begin{equation*}
\mathsf{W}(\mathcal{H})\coloneqq [\Rep(V)]=[\Rep(W)],
\end{equation*}
where $[\mathcal{C}]$ is the Witt class of the modular tensor category $\mathcal{C}$. This invariant enters prominently in the following proposal, which is one of our main results.

\smallskip

\noindent\textbf{\autoref{thm:orbimplieswittRCFT}.} \emph{The triple $(\mathsf{W}(\mathcal{H}),c(V),c(W))$ associated with an RCFT ${_V}\mathcal{H}_W$ is invariant under topological manipulations. Thus, if $(\mathsf{W}(\mathcal{H}),c(V),c(W))$ is not equal to $(\mathsf{W}(\mathcal{H}'),c(V'),c(W'))$, then the RCFTs ${_V}\mathcal{H}_W$ and ${_{V'}}\mathcal{H}'_{W'}$ cannot be related by gauging a finite subpart of a generalized global symmetry, or separated by a topological line interface of finite quantum dimension.}

\smallskip

The power of \autoref{thm:orbimplieswittRCFT} rests on the fact that the triple $(\mathsf{W}(\mathcal{H}),c(V),c(W))$ is a very coarse shadow of a full RCFT, and correspondingly it is far easier to work with. Furthermore, category-theoretic techniques from 2+1d TQFT can be used to compute $\mathsf{W}(\mathcal{H})$, and matching this invariant is generally more effective than trying to discover whether there exists a sequence of orbifolds, or a topological line interface, which interpolates between two given 1+1d QFTs. 

As an example, \autoref{thm:orbimplieswittRCFT} shows that there is no finite symmetry of the $\mathrm{SU}(2)_1$ Wess--Zumino--Witten model that can be orbifolded to obtain the (bo\-son\-ized) free Dirac fermion, nor a topological interface of finite quantum dimension interpolating between the two, in spite of the fact that both theories are RCFTs with left- and right-moving central charges equal to $1$. This follows immediately from the fact that the Witt class of $\mathrm{SU}(2)_1(=\mathrm{U}(1)_2)$ Chern--Simons theory is different from that of $\mathrm{U}(1)_4$ Chern--Simons theory. See the analysis of the $\mathrm{U}(1)_{2N_1}\times \mathrm{U}(1)_{-2N_2}$ TQFT in Section~2.2 of \cite{KKOSS22} in terms of higher central charges, and \autoref{exph:orbifoldingc=1} for an alternative approach.

Let us give some intuition for where \autoref{thm:orbimplieswittRCFT} comes from. We present a heuristic physical picture here, and relegate a more mathematically rigorous treatment to the main text. The basic starting point is the assertion that any configuration that one can draw involving 1+1d RCFTs can be inflated into a suitable 2+1d configuration compactified on an interval, which moreover recovers the starting configuration upon dimensional reduction back to 1+1d. \autoref{fig:KS} is the most basic example of this.

Suppose then that two RCFTs ${_V}\mathcal{H}_W$ and ${_{V'}}\mathcal{H}'_{W'}$ can be related by orbifolding. By starting with ${_V}\mathcal{H}_W$ and performing the orbifold in half of spacetime, one obtains a topological line interface $I$ between ${_V}\mathcal{H}_W$ and ${_{V'}}\mathcal{H}'_{W'}$. Upon inflating to 2+1d, one expects to find the following.\footnote{This proposal is similar in spirit to the one in \cite{CRZ24}, that (not necessarily topological) line interfaces between 1+1d QFTs with symmetry inflate to topological interfaces between the corresponding symmetry TQFTs. See also \cite{Cop24,CDHHT24,CHO24,BCPS24,GHU24} for related ideas.}
\begin{enumerate}
\item The topological line interface $I$ inflates into a pair of $2$-dimensional topological interfaces $\mathcal{I}_1$ and $\mathcal{I}_2$, where $\mathcal{I}_1$ interpolates between $\Rep(V)$ and $\Rep(V')$, and $\mathcal{I}_2$ interpolates between $\Rep(W)$ and $\Rep(W')$.
\item The topological interface $\mathcal{I}_1$ terminates on a topological line junction $I_1$ between the gapless chiral boundary conditions defined by $V$ and $V'$, and likewise the topological interface $\mathcal{I}_2$ terminates on a topological line junction $I_2$ between the boundaries defined by $W$ and $W'$.
\item The interfaces $\mathcal{I}_1$ and $\mathcal{I}_2$ meet in the middle of the interval, where they form a kind of four-way junction with the topological interfaces $\mathcal{H}$ and $\mathcal{H}'$. We do not bother to give a name to the line junction that decorates the intersection of these four interfaces.
\end{enumerate}
See \autoref{fig:KSinterface} for a picture of this setup. The existence of, say, the interface $\mathcal{I}_1$ shows that $\Rep(V)$ and $\Rep(V')$ are Witt equivalent, and hence we recover the condition in \autoref{thm:orbimplieswittRCFT} that $\mathsf{W}(\mathcal{H})=\mathsf{W}(\mathcal{H}')$. The fact that $c(V)=c(V')$ and $c(W)=c(W')$ in order for the RCFTs $\mathcal{H}$ and $\mathcal{H}'$ to be related by topological manipulations follows from the standard fact that orbifolding does not change the central charges of a theory.

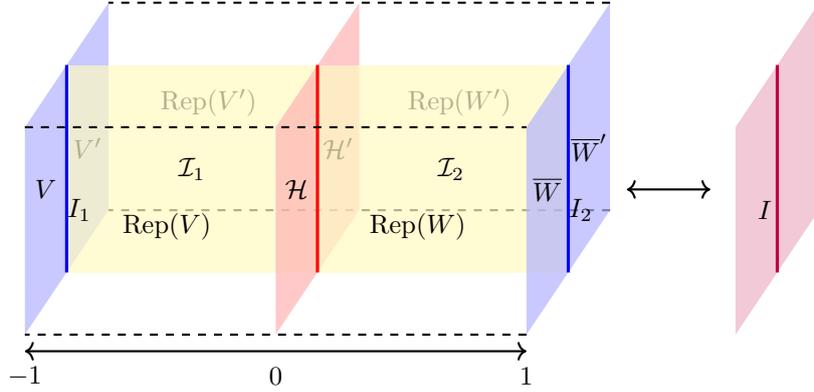
\begin{figure}[ht]
\begin{center}
\begin{tikzpicture}[scale=1.1]
\draw[black,fill=blue!30,fill opacity=.7, thick,draw=none] (-6,0) -- (-6,-1-1.5) -- (-5,-1) -- (1-6,1.5)    -- cycle;

\node[] at (-2.5-3+.25,-.5+.25) {$V'$};
\draw[black, thick,dashed](1,-1)--(-5,-1);
\node[] at (-.5-4+.5+.2,-.5-.5+1.3) {$\Rep(V')$};
\node[] at (-.5-4+.5+.2+3,-.5-.5+1.3) {$\Rep(W')$};

\draw[draw=black,thick,fill=yellow!30,fill opacity=.7,draw=none] (-6+.5,.75) -- (.5-3,.75) -- (.5-3,-2.5+.75) -- (-6+.5,-2.5+.75) -- cycle;
\draw[very thick,blue] (-6+.5,-2.5+.75)--(-6+.5,.75);
\filldraw[black,fill=red!30,fill opacity=.7, thick,draw=none] (-3,0) -- (1-3,1.5) -- (1-3,-1) -- (-3,-1-1.5) -- cycle;
\node[] at (-2.5+.25,-.5+.25) {$\mathcal{H}'$};
\draw[draw=black, thick,fill=yellow!30,fill opacity=.7,draw=none] (-3+.5,.75) -- (.5,.75) -- (.5,-2.5+.75) -- (-3+.5,-2.5+.75) -- cycle;
\draw[very thick,red] (-3+.5,-2.5+.75)--(-3+.5,.75);\filldraw[black,fill=blue!30,fill opacity=.7,  thick,draw=none] (-3+3,0) -- (1-3+3,1.5) -- (1-3+3,-1) -- (-3+3,-1-1.5) -- cycle;

\draw[black,  thick,dashed] (-6,0)--(0,0);
\draw[black,  thick,dashed] (0,-2.5) -- (-6,-1-1.5);
\draw[black, thick,dashed] (1,1.5)--(1-6,1.5);
\node[] at (-.5-.4,-.5) {$\mathcal{I}_2$};
\node[] at (-.5-4+.5,-.5) {$\mathcal{I}_1$};
\node[] at (-2.5-.25,-.5-.25) {$\mathcal{H}$};
\node[] at (-2.5-3-.25,-.5-.25) {$V$};
\node[] at (-2.5+3-.25,-.5-.25) {$\overline{W}$};
\node[] at (-2.5+3+.25,-.5+.25) {$\overline{W}'$};
\node[] at (-.5-4+.5-.5+.2,-.5-.5-.2) {$\Rep(V)$};
\node[] at (-.5-4+.5-.5+.2+3,-.5-.5-.2) {$\Rep(W)$};
\node[] at (-6+.65,-1) {$I_1$};
\node[] at (.65,-1) {$I_2$};
\draw[<->,thick] (-6,-2.7) -- (0,-2.7);
\node[] at (-6,-3) {$-1$};
\node[] at (-3,-3) {$0$};
\node[] at (0,-3) {$1$};

\tikzstyle{s}=[draw,decorate,<->,thick]
\path[s] (1.2,-.75) to node[anchor=south,shift={(0.1,0)}] {} (2.2,-.75);
\filldraw[black,fill=purple!30, fill opacity=.7, thick,draw=none] (-3+6-.5,0) -- (1-3+6-.5,1.5) -- (1-3+6-.5,-1) -- (-3+6-.5,-1-1.5) -- cycle;
\draw[very thick,purple] (-3+6-.5+.5,0+.75)--(-3+6-.5+.5,.75-1-1.5);
\node[] at (-2.5+6-.15-.5,-.5-.5) {$I$};
\draw[very thick,blue] (.5,-2.5+.75)--(.5,.75);

\end{tikzpicture}
\end{center}
\caption{The 2+1d representation of a topological line interface $I$ between two 1+1d RCFTs ${_V}\mathcal{H}_W$ and ${_{V'}}\mathcal{H}'_{W'}$.}\label{fig:KSinterface}
\end{figure}

It is natural to wonder whether $(\mathsf{W}(\mathcal{H}),c(V),c(W))$ defines a \emph{complete} invariant. In other words, is it guaranteed that ${_V}\mathcal{H}_W$ and ${_{V'}}\mathcal{H}'_{W'}$ can be related by topological manipulations if their invariants $(\mathsf{W}(\mathcal{H}),c(V),c(W))$ and $(\mathsf{W}(\mathcal{H}'),c(V'),c(W'))$ match? 

The reason that this question is difficult to answer can be seen in 2+1d. Indeed, the matching of these invariants guarantees that one can construct topological interfaces $\mathcal{I}_1$ and $\mathcal{I}_2$ as in \autoref{fig:KSinterface}, but it is much more subtle to determine whether $\mathcal{I}_1$ and $\mathcal{I}_2$ possess the necessary topological line junctions $I_1$ and $I_2$ to complete the picture, which is a more dynamical question. Nonetheless, we codify our expectation into the following conjecture.

\smallskip

\noindent\textbf{\autoref{conj:wittorbRCFT}.} \emph{If two RCFTs ${_V}\mathcal{H}_W$ and ${_{V'}}\mathcal{H}'_{W'}$ have matching invariants $(\mathsf{W}(\mathcal{H}),c(V),c(W))$ and $(\mathsf{W}(\mathcal{H}'),c(V'),c(W'))$, respectively, then one can be obtained from the other via topological manipulations. More succinctly, the triple $(\mathsf{W}(\mathcal{H}),c(V),c(W))$ defines a complete invariant.}

\smallskip

There are a number of reasons to believe this conjecture. 

First of all, we show in \autoref{prop:witttrivialorbifolding}, using the results of \cite{MS23}, that \autoref{conj:wittorbRCFT} is true in the case that $c(V),c(W)\leq 24$ and $\mathsf{W}(\mathcal{H})=[\Vect]$, if one assumes the widely believed conjecture concerning the uniqueness of the moonshine module, \autoref{conj:moonshineuniqueness}. In particular, for sufficiently low central charges, \autoref{prop:witttrivialorbifolding} is true for holomorphically factorizing CFTs, and more generally any RCFT whose, e.g., left-moving chiral algebra is attached to a bulk 2+1d TQFT that admits a topological boundary condition. 

Second of all, it turns out that by carefully translating a certain mathematical result about even lattices (\autoref{prop:lat_ratwit}) into physics, one can deduce that \autoref{conj:wittorbRCFT} holds for any pair of \emph{free} RCFTs. (By free RCFT ${_V}\mathcal{H}_W$, we mean a theory for which $c(V)+c(W)$ is equal to the number of linearly independent $\mathrm{U}(1)$ symmetries.) We generalize this idea to obtain the following partial result in the direction of \autoref{conj:wittorbRCFT}.

\smallskip

\noindent\textbf{\autoref{thmph:rcftinnerorb=stronglywitt}.} \emph{\autoref{conj:wittorbRCFT} holds for any two RCFTs ${_V}\mathcal{H}_W$ and ${_{V'}}\mathcal{H}'_{W'}$ for which $V/\mathrm{U}(1)^r\cong V'/\mathrm{U}(1)^{r'}$ and $W/\mathrm{U}(1)^s\cong W'/\mathrm{U}(1)^{s'}$, where $r$, $r'$, $s$ and $s'$ are the ranks of the Lie algebras formed by the spin-1 Noether currents in $V$, $V'$, $W$ and $W'$, respectively.}

\smallskip

The chiral algebra $V/\mathrm{U}(1)^r$ is the coset of $V$ by its maximal $\mathrm{U}(1)^r$ Kac-Moody algebra. The $\mathrm{U}(1)^r$ Kac-Moody algebra admits a description in terms of chiral free bosons, and therefore defines what we call the \emph{free sector} of $V$. By quotienting it out, we may think of what is left over, i.e.\ $V/\mathrm{U}(1)^r$, as a kind of \emph{interacting sector} of $V$. Thus, \autoref{thmph:rcftinnerorb=stronglywitt} essentially asserts that \autoref{conj:wittorbRCFT} holds for any pair of RCFTs whose interacting sectors coincide. The fact that \autoref{conj:wittorbRCFT} holds for free theories is recovered from \autoref{thmph:rcftinnerorb=stronglywitt} in the special case that the interacting sectors of $V$, $V'$, $W$ and $W'$ are all trivial. This condition may appear baroque, but we explain in \autoref{exph:generalizednarain} that it is satisfied by pairs of theories that are related by current-current (i.e.\ $J\bar J$) deformations.

\medskip 

Having sketched our results, we now explain why one should care about them. One source of motivation comes from the problem of classifying RCFTs. Orbifolding \cite{DHVW85,DHVW86} is one of the two main procedures by which one can produce new CFTs from known ones (the other being the method of taking cosets \cite{GKO85,GKO86}), and it is natural to ask how effective this procedure is at moving one around theory space. Indeed, throughout the decades, there have been various proposals for how these two tools, orbifolding and cosets, can be used in combination to generate all RCFTs from some smaller set of seed theories. We claim that, on the power of one additional mild assumption (called weak reconstruction, \autoref{conj:weakreconstruction}), one is led to the following ``structure theorem'' on RCFTs.

\smallskip

\noindent\textbf{\autoref{thmph:classification}.} \emph{Assuming \autoref{conj:wittorbRCFT} and weak reconstruction, \autoref{conj:weakreconstruction}, any RCFT can be obtained from $\smash{\mathsf{E}_{8,1}^{\otimes n}\otimes\overline{\mathsf{E}}_{8,1}^{\otimes m}}$ for some $n,m\in\N$ by performing topological manipulations and then quotienting decoupled degrees of freedom.}

\smallskip

Here, by ``quotienting decoupled degrees of freedom'' we refer to the operation $\mathcal{T}_1\otimes\mathcal{T}_2\to \mathcal{T}_1$ of passing from a trivially decoupled tensor product of RCFTs to one of the tensor factors. Although it is not always terribly practical, this structural result gives one an abstract sense for what all RCFTs are expected to ``look like'', at least if one believes our \autoref{conj:wittorbRCFT}.

\smallskip

\autoref{conj:wittorbRCFT} is also related to other fundamental conjectures in QFT. For example, it has been suggested \cite{Sei19} that any two QFTs in the same spacetime dimension, with the same symmetries and the same anomalies, can be connected by deformations. Here, ``deformations'' include processes that move ``up and down RG flows'' in the sense described in \cite{GJW21}. As a special case, this conjecture sets the expectation that any QFT $\mathcal{T}$ with vanishing gravitational anomalies should admit some boundary condition. Indeed, the conjecture asserts that a theory $\mathcal{T}$ without gravitational anomalies should be deformable to the trivially gapped theory $\mathcal{T}_{\mathrm{triv}}$; by activating such a deformation in half of spacetime, one obtains an interface between $\mathcal{T}$ and $\mathcal{T}_{\mathrm{triv}}$, which can be equivalently thought of as a boundary of $\mathcal{T}.$

Establishing this conjecture even in the narrow setting of RCFTs is a formidable challenge. Indeed, although one might naively believe that any RCFT admits Cardy boundary conditions \cite{Car89}, this construction only applies to theories whose left- and right-moving chiral algebras coincide, and misses out more general ``heterotic'' RCFTs. We demonstrate that settling this lore is no more and no less difficult than proving \autoref{conj:wittorbRCFT}.\footnote{We thank Zohar Komargodski, Sahand Seifnashri and Shu-Heng Shao for inspiring discussions about their unpublished work \cite{KSS22}, in which they obtained a special case of \autoref{thmph:recharacterizationmainconj} that applies to holomorphically factorizing CFTs.}

\smallskip

\noindent\textbf{\autoref{thmph:recharacterizationmainconj}.} \emph{Our main conjecture, \autoref{conj:wittorbRCFT}, is equivalent to \autoref{conj:RCFTboundaries}, which asserts that every RCFT with vanishing gravitational anomaly admits a boundary condition with finite $g$-function (or boundary entropy).}

\smallskip

Finally, it is natural to extend the notion of Witt equivalence of 3d TQFTs to 2d RCFTs by defining two theories ${_V}\mathcal{H}_W$ and ${_{V'}}\mathcal{H}'_{W'}$ to be Witt equivalent if their invariants $(\mathsf{W}(\mathcal{H}),c(V),c(W))$ and $(\mathsf{W}(\mathcal{H}'),c(V'),c(W'))$ are equal. A natural question then is to understand the space of all theories that are Witt equivalent to a fixed RCFT ${_V}\mathcal{H}_W$, i.e.\ to understand the Witt class of ${_V}\mathcal{H}_W$. We explain in \autoref{prop:rcftsymmetries} how this problem is closely related to the problem of classifying the global symmetries of ${_V}\mathcal{H}_W$. We emphasize that this perspective has practical implications by showing that it leads to a proof of the following result.

\smallskip

\noindent\textbf{\autoref{exph:SU(2)1symmetries}.} \emph{All of the finite global symmetries of the $\mathrm{SU}(2)_1$ Wess--Zumino--Witten model are invertible.}

%%%%%%%%%%%%%%%%%%%%%%%%%%%%%%%

\subsection*{Notation}
All Lie algebras and vertex (operator) algebras will be over the base field $\C$. All categories will be enriched over $\Vect=\Vect_\C$. Usually, $q$ denotes a formal variable or the complex variable $q= e^{2\pi i\tau}$ for $\tau\in\mathbb{H}$ with the complex upper half-plane $\mathbb{H}\coloneqq\{z\in\C\mid\operatorname{Im}(z)>0\}$.

This is a mathematical manuscript. However, from time to time we use the notation Theorem$^{\mathrm{ph}}$, Proposition$^{\mathrm{ph}}$, etc.\ to denote statements that are established or well-defined only at a physics level of rigor.

%%%%%%%%%%%%%%%%%%%%%%%%%%%%%%%

\subsection*{Acknowledgments}
We thank Yichul Choi, Thomas Creutzig, Aaron Hofer, Gerald Höhn, Hannes Knötzele, Zohar Komargodski, Robert McRae, Yuto Moriwaki, Nils Scheithauer, Sahand Seifnashri, Shu-Heng Shao, Yifan Wang, Xiao-Gang Wen, Harshit Yadav and Hiroshi Yamauchi for helpful discussions.

Sven Möller acknowledges support from the DFG through the Emmy Noether Programme and the CRC 1624 \emph{Higher Structures, Moduli Spaces and Integrability}, project numbers 460925688 and 506632645. Brandon Rayhaun gratefully acknowledges NSF grant PHY-2210533.

%%%%%%%%%%%%%%%%%%%%%%%%%%%%%%%
%%%%%%%%%%%%%%%%%%%%%%%%%%%%%%%

\section{Preliminaries}\label{sec:prelim}

In this section, we review some background material on lattices, modular tensor categories, \voa{}s, and their relationships to concepts in physics. We will also assume much of the content covered in \cite[\autoref*{MR1sec:prelim}]{MR24a}.

%%%%%%%%%%%%%%%%%%%%%%%%%%%%%%%

\subsection{Lattices and Metric Groups}\label{sec:lat}

Recall the definition of an even lattice $L$, and the notion of its dual lattice $L'\supset L$, which is generally not even. Further recall that a metric group $D=(D,Q)$ is a finite abelian group $D$ equipped with a non-degenerate quadratic form $Q\colon D\to \mathbb{Q}/\Z$, which also defines an associated bilinear form $B$. The discriminant form of $L$ is the metric group defined by $L'/L$ with the induced quadratic form \cite{Nik80}, and the discriminant of $L$ is $d(L) =|L'/L|$. For more details, see also \cite[\autoref*{MR1sec:lat}]{MR24a}.

\begin{defi}[Lattice Genus]\label{defi:latticegenus}
Two lattices belong to the same genus if any of the following equivalent conditions hold.
\begin{enumerate}[label=(\alph*)]
\item\label{item:latgen1} $L\otimes_\Z\R\cong M\otimes_\Z\R$ and  $L\otimes_\Z\Z_p\cong M\otimes_\Z\Z_p$ for all primes $p$.
\item\label{item:latgen2} $L'/L\cong M'/M$ and $\sign(L)=\sign(M)$.
\item\label{item:latgen3} $L\oplus\II_{1,1}\cong M\oplus\II_{1,1}$.
\end{enumerate}
Here, and only here, $\Z_p$ is the ring of $p$-adic integers, and $\II_{1,1}$ is the unique even, unimodular lattice of signature $(1,1)$.
\end{defi}

If $D$ is a metric group, and $H$ is an isotropic subgroup (i.e.\ a subgroup for which $Q(h)=0+\Z$ for all $h\in H$), then $H$ is contained in the orthogonal complement $H^\perp$ and the \emph{subquotient} $H^\perp/H$ also inherits the structure of a metric group.
An isotropic subgroup $H$ of the discriminant form $L'/L$ of an even lattice~$L$ defines an extension of $L$ to a larger even lattice $K\subset L'$ of the same rank, 
\begin{equation}\label{eqn:lattice_ext}
    K\coloneqq \bigcup_{\lambda+L\in H} (\lambda + L).
\end{equation}
Furthermore, all same-rank extensions $K$ of $L$ are obtained in this way. The discriminant form of $K$ is given by the subquotient $K'/K=H^\perp/H$.

\medskip

Another useful notion is that of Witt equivalence of metric groups \cite{Wit37}.
\begin{defi}[Witt Equivalence]\label{defi:metric_witt}
Two metric groups $D$ and $D'$ are \emph{Witt equivalent} if they admit isometric subquotients.
\end{defi}
This defines an equivalence relation.
It turns out that each Witt class of metric groups has a unique anisotropic representative (i.e.\ a representative not having any isotropic subgroups). It can be reached from any representative by taking the subquotient associated with a maximal isotropic subgroup (see, e.g., Lemma~A.31 of \cite{DGNO10}).

As a special case, one can look at self-dual, isotropic subgroups $H$, i.e.\ those isotropic subgroups satisfying $H=H^\perp$. In other words, the corresponding subquotient $H^\perp/H=\{0\}$ is trivial. A metric group is in the trivial Witt class if and only if it possesses a self-dual, isotropic subgroup or equivalently a trivial subquotient. 

Evidently, any self-dual, isotropic subgroup $I$ of a given metric group $D$ is maximal. The converse is in general not true. However, if $D$ possesses at least one self-dual, isotropic subgroup (meaning that $D$ is in the trivial Witt class),
then every isotropic subgroup is contained in some self-dual, isotropic subgroup. Hence, in that case, any maximal isotropic subgroup is self-dual.

\medskip

We extend the notion of Witt equivalence to even lattices as follows.
\begin{defi}[Witt Equivalence]\label{defi:lattice_witt}
Two even lattices $L$ and $M$ are \emph{Witt equivalent} if their discriminant forms $L'/L$ and $M'/M$ are Witt equivalent as metric groups and if they have the same signature, $\sign(L)=\sign(M)$.
\end{defi}
This defines an equivalence relation. It can be reformulated as saying that the even lattices $L$ and $M$ admit same-rank extensions to even lattices $N_1$ and $N_2$, respectively, in the same genus, as depicted in \autoref{fig:wittlat}.
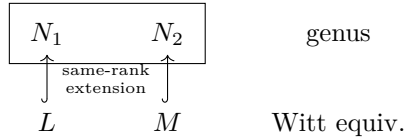
\begin{figure}[ht]
\centering
\begin{tikzcd}
[/tikz/execute at end picture={
\node (large) [rectangle, draw, fit=(A3) (A4)] {};
}]
|[alias=A3]| N_1 & |[alias=A4]| N_2&\text{genus}\\
|[alias=A1]| L \arrow[u,hook,"~\shortstack{\tiny same-rank\\\tiny extension}" right] & |[alias=A2]| M\arrow[u,hook]&\text{Witt equiv.}
\end{tikzcd}
\caption{Witt equivalence of even lattices $L$ and $M$.}
\label{fig:wittlat}
\end{figure}

There is also the notion of rational equivalence (see, e.g., \cite{Gan91}), which we depict in \autoref{fig:ratlat}.
\begin{defi}[Rational Equivalence]\label{defi:lattice_rat}
Two even lattices $L$ and $M$ are said to be \emph{rationally equivalent} if any one of the following conditions hold.
\begin{enumerate}[label=(\alph*)]
\item $L\otimes_\Z\Q\cong M\otimes_\Z\Q$ as $\Q$-vector spaces equipped with bilinear forms.
\item There is an even lattice $K$ that is isomorphic to full-rank (or finite-index) sublattices of $L$ and $M$.
\end{enumerate}
\end{defi}

\begin{figure}[ht]
\centering
\begin{tikzcd}[column sep=small]
L& & M&\text{Rational equiv.}\\
& K\arrow[hookrightarrow]{ul}{\text{ext.}}\arrow[hookrightarrow]{ur}[swap]{\text{ext.}}
\end{tikzcd}
\caption{Rational equivalence of even lattices $L$ and $M$.}
\label{fig:ratlat}
\end{figure}
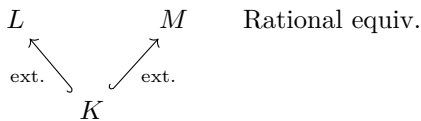

\begin{rem}\label{rem:trans}
Once again, rational equivalence is an equivalence relation. Indeed, it is not difficult to see that sharing a full-rank sublattice is a transitive property. Indeed, suppose that $L_1$ and $L_2$ have the common full-rank sublattice $K_1$ (which we assume to be explicitly realized as a sublattice of $L_2$) and that $L_2$ and $L_3$ have the common full-rank sublattice $K_2$ (again assumed to be realized as a concrete sublattice of $L_2$). Then the intersection $K_1\cap K_2$ is a full-rank sublattice of $K_1$ and $K_2$ and hence isomorphic to full-rank sublattices of $L_1$ and of $L_2$, proving the transitivity.

We discuss this rather simple fact because this transitivity property will fail for the corresponding notion for \strat{} \voa{}s (see \autoref{defi:orbequiv}, \autoref{fig:orbequiv} and \autoref{rem:nontrans}). 
\end{rem}

By definition, for even lattices $L$ and $M$, being isomorphic (or integrally equivalent) and being $p$-neighbors (see below) each imply implies rational equivalence. Also, it follows from the Hasse principle that $L$ and $M$ are rationally equivalent if and only if all the local quadratic spaces $L\otimes\Q_p$ and $M\otimes\Q_p$ for $p$ a prime including $\infty$ are isomorphic. In particular, lattices in the same genus are rationally equivalent. 

We now show that Witt equivalence of even lattices is the same as rational equivalence. However, we have given these two notions separate names because it is far more subtle to establish whether they continue to be the same once they are generalized to the setting of \voa{}s.

\begin{prop}\label{prop:lat_ratwit}
Two even lattices are rationally equivalent if and only if they are Witt equivalent.
\end{prop}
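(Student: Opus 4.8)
The plan is to show both implications by reducing everything to the language of quadratic spaces over $\Q$ and over the $p$-adics, using the Hasse--Minkowski principle, which the excerpt has already invoked. First I would establish the easy direction: if $L$ and $M$ are rationally equivalent, then $L\otimes_\Z\Q\cong M\otimes_\Z\Q$ as quadratic spaces, so in particular they have the same dimension and the same signature over $\R$, giving $\sign(L)=\sign(M)$. For the discriminant forms, I would use the standard fact (Nikulin \cite{Nik80}, or see the machinery recalled in \autoref{sec:lat}) that if $K\subset L$ is a full-rank sublattice then the discriminant form $L'/L$ is a subquotient of $K'/K$ of the shape $H^\perp/H$ for a suitable isotropic subgroup $H$; applying this to a common full-rank sublattice $K$ of $L$ and $M$ exhibits $L'/L$ and $M'/M$ as subquotients of the single metric group $K'/K$, hence as Witt equivalent metric groups. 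Combined with the signature equality, this is exactly \autoref{defi:lattice_witt}, so $L$ and $M$ are Witt equivalent.

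For the converse, suppose $L$ and $M$ are Witt equivalent. By \autoref{defi:latticegenus}\ref{item:latgen2}, two even lattices lie in the same genus iff they have isometric discriminant forms and equal signature; and lattices in the same genus are rationally equivalent (already noted in the excerpt via the Hasse principle). So it suffices to produce same-rank even overlattices $N_1\supseteq L$ and $N_2\supseteq M$ lying in a common genus — and then invoke the transitivity of rational equivalence (\autoref{rem:trans}), since $L$ and $N_1$ share the full-rank sublattice $L$, similarly $M$ and $N_2$, and $N_1, N_2$ being in the same genus are rationally equivalent. To build $N_1, N_2$: pass to the common anisotropic representative $D_0$ of the Witt class of $L'/L\cong_{\mathrm{Witt}} M'/M$, reached from $L'/L$ by taking the subquotient $H_L^\perp/H_L$ for a maximal isotropic $H_L\subseteq L'/L$, and similarly from $M'/M$ via $H_M\subseteq M'/M$. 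The isotropic subgroup $H_L$ defines a same-rank even overlattice $N_1$ of $L$ with $N_1'/N_1\cong D_0$ via \eqref{eqn:lattice_ext}, and likewise $N_2\supseteq M$ with $N_2'/N_2\cong D_0$. Since $\sign(N_1)=\sign(L)=\sign(M)=\sign(N_2)$ and their discriminant forms are both $\cong D_0$, \autoref{defi:latticegenus}\ref{item:latgen2} places $N_1$ and $N_2$ in the same genus. This is precisely the reformulation depicted in \autoref{fig:wittlat}, and it closes the argument.

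The only genuinely delicate point — the ``main obstacle'' — is ensuring that a maximal isotropic subgroup can really be used to descend all the way to the \emph{same} anisotropic metric group $D_0$ for both $L'/L$ and $M'/M$; this rests on the uniqueness (up to isometry) of the anisotropic representative in a Witt class of metric groups, cited from Lemma~A.31 of \cite{DGNO10} in the excerpt, together with the fact that any maximal isotropic subgroup of a metric group with given anisotropic kernel realizes that kernel as its subquotient. Given that input, the construction of $N_1$ and $N_2$ and the appeal to \autoref{defi:latticegenus} are routine, and the transitivity of rational equivalence handles the bookkeeping. One could alternatively bypass metric groups entirely and argue purely $p$-adically — matching $L\otimes\Q_p$ with $M\otimes\Q_p$ for every prime $p$ and for $p=\infty$ — using that the genus of a lattice is detected by its localizations and that Witt equivalence of discriminant forms together with equal signature forces the local quadratic spaces to agree by Nikulin's theory; but the overlattice argument above is cleaner and matches the pictures in \autoref{fig:wittlat} and \autoref{fig:ratlat}.
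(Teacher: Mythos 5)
Your proof is correct and takes essentially the same route as the paper's: the forward direction exhibits both discriminant forms as subquotients of the single metric group $K'/K$ attached to a common full-rank sublattice, and the reverse direction extends each lattice along a maximal isotropic subgroup to an overlattice with the (unique) anisotropic discriminant form, places the two overlattices in one genus via \autoref{defi:latticegenus}, and finishes by transitivity of rational equivalence. The only cosmetic difference is notation ($N_1,N_2$ versus the paper's $M_1,M_2$), so there is nothing further to add.
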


\begin{proof}
The forward direction is almost immediate. Assume that $L_1$ and $L_2$ are rationally equivalent even lattices. Thus, there exists a lattice $K$ which participates as a full-rank sublattice of both $L_1$ and $L_2$. It immediately follows that $L_1$ and $L_2$ have the same signature. Furthermore, each $L_i$ must be an extension of $K$ by an isotropic subgroup $H_i$ of the discriminant form $K'/K$.
Hence, the discriminant forms $L_i'/L_i\cong H_i^\perp/H_i$ are subquotients of the metric group $K'/K$. This shows that $L_1$ and $L_2$ belong to the same Witt class. 

\smallskip

We show the reverse direction. That is, assume that the even lattices $L_1$ and $L_2$ belong to the same Witt class. Then, by picking maximal isotropic subgroups $H_i<L_i'/L_i$, one obtains new even lattices $M_i$
whose discriminant forms are given by the corresponding subquotients defined by the $H_i$, i.e.\ $M_i'/M_i\cong H_i^\perp/H_i$. Because $L_1$ is a full-rank sublattice of $M_1$, it follows that $L_1$ is rationally equivalent to $M_1$. Similar comments apply replacing $L_1$ with $L_2$ and $M_1$ with $M_2$. Thus, if we demonstrate that $M_1$ is rationally equivalent to $M_2$, it will immediately follow that $L_1$ is rationally equivalent to $L_2$.

To see this, note that $M_1$ and $M_2$ belong to the same genus. Clearly, they have the same signature. Furthermore, the $M_i$ have anisotropic discriminant forms because the $H_i$ were chosen to be maximal. On the other hand, each Witt class of metric groups admits a unique anisotropic representative. The discriminant forms of the $M_i$ belong to the same Witt class because they are obtained as subquotients of metric groups which belong to the same Witt class by assumption. So it follows that $M_1'/M_1\cong M_2'/M_2$, and hence that they belong to the same genus. On the other hand, we remarked above that being in the same genus implies rational equivalence, which completes the reverse direction.
\end{proof}

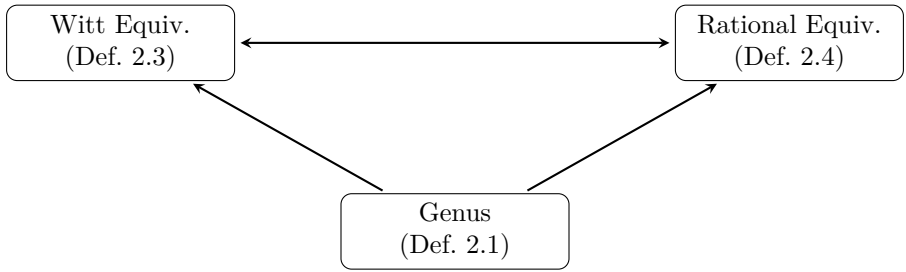
\begin{figure}[ht]
\begin{tikzpicture}
\def \tt {-2};
\tikzstyle{equiv}=[draw,rectangle, rounded corners, minimum width=3cm, minimum height=1cm]
\tikzstyle{a}=[thick,->,>=stealth,shorten >=2pt,shorten <=2pt]
\tikzstyle{t}=[thick,<->,>=stealth,shorten >=2pt,shorten <=2pt]
\tikzstyle{b}=[thick,<->,>=stealth,shorten >=2pt,shorten <=2pt]
\tikzstyle{d}=[thick,dashed,->,>=stealth,shorten >=2pt,shorten <=2pt]
\tikzstyle{dd}=[thick,double equal sign distance, -Implies]
%%%%%%%%%%%%%%%%%%%
\node[equiv,align=center] (lattice_witt) at (0,-9+\tt) {Witt Equiv.\ \\ (Def.\ \ref{defi:lattice_witt})};
\node[equiv,align=center] (lattice_rat) at (8.8,-9+\tt) {Rational\ Equiv.\ \\ (Def.\ \ref{defi:lattice_rat})};
\node[equiv,align=center] (lattice_gen) at (4.4,-11.5+\tt) {Genus \\ (Def.\ \ref{defi:latticegenus})};
%%%%%%%%%%%%%%%%%%%
\draw[a] (lattice_gen) -- (lattice_witt);
\draw[a] (lattice_gen) -- (lattice_rat);
\draw[t] (lattice_witt) -- (lattice_rat);
\end{tikzpicture}
\caption{Interrelations between the various equivalence relations on even lattices.}
\label{fig:lat}
\end{figure}

We summarize the different ways to organize even lattices into equivalence classes in \autoref{fig:lat}. One key mathematical goal of this text is to generalize these notions appropriately to (\strat{}) \voa{}s. This was done for the notion of genus in \cite{Hoe03,Mor21,MR24a}, and is carried out below for Witt equivalence and rational equivalence in \autoref{sec:witt} and \autoref{sec:orb}, respectively. This goal is motivated by the observation that every even lattice $L$ defines a lattice conformal vertex algebra $V_L$ (which is a \strat{} \voa{} if $L$ is positive-definite), allowing us to translate these lattice-theoretic notions across this construction.

We shall observe that each lattice equivalence in \autoref{fig:lat} can be generalized to (at least) two inequivalent notions in the setting of \voa{}s, a finer one that is morally still a lattice concept and a coarser one that we claim is a more honest vertex algebraic generalization (see \autoref{fig:summary}).

\medskip

Finally, we also recall the notion of lattice neighborhood \cite{Kne57}.
\begin{defi}[Neighborhood]\label{defi:latneigh}
For a positive integer $n\in\Ns$, two even lattices $L$ and $M$ (or isomorphism classes thereof) are called \emph{$n$-neighbors}, necessarily of the same discriminant $d(L)=d(M)$, if they contain a common full-rank sublattice $K$ of index $n$
in both of them.
\end{defi}
That is, $K'/K$ has two isotropic subgroups isomorphic to $\Z_n$ and the corresponding extensions of $K$ are (isomorphic to) $L$ and $M$. The definition is depicted in \autoref{fig:latneigh}.

\begin{figure}[ht]
\centering
\begin{tikzcd}
L\arrow[-,dashed]{rr}{n-\text{neighbor}}& & M\\
& K\arrow[hookrightarrow]{ul}{\Z_n\text{-ext.}}\arrow[hookrightarrow]{ur}[swap]{\Z_n\text{-ext.}}
\end{tikzcd}
\caption{Neighborhood of even lattices $L$ and $M$.}
\label{fig:latneigh}
\end{figure}

The lattice genus of $L$, \autoref{defi:latticegenus}, is closely related to the connected component of $L$ in the lattice $p$-neigh\-bor\-hood graph when $p$ is a prime not dividing $d(L)$. 
For instance, if $L$ and $M$ are even lattices that are iterated $p$-neighbors for some prime $p$ with $p\nmid d(L)$,
then $L$ and $M$ belong to the same genus.

Conversely, suppose that $L$ and $M$ are even lattices of rank at least $3$ belonging to the same genus, and further assume that this genus consists of only one spinor genus.\footnote{In general, the genus of a lattice decomposes into $2^k$ many spinor genera for some $k\in\N$. Often, $k=0$, i.e.\ genus and spinor genus coincide. See, e.g., \cite{Ome71}.} Then $L$ and $M$ are iterated $p$-neighbors of one another for any odd prime~$p$ with $p\nmid d(L)$; see, e.g., \cite{Sch91} and the references cited therein (or \cite{Sch22,Che22}).
More precisely, $p$-neighbors of $L$ for a fixed prime $p$ cover (at least) the whole spinor genus of $L$, while different spinor genera in the genus of $L$ may be connected by $p$-neigh\-bor\-hood for some specially selected primes $p$ \cite{BH83,Sch91}. 
We summarize this as follows: considering two even lattices $L$ and $M$ of rank at least 3, if they belong to the same genus, then are they iterated $p$-neighbors of one another for possibly different primes~$p$.

We shall consider the generalization of $n$-neigh\-bor\-hood to \voa{}s in \autoref{sec:neighborhood}, but it seems not to admit such a clear twofold pattern as in \autoref{fig:summary}.

%%%%%%%%%%%%%%%%%%%%%%%%%%%%%%%

\subsection{Categories}\label{sec:cat}

Recall fusion categories, braided fusion categories, ribbon fusion categories and modular tensor categories (see, e.g., \cite{BK01,Tur10,EGNO15}). All these are assumed to be semisimple abelian and finite, as is the usual convention. In particular, recall that each metric group $(D,Q)$ defines a pointed modular tensor category denoted by $\mathcal{C}(D)$ \cite{JS93}. This modular tensor category is pseudo-unitary, and we choose on it the unique (spherical) pivotal structure such that all categorical dimensions are positive. Further categorical notions are discussed in \cite[\autoref*{MR1sec:cat}]{MR24a}.

\medskip

We briefly review certain notions of algebras in braided tensor categories and the conditions under which their categories of modules yield new braided tensor categories \cite{Par95,KO02}. Such algebras are closely related to conformal extensions of \voa{}s \cite{HKL15,CKM24,CMSY24}, as we shall discuss in \autoref{sec:voa}, and appear in the context of anyon condensation \cite{FSV13,Kon14}.

In a tensor category $\Cat$, there is the notion of an (associative, unital) algebra $A$ with a multiplication morphism $m\colon A\otimes A\to A$, and we denote by $\Cat_A$ the category of, say, right $A$-modules inside $\Cat$. Further, if $\Cat$ is a braided tensor category and $A$ is a commutative algebra (satisfying $m\circ c_{A,A}=m$), then $\Cat_A$ is a tensor category, and we can moreover consider the category $\Cat_A^\loc$ of dyslectic (or local) modules of $A$ inside $\Cat$, which is again a braided tensor category.

If $\Cat$ is a modular tensor category, one may ask under what conditions on the commutative algebra $A$ the local modules $\Cat_A^\loc$ form again a modular tensor category. This is answered in \cite{KO02,FFRS06,DMNO13} in the semisimple setting, which will be relevant for us, and in \cite{LW23,SY24} in the non-semisimple case.

Given an algebra $A$ in a fusion category $\Cat$, we say that $A$ is \emph{separable} or equivalently (over the base field $\C$) \emph{semisimple} 
if $\Cat_A$ is a semisimple category (see, e.g., Proposition~2.7 in \cite{DMNO13}, cf.\ Theorem~6.10 in \cite{KZ17}), and $A$ is called \emph{haploid}
if $\Hom_\Cat(\textbf{1},A)\cong\C$. Returning to the braided setting, we say that an algebra $A$ is \emph{pre-condensable} if it is separable, haploid and commutative.

\begin{rem}
There are further the notions of \emph{simple} and \emph{exact} algebras (see, e.g., \cite{SY24}), but it turns out that for commutative, haploid algebras in the present setting, i.e.\ in braided fusion categories, they are each equivalent to separability or semisimplicity.
\end{rem}
\begin{proof}
That separability implies exactness is proved, e.g., in Lemma~5.3 of \cite{SY24}. On the other hand, exactness (together with the haploid property, which implies indecomposability) implies simplicity by Theorem~12.1 in \cite{EO21}.
Finally, simplicity implies semisimplicity (or equivalently the separability) by Theorem~2.26 in \cite{CMSY24}. This completes the proof. (See also Example~7.5.4 in \cite{EGNO15} for the connection between exactness and semisimplicity.) 
\end{proof}

The main result of \cite{DMNO13} (see Section~3.5) states that if $A$ is a pre-con\-dens\-able algebra in a braided fusion category $\Cat$, then $\Cat_A^\loc$ is again a braided fusion category. We call $\Cat_A^\loc$ the \emph{condensation} of $\Cat$ by $A$.
Moreover, if $\Cat$ is non-degenerate, then so is $\Cat_A^\loc$.

\begin{rem}\label{rem:separable_exact}
In \cite{SY24}, Theorem~5.21 (see also Theorem~2.23 in \cite{CMSY24}), this statement is generalized to braided finite tensor categories (i.e.\ braided fusion categories without the semisimplicity assumption). In that case, the notion of pre-condensable algebras considered are exact, haploid, commutative algebras.
\end{rem}

We point out that \cite{DMNO13,SY24}, in contrast to \cite{KO02,LW23}, do not assume that the algebra $A$ in $\Cat$ has non-zero categorical dimension $\dim_\Cat(A)\neq0$. This is an important point, as this condition is in general difficult to verify, unless we know, for example, that the category $\Cat$ is pseudo-unitary, in which case the condition is automatically satisfied.

\medskip

Now, let $\Cat$ be a modular tensor category, i.e.\ a non-degenerate ribbon fusion category. We call an algebra $A$ in $\Cat$ \emph{condensable} if it is pre-condensable and has trivial ribbon twist $\theta_A=\id_A$. Then it follows from the results in \cite{DMNO13} that $\Cat_A^\loc$ is a modular tensor category, and in particular inherits a ribbon twist.
Once again, we call $\Cat_A^\loc$ the \emph{condensation} of $\Cat$ by $A$.

Again, comparing this to \cite{KO02,LW23}, they work with what they call rigid $\Cat$-algebras with trivial ribbon twist (or equivalently, rigid Frobenius algebras with trivial ribbon twist), which are the same as condensable algebras with non-zero categorical dimension (see Proposition~3.11 in \cite{LW23}). In other words, the condition $\dim_\Cat(A)\neq0$ may be dropped in the context of condensation.

\begin{rem}
This is generalized to the non-semisimple setting in \cite{SY24}, Theorem~5.21. There, they use
haploid, commutative, exact algebras that are symmetric Frobenius for condensing non-semisimple modular tensor categories.
For commutative algebras, symmetric Frobenius is equivalent to Frobenius with trivial ribbon twist (see Proposition~2.25 in \cite{FFRS06}).
On the other hand, if we replace exactness by (the in the non-semisimple setting potentially stronger notion of)
separability, then by Proposition~3.11 in \cite{LW23}, the Frobenius property follows if the ribbon twist is trivial and $\dim_\Cat(A)\neq0$.
\end{rem}

A condensable algebra $A$ in a modular tensor category is called \emph{Lagrangian} if $\Cat_A^\loc\cong\Vect$, or equivalently if $\operatorname{FPdim}(A)^2=\operatorname{FPdim}(\Cat)$ \cite{DMNO13}.
A modular tensor category is said to be \emph{anisotropic} if it admits no non-trivial condensable algebras \cite{DMNO13}.

\medskip

Witt equivalence of metric groups, \autoref{defi:metric_witt}, can be generalized to modular tensor categories. The notion is defined for braided fusion categories in \cite{DMNO13}, and in the non-semisimple setting in \cite{SY24}.
Here, we modify the definition to also respect the ribbon structure (see also \cite{Kon14}). Recall from Theorem~1.2 in \cite{Mue03} (and Theorem~2.3 in \cite{ENO05}) that the center of a spherical fusion category is a modular tensor category.
\begin{defi}[Witt Equivalence]\label{defn:Wittequivalencecategories}
Two modular tensor categories $\Cat$ and $\mathcal{D}$ are \emph{Witt equivalent} if there is an equivalence of ribbon fusion categories of the form
\begin{equation*}
\Cat\boxtimes\overline{\mathcal{D}}\cong Z(\mathcal{F}),
\end{equation*}
where $\mathcal{F}$ is some spherical fusion category. 
\end{defi}
Witt equivalence defines an equivalence relation and we write $[\mathcal{C}]$ for the Witt equivalence class of $\mathcal{C}$.

We call $[\Vect]$ the trivial Witt class. Its representatives are exactly the modular tensor categories of the form $Z(\mathcal{F})$ for some spherical fusion category.

There are two equivalent formulations of Witt equivalence which will be useful in the sequel (cf.\ Corollary~5.9 and Proposition~5.15 in \cite{DMNO13}, \cite{Kon14}).
\begin{prop}\label{prop:equivalentformulationWitt}
Let $\Cat$ and $\mathcal{D}$ be modular tensor categories. Then the following are equivalent:
\begin{enumerate}
\item $\Cat$ and $\mathcal{D}$ are Witt equivalent.
\item There exist condensable algebras $A$ and $B$ of $\Cat$ and $\mathcal{D}$, respectively, such that the condensations $\mathcal{C}_A^\loc\cong\mathcal{D}_B^\loc$ as ribbon fusion categories. 
\item There exists a modular tensor category $\mathcal{E}$ admitting two condensable algebras $A$ and $B$ such that $\mathcal{E}_A^\loc\cong\mathcal{C}$ and $\mathcal{E}_B^\loc\cong\mathcal{D}$ as ribbon fusion categories.
\end{enumerate}
\end{prop}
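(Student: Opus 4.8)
The plan is to establish the cycle of implications $(1)\Rightarrow(3)\Rightarrow(2)\Rightarrow(1)$, using the structural results of \cite{DMNO13} on condensation, the fact (Theorem~1.2 in \cite{Mue03}) that the center of a spherical fusion category is modular, and the standard identification of $Z(\mathcal{F})$ with a condensation. The key input is the dictionary between Witt equivalence and condensable algebras developed in \cite{DMNO13} (Corollary~5.9, Proposition~5.15); the only extra work relative to that reference is to keep careful track of the ribbon structure, since our \autoref{defn:Wittequivalencecategories} demands equivalences of \emph{ribbon} fusion categories rather than merely braided ones.

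First I would prove $(1)\Rightarrow(3)$. Assume $\Cat\boxtimes\overline{\mathcal{D}}\cong Z(\mathcal{F})$ as ribbon fusion categories for a spherical fusion category $\mathcal{F}$. The canonical Lagrangian algebra $L_{\mathcal{F}}\in Z(\mathcal{F})$ (the image of the monoidal unit under the right adjoint of the forgetful functor $Z(\mathcal{F})\to\mathcal{F}$) has trivial twist and satisfies $Z(\mathcal{F})^{\loc}_{L_{\mathcal{F}}}\cong\Vect$; transporting $L_{\mathcal{F}}$ across the equivalence gives a Lagrangian algebra $K$ in $\Cat\boxtimes\overline{\mathcal{D}}$. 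Set $\mathcal{E}\coloneqq\Cat\boxtimes\overline{\mathcal{D}}$. The algebra $A\coloneqq\mathbf{1}_{\Cat}\boxtimes\mathbf{1}_{\overline{\mathcal{D}}}$... more usefully, I take $A$ to be the image in $\mathcal{E}$ of the canonical algebra of $\overline{\mathcal{D}}$ under $\Cat\boxtimes(-)$, i.e.\ $A\cong\mathbf{1}\boxtimes(\bigoplus_{X\in\Irr(\mathcal{D})} X\boxtimes X^*)$, which is the condensable algebra whose local modules recover the first factor: $\mathcal{E}^{\loc}_A\cong\Cat$. (This is the content of the fact that $\mathcal{D}\boxtimes\overline{\mathcal{D}}$ condenses to $\Vect$ via its canonical Lagrangian algebra, tensored with $\Cat$; see \cite[Prop.~5.4]{DMNO13}.) Symmetrically one builds $B$ with $\mathcal{E}^{\loc}_B\cong\mathcal{D}$. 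Both $A$ and $B$ have trivial ribbon twist because they are built from objects of the form $X\boxtimes X^*$, and the resulting equivalences respect twists since the twist on a category of local modules is inherited from the ambient category. This gives $(3)$.

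Next, $(3)\Rightarrow(2)$ is essentially tautological: if $\mathcal{E}$ carries condensable algebras $A,B$ with $\mathcal{E}^{\loc}_A\cong\Cat$ and $\mathcal{E}^{\loc}_B\cong\mathcal{D}$ as ribbon fusion categories, then inside $\mathcal{E}^{\loc}_A\cong\Cat$ we may transport $B$ to a condensable algebra $\widetilde{A}$ of $\Cat$ (using that condensation is iterative: condensing $\mathcal{E}$ first by $A$ and then by the image of $B$ equals condensing by a single larger algebra, by \cite[Lemma~3.5 and §3.5]{DMNO13}), and likewise transport $A$ into a condensable algebra $\widetilde{B}$ of $\mathcal{D}$, obtaining $\Cat^{\loc}_{\widetilde{A}}\cong\mathcal{E}^{\loc}_{A\otimes B}\cong\mathcal{D}^{\loc}_{\widetilde{B}}$ as ribbon fusion categories. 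Here one must check that the ``double condensation'' is independent of the order and matches a single-algebra condensation; this is where I would cite the associativity of condensation from \cite{DMNO13} and note that all twists are automatically compatible. Finally, $(2)\Rightarrow(1)$: given condensable $A\in\Cat$, $B\in\mathcal{D}$ with $\Cat^{\loc}_A\cong\mathcal{D}^{\loc}_B\eqqcolon\mathcal{M}$, the algebra $A\boxtimes\overline{B}$ is condensable in $\Cat\boxtimes\overline{\mathcal{D}}$ with $(\Cat\boxtimes\overline{\mathcal{D}})^{\loc}_{A\boxtimes\overline B}\cong\mathcal{M}\boxtimes\overline{\mathcal{M}}$, which is trivial in the Witt group, i.e.\ of the form $Z(\mathcal{F})$; then one argues, as in \cite[Cor.~5.9]{DMNO13}, that $\Cat\boxtimes\overline{\mathcal{D}}$ is itself of the form $Z(\mathcal{F}')$ by ``undoing'' the condensation — condensing $Z(\mathcal{M})$ by an appropriate algebra recovers $\Cat\boxtimes\overline{\mathcal{D}}$, and a condensation of a Drinfeld center by a condensable algebra is again a Drinfeld center of a fusion category (this uses that the relevant algebra is the canonical one associated to a module category over $\mathcal{M}$).

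The main obstacle I anticipate is purely bookkeeping about the ribbon/spherical structure: \cite{DMNO13} works with braided fusion categories and their Witt group is for the braided theory, whereas our definition insists on ribbon equivalences and on $\mathcal{F}$ being \emph{spherical}. So at each step where I invoke a result of \cite{DMNO13} I must verify (i) that the condensable algebras produced have trivial twist, so that the condensation is genuinely a modular (not merely non-degenerate braided) category and the equivalences are ribbon, and (ii) that in the statement ``$=Z(\mathcal{F})$'' the category $\mathcal{F}$ can be chosen spherical — this follows because a condensation of a modular category (which is automatically pseudo-unitary-free but still pivotal via its canonical spherical structure) by a condensable algebra inherits a canonical spherical structure, and the recovered $\mathcal{F}$ is the corresponding fusion category of non-local modules, which is spherical. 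These are all ``morally routine'' but are precisely the points where the present proposition genuinely refines the cited literature, so I would spell them out rather than merely cite \cite{DMNO13,Kon14}.
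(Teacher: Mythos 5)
Your overall plan (a cycle of implications, with extra bookkeeping of the ribbon and spherical structure relative to \cite{DMNO13}) is the natural one, and indeed the paper offers no proof of its own beyond citing Corollary~5.9 and Proposition~5.15 of \cite{DMNO13}. However, two of your steps fail as written. In $(1)\Rightarrow(3)$ you set $\mathcal{E}\coloneqq\Cat\boxtimes\overline{\mathcal{D}}$ and claim a condensable algebra $A\cong\mathbf{1}\boxtimes(\bigoplus_{X}X\boxtimes X^{*})$ with $\mathcal{E}_A^\loc\cong\Cat$. The object you wrote does not live in $\Cat\boxtimes\overline{\mathcal{D}}$ (it has three Deligne factors; the canonical Lagrangian algebra $\bigoplus_X X\boxtimes X^{*}$ lives in $\mathcal{D}\boxtimes\overline{\mathcal{D}}$, not in $\overline{\mathcal{D}}$), and no such algebra can exist in general: since $\dim(\mathcal{E}_A^\loc)=\dim(\mathcal{E})/\dim(A)^2$, an algebra with $\mathcal{E}_A^\loc\cong\Cat$ would force $\dim(A)^2=\dim(\mathcal{D})$, i.e.\ essentially a Lagrangian algebra in $\overline{\mathcal{D}}$, which exists only when $\mathcal{D}$ is Witt trivial (already for $\mathcal{D}=\mathrm{Fib}$ there is no candidate object of the right dimension). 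The repair is to take $\mathcal{E}\coloneqq\Cat\boxtimes\overline{\mathcal{D}}\boxtimes\mathcal{D}\cong Z(\mathcal{F})\boxtimes\mathcal{D}$: condensing $L_{\mathcal{F}}\boxtimes\mathbf{1}$ yields $\mathcal{D}$, while condensing $\mathbf{1}_{\Cat}$ tensored with the canonical Lagrangian algebra of $\overline{\mathcal{D}}\boxtimes\mathcal{D}$ yields $\Cat$.

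In $(2)\Rightarrow(1)$ the sentence ``condensing $Z(\mathcal{M})$ by an appropriate algebra recovers $\Cat\boxtimes\overline{\mathcal{D}}$'' is backwards: condensation strictly decreases global dimension, and $\dim Z(\mathcal{M})\leq\dim(\Cat\boxtimes\overline{\mathcal{D}})$, so nothing can be condensed \emph{up} to $\Cat\boxtimes\overline{\mathcal{D}}$. The correct composite uses the equivalences $\Cat\boxtimes\overline{\mathcal{M}}\cong Z(\Cat_A)$ and $\mathcal{D}\boxtimes\overline{\mathcal{M}}\cong Z(\mathcal{D}_B)$ (the content of \autoref{ex:condwitt}) to obtain $\Cat\boxtimes\overline{\mathcal{D}}\boxtimes Z(\mathcal{M})\cong Z(\Cat_A\boxtimes\mathcal{D}_B^{\mathrm{rev}})$; condensing the canonical Lagrangian algebra of the $Z(\mathcal{M})$ factor on the left-hand side returns $\Cat\boxtimes\overline{\mathcal{D}}$, and on the right-hand side the lemma you correctly invoke (a condensation of a Drinfeld center is again a Drinfeld center of a spherical fusion category) gives $Z(\mathcal{F}')$. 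With these two repairs, together with the twist and sphericality checks you already flag, the argument goes through; $(3)\Rightarrow(2)$ and the remaining bookkeeping are fine as sketched.
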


The following example is immediate:
\begin{ex}[Anyon Condensation]\label{ex:condwitt}
If $\mathcal{C}$ is a modular tensor category and $A$ a condensable algebra, then $\mathcal{C}_A^\loc$ is Witt equivalent to $\mathcal{C}$ as $\mathcal{C}\boxtimes\smash{\overline{\mathcal{C}_A^\loc}}\cong Z(\mathcal{C}_A)$.
\end{ex}

We note that every Witt class contains infinitely many modular tensor categories: for example, as noted above, the modular tensor categories in the trivial Witt class $[\Vect]$ are precisely of the form $Z(\mathcal{F})$ with $\mathcal{F}$ a spherical fusion category, of which there are infinitely many. However, every Witt class admits a unique distinguished representative (cf.\ Theorem~5.13 in \cite{DMNO13}).
\begin{prop}\label{prop:anisorep}
Every Witt class contains an anisotropic representative, which is unique up to ribbon fusion equivalence and denoted by $\mathcal{C}_{\mathrm{an}}$. It can be reached from any representative $\Cat\in[\mathcal{C}_{\mathrm{an}}]$ by taking the condensation $\Cat_A^\loc\cong\Cat_{\mathrm{an}}$ associated with a maximal condensable algebra $A$.
\end{prop}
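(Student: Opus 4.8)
The plan is to follow the same strategy used in the proof of \autoref{prop:lat_ratwit}, transporting the lattice-theoretic argument into the categorical setting by replacing ``isotropic subgroup'' with ``condensable algebra'' and ``subquotient'' with ``condensation''. Concretely, I would first establish \emph{existence}: starting from an arbitrary modular tensor category $\Cat$, pick a maximal condensable algebra $A$ (one exists since $\Cat$ has finitely many simple objects, so a chain of condensable algebras ordered by inclusion must stabilize; one must check that a maximal element of this poset is actually a maximal condensable algebra in the sense of not being properly contained in a larger one, which is immediate from maximality, and that the condensation $\Cat_A^\loc$ is anisotropic). The key point is that if $\Cat_A^\loc$ admitted a non-trivial condensable algebra $\bar B$, then by the general theory relating condensations of $\Cat$ and of $\Cat_A^\loc$ (intermediate condensable algebras correspond to condensable algebras in the condensation — cf.\ Section~3 of \cite{DMNO13}), one could lift it to a condensable algebra $B \supsetneq A$ in $\Cat$, contradicting maximality. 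By \autoref{ex:condwitt}, $\Cat_A^\loc$ is Witt equivalent to $\Cat$, so this exhibits an anisotropic representative in $[\Cat]$.

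For \emph{uniqueness}, suppose $\mathcal{C}_1$ and $\mathcal{C}_2$ are both anisotropic and Witt equivalent. By \autoref{prop:equivalentformulationWitt}(3), there is a modular tensor category $\mathcal{E}$ with condensable algebras $A_1, A_2$ such that $\mathcal{E}_{A_i}^\loc \cong \mathcal{C}_i$ as ribbon fusion categories. Now replace $A_1$ by a \emph{maximal} condensable algebra $\tilde A_1 \supseteq A_1$ of $\mathcal{E}$. Then $\mathcal{E}_{\tilde A_1}^\loc$ is anisotropic (by the argument above), but it is also obtained from $\mathcal{C}_1 = \mathcal{E}_{A_1}^\loc$ by condensing the image of $\tilde A_1$; since $\mathcal{C}_1$ is already anisotropic, this image must be trivial, forcing $\mathcal{E}_{\tilde A_1}^\loc \cong \mathcal{C}_1$. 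Similarly extend $A_2$ to a maximal $\tilde A_2$ with $\mathcal{E}_{\tilde A_2}^\loc \cong \mathcal{C}_2$. It remains to show $\mathcal{E}_{\tilde A_1}^\loc \cong \mathcal{E}_{\tilde A_2}^\loc$, i.e.\ that the condensation of $\mathcal{E}$ by a maximal condensable algebra is independent of the choice. This I would deduce from Theorem~5.13 of \cite{DMNO13}, which already gives uniqueness of the anisotropic representative of a Witt class of \emph{braided} fusion categories; the only thing to add is that the equivalence can be chosen to respect the ribbon structure, which holds because on an anisotropic modular tensor category the ribbon twist is determined by the braided structure together with sphericity in the relevant pseudo-unitary situations — more carefully, one invokes that a braided equivalence between modular tensor categories automatically preserves twists when both are non-degenerate (the twist is recoverable from the double braiding / $S$- and $T$-matrices), so the braided equivalence of anisotropic representatives furnished by \cite{DMNO13} is already a ribbon equivalence.

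The final clause — that $\mathcal{C}_{\mathrm{an}}$ is reached from any representative $\Cat$ by condensing a maximal condensable algebra — then follows by combining the two halves: such a condensation is anisotropic and Witt equivalent to $\Cat$, hence equivalent to $\mathcal{C}_{\mathrm{an}}$ by uniqueness.

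The main obstacle I anticipate is the bookkeeping around ``maximal condensable algebra'' and the correspondence between condensable algebras of $\mathcal{E}$ containing $A$ and condensable algebras of $\mathcal{E}_A^\loc$: one needs that this correspondence is inclusion-preserving, sends maximal to maximal (equivalently, the trivial algebra of $\mathcal{E}_A^\loc$), and is compatible with iterated condensation ($(\mathcal{E}_A^\loc)_{B/A}^\loc \cong \mathcal{E}_B^\loc$). All of this is in \cite{DMNO13} but must be cited precisely, and the ribbon/twist compatibility — absent from the purely braided statement in \cite{DMNO13} — is the one genuinely new ingredient that needs a short argument. Everything else is a faithful categorification of the lattice proof already written out above.
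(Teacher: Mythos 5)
The paper does not prove this proposition; it is quoted as (essentially) Theorem~5.13 of \cite{DMNO13}, with the ribbon refinement attributed to the setup of \cite{Kon14}. So the only question is whether your argument is correct, and there is one genuine gap: the claim that ``a braided equivalence between modular tensor categories automatically preserves twists... the twist is recoverable from the double braiding'' is false. The hexagon/balancing axiom determines $\theta$ from the braiding only up to a duality-compatible character of the universal grading group (equivalently, of the group of invertible objects), and anisotropic categories can have plenty of invertible objects. The paper itself points at the counterexample at the end of \autoref{sec:cat}: the pointed categories $\mathcal{C}(D,Q,\theta)$ carry several inequivalent ribbon structures over one and the same braided category (concretely, the semion $\mathcal{C}(\Z_2,Q)$ with $Q(1)=1/4$ is anisotropic and admits twists $\pm i$ on its non-trivial object, giving ribbon-inequivalent categories with central charges $\pm1$). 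So the braided equivalence furnished by Theorem~5.13 of \cite{DMNO13} does not, by itself, yield the ribbon equivalence you need, and your uniqueness step as written does not close.

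The fix is to never leave the ribbon world: the ribbon compatibility comes from the trivial-twist condition $\theta_A=\id_A$ built into the definition of a condensable algebra, not from the braiding determining the twist. Concretely, since $\mathcal{C}_1$ and $\mathcal{C}_2$ are anisotropic and Witt equivalent, $\mathcal{C}_1\boxtimes\overline{\mathcal{C}_2}\cong Z(\mathcal{F})$ contains a Lagrangian (hence condensable, hence trivial-twist) algebra $L$; anisotropy of the two factors forces $L$ to be the ``graph'' of a braided equivalence $F\colon\mathcal{C}_1\to\mathcal{C}_2$ (Corollary~3.26 / Proposition~5.15 of \cite{DMNO13}), i.e.\ $L\cong\bigoplus_X X\boxtimes\overline{F(X)}$, and the condition $\theta_L=\id_L$ reads $\theta_X\cdot\overline{\theta_{F(X)}}=1$ on each simple summand, i.e.\ $\theta_{F(X)}=\theta_X$. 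That is exactly the statement that $F$ is a ribbon equivalence. The rest of your write-up (existence via a maximal condensable algebra, the inclusion-preserving correspondence between condensable algebras of $\Cat$ containing $A$ and those of $\Cat_A^\loc$, compatibility with iterated condensation, and the deduction of the final clause) is sound and matches the intended categorification of \autoref{prop:lat_ratwit}.
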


Returning to the above example, a modular tensor category $\Cat$ is in the trivial Witt class $[\Vect]$ if and only if it possesses a Lagrangian algebra (i.e.\ a condensable algebra $A$ satisfying $\Cat_A^\loc\cong\Vect$) if and only if $\mathcal{C}_{\mathrm{an}}=\Vect$.

\medskip

For pointed modular tensor categories $\mathcal{C}(D)$, characterized by metric groups $D=(D,Q)$, Witt equivalence precisely reproduces Witt equivalence of metric groups.
Indeed, the condensable algebras $A=A(H)$ in $\mathcal{C}(D)$ are indexed by the isotropic subgroups $H$ of $D$, and $\Cat_{A(H)}^\loc\cong\mathcal{C}(H^\bot/H)$.\footnote{Strictly speaking, the condensable algebras $A(H)$ in $\mathcal{C}(D)$ also depend on a choice of $2$-cochain $\epsilon\colon H\times H\to\C^\times$, but this does not affect the condensation up to equivalence. A very explicit treatise on this is given, e.g., in \cite{GLM24}, Section~5.2.} Clearly, $A(H)$ is Lagrangian if and only if $H$ is self-dual in $D$. Hence, the Witt class $[\Cat(D)]=[\Vect]$ is trivial if and only if $D$ contains a self-dual, isotropic subgroup.

We remark that for the modular tensor categories $\mathcal{C}(D)$, the associativity and commutativity of the algebra $A$ already forces the quadratic form $Q$ to be trivial on the subgroup $H$ of $D$, meaning that the condition that the ribbon twist be trivial is redundant here. This changes, if we consider the pointed modular tensor categories $\mathcal{C}(D,Q,\theta)$ where the quadratic forms $Q$ and $\theta$ defined by the braiding and the ribbon twist, respectively, do not coincide. One can show that their associated bilinear forms must coincide, or equivalently that they differ by a homomorphism $D\to\{\pm1\}$ (which here coincides with the categorical dimensions). Now, the associativity and commutativity of the algebra still amount to the vanishing of $Q$ but the trivial-twist condition to the vanishing of $\theta$. (See Section~2.4 in \cite{HM23} and Section~5.2 in \cite{GLM24} for more details.)

%%%%%%%%%%%%%%%%%%%%%%%%%%%%%%%

\subsection{Vertex Algebras}\label{sec:voa}

We refer readers to, e.g., \cite{Bor86,FLM88,Kac98,FBZ04} and \cite[\autoref*{MR1sec:voa}]{MR24a} for any background on (conformal) vertex algebras, \voa{}s and their representation categories.

\medskip

In this text we shall mainly be concerned with \emph{\strat{}} \voa{}s $V$. These are \voa{}s that are simple, rational, $C_2$-cofinite, self-contragredient and of CFT-type. Their representation categories $\Rep(V)$ are modular tensor categories \cite{Hua08b}. We recall that a \voa{} is of CFT-type if it has $L_0$-grading $V=\bigoplus_{n\geq0}V_n$ and $V_0=\C\vac$.

We further call a \strat{} \voa{} $V$ \emph{positive} if $\rho(M)>0$ for every irreducible module $M\in\Irr(V)$ except for $V$ itself. Here, $\rho(M)$ denotes the smallest $L_0$-eigenvalue of $M$. This implies that the modular tensor category $\Rep(V)$ has only positive categorical dimensions and is hence pseudo-unitary \cite{DJX13,DLN15}.

We recall from \cite{HM23} (based on \cite{Mas14}, see also \cite[\autoref*{MR1sec:voa}]{MR24a}) that any \strat{} \voa{} $V$ can be written as a simple-current extension (a special case of the conformal extensions discussed below) of a dual pair $C\otimes V_L$. The latter consists of a lattice \voa{} $V_L$ associated with a positive-definite, even lattice $L$ called the \emph{associated lattice} and of the \emph{Heisenberg commutant} $C$, which is \strat{} and satisfies $C_1=\{0\}$.

\medskip

We briefly recall the notions of bulk and hyperbolic genus of \strat{} \voa{}s \cite{Hoe03,Mor21}, which we discussed in detail in \cite{MR24a}.

\begin{defi}[Bulk Genus]\label{defi:bulkgenus}
Two \strat{} \voa{}s $V$ and $V'$ are in the same \emph{bulk genus} if their central charges $c(V)=c(V')$ are the same and their representation categories $\Rep(V)\cong\Rep(V')$ are equivalent as ribbon fusion categories.
\end{defi}

\begin{defi}[Hyperbolic Genus]\label{defi:hypgen}
Two \strat{} \voa{}s $V$ and $V'$ are in the same \emph{hyperbolic genus} if $V\otimes V_{\II_{1,1}}\cong V'\otimes V_{\II_{1,1}}$ as conformal vertex algebras, where $\II_{1,1}$ is the unique even, unimodular lattice of signature $(1,1)$.
\end{defi}

In \cite[\autoref*{MR1thm:hypcomm}]{MR24a} we give an alternative characterization of the \emph{hyperbolic genus} based on the notions of associated lattice and Heisenberg commutant. Based on this, we also show in 
\cite[\autoref*{MR1cor:hypbulk}]{MR24a}
that the hyperbolic genus is a refinement of the bulk genus.

\medskip

Conformal extensions of \voa{}s are intimately related to commutative algebras and their condensations, as discussed in \autoref{sec:cat}. This will be a crucial ingredient for our definition of Witt equivalence of \voa{}s in \autoref{subsec:weakwitt}.

In general, given a \voa{} $V$ with a vertex tensor category $\Cat=\Rep(V)$ of $V$-modules in the sense of \cite{HLZ}, so that $\Cat$ is in particular a braided monoidal category, let $W$ be a conformal (i.e.\ with the same conformal vector) \voa{} extension of $W$. If $W$ is an object of $\Cat$, then $W$ defines the structure of a commutative algebra $A=A(W)$ in $\Cat$, which is haploid if $\Hom_\Cat(V,W)\cong\C$ \cite{HKL15}. In that case, the category of $W$-modules lying in $\Cat$ is a vertex tensor category isomorphic to $\Cat_A^\loc$ as a braided monoidal category \cite{CKM24}.

\medskip

In this text, we are interested in the case of \strat{} \voa{}s $V$. Then one can ask how, in the above situation, the regularity properties of $V$ imply the corresponding properties of $W$.
An answer is given in Section~4.2 of \cite{CMSY24}:
\begin{prop}\label{prop:CMSY}
Suppose $V$ is a \strat{} \voa{} (so that $\Cat=\Rep(V)$ is a modular tensor category) and $W\supset V$ is a conformal \voa{} extension (which must be a finite direct sum of $V$-modules by the rationality of $V$ and hence an object of $\Cat$). Then, if $W$ is simple and $\N$-graded, it is already \strat{}.
Moreover, $W$ defines a condensable algebra $A$ in $\Cat$
and the modular tensor category $\Rep(W)$ is given by the condensation $\Cat_A^\loc$.
\end{prop}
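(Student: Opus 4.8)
The plan is to assemble \autoref{prop:CMSY} from the vertex-algebra/commutative-algebra dictionary of \cite{HKL15,CKM24} together with the category-theoretic input recalled above (following \cite{KO02,DMNO13}); this is essentially the content of Section~4.2 of \cite{CMSY24}. In outline: first I would upgrade the conformal extension $W\supset V$ to a condensable algebra $A=A(W)$ in $\Cat=\Rep(V)$; then I would use this, together with $C_2$-cofiniteness, to deduce that $W$ is itself \strat{}; and finally I would identify $\Rep(W)$ with $\Cat_A^\loc$.

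For the first step: since $V$ is rational, $W$ is a finite direct sum of irreducible $V$-modules, hence an object of the modular tensor category $\Cat$, and the conformal \voa{} structure makes it a commutative algebra $A=A(W)$ in $\Cat$ by \cite{HKL15}. This algebra is haploid because simplicity of $W$ forces $W_0=\C\vac$, so $V$ occurs in $W$ with multiplicity one, i.e.\ $\Hom_\Cat(V,W)\cong\C$; it has trivial ribbon twist $\theta_A=\id_A$ because $W$ being $\N$-graded forces every irreducible $V$-submodule $M\subset W$ to have conformal weight $\rho(M)\in\N$, so $\theta$ acts on $M$ by $e^{2\pi i\rho(M)}=\id$; and it carries a symmetric Frobenius structure coming from the non-degenerate invariant bilinear form on $V$, which in particular yields $W\cong W'$ as a $W$-module. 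Separability of $A$ (equivalently, semisimplicity of $\Cat_A$) I would take from \cite{CMSY24} (cf.\ also \cite{KO02} in the pseudo-unitary case). Hence $A$ is condensable.

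For the second step, it remains to check $C_2$-cofiniteness and rationality of $W$ (simplicity is assumed; CFT-type and self-contragredience were just obtained). Here $C_2$-cofiniteness follows because $W$ is a finitely generated $V$-module---a finite direct sum of $C_2$-cofinite $V$-modules---using the stability of $C_2$-cofiniteness under finite conformal extensions (see \cite{CMSY24} and the references therein). For rationality: since both $V$ and $W$ are $C_2$-cofinite, every weak finitely generated (hence ordinary) $W$-module restricts to an object of $\Cat$, and one identifies the category of such $W$-modules with the category $\Cat_A^\loc$ of local $A$-modules; as $\Cat$ is semisimple and $A$ separable, $\Cat_A^\loc$ is semisimple, so $W$ is rational. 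Thus $W$ is \strat{}, whence $\Rep(W)$ is a modular tensor category by \cite{Hua08b}; combining the above, $\Rep(W)\cong\Cat_A^\loc$ as ribbon fusion categories by \cite{CKM24,DMNO13}, and $\Cat_A^\loc$ is modular because $A$ is condensable in the modular tensor category $\Cat$.

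The main obstacle is the rationality of $W$: one must control \emph{all} $W$-modules, not only those manifestly assembled out of $\Cat$, which requires $C_2$-cofiniteness of $W$ and the nontrivial implication that $C_2$-cofiniteness forces weak finitely generated modules to be ordinary, after which semisimplicity of $\Cat_A^\loc$ can be brought to bear; establishing separability of $A$ without a pseudo-unitarity hypothesis on $\Cat$ is the other delicate point. Both are carried out in \cite{CMSY24}, so in practice the argument reduces to invoking that reference and verifying that its hypotheses match ours.
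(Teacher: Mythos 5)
The paper offers no proof of this proposition beyond attributing it to Section~4.2 of \cite{CMSY24}, and your proposal ultimately does the same: it sketches a plausible outline of that reference's argument (condensable algebra from the extension via \cite{HKL15}, haploidness and trivial twist from simplicity and $\N$-grading, separability and the $C_2$-cofiniteness/rationality of $W$ deferred to \cite{CMSY24}, and the identification $\Rep(W)\cong\Cat_A^\loc$ via \cite{CKM24,DMNO13}) and then correctly defers the two genuinely delicate points to op.\ cit. This matches the paper's approach, and the outline is accurate.
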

Because $V$ is not assumed to be positive, we need to demand explicitly that the \voa{} extension $W$ be $\N$-graded (which already implies CFT type here). In general, this cannot be seen from inside the representation category alone because the ribbon twist only depends on the conformal weights of the $V$-modules modulo~$1$ (but see \cite[\autoref*{MR1prop:essentiallypositive}]{MR24a}).

Among other things, the above result strengthens \cite{HKL15} by removing the requirement that the dimension of the algebra $A$ in the modular tensor category $\Rep(V)$ be non-zero. This condition is automatically satisfied, if $\Rep(V)$ is quasi-unitary (for example, when $V$ is positive).

\medskip

Conversely, given a \strat{} \voa{} $V$, it is not difficult to see that any condensable algebra $A$ in $\Cat=\Rep(V)$ defines a simple \voa{} $W$ conformally extending $V$.
However, if $V$ is not positive, the \voa{} extension $W$ does not have to be of CFT-type, and so it may not be \strat{}. Hence, in general, it is not even known whether $\Rep(W)$, which morally should be equivalent to $\Cat_A^\loc$, is a modular tensor category in the sense of \cite{Hua08b}.
\begin{prob}
Generalize the modular tensor category result of \cite{Hua08b} to \voa{}s that are \strat{}, except that they are not of CFT-type.
\end{prob}

For completeness, we state the above result in the setting where $V$ is \strat{} and positive (so that its representation category $\Cat=\Rep(V)$ is a pseudo-unitary modular tensor category), in which case it takes its nicest form. Then, any simple conformal extension is automatically of CFT-type. The main theorem of \cite{HKL15}, simplified using the above results, states:
\begin{prop}\label{prop:algext}
Let $V$ be a \strat{} and positive \voa{}. Then there is a bijection between:
\begin{enumerate}
\item conformal \voa{} extensions $W\supset V$, where $W$ is simple (and hence \strat{} and positive),
\item condensable algebras $A$ in $\Cat$.
\end{enumerate}
\end{prop}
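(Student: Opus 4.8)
The plan is to exhibit the two maps comprising the bijection and check that they are mutually inverse, assembling ingredients already in place; the real content will lie in the cited results rather than in the bookkeeping.

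\textbf{From extensions to algebras.} Suppose $W\supset V$ is a conformal \voa{} extension with $W$ simple. Since $V$ is rational, $W$ is a finite direct sum of irreducible $V$-modules and hence an object of $\Cat=\Rep(V)$. Positivity of $V$ forces every irreducible $V$-submodule of $W$ other than $V$ itself to be strictly positively graded, so $W$ is $\N$-graded; combined with simplicity this yields $W_0=\C\vac$, i.e.\ $W$ is of CFT-type, and in particular $\Hom_\Cat(V,W)\cong\C$, so $W$ is haploid as an object of $\Cat$. Then \autoref{prop:CMSY} applies: $W$ carries the structure of a condensable algebra $A=A(W)$ in $\Cat$, with $\Rep(W)\cong\Cat_A^\loc$.

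\textbf{From algebras to extensions.} Conversely, let $A$ be a condensable algebra in $\Cat$. By the main theorem of \cite{HKL15}, via the vertex-tensor-categorical machinery of \cite{HLZ,CKM24}, the commutative, haploid algebra structure on $A$ with trivial ribbon twist is precisely the data of a \voa{} structure on the underlying $V$-module, conformally extending $V$; separability of $A$ makes this extension $W=W(A)$ simple, and haploidness of $A$ gives $\Hom_\Cat(V,W)\cong\C$, whence, arguing as above using positivity of $V$, $W$ is of CFT-type and thus \strat{} and positive by \autoref{prop:CMSY}. That $W\mapsto A(W)$ and $A\mapsto W(A)$ are mutually inverse is then immediate: the underlying object of $A(W)$ is $W$ and reconstructing a \voa{} from it returns the original extension, while the underlying $V$-module of $W(A)$ is $A$ and extracting its algebra structure returns $A$ — this is the content of the equivalence between conformal extensions and (haploid, separable, trivially-twisted) commutative algebra objects in \cite{HKL15,CKM24}.

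\textbf{Main obstacle.} The genuinely substantive points are the two inputs the argument rests on. First, that a condensable algebra really does integrate to a \emph{simple} \voa{} extension that is of CFT-type: this is exactly where positivity of $V$ is indispensable — without it $W$ need only be $\N$-graded, and even the modular tensor category statement for $\Rep(W)$ is open (cf.\ the Problem above). Second, and this is the improvement over the original formulation of \cite{HKL15}, one must dispense with the hypothesis $\dim_\Cat(A)\neq0$; this is precisely what \autoref{prop:CMSY}, i.e.\ the results of \cite{CMSY24}, supplies, and in any case it holds automatically here because $\Cat$ is pseudo-unitary, so $\dim_\Cat(A)>0$.
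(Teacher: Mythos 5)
Your proposal follows essentially the same route as the paper, which presents this proposition as the main theorem of \cite{HKL15} upgraded by \autoref{prop:CMSY}: positivity of $V$ gives $\N$-grading and CFT-type of any simple extension, \cite{CMSY24} then gives strong rationality, and pseudo-unitarity together with Proposition~3.11 of \cite{LW23} handles the translation between condensable algebras and the rigid haploid commutative algebras with trivial twist appearing in \cite{HKL15}. One small slip: the positivity of $W$ is not supplied by \autoref{prop:CMSY}; the paper derives it separately by observing that any irreducible $W$-module $M$ with $\Hom_\Cat(W,M)\neq\{0\}$ must be isomorphic to $W$, so every other irreducible $W$-module decomposes into irreducible $V$-modules different from $V$ and hence inherits strictly positive grading from the positivity of $V$.
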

We comment that in item~(1), since $V$ is positive, $W$ is $\N$-graded and hence
\cite{CMSY24} already implies that $W$ is \strat{}. As $V$ is positive, $W$ must be positive as well since any irreducible $W$-module $M$ with $\Hom_\Cat(W,M)\neq\{0\}$ must be isomorphic to $W$. We further remark that \cite{HKL15} consider in item~(2) rigid haploid commutative algebras $A$ in $\Cat$ with trivial twist, but by Proposition~3.11 in \cite{LW23} this is equivalent to a condensable algebra. (The condition $\dim(A)\neq0$ is automatically satisfied since $\Cat$ is pseudo-unitary.)

\medskip

We consider some examples of \strat{} \voa{}s. In \cite[\autoref*{MR1sec:voa}]{MR24a} we introduced lattice \voa{}s $V_L$ with pointed modular tensor categories $\Rep(V_L)=\Cat(L'/L)$, simple affine \voa{}s $L_\g(k,0)=\mathsf{X}_{n,k}$ with modular tensor categories $\Rep(\mathsf{X}_{n,k})=(X_n,k)$ and parafermion \voa{}s $K(\g,k)$.

The \voa{}s in these examples all satisfy the positivity condition, i.e.\ the conformal weights of all irreducible modules other than the \voa{} itself are positive. This entails that the corresponding modular tensor category is pseudo-unitary \cite{DJX13,DLN15}. We shall now discuss a family of examples of \strat{} \voa{}s that do not necessarily have pseudo-unitary representation categories.
\begin{ex}[Virasoro Minimal Models]\label{ex:minimal}
For any $c\in\C$, there is the universal Virasoro \voa{} $V_\mathrm{Vir}(c,0)$ of central charge $c$. Precisely when $c=c_{p,q}\coloneqq 1-6(p-q)^2/(pq)$ for coprime $p,q\geq 2$, does $V_\mathrm{Vir}(c,0)$ contain a maximal proper ideal and the corresponding simple quotient is the \emph{simple Virasoro \voa{}} (or \emph{Virasoro minimal model}) $L_\mathrm{Vir}(c_{p,q},0)$, which we denote by $L(c_{p,q})$ for short (\cite{FZ92,Wan93}, see also \cite{LL04}).

The \voa{} $L(c_{p,q})$ is \strat{} and the representation category is the modular tensor category that we denote by $\Rep(L(c_{p,q}))\eqqcolon\mathcal{M}(c_{p,q})$ \cite{Wan93,DLM00}.

In general, $L(c_{p,q})$ may possess irreducible modules with negative conformal weights. Indeed, the minimal conformal weight of an irreducible $L(c_{p,q})$-module is given by $(1-(p-q)^2)/(pq)$. This shows that the minimal models with only non-negative conformal weights are exactly those where $q=p+1$ (assuming without loss of generality that $q>p$). In fact, one can verify that these \voa{}s all satisfy the positivity condition so that $\mathcal{M}(c_{p,p+1})$ is pseudo-unitary.
The minimal models for $(p,q)=(m+2,m+3)$ with $m\in\N$ are called the \emph{discrete series} and denoted by $L(c_m)$ with $c_m\coloneqq c_{m+2,m+3}=1-6/(m+2)(m+3)$. Their representation category is denoted by $\mathcal{M}_m=\mathcal{M}(c_{m+2,m+3})$.
\end{ex}
We remark that one might want to conjecture that the modular tensor categories $\mathcal{M}(c_{p,q})$ are never pseudo-unitary away from the discrete series. The smallest example to look at is the minimal model for $(p,q)=(2,5)$, which is called the Yang-Lee minimal model and has central charge $c_{2,5}=-22/5$. In this case the categorical dimension of the only irreducible module other than the \voa{} itself is $(-1-\sqrt{5})/2$, which is negative (see, e.g., \cite{EG17}). Hence, $\mathcal{M}(c_{2,5})$ is not pseudo-unitary.

\medskip

Before moving on, we record some conjectures in the theory of \strat{} \voa{}s that enter in several of our results.

\begin{conj}[Moonshine Uniqueness \cite{FLM88}]\label{conj:moonshineuniqueness}
The moonshine module (or monster \voa{}) $V^\natural$ is the only \strat{}, holomorphic \voa{} of central charge $c=24$ with no weight-$1$ states.
\end{conj}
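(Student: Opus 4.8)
\medskip

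This is a long-standing open problem, so I describe only the line of attack I would pursue rather than a complete argument. Let $V$ be a \strat{}, holomorphic \voa{} of central charge $24$ with $V_1=\{0\}$; the goal is to show $V\cong V^\natural$. The first, routine step is to pin down the graded dimension. Since $V$ is holomorphic of central charge $24$, its character $\sum_{n\geq0}(\dim V_n)q^{n-1}$ is a weight-zero modular function for $\mathrm{SL}_2(\Z)$, holomorphic on $\H$ with a first-order pole at the cusp, hence equal to $J(\tau)+\dim V_1$; with $V_1=\{0\}$ it is exactly $J(\tau)=q^{-1}+196884q+\dots$. In particular $\dim V_2=196884$, and $V_2$, equipped with its degree-one product and the invariant bilinear form of $V$ (positive definite, since $V$ is positive, cf.\ the discussion around \autoref{prop:algext}), is a commutative non-associative algebra with unit $\tfrac12\omega$: abstractly a ``Griess-type'' algebra of the expected dimension.

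The heart of the matter is to recover enough of the internal structure of $V^\natural$ from these abstract data. I would try to prove that $V_2$ contains Ising vectors (conformal vectors generating a copy of $L(1/2,0)$, equivalently suitable idempotents), so that $V$ carries Miyamoto involutions $\tau_e\in\Aut(V)$, and ideally a full Virasoro frame $L(1/2,0)^{\otimes48}\subset V$. The available leverage is: the associativity and commutativity constraints on the Griess-type algebra $V_2$; the rationality and $C_2$-cofiniteness of $V$, which bound the fusion and the action of $V_2$ on low-weight subspaces; and positive definiteness of the form. If a frame can be produced, one concludes by the framed route: holomorphic framed \voa{}s of central charge $24$ are classified by their structure codes, and exactly one of these has trivial weight-one space, namely $V^\natural$.

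Absent a full frame, I would fall back on an orbifold/reverse-orbifold strategy. Given a sufficiently well-behaved automorphism $g\in\Aut(V)$, say an involution obtained from the previous step, form the holomorphic orbifold $V^{\orb(g)}$, which is again \strat{} and holomorphic of central charge $24$. One wants to choose $g$ so that $V^{\orb(g)}$ has nonzero weight-one space — so that it is covered by the (now completed) classification of such \voa{}s and the accompanying uniqueness theorems — and, in the best case, so that it is a Niemeier lattice \voa{}, most favourably $V_\Lambda$ for the Leech lattice $\Lambda$. Then one inverts the cyclic orbifold by the reverse-orbifold machinery (as in \cite{ELMS21} and related work), and invokes the uniqueness of $V_\Lambda$ among holomorphic $c=24$ \voa{}s with full-rank Heisenberg in weight one \cite{DM04}, together with the standard realization $V^\natural\cong V_\Lambda^{\orb(\hat\theta)}$ for $\hat\theta$ a lift of $-1$, to conclude $V\cong V^\natural$.

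The main obstacle is plainly the middle step: producing automorphisms (or a frame) of an \emph{a priori} unknown $V$ without circularly assuming $V\cong V^\natural$. For each of the $70$ holomorphic $c=24$ \voa{}s with nonzero weight-one space, the spin-one currents furnish a Heisenberg/lattice substructure and a supply of inner automorphisms to orbifold against; exactly that toolkit is unavailable here, which is why this single case remains open. Finally, within the framework of the present paper, \autoref{conj:moonshineuniqueness} is a special case of \autoref{conj:wittorb} restricted to holomorphic \voa{}s — equivalently, of bulk-genus connectedness (\autoref{conj:anisobulkorb}) for the bulk genus $([\Vect],24)$ — so it would also follow from sufficiently general progress on the conjectural picture summarized in \autoref{fig:summary}; see also \autoref{rem:fakemoonshine}.
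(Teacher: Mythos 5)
The statement you are addressing is stated in the paper as a \emph{conjecture} (attributed to \cite{FLM88}) and the paper gives no proof of it; it is only used as a hypothesis elsewhere (e.g.\ in \autoref{prop:witttrivialorbifoldingVOAs} and \autoref{ex:deepholes}). So there is no argument in the paper to compare yours against, and your submission — which you candidly frame as a line of attack rather than a proof — does not close the problem either. The preliminary step is fine: Zhu modularity forces the character to be $J(\tau)+\dim V_1=J(\tau)$, and $V_2$ carries a $196884$-dimensional Griess-type algebra. But the entire difficulty sits exactly where you locate it: with $V_1=\{0\}$ there is no Heisenberg/lattice substructure and no supply of inner automorphisms, and nobody knows how to manufacture an Ising vector, a Virasoro frame, or any non-identity automorphism of an abstract such $V$ without already knowing $V\cong V^\natural$. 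The framed and reverse-orbifold endgames you describe are genuine and would finish the job \emph{if} that input existed, but they do not constitute progress on the missing step. This is consistent with the paper's own assessment in \autoref{rem:fakemoonshine}, which argues (in the converse direction) that a ``fake'' moonshine module could not be reached by iterated cyclic orbifolds from the known \voa{}s — i.e.\ the conjecture would follow from \autoref{conj:wittorb}, not the other way around, and even that implication is only established for cyclic orbifold constructions.

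Two smaller inaccuracies. First, your parenthetical that the invariant bilinear form on $V_2$ is ``positive definite, since $V$ is positive'' conflates two notions: the paper's positivity condition concerns the $L_0$-grading of irreducible modules (and is vacuous for a holomorphic \voa{}), whereas positive-definiteness of the invariant form is a unitarity-type statement that is not known for an abstract \strat{}, holomorphic \voa{}. Second, calling \autoref{conj:moonshineuniqueness} ``a special case of'' \autoref{conj:wittorb} overstates the relationship; per \autoref{rem:fakemoonshine} it is a (likely) consequence, modulo extending the cyclic-orbifold argument there to general orbifold equivalence.
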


The following is discussed in greater detail in \cite[\autoref*{MR1subsec:reconstruction}]{MR24a}.

\begin{conj}[Weak Reconstruction \cite{Ray23,MR24a}]\label{conj:weakreconstruction}
If there exists a positive, \strat{} \voa{} $V$ with $\Rep(V)\cong\mathcal{C}$, then there exists a positive, \strat{} \voa{} $W$ with $\Rep(W)\cong \overline{\mathcal{C}}$, where $\overline{\mathcal{C}}$ is the ribbon reverse of $\mathcal{C}$.
\end{conj}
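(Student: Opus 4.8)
A natural strategy to establish \autoref{conj:weakreconstruction} is to realize $\overline{\mathcal C}$ as a \emph{commutant} (coset) inside a holomorphic \voa{}. Suppose one can embed $V$ conformally into a holomorphic, \strat{} \voa{} $U$ in such a way that $V$ and its commutant $W:=\Com(V,U)$ form a dual pair, giving a decomposition $U\cong\bigoplus_{i\in\Irr(V)}V^i\otimes W^{\sigma(i)}$ for some bijection $\sigma\colon\Irr(V)\to\Irr(W)$ (this is automatic if $U$ is obtained by condensing the canonical Lagrangian algebra of $\mathcal C\boxtimes\overline{\mathcal C}\cong Z(\mathcal C)$ inside $V\otimes W$). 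By standard regularity results for commutant subalgebras, $W$ is again \strat{}, and positivity of $W$ can be arranged from positivity of $V$ together with $U$ being of CFT type (we gloss over the verification, which mildly constrains the choice of $U$, e.g.\ through its low-energy spectrum). The holomorphy of $U$ means the above decomposition exhibits a haploid, graph-like Lagrangian algebra in $\Rep(V)\boxtimes\Rep(W)=\mathcal C\boxtimes\Rep(W)$; a short categorical-dimension count (the Lagrangian condition forces $\dim V^i=\dim W^{\sigma(i)}$ for all $i$), together with the compatibility of $\sigma$ with fusion and with the ribbon twist (commutativity and triviality of the twist of the algebra force $\theta_{W^{\sigma(i)}}=\theta_{V^i}^{-1}$), upgrades $\sigma$ to a ribbon equivalence $\Rep(W)\cong\overline{\mathcal C}$; cf.\ \autoref{prop:equivalentformulationWitt} and \cite{DMNO13,Kon14}. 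This would prove the conjecture.

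Thus the plan reduces \autoref{conj:weakreconstruction} to the statement that every positive, \strat{} \voa{} embeds into a holomorphic, \strat{} \voa{} as half of a dual pair. On the \emph{categorical} side there is no obstruction: $\mathcal C\boxtimes\overline{\mathcal C}\cong Z(\mathcal C)$ always carries the canonical Lagrangian algebra, and it is Witt trivial with central charge in $8\Z$, so by \autoref{prop:algext} and \autoref{prop:CMSY} any positive, \strat{} \voa{} realizing $\mathcal C\boxtimes\overline{\mathcal C}$ automatically admits a holomorphic conformal extension. Hence the entire content is \voa{}-theoretic: one must \emph{build} a \voa{} realizing $\overline{\mathcal C}$ in the first place, which is exactly what \autoref{conj:weakreconstruction} demands, so the argument is circular unless one has an independent source of holomorphic parents. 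This can be done unconditionally in special cases --- e.g.\ for pointed $\mathcal C=\Cat(D)$, where $V=V_L$ and one takes $W=V_M$ for any positive-definite even lattice $M$ with discriminant form $-D$, which exists in sufficiently large rank by \cite{Nik80}, or for any $\mathcal C$ already known to sit inside a holomorphic \voa{} --- but in general producing the holomorphic parent is the main obstacle, and it would follow from (though it is formally weaker than) the full reconstruction conjecture that every pseudo-unitary modular tensor category is $\Rep(V)$ for some positive, \strat{} \voa{} \cite{Ray23}.

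A more hands-on attempt would exploit the structure theory $C\otimes V_L\subset V$ recalled in \autoref{sec:voa}, with $L$ the associated lattice and $C$ the Heisenberg commutant, and try to assemble $W$ as a simple-current extension of $\overline C\otimes V_{L^-}$, where $L^-$ denotes $L$ with the negated bilinear form. However, $L^-$ is negative definite, so $V_{L^-}$ is not a \voa{}, and realizing $\overline C$ is the same problem one rank smaller with no base case; so this route does not close the argument either. I expect the holomorphic-embedding step to be the genuine difficulty, with the remainder of the argument being essentially formal given the machinery of \autoref{sec:cat} and \autoref{sec:voa}.
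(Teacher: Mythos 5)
The statement you are trying to prove is a \emph{conjecture} in the paper, not a theorem: the authors give no proof of \autoref{conj:weakreconstruction}, and your attempt does not close the gap either --- as you yourself acknowledge. What your argument actually accomplishes is a reduction of \autoref{conj:weakreconstruction} to the statement that every positive, \strat{} \voa{} $V$ embeds into a \strat{}, holomorphic \voa{} $U$ as half of a dual pair with \strat{} commutant $W=\Com_U(V)$. This is precisely the equivalent reformulation the paper itself records as \autoref{conj:weakreconstruction2} (primitive embedding into a holomorphic \voa{} with \strat{} commutant), so your analysis matches the paper's framing of the problem exactly; the step from that embedding to $\Rep(W)\cong\overline{\mathcal{C}}$ is indeed essentially formal given the machinery of \autoref{sec:cat} and \autoref{sec:voa} (mirror-extension/dual-pair arguments of the type used in the proof of \autoref{thmph:recharacterizationmainconj} and \autoref{thmph:classification}), and your unconditional treatment of the pointed case via Nikulin's existence theorem for lattices with prescribed discriminant form is correct.

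The genuine gap is the one you name: producing the holomorphic parent is equivalent to the conjecture itself, so no argument that presupposes a \voa{} realizing $\mathcal{C}\boxtimes\overline{\mathcal{C}}$ (or $\overline{\mathcal{C}}$ alone) can succeed without new input. Be careful also with one subsidiary point you gloss over: even granting a holomorphic extension $U\supset V$, the commutant $\Com_U(V)$ of a \strat{} subalgebra is \emph{not} known in general to be \strat{} (rationality and $C_2$-cofiniteness of cosets is itself a hard open problem), which is why the paper builds that requirement explicitly into the statement of \autoref{conj:weakreconstruction2} rather than deriving it. So your proposal should be read as a correct proof of the equivalence \autoref{conj:weakreconstruction} $\Leftrightarrow$ \autoref{conj:weakreconstruction2}, not as a proof of the conjecture.
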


This conjecture is called ``weak reconstruction'' because it follows from the reconstruction conjecture, which asserts that every modular tensor category $\Cat$ is realized as $\Cat\cong\Rep(V)$ for some \strat{} \voa{} $V$. Weak reconstruction admits the following equivalent reformulation.

\begin{conj}\label{conj:weakreconstruction2}
Any positive, \strat{} \voa{} $V$ can be embedded primitively into a \strat{}, holomorphic \voa{}~$W$, in such a way that $\Com_W(V)$ is \strat{}.
\end{conj}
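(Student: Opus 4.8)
The plan is to prove that \autoref{conj:weakreconstruction2} is equivalent to weak reconstruction, \autoref{conj:weakreconstruction}, by passing back and forth between the canonical ``diagonal'' Lagrangian algebra on a tensor product of \voa{}s and the dual-pair structure inside a holomorphic \voa{}. Both implications rest on \autoref{prop:algext}, \autoref{prop:CMSY} and the structure of Lagrangian algebras in a Deligne product of modular tensor categories.

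\emph{Weak reconstruction implies \autoref{conj:weakreconstruction2}.} Let $V$ be a positive, \strat{} \voa{} and put $\mathcal{C}\coloneqq\Rep(V)$, a pseudo-unitary modular tensor category. By \autoref{conj:weakreconstruction} there is a positive, \strat{} \voa{}~$U$ with $\Rep(U)\cong\overline{\mathcal{C}}$. Then $V\otimes U$ is positive and \strat{} with $\Rep(V\otimes U)\cong\mathcal{C}\boxtimes\overline{\mathcal{C}}$, and this modular tensor category carries the canonical algebra $A\cong\bigoplus_{X}X\boxtimes\overline{X}$, the sum over the simple objects $X$ of $\mathcal{C}$. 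I would verify that $A$ is condensable (haploid, commutative and separable, with twist $\theta_X\theta_{\overline{X}}=\theta_X\theta_X^{-1}=\id$) and Lagrangian, i.e.\ $(\mathcal{C}\boxtimes\overline{\mathcal{C}})_A^{\loc}\cong\Vect$. By \autoref{prop:algext} applied to $V\otimes U$, the algebra $A$ then corresponds to a simple conformal extension $W\supseteq V\otimes U$ which is positive, \strat{} and satisfies $\Rep(W)\cong\Vect$, i.e.\ is holomorphic. Decomposing $W$ over $V\otimes U$ recovers $W\cong\bigoplus_{X}X\otimes Y_X$ with $Y_X$ the $U$-module labelled by $\overline{X}$. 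Since the multiplicity space of the vacuum $V$-module in $W$ is exactly $\C\vac\otimes Y_V\cong U$, one gets $\Com_W(V)=U$, and symmetrically $\Com_W(U)=V$. Thus $V\hookrightarrow W$ is a primitive embedding into a \strat{}, holomorphic \voa{} whose commutant $\Com_W(V)=U$ is positive, in particular \strat{}, which is \autoref{conj:weakreconstruction2}.

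\emph{\autoref{conj:weakreconstruction2} implies weak reconstruction.} Conversely, let $\mathcal{C}\cong\Rep(V)$ be realized by a positive, \strat{} \voa{}~$V$, and use \autoref{conj:weakreconstruction2} to embed $V$ primitively into a \strat{}, holomorphic \voa{}~$W$ with $C\coloneqq\Com_W(V)$ \strat{}. Since $V$ and $C$ commute in $W$ and meet only in $\C\vac$, the subalgebra they generate is isomorphic to $V\otimes C$, and it is conformal because $\omega_C=\omega_W-\omega_V$. By \autoref{prop:CMSY}, applied to the \strat{} \voa{} $V\otimes C$ (whose representation category is $\mathcal{C}\boxtimes\Rep(C)$), the $\N$-graded extension $W$ defines a condensable algebra in $\mathcal{C}\boxtimes\Rep(C)$, which is Lagrangian since $\Rep(W)\cong\Vect$. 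Writing $W\cong\bigoplus_{i,j}m_{ij}\,M_i\otimes N_j$ over $V\otimes C$ with $M_0=V$, $N_0=C$, the identity $\Com_W(V)=C$ forces $m_{0j}=\delta_{0j}$, while primitivity, i.e.\ the extra condition $\Com_W(C)=V$, forces $m_{i0}=\delta_{i0}$. These are precisely the conditions for the Lagrangian algebra to be the graph of a ribbon equivalence $\mathcal{C}\xrightarrow{\ \sim\ }\overline{\Rep(C)}$, so $\Rep(C)\cong\overline{\mathcal{C}}$. Taking $W\coloneqq C$ in \autoref{conj:weakreconstruction} completes the argument, provided $C$ is not merely \strat{} but positive.

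\emph{The main obstacle.} That last provision is the delicate point. The reverse construction produces a \strat{} \voa{}~$C$ with pseudo-unitary (indeed $\overline{\mathcal{C}}$) representation category, and one must upgrade ``\strat{}'' to ``positive''. Here the conformal weights of the $C$-modules $N_j$ are constrained by $\rho(M_j)+\rho(N_j)\in\N$, with equality only for $j=0$; combined with positivity of $V$, the only possibility left to exclude is a non-vacuum $C$-module of conformal weight exactly $0$ paired with an integrally graded $V$-module, which I would rule out using the essential-positivity results of \cite{MR24a}. The other technical input, which I expect to be more routine, is the ``primitive $\Rightarrow$ graph'' step: checking that $\Com_W(C)=V$ genuinely pins down $m_{i0}=\delta_{i0}$, so that one obtains an honest equivalence $\Rep(C)\cong\overline{\Rep(V)}$ rather than merely a Witt equivalence of the two representation categories.
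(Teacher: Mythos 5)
The statement you are proving is a \emph{conjecture}: the paper offers no proof of \autoref{conj:weakreconstruction2} anywhere. The only provable content attached to it is the sentence immediately preceding it --- that it is an ``equivalent reformulation'' of weak reconstruction, \autoref{conj:weakreconstruction} --- and even for that equivalence the paper gives no argument here, deferring the discussion to the companion paper \cite{MR24a}. So there is nothing in this paper to compare your proposal against, and, more importantly, what you have written is (at best) a proof that \autoref{conj:weakreconstruction} and \autoref{conj:weakreconstruction2} are equivalent. That is a genuinely useful observation, and it is the claim the authors implicitly make, but it does not and cannot establish \autoref{conj:weakreconstruction2} itself, since \autoref{conj:weakreconstruction} is equally open.

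Taking the equivalence on its own terms: your forward direction (canonical Lagrangian algebra $\bigoplus_X X\boxtimes\overline{X}$ in $\Rep(V)\boxtimes\overline{\Rep(V)}\cong\Rep(V\otimes U)$, then \autoref{prop:algext}) is correct and is the standard argument. The reverse direction, however, has a genuine gap exactly where you place your ``main obstacle,'' and your diagnosis of that obstacle is too optimistic. Writing $W\cong\bigoplus_i M_i\otimes N_{\sigma(i)}$ as the graph of the braid-reversed equivalence, the CFT-type condition on $W$ only gives $\rho(M_i)+\rho(N_{\sigma(i)})\in\Z_{\geq1}$ for $i\neq0$, and positivity of $V$ only gives $\rho(M_i)>0$. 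This is entirely consistent with $\rho(N_{\sigma(i)})<0$ whenever $\rho(M_i)>1$, so the case you claim is ``the only possibility left to exclude'' (a weight-$0$ module paired with an integrally graded $V$-module) is not the only one: non-vacuum $C$-modules of strictly negative conformal weight are not ruled out by this counting, and essential-positivity results that control weights only modulo $1$ will not distinguish $\rho(N)=-1/2$ from $\rho(N)=1/2$. Since \autoref{conj:weakreconstruction} demands a \emph{positive} realization of $\overline{\mathcal{C}}$, this step is where the equivalence actually needs work, and as written your argument does not close it.
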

Here, primitive embedding means a conformal embedding $V\subset W$ such that $\Com_W(\Com_W(V))=W$.

%%%%%%%%%%%%%%%%%%%%%%%%%%%%%%%

\subsection{Physics}\label{subsec:physics}

Finally, we connect the mathematical concepts described above to physics, continuing the discussion in \cite[\autoref*{MR1subsec:physics}]{MR24a}.

\medskip

We recall the physical meaning of Witt equivalence of modular tensor categories, \autoref{defn:Wittequivalencecategories}. Remembering that a TQFT $(\mathcal{C},c)$ admits a topological boundary condition if and only if $c=0$ and $\mathcal{C}$ is of the form $Z(\mathcal{F})$ for some spherical fusion category $\mathcal{F}$, we learn that $\mathcal{C}$ and $\mathcal{D}$ are Witt equivalent if and only if the theory $(\mathcal{C}\boxtimes \overline{\mathcal{D}},0)$ obtained by stacking $(\mathcal{C},c)$ onto the orientation-reversal of $(\mathcal{D},c)$ has a topological boundary. Or, by "unfolding" along the boundary, we can reformulate the Witt equivalence of $\mathcal{C}$ and $\mathcal{D}$ as the existence of a topological interface which interpolates between the TQFTs $(\mathcal{C},c)$ and $(\mathcal{D},c)$. This is depicted in \autoref{fig:wittequivalence}.

\begin{figure}[ht]
\begin{center}
\begin{tikzpicture}
\draw[black, thick,dashed](1,-1)--(-5,-1);
\filldraw[black,fill=red!30,fill opacity=.7,  thick,draw=none] (-3,0) -- (1-3,1.5) -- (1-3,-1) -- (-3,-1-1.5) -- cycle;
\draw[black,  thick,dashed] (-6,0)--(0,0);
\draw[black,  thick,dashed] (0,-2.5) -- (-6,-1-1.5);
\draw[black, thick,dashed] (1,1.5)--(1-6,1.5);
\node[] at (-.5,-.5) {$(\mathcal{C},c)$};
\node[] at (-.5-4,-.5) {$(\mathcal{D},c)$};
\node[] at (-2.5,-.5) {$\mathcal{I}$};
\tikzstyle{s}=[draw,decorate,->]
\path[s] (1.2,-.75) to node[anchor=south] {fold} (2.2,-.75);
\draw[black, thick,dashed](1+6-.5,-1)--(-5+6+3-.5,-1);
\filldraw[black,fill=red!30,fill opacity=.7,  thick,draw=none] (-3+6-.5,0) -- (1-3+6-.5,1.5) -- (1-3+6-.5,-1) -- (-3+6-.5,-1-1.5) -- cycle;
\draw[black,  thick,dashed] (-6+6+3-.5,0)--(+6-.5,0);
\draw[black,  thick,dashed] (0+6-.5,-2.5) -- (-6+6+3-.5,-1-1.5);
\draw[black, thick,dashed] (1+6-.5,1.5)--(1-6+6+3-.5,1.5);
\node[] at (-.5+6-.5,-.5) {$(\mathcal{C},c)\boxtimes\overline{(\mathcal{D},c)}$};
\node[] at (-2.5+6-.5,-.5) {$\mathcal{B}$};
\end{tikzpicture}
\end{center}
\caption{A 2+1d depiction of Witt equivalence.}\label{fig:wittequivalence}
\end{figure}
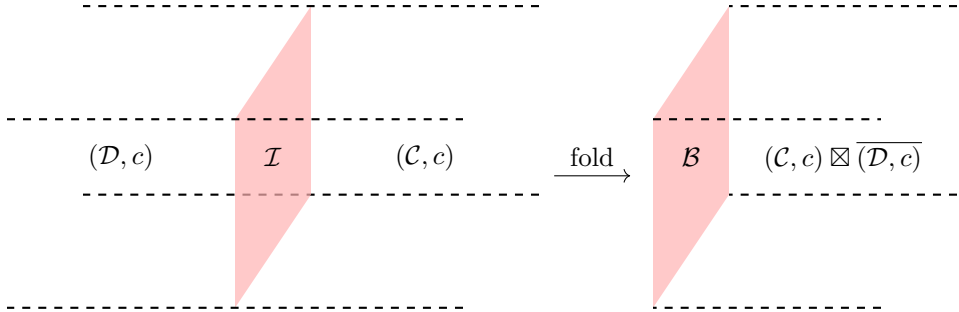

A sufficient but not necessary condition for $\mathcal{D}$ to be Witt equivalent to $\mathcal{C}$ is if $\mathcal{D}$ can be obtained from $\mathcal{C}$ by anyon condensation (or gauging a non-invertible 1-form symmetry) \cite{Kon14}. The data required to perform an anyon condensation is a condensable algebra $A$ of anyons, and we write $(\mathcal{D},c)=(\mathcal{C}_A^\loc,c)=(\mathcal{C},c)/A$ for the TQFT obtained by condensing the anyons of $A$ inside $(\mathcal{C},c)$. The reason this is sufficient to assert Witt equivalence is that one can obtain an interface between $(\mathcal{D},c)=(\mathcal{C},c)/A$ and $(\mathcal{C},c)$ by condensing (or gauging) $A$ in half of spacetime. 

Recall that we may think of a rational chiral algebra as a gapless chiral boundary condition of a 3d TQFT. Many statements about chiral algebras become clearer when viewed from this $3$-dimensional perspective. For example, the fact that a condensable algebra $A$ of $\Rep(W)$ defines a conformal extension $W\subset V$ with $\Rep(V)\cong \Rep(W)/A$ can be seen by taking the topological interface that interpolates between $\Rep(W)$ and $\Rep(W)/A$ (obtained by condensing anyons in half of spacetime) and fusing it onto the gapless chiral boundary condition described by $W$ to obtain a new gapless chiral boundary condition of $\Rep(W)/A$. This is depicted in \autoref{fig:condensationextension}.

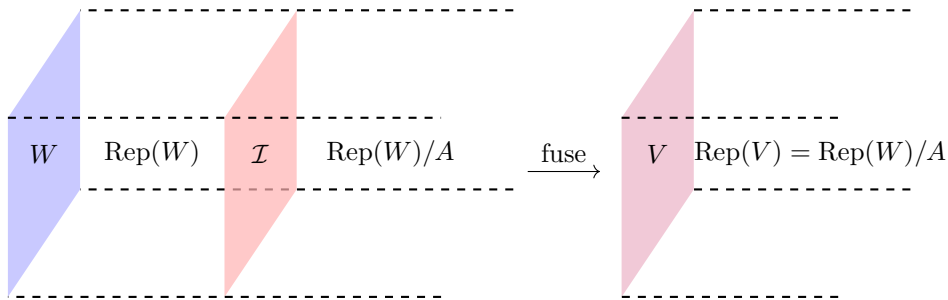
\begin{figure}[ht]
\begin{center}
\begin{tikzpicture}[scale=0.95]
\filldraw[black,fill=blue!30,fill opacity=.7, thick,draw=none] (-6,0) -- (-6,-1-1.5) -- (-5,-1) -- (1-6,1.5) -- cycle;
 \draw[black, thick,dashed](1,-1)--(-5,-1);
\filldraw[black,fill=red!30,fill opacity=.7, thick,draw=none] (-3,0) -- (1-3,1.5) -- (1-3,-1) -- (-3,-1-1.5) -- cycle;
\draw[black, thick,dashed] (-6,0)--(0,0);
\draw[black, thick,dashed] (0,-2.5) -- (-6,-1-1.5);
\draw[black, thick,dashed] (1,1.5)--(1-6,1.5);
\node[] at (-.5-.2,-.5) {$\Rep(W)/A$};
\node[] at (-.5-4+.5,-.5) {$\Rep(W)$};
\node[] at (-2.5,-.5) {$\mathcal{I}$};
\node[] at (-2.5-3,-.5) {$W$};
\tikzstyle{s}=[draw,decorate,->]
\path[s] (1.2,-.75) to node[anchor=south] {fuse} (2.2,-.75);
\draw[black, thick,dashed](1+6-.5,-1)--(-5+6+3-.5,-1);
\filldraw[black,fill=purple!30,fill opacity=.7,  thick,draw=none] (-3+6-.5,0) -- (1-3+6-.5,1.5) -- (1-3+6-.5,-1) -- (-3+6-.5,-1-1.5) -- cycle;
\draw[black, thick,dashed] (-6+6+3-.5,0)--(+6-.5,0);
\draw[black, thick,dashed] (0+6-.5,-2.5) -- (-6+6+3-.5,-1-1.5);
\draw[black, thick,dashed] (1+6-.5,1.5)--(1-6+6+3-.5,1.5);
\node[] at (-.5+6-.5+.2+0.05,-.5) {$\Rep(V)=\Rep(W)/A$};
\node[] at (-2.5+6-.5,-.5) {$V$};
\end{tikzpicture}
\end{center}
\caption{A 2+1d depiction of the relationship between condensable algebras and conformal extensions.}\label{fig:condensationextension}
\end{figure}

Recall from the introduction that we use the notation ${_V}\mathcal{H}_W$ to denote the 2d RCFT obtained by dimensionally reducing the picture in \autoref{fig:KS} along the interval direction. This is an RCFT with $V$ as part of its left-moving chiral algebra, $\overline{W}$ as part of its right-moving chiral algebra, and $\mathcal{H}$ the modular-invariant combination of representations of $V$ and $\overline{W}$ defining the $S^1$ Hilbert space of the theory, thought of as a Lagrangian algebra of $\Rep(V)\boxtimes \overline{\Rep(W)}$. We define the Verlinde lines $\operatorname{Ver}(\mathcal{H}/V,W)$ of ${_V}\mathcal{H}_W$ to be the (fusion) category of topological line operators of ${_V}\mathcal{H}_W$ which commute with the chiral algebras $V$ and $W$ \cite{Ver88}. Mathematically, $\operatorname{Ver}(\mathcal{H}/V,W)=(\Rep(V)\boxtimes\overline{\Rep(W)})_{\mathcal{H}}$, the category of $\mathcal{H}$-modules inside $\Rep(V)\boxtimes\overline{\Rep(W)}$; physically, $\operatorname{Ver}(\mathcal{H}/V,W)$ is simply the category of topological line operators supported on the topological surface defined by $\mathcal{H}$ in \autoref{fig:KS}. See, e.g., \cite{BT18,CLSWY19} for background on fusion category symmetries of 2d CFTs.

%%%%%%%%%%%%%%%%%%%%%%%%%%%%%%%
%%%%%%%%%%%%%%%%%%%%%%%%%%%%%%%

\section{Witt Equivalence}\label{sec:witt}

In this section, we extend the notion of Witt equivalence of lattices, \autoref{defi:lattice_witt}, to the setting of \strat{} \voa{}s. Just as the notion of genus can be generalized in two inequivalent ways \cite{MR24a}, we also find two inequivalent generalizations of Witt equivalence: we refer to them as weak Witt equivalence and strong Witt equivalence, though we often drop the word ``weak'' below. As suggested by the names, strong Witt equivalence defines a strictly finer equivalence relation than weak Witt equivalence. We describe the basic properties of weak Witt equivalence in \autoref{subsec:weakwitt}, and those of strong Witt equivalence in \autoref{subsec:strongwitt}. We discuss familiar examples in \autoref{subsec:wittexamples}.

%%%%%%%%%%%%%%%%%%%%%%%%%%%%%%%

\subsection{Weak Witt Equivalence}\label{subsec:weakwitt}

We begin by considering the weaker variation of Witt equivalence. To motivate our definition, we rephrase Witt equivalence of lattices, \autoref{defi:lattice_witt}, in vertex algebraic terms. Recall the definition of Witt equivalence of modular tensor categories, \autoref{defn:Wittequivalencecategories}, and also the definition of lattice vertex algebras or chiral free boson theories (see \cite{Bor86,FLM88}).

\begin{prop}\label{prop:wwittcomp}
Two positive-definite, even lattices $L$ and $M$ are Witt equivalent as lattices if and only if $\Rep(V_L)$ and $\Rep(V_M)$ are Witt equivalent as modular tensor categories and the central charges $c(V_L)=c(V_M)$ coincide.
\end{prop}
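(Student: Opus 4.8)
The statement relates Witt equivalence of even lattices to Witt equivalence of the associated lattice vertex operator algebras, so the plan is to reduce everything to the pointed case and invoke the dictionary already recorded in the excerpt. The key facts I would use are: (i) $\Rep(V_L) \cong \Cat(L'/L)$ as ribbon fusion categories, with the pivotal structure chosen so that all categorical dimensions are positive (stated in \autoref{sec:voa} and \autoref{sec:cat}); (ii) for pointed modular tensor categories $\Cat(D)$, Witt equivalence in the categorical sense precisely reproduces Witt equivalence of metric groups $D$, since the condensable algebras $A(H)$ in $\Cat(D)$ are indexed by the isotropic subgroups $H \le D$ and $\Cat(D)_{A(H)}^\loc \cong \Cat(H^\perp/H)$ (the discussion following \autoref{prop:anisorep}); and (iii) $c(V_L) = \rk(L)$, so that matching central charges is the same as matching ranks.

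First I would unwind the definitions. By \autoref{defi:lattice_witt}, $L$ and $M$ are Witt equivalent as lattices iff their discriminant forms $L'/L$ and $M'/M$ are Witt equivalent as metric groups \emph{and} $\sign(L) = \sign(M)$. Since $L$ and $M$ are positive-definite, $\sign(L) = \rk(L) = c(V_L)$ and likewise for $M$, so the signature condition is exactly the condition $c(V_L) = c(V_M)$. It therefore remains to show that $L'/L$ and $M'/M$ are Witt equivalent as metric groups if and only if $\Rep(V_L) = \Cat(L'/L)$ and $\Rep(V_M) = \Cat(M'/M)$ are Witt equivalent as modular tensor categories. For the forward direction, Witt equivalence of the metric groups gives, via \autoref{defi:metric_witt}, isotropic subgroups $H \le L'/L$ and $K \le M'/M$ with $H^\perp/H \cong K^\perp/K$; passing to the corresponding condensable algebras $A(H)$, $A(K)$ and using fact (ii) above, we get $\Cat(L'/L)_{A(H)}^\loc \cong \Cat(H^\perp/H) \cong \Cat(K^\perp/K) \cong \Cat(M'/M)_{A(K)}^\loc$ as ribbon fusion categories, which is exactly condition (2) of \autoref{prop:equivalentformulationWitt}, hence $\Cat(L'/L)$ and $\Cat(M'/M)$ are Witt equivalent.

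For the reverse direction, suppose $\Cat(L'/L)$ and $\Cat(M'/M)$ are Witt equivalent as modular tensor categories. By \autoref{prop:anisorep}, each has a unique (up to ribbon fusion equivalence) anisotropic representative, and since Witt equivalent categories have the same anisotropic representative, $\Cat(L'/L)_{\mathrm{an}} \cong \Cat(M'/M)_{\mathrm{an}}$. Now one needs to know that the anisotropic representative of a \emph{pointed} MTC is again pointed; this follows because it is reached by condensing a maximal condensable algebra $A(H)$ with $H$ a maximal isotropic subgroup, and $\Cat(D)_{A(H)}^\loc \cong \Cat(H^\perp/H)$ is pointed. So $\Cat((L'/L)_{\mathrm{an}}) \cong \Cat((M'/M)_{\mathrm{an}})$, where $(L'/L)_{\mathrm{an}}$ and $(M'/M)_{\mathrm{an}}$ denote anisotropic metric groups obtained by passing to a maximal isotropic subquotient. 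Since the functor $D \mapsto \Cat(D)$ from metric groups to pointed MTCs (with the positive-dimension pivotal structure) is faithful enough to detect the metric group — a ribbon fusion equivalence $\Cat(D) \cong \Cat(D')$ forces $D \cong D'$ as metric groups, by the classification of pointed MTCs \cite{JS93,EGNO15} — we conclude $(L'/L)_{\mathrm{an}} \cong (M'/M)_{\mathrm{an}}$, and hence $L'/L$ and $M'/M$ are Witt equivalent as metric groups by \autoref{defi:metric_witt} (both having isometric maximal isotropic subquotients).

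\textbf{Main obstacle.} The routine bookkeeping is harmless; the one point that needs care is the claim that a ribbon fusion equivalence of pointed modular tensor categories $\Cat(D) \cong \Cat(D')$ implies an \emph{isometry} $D \cong D'$ of metric groups, rather than just an isomorphism of the underlying groups. This is where the choice of pivotal/spherical structure (all categorical dimensions positive) and the fact that we remember the ribbon twist $\theta$ — which encodes the quadratic form $Q$ via $\theta_x = e^{2\pi i Q(x)}$ — is essential; without tracking $\theta$ one would only recover the group. I would cite \cite{JS93,EGNO15} (and the discussion of $\Cat(D,Q,\theta)$ in \autoref{sec:cat}) for this, since the excerpt has already set up exactly this correspondence. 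A secondary, even more minor point is to double-check that condensable algebras in $\Cat(D)$ really are \emph{all} of the form $A(H)$ for $H$ isotropic — this is the footnoted fact in the excerpt, with the $2$-cochain ambiguity not affecting the condensation — so that ``Witt equivalence of $\Cat(D)$'' genuinely coincides with ``Witt equivalence of $D$'' and no extra categorical Witt classes sneak in.
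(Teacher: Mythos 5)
Your proof is correct and follows essentially the same route as the paper: the signature condition becomes the central-charge condition, and Witt equivalence of the discriminant forms is translated into Witt equivalence of the pointed categories via the correspondence between isotropic subgroups $H$ and condensable algebras $A(H)$ with $\Cat(D)_{A(H)}^{\loc}\cong\Cat(H^\perp/H)$, invoking item~(2) of \autoref{prop:equivalentformulationWitt}. The paper phrases the forward direction through same-rank lattice extensions (which is the same thing as condensing $A(H)$) and dismisses the reverse direction as ``nearly identical''; your explicit treatment of it via anisotropic representatives, and your flagging of the point that a \emph{ribbon} equivalence of pointed MTCs is needed to recover the quadratic form and not just the group, is a sound and welcome elaboration of exactly the step the paper leaves implicit.
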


\begin{proof}
We only consider the forward direction. The logic of the reverse direction is nearly identical. Because $L$ and $M$ are assumed to be Witt equivalent, $c(V_L)=\rk(L)=\rk(M)=c(V_M)$. On the other hand, by the discussion around equation~\eqref{eqn:lattice_ext}, $L'/L$ and $M'/M$ having isometric subquotients means that $L$ and $M$ can be extended to lattices $\tilde{L}$ and $\tilde{M}$ that have isometric discriminant forms. Equivalently, $V_L$ and $V_M$ can be conformally extended to lattice \voa{}s $V_{\tilde{L}}$ and $V_{\tilde{M}}$, respectively, which have ribbon equivalent representation categories, $\Rep(V_{\tilde{L}})\cong\Rep(V_{\tilde{M}})$. This means that there are condensable algebras $A$ and $B$ of $\Rep(V_L)$ and $\Rep(V_M)$, respectively, such that $\Rep(V_L)_A^\loc$ is ribbon equivalent to $\Rep(V_M)_B^\loc$. By item~(2) of \autoref{prop:equivalentformulationWitt}, it follows that $\Rep(V_L)$ and $\Rep(V_M)$ are Witt equivalent as modular tensor categories.
\end{proof}
This alternative characterization of Witt equivalence of lattices admits a clear generalization to \strat{} \voa{}s.

\begin{defi}[Witt Equivalence]\label{defi:weakwittequivalence}
Two \strat{} \voa{}s $V$ and $V'$ are \emph{Witt equivalent} if they have the same central charge, $c(V)=c(V')$, and if $\Rep(V)$ and $\Rep(V')$ are Witt equivalent as modular tensor categories. 
\end{defi}
This defines an equivalence relation. We denote the Witt class of a \strat{} \voa{} $V$ by $[V]_{\mathrm{W}}$. Often, we shall use the symbol $([\mathcal{C}],c)$ to represent the Witt class of \voa{}s $V$ with $c(V)=c$ and $\Rep(V)$ Witt equivalent to $\mathcal{C}$, i.e.\ $([\Rep(V)],c(V))=[V]_{\mathrm{W}}$.

\begin{rem}\label{rem:wittlat}
\autoref{prop:wwittcomp} states (in analogy to \cite[\autoref*{MR1prop:bulkcomp}]{MR24a} and \cite[\autoref*{MR1prop:hypercomp}]{MR24a} for the lattice genus) that the map from Witt classes of positive-definite, even lattices to Witt classes of \voa{}s,
\begin{equation*}
[L]\mapsto [V_L]_{\mathrm{W}},
\end{equation*}
is well-defined and injective. That is, our notion of (weak) Witt equivalence actually generalizes Witt equivalence of lattices. Of course, this is by design.
\end{rem}

\medskip

There is also the following physical characterization of Witt equivalence, which is in the same spirit as \cite[\autoref*{MR1defiph:bulkgenus}]{MR24a} for the bulk genus, and uses the physical interpretation of Witt equivalence of modular tensor categories in terms of 3d TQFT.

\begin{defiph}[Witt Equivalence]\label{defiph:wittgenus}
Two rational chiral algebras $V$ and $W$ are Witt equivalent if their bulk 3d TQFTs $(\Rep(V),c(V))$ and $(\Rep(W),c(W))$, respectively, can be separated by a (not necessarily invertible) topological interface~$\mathcal{I}$, as depicted in \autoref{fig:WittEqVOA}.
\end{defiph}

\begin{figure}[ht]
\begin{center}
\begin{tikzpicture}[xscale=1.3]
\filldraw[black,fill=blue!30,fill opacity=.7,  thick,draw=none] (-6,0) -- (-6,-1-1.5) -- (-5,-1) -- (1-6,1.5) -- cycle;

\draw[black, thick,dashed](1,-1)--(-5,-1);
\filldraw[black,fill=blue!30,fill opacity=.7,  thick,draw=none] (-6+6,0) -- (-6+6,-1-1.5) -- (-5+6,-1) -- (1-6+6,1.5) -- cycle;
\filldraw[black,fill=red!30,fill opacity=.7,  thick,draw=none] (-3,0) -- (1-3,1.5) -- (1-3,-1) -- (-3,-1-1.5) -- cycle;
\draw[black,thick,dashed] (-6,0) -- (0,0);
\draw[black,thick,dashed] (0,-2.5) -- (-6,-1-1.5);
\draw[black,thick,dashed] (1,1.5) -- (1-6,1.5);
\node[] at (-.5-.4-.1,-.5) {$(\Rep(W),c(W))$};
\node[] at (-.5-4+.5,-.5) {$(\Rep(V),c(V))$};
\node[] at (-2.5,-.5) {$\mathcal{I}$};
\node[] at (-2.5-3,-.5) { $ V$};
\node[] at (-2.5-3+6,-.5) { $\overline{W}$};
\end{tikzpicture}
\end{center}
\caption{Witt equivalence of two rational chiral algebras $V$ and $W$.}
\label{fig:WittEqVOA}
\end{figure}
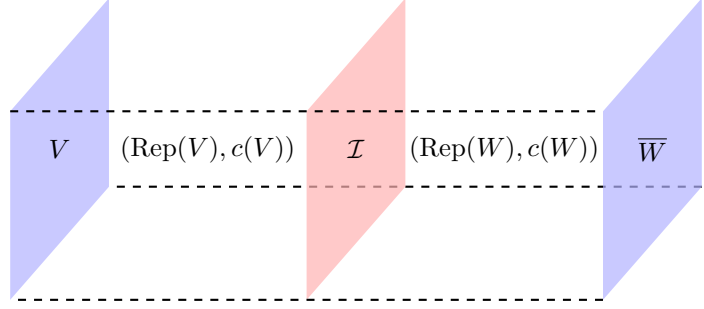

\medskip

We study some properties of Witt equivalence.
\begin{rem}\label{rem:bulkimplieswitt}
Evidently, two \strat{} \voa{}s $V$ and $V'$ in the same bulk genus,
\autoref{defi:bulkgenus}, are Witt equivalent.

The converse, however, is not generally true: for example, the simple affine \voa{}s $\mathsf{E}_{8,1}$ and $\mathsf{D}_{8,1}$ are Witt equivalent since they both have central charge $c=8$ and 
\begin{equation*}
\Rep(\mathsf{D}_{8,1})\boxtimes \overline{\Rep(\mathsf{E}_{8,1})}\cong Z(\Vect_{\Z_2})\boxtimes \overline{\Vect}\cong Z(\Vect_{\Z_2}),
\end{equation*}
but they are clearly not bulk equivalent.
\end{rem}

The following gives an alternative characterization of Witt equivalence of \strat{} \voa{}s, analogous to item~(2) of \autoref{prop:equivalentformulationWitt}, which applies to modular tensor categories.

\begin{prop}\label{prop:alternativecharacterizationWittgenus}
Let $V$ and $V'$ be two \strat{} \voa{}s. If there exist \strat{} \voa{}s $W$ and $W'$ that belong to the same bulk genus and contain $V$ and $V'$, respectively, as conformal subalgebras, then $V$ and $V'$ are Witt equivalent.

If $V$ and $V'$ are additionally positive, then the converse is true as well.
\end{prop}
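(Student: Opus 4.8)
The plan is to deduce both directions from the correspondence between conformal extensions and condensable algebras (\autoref{prop:CMSY} and \autoref{prop:algext}) together with the categorical reformulation of Witt equivalence in \autoref{prop:equivalentformulationWitt}. For the first (unconditional) implication, suppose $W\supset V$ and $W'\supset V'$ are \strat{} conformal extensions with $W$ and $W'$ in the same bulk genus, so that $\Rep(W)\cong\Rep(W')$ as ribbon fusion categories. By \autoref{prop:CMSY}, $W$ defines a condensable algebra $A$ in $\Cat=\Rep(V)$ with $\Rep(W)\cong\Cat_A^{\loc}$, and similarly $W'$ defines a condensable algebra $B$ in $\Cat'=\Rep(V')$ with $\Rep(W')\cong(\Cat')_B^{\loc}$. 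Hence $\Cat_A^{\loc}\cong(\Cat')_B^{\loc}$ as ribbon fusion categories, so by item~(2) of \autoref{prop:equivalentformulationWitt} the modular tensor categories $\Rep(V)$ and $\Rep(V')$ are Witt equivalent. Since extension to a conformal subalgebra does not change the conformal vector, $c(W)=c(V)$ and $c(W')=c(V')$; and since $W,W'$ are in the same bulk genus, $c(W)=c(W')$. Therefore $c(V)=c(V')$, and $V$ and $V'$ are Witt equivalent by \autoref{defi:weakwittequivalence}.

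For the converse, assume in addition that $V$ and $V'$ are positive. By \autoref{defi:weakwittequivalence}, $c(V)=c(V')$ and $\Rep(V)$ is Witt equivalent to $\Rep(V')$ as modular tensor categories. Apply item~(2) of \autoref{prop:equivalentformulationWitt}: there are condensable algebras $A$ of $\Cat=\Rep(V)$ and $B$ of $\Cat'=\Rep(V')$ with $\Cat_A^{\loc}\cong(\Cat')_B^{\loc}$ as ribbon fusion categories. Now invoke \autoref{prop:algext}: since $V$ is \strat{} and positive, the algebra $A$ corresponds to a simple conformal extension $W\supset V$, and since $V$ is positive, $W$ is $\N$-graded, hence \strat{} and positive with $\Rep(W)\cong\Cat_A^{\loc}$ (as noted in the comments following \autoref{prop:algext}). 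Likewise $B$ yields a \strat{}, positive $W'\supset V'$ with $\Rep(W')\cong(\Cat')_B^{\loc}$. Then $\Rep(W)\cong\Rep(W')$ as ribbon fusion categories, and $c(W)=c(V)=c(V')=c(W')$, so $W$ and $W'$ lie in the same bulk genus by \autoref{defi:bulkgenus}. This exhibits the desired $W$ and $W'$, completing the proof.

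The only genuinely delicate point is making sure the categorical equivalences carry the full ribbon structure rather than merely the braided structure, so that one lands in the same \emph{bulk} genus (which by \autoref{defi:bulkgenus} demands ribbon equivalence) and not just in a common Witt class; this is why we use the ribbon-refined version of \autoref{prop:equivalentformulationWitt} throughout, and why the statement is phrased with conformal (rather than merely vertex subalgebra) extensions, guaranteeing the conformal vectors — and hence central charges and the twist on the representation categories — match up. The positivity hypothesis in the converse is essential precisely because, without it, a condensable algebra $A$ in $\Rep(V)$ need not integrate to an $\N$-graded (let alone \strat{}) extension $W$, as emphasized in the discussion preceding \autoref{prop:algext}; with positivity, \autoref{prop:algext} gives the clean bijection we need. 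All remaining steps are bookkeeping with central charges.
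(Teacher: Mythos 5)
Your proof is correct and follows essentially the same route as the paper's: both directions use the correspondence between conformal extensions and condensable algebras (\autoref{prop:CMSY} for the forward direction, \autoref{prop:algext} for the converse, where positivity is needed) together with item~(2) of \autoref{prop:equivalentformulationWitt}, plus the central-charge bookkeeping. Your added remarks on tracking the ribbon (rather than merely braided) structure are a sensible elaboration of what the paper leaves implicit, but they do not change the argument.
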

\begin{proof}
Suppose that we are given $W$ and $W'$ satisfying the hypotheses of the proposition. Evidently, $c(V)=c(V')$. Furthermore, $W$ must be obtained via a condensable algebra $A$ in $\Rep(V)$ for which $\Rep(V)_A^\loc\cong \Rep(W)$, and similarly for $W'$ (see \autoref{prop:CMSY}). By item~(2) of \autoref{prop:equivalentformulationWitt}, $\Rep(V)$ is Witt equivalent to $\Rep(V')$ as a modular tensor category. 

Conversely, if $V$ and $V'$ are Witt equivalent (and positive), then by item~(2) of \autoref{prop:equivalentformulationWitt}, there exist condensable algebras $A$ and $A'$ in $\Rep(V)$ and $\Rep(V')$, respectively, such that $\Rep(V)_A^\loc\cong \Rep(V')_{A'}^\loc$ as modular tensor categories. On the other hand, a condensable algebra $A$ in $\Rep(V)$ defines a conformal extension of $V$ to a \strat{} \voa{} $W$ with $\Rep(W)\cong\Rep(V)_A^\loc$; similarly, one obtains a  \strat{} \voa{} $W'$ containing $V'$ as a conformal subalgebra and having $\Rep(W')\cong\Rep(V')_{A'}^\loc$ (see \autoref{prop:algext}). Moreover, $c(W)=c(V)=c(V')=c(W')$.
\end{proof}
\begin{figure}[ht]
\centering
\begin{tikzcd}
[/tikz/execute at end picture={
\node (large) [rectangle, draw, fit=(A3) (A4)] {};
}]
|[alias=A3]| W & |[alias=A4]| W'&\text{bulk equiv.}\\
|[alias=A1]| V \arrow[u,hook,"\text{\,\,\,conf.\ ext.}" right] & |[alias=A2]| V'\arrow[u,hook]&\text{Witt equiv.}
\end{tikzcd}
\caption{Witt equivalence of \strat{} \voa{}s $V$ and $V'$.}
\label{fig:witt}
\end{figure}
We depict the above characterization of Witt equivalence in \autoref{fig:witt} (cf.\ \autoref{fig:wittlat}). In the special case of $W\cong W'=V'$ we obtain (cf.\ \autoref{ex:condwitt}, Lemma~7.12 in \cite{SY24}):
\begin{cor}\label{prop:subalgebraimpliesWitt}
Suppose $V$ and $V'$ are \strat{} \voa{}s. If $V$ is a conformal subalgebra of $V'$, then $V$ and $V'$ are Witt equivalent.
\end{cor}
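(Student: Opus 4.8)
The statement to prove is \autoref{prop:subalgebraimpliesWitt}: if $V$ is a conformal subalgebra of a \strat{} \voa{} $V'$, then $V$ and $V'$ are Witt equivalent.

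\textbf{Plan of proof.} The idea is to recognize this as the degenerate case of \autoref{prop:alternativecharacterizationWittgenus} in which the ambient \voa{}s coincide with $V'$ itself, and then to invoke the category-theoretic machinery already assembled. Concretely, suppose $V\subset V'$ is a conformal embedding. First I would record the easy half of the data: since the embedding is conformal, $V$ and $V'$ share a conformal vector, so $c(V)=c(V')$. This is one of the two conditions in \autoref{defi:weakwittequivalence}, and it requires no work beyond unwinding the definition of ``conformal subalgebra''.

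The substantive point is that $\Rep(V)$ and $\Rep(V')$ are Witt equivalent as modular tensor categories. For this I would appeal to \autoref{prop:CMSY}: by rationality of $V$, the extension $V'$ is a finite direct sum of $V$-modules and hence an object of $\Cat\coloneqq\Rep(V)$; since $V'$ is simple and (being \strat{}) of CFT-type, hence $\N$-graded, \autoref{prop:CMSY} tells us that $V'$ defines a condensable algebra $A$ in $\Cat$ with $\Rep(V')\cong\Cat_A^{\loc}$ as modular tensor categories. Now apply \autoref{ex:condwitt} (or equivalently item~(2) of \autoref{prop:equivalentformulationWitt}, taking $B=\textbf{1}$): a modular tensor category and any of its condensations are Witt equivalent, because $\Cat\boxtimes\overline{\Cat_A^{\loc}}\cong Z(\Cat_A)$ and $\Cat_A$ is a spherical fusion category. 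Hence $\Rep(V)$ is Witt equivalent to $\Rep(V')$, and combined with the equality of central charges this gives $[V]_{\mathrm{W}}=[V']_{\mathrm{W}}$ by \autoref{defi:weakwittequivalence}.

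\textbf{Anticipated obstacle.} There is essentially no obstacle here — the whole content has been front-loaded into \autoref{prop:CMSY}, \autoref{ex:condwitt} and \autoref{prop:equivalentformulationWitt}. The only point that deserves a sentence of care is verifying the hypotheses of \autoref{prop:CMSY}: one must note that a \strat{} \voa{} is in particular $\N$-graded and simple, so that $V'$ qualifies as the ``$W$'' in that proposition; this is immediate from the definition of \strat{} (simple, rational, $C_2$-cofinite, self-contragredient, of CFT-type). One could alternatively phrase the proof as a one-line corollary of \autoref{prop:alternativecharacterizationWittgenus} by setting $W=W'=V'$ (noting $V'$ is trivially in the same bulk genus as itself), but spelling out the condensation argument directly is cleaner and avoids invoking the positivity hypothesis that appears in the converse half of that proposition. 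Either way, the proof is a few lines.
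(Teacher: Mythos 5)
Your proof is correct and follows essentially the same route as the paper: the authors obtain this statement precisely as the special case $W\cong W'=V'$ of \autoref{prop:alternativecharacterizationWittgenus} (citing \autoref{ex:condwitt}), whose forward direction is exactly your chain of \autoref{prop:CMSY} followed by item~(2) of \autoref{prop:equivalentformulationWitt} and needs no positivity hypothesis. Spelling out the condensation argument directly, as you do, adds nothing new but is a faithful unwinding of the same proof.
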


\begin{rem}
In practice, deciding whether two modular tensor categories are Witt equivalent is a difficult problem. To this end, we note that there are relatively computable invariants of a Witt class of modular tensor categories, known as higher central charges \cite{KKOSS22}, which furnish necessary conditions for Witt equivalence. These can clearly be repurposed in the \voa{} context to decide whether two \voa{}s are Witt equivalent.
\end{rem}

Recall from \autoref{prop:anisorep} that every Witt class $[\mathcal{C}]$ of modular tensor categories admits a unique anisotropic representative $\mathcal{C}_{\mathrm{an}}$. Moreover, every modular tensor category $\mathcal{D}$ that is Witt equivalent to $\mathcal{C}_{\mathrm{an}}$ admits a (maximal) condensable algebra $A$ satisfying $\mathcal{C}_{\mathrm{an}}\cong \mathcal{D}_A^\loc$. The \voa{} translation of this is the following. 
\begin{prop}\label{prop:Wittgenusclassification}
Every positive, \strat{} \voa{} $V$ in a Witt class $([\mathcal{C}],c)$ arises as a conformal subalgebra of a \strat{} \voa{} $W$ (which must then also be positive) in the bulk genus $(\mathcal{C}_{\mathrm{an}},c)$.
\end{prop}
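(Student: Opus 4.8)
The plan is to combine the purely category-theoretic fact that every Witt class possesses an anisotropic representative, reachable from any representative by condensing a maximal condensable algebra (Proposition~\ref{prop:anisorep}), with the \voa{}-theoretic dictionary between condensable algebras and conformal extensions (Proposition~\ref{prop:algext}).

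First I would note that, since $V$ is positive and \strat{}, $\Cat=\Rep(V)$ is a pseudo-unitary modular tensor category lying in the Witt class $[\mathcal{C}]$. By Proposition~\ref{prop:anisorep} there is a maximal condensable algebra $A$ in $\Cat$ with $\Cat_A^\loc\cong\mathcal{C}_{\mathrm{an}}$ as ribbon fusion categories, where $\mathcal{C}_{\mathrm{an}}$ is the unique anisotropic representative of $[\mathcal{C}]$. Next, I would invoke Proposition~\ref{prop:algext}: because $V$ is positive and \strat{}, the condensable algebra $A$ corresponds to a simple conformal \voa{} extension $V\subset W$, which is automatically of CFT-type and hence, incorporating \cite{CMSY24}, \strat{} and positive (positivity propagates since any irreducible $W$-module $M$ with $\Hom_\Cat(W,M)\neq\{0\}$ must be $W$ itself, as recorded after Proposition~\ref{prop:algext}), and $\Rep(W)\cong\Cat_A^\loc\cong\mathcal{C}_{\mathrm{an}}$. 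Since a conformal extension shares the conformal vector, $c(W)=c(V)=c$. Thus $W$ is a \strat{}, positive \voa{} with $\Rep(W)\cong\mathcal{C}_{\mathrm{an}}$ and $c(W)=c$, i.e.\ $W$ lies in the bulk genus $(\mathcal{C}_{\mathrm{an}},c)$ by Definition~\ref{defi:bulkgenus}, and it contains $V$ as a conformal subalgebra, which is the claim.

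I do not expect any real obstacle here: the statement is essentially a transcription of Proposition~\ref{prop:anisorep} across the bijection of Proposition~\ref{prop:algext}. The only point deserving explicit mention is that one must work throughout with the ribbon refinement of Witt equivalence (Definition~\ref{defn:Wittequivalencecategories}), so that the equivalence $\Rep(W)\cong\mathcal{C}_{\mathrm{an}}$ is a genuine equivalence of ribbon fusion categories rather than merely one up to a twist; this is exactly what places $W$ in the bulk genus, and not just in a coarser class.
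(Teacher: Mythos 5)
Your proposal is correct and is precisely the argument the paper intends: the proposition is stated as "the \voa{} translation" of \autoref{prop:anisorep}, i.e.\ one condenses the maximal condensable algebra $A$ with $\Rep(V)_A^\loc\cong\mathcal{C}_{\mathrm{an}}$ and transports it to a conformal extension $V\subset W$ via \autoref{prop:algext}, exactly as you do. Your added remarks on positivity propagating to $W$ and on the equivalence being one of ribbon fusion categories are the right details to make the translation precise.
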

Thus, the classification of positive, \strat{} \voa{}s $V$ in a Witt class $([\mathcal{C}],c)$ amounts to the enumeration of positive, \strat{} \voa{}s $W$ in the bulk genus $(\mathcal{C}_{\mathrm{an}},c)$ and all positive, \strat{} conformal subalgebras thereof.

If we drop the assumption that $V$ is positive in the above proposition, it is not clear that the conformal extension $W$ of $V$ corresponding to a maximal condensable algebra $A$ in $\Rep(V)=\mathcal{D}$ is again of CFT-type and thus \strat{}.

\medskip

We can also obtain a characterization of Witt equivalence in terms of 2d RCFTs, analogous to \cite[\autoref*{MR1prop:physicalcharacterizationbulkequivalence}]{MR24a}.

\begin{propph}\label{prop:nondiagonalRCFT}
Let $V$ and $V'$ be two positive, rational chiral algebras. Then $V$ and $V'$ are Witt equivalent if and only if there exists a (not necessarily diagonal) 2d RCFT with vanishing gravitational anomaly, with $V$ as a conformal subalgebra of its maximal left-moving chiral algebra, and with $\overline{V'}$ as a conformal subalgebra of its maximal right-moving chiral algebra.
\end{propph}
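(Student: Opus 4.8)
The plan is to reduce the claim to the categorical characterization of Witt equivalence in item~(2) of \autoref{prop:equivalentformulationWitt} by translating back and forth between the data of a 2d RCFT with vanishing gravitational anomaly and the data of a Lagrangian algebra in a Deligne product of representation categories, as recalled in \autoref{sec:physintro} and \autoref{subsec:physics}. Recall that an RCFT with vanishing gravitational anomaly is specified by a pair of positive, \strat{} chiral algebras $U$, $U'$ with $c(U)=c(U')$ and a Lagrangian algebra $\mathcal{H}$ in $\Rep(U)\boxtimes\overline{\Rep(U')}$; the maximal left- and right-moving chiral algebras are the conformal extensions of $U$ and $U'$ determined by the images of $\mathcal{H}$ under the two projections (equivalently, by $\mathcal{H}\cap(\Rep(U)\boxtimes\Vect)$ and $\mathcal{H}\cap(\Vect\boxtimes\overline{\Rep(U')})$; in the notation of the introduction these are the algebras one condenses to pass to ${_V}\mathcal{H}_W$ with $V$, $W$ the maximal chiral algebras).

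First I would prove the forward direction. Suppose $V$ and $V'$ are positive and Witt equivalent. By item~(2) of \autoref{prop:equivalentformulationWitt} there are condensable algebras $A\in\Rep(V)$ and $A'\in\Rep(V')$ with a ribbon equivalence $\phi\colon\Rep(V)_A^\loc\xrightarrow{\ \sim\ }\Rep(V')_{A'}^\loc$. By \autoref{prop:algext}, $A$ and $A'$ correspond to positive, \strat{} conformal extensions $W\supset V$ and $W'\supset V'$ with $\Rep(W)\cong\Rep(V)_A^\loc$ and $\Rep(W')\cong\Rep(V')_{A'}^\loc$, and $c(W)=c(V)=c(V')=c(W')$. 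Now $W$ and $W'$ (or rather $W$ and $\overline{W'}$) are bulk equivalent, so there is a canonical Lagrangian algebra in $\Rep(W)\boxtimes\overline{\Rep(W')}$ built from the graph of $\phi$ — concretely, the "diagonal" algebra $\bigoplus_{X}X\boxtimes\overline{\phi(X)}$ over simple objects $X$ of $\Rep(W)$, which is Lagrangian precisely because $\phi$ is a ribbon equivalence (this is the standard construction of the charge-conjugation modular invariant on the common bulk genus). Pulling this back along the condensations $\Rep(V)\to\Rep(W)$ and $\Rep(V')\to\Rep(W')$ — i.e.\ viewing it through the embeddings of $A$-modules and $A'$-modules — yields a Lagrangian algebra $\mathcal{H}$ in $\Rep(V)\boxtimes\overline{\Rep(V')}$ whose projections recover exactly $W$ and $W'$ as the maximal chiral algebras. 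This $\mathcal{H}$ defines the desired RCFT ${_V}\mathcal{H}_{V'}$ with vanishing gravitational anomaly, with $V$ a conformal subalgebra of its maximal left-moving chiral algebra $W$ and $\overline{V'}$ a conformal subalgebra of its maximal right-moving chiral algebra $\overline{W'}$.

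For the converse, suppose such an RCFT exists, with maximal chiral algebras $W\supseteq V$ and $W'\supseteq V'$ and Lagrangian algebra $\mathcal{H}$. Vanishing gravitational anomaly gives $c(V)=c(W)=c(W')=c(V')$. The existence of a Lagrangian algebra in $\Rep(W)\boxtimes\overline{\Rep(W')}$ forces $\Rep(W)$ and $\Rep(W')$ to be bulk equivalent (a Lagrangian algebra in $\Cat\boxtimes\overline{\mathcal{D}}$ is the same as a braided, and here ribbon, equivalence $\Cat\cong\mathcal{D}$ — one reads off the equivalence from how the simple objects of $\Cat$ pair with those of $\mathcal{D}$ in $\mathcal{H}$). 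Since $W\supseteq V$ and $W'\supseteq V'$ are conformal extensions, \autoref{prop:alternativecharacterizationWittgenus} (or directly \autoref{prop:CMSY} together with item~(2) of \autoref{prop:equivalentformulationWitt}) then gives that $V$ and $V'$ are Witt equivalent.

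The step I expect to be the main obstacle — and the reason this is stated at a physics level of rigor — is the precise dictionary between "2d RCFT with vanishing gravitational anomaly and prescribed maximal chiral algebras" and "Lagrangian algebra in a Deligne product with prescribed projections", together with the claim that the graph-of-$\phi$ algebra pulls back to a genuine Lagrangian (rather than merely condensable) algebra of $\Rep(V)\boxtimes\overline{\Rep(V')}$. Making the latter rigorous amounts to checking that $\operatorname{FPdim}$ of the pulled-back algebra squares to $\operatorname{FPdim}(\Rep(V)\boxtimes\overline{\Rep(V')})$ and that its local modules are trivial, which is a computation with the condensation functors $\Rep(V)\to\Rep(W)$; none of this is deep, but it relies on identifying RCFTs with Lagrangian algebras in the first place, which is the standard but not-fully-formalized input from \cite{FRS02,KS11}. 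For this reason I would present the argument in the 2+1d picture of \autoref{fig:KS}: the RCFT ${_V}\mathcal{H}_{V'}$ sits on a slab whose two ends carry $V$ and $\overline{V'}$, the interface $\mathcal{H}$ separates the bulk TQFTs $(\Rep(V),c)$ and $(\Rep(V'),c)$, and shrinking the slab exhibits precisely the topological interface of \autoref{defiph:wittgenus}, while conversely any such interface can be thickened into a Lagrangian-algebra boundary of $(\Rep(V)\boxtimes\overline{\Rep(V')},0)$ and unfolded.
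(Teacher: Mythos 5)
Your argument is correct and follows essentially the same route as the paper: the forward direction passes through \autoref{prop:equivalentformulationWitt} and \autoref{prop:algext} to bulk-equivalent extensions $W$, $W'$ and the diagonal (charge-conjugation) modular invariant, and the reverse direction reads the equivalence off the Lagrangian algebra defined by the Hilbert space. One small caveat: a Lagrangian algebra in $\Cat\boxtimes\overline{\mathcal{D}}$ yields a braided equivalence $\Cat\cong\mathcal{D}$ only when it intersects both factors trivially — which holds here precisely because you took $W$, $W'$ maximal — whereas the paper sidesteps this entirely by reading Witt (rather than bulk) equivalence directly off the Lagrangian algebra over $\Rep(V)\boxtimes\overline{\Rep(V')}$.
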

\begin{proof}
Assuming $V$ and $V'$ are Witt equivalent, one can use \autoref{prop:alternativecharacterizationWittgenus} to obtain  \strat{} \voa{}s $W$ and $W'$ with $c(W)=c(W')$ and $\Rep(W)\cong\Rep(W')$, and then use \cite[\autoref*{MR1prop:physicalcharacterizationbulkequivalence}]{MR24a} to obtain a 2d RCFT that satisfies the hypotheses of the proposition. Alternatively, one can simply form the sandwich in \autoref{fig:WittEqVOA} and dimensionally reduce the configuration to obtain a 2d RCFT that also satisfies the hypotheses of the proposition. 

Going in the other direction, if such a 2d RCFT exists, then $c(V)=c(V')$, and its Hilbert space defines a Lagrangian algebra of $\Rep(V)\boxtimes \overline{\Rep(V')}$, which in turn defines a (not-necessarily invertible) topological domain wall between $\Rep(V)$ and $\Rep(V')$, so that $\Rep(V)$ is Witt equivalent to $\Rep(V')$.
\end{proof}

The following conjecture is a corollary of \cite[\autoref*{MR1conj:bulkfiniteness}]{MR24a} (the bulk genera finiteness) and the rank finiteness of modular tensor categories \cite{BNRW16}.
\begin{conj}\label{conj:countableinfinity}
Every Witt class of \strat{} \voa{}s is at most countably infinite.
\end{conj}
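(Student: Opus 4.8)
The plan is to realize the Witt class $([\mathcal{C}],c)$ as a countable union of finite sets. The first step is to bound the modular tensor categories that can occur as $\Rep(V)$ for a \strat{} \voa{} $V$ in this Witt class. By \cite{Hua08b}, each such $\Rep(V)$ is a modular tensor category lying in the single Witt class $[\mathcal{C}]$ of modular tensor categories, and it has some finite rank $n\in\Ns$ because $V$ is rational and $C_2$-cofinite. By rank finiteness \cite{BNRW16}, for each $n$ there are only finitely many modular tensor categories of rank $n$ up to ribbon equivalence, so the collection of all modular tensor categories up to ribbon equivalence is countable; in particular the sub-collection of those lying in $[\mathcal{C}]$ is at most countable, and I would enumerate its members as $\mathcal{D}_1,\mathcal{D}_2,\dots$.

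The second step is to note that any \strat{} \voa{} $V$ with $[V]_{\mathrm{W}}=([\mathcal{C}],c)$ has $c(V)=c$ and $\Rep(V)\cong\mathcal{D}_i$ for some $i$, and hence lies in the bulk genus $(\mathcal{D}_i,c)$ of \autoref{defi:bulkgenus}; conversely, every \strat{} \voa{} in such a bulk genus lies in $([\mathcal{C}],c)$ by \autoref{rem:bulkimplieswitt}. Thus, up to isomorphism, $([\mathcal{C}],c)$ equals the union over $i$ of the bulk genera $(\mathcal{D}_i,c)$. Invoking the bulk genera finiteness conjecture \cite[\autoref*{MR1conj:bulkfiniteness}]{MR24a}, each $(\mathcal{D}_i,c)$ contains only finitely many \strat{} \voa{}s up to isomorphism, so $([\mathcal{C}],c)$ is a countable union of finite sets and therefore at most countably infinite.

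Because the statement is advertised as a corollary of two external inputs, the argument is essentially bookkeeping, and the real content — and the main obstacle to an \emph{unconditional} statement — lies entirely in those inputs, above all the still-conjectural bulk genera finiteness. The one place where a little care is needed is that rank finiteness must be applied in its modular form, bounding the number of equivalence classes of \emph{modular} (i.e.\ ribbon, non-degenerate) fusion categories of a fixed rank, not merely of the underlying fusion categories; this is exactly what \cite{BNRW16} proves, so no extra difficulty arises. (One might be tempted instead to route through \autoref{prop:Wittgenusclassification} and the anisotropic representative $\mathcal{C}_{\mathrm{an}}$, but this only addresses positive \voa{}s and would additionally require finiteness of conformal subalgebras of a fixed \voa{}, which fails; the argument above avoids this.)
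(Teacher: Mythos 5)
Your argument is correct and is precisely the one the paper intends: the text justifies the conjecture in a single sentence as ``a corollary of the bulk genera finiteness and the rank finiteness of modular tensor categories \cite{BNRW16},'' and your write-up is exactly that bookkeeping — decompose the Witt class $([\mathcal{C}],c)$ into the bulk genera $(\mathcal{D}_i,c)$ over the countably many ribbon equivalence classes $\mathcal{D}_i$ in $[\mathcal{C}]$ supplied by rank finiteness, then invoke the (conjectural) finiteness of each bulk genus. Your closing remarks correctly locate where the conditionality lives and correctly reject the detour through the anisotropic representative; nothing further is needed.
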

Of course, this is also inspired by the analogous fact that there are countably many lattices in each Witt class of even lattices.

\begin{ex}\label{ex:countableinfinity}
Witt classes containing infinitely many \voa{}s are quite typical. For example, suppose that the \strat{} \voa{} $V$ has a continuous automorphism group (this is the case when $V_1\neq\{0\}$), and pick a subgroup of $\Aut(V)$ isomorphic to $\mathrm{U}(1)$.
Then for each $n\in\Ns$, one may select the $\Z_n$ subgroup and form the \fpvosa{}s $\{V^{\Z_n}\}_{n\in \Z_{>0}}$. Each $V^{\Z_n}$ is \strat{} \cite{Miy15,CM16} and Witt equivalent to $V$ by \autoref{prop:subalgebraimpliesWitt}. Furthermore, $V^{\Z_n}$ is not isomorphic to $V^{\Z_m}$ if $n\neq m$ because $\dim(\Rep(V^{\Z_n}))=n^2\dim(\Rep(V))\neq m^2\dim(\Rep(V))= \dim(\Rep(V^{\Z_m}))$. Thus, the $V^{\Z_n}$ furnish infinitely many, mutually non-isomorphic members of $[V]_{\mathrm{W}}$.

Each $V^{\Z_n}$ also belongs to a different bulk genus; so we have also shown that Witt classes usually decompose into infinitely many bulk genera.
\end{ex}

Finally, we extend the definition of Witt equivalence to full RCFTs.
\begin{defiph}[Witt Equivalence]\label{defi:WittRCFT}
Two RCFTs ${_V}\mathcal{H}_W$ and ${_{V'}}\mathcal{H}'_{W'}$ are \emph{Witt equivalent} if $V$ is Witt equivalent to $V'$, and $W$ is Witt equivalent to $W'$.
\end{defiph}

\begin{remph}\label{rem:invariantsofwitt}
By \autoref{prop:nondiagonalRCFT}, one only needs to check that $V$ is Witt equivalent to $V'$ and that $c(W)=c(W')$ to decide whether ${_V}\mathcal{H}_W$ and ${_{V'}}\mathcal{H}'_{W'}$ are Witt equivalent as RCFTs. In particular, a Witt class of RCFTs can be labeled by a triple $([\mathcal{C}],c_L,c_R)$ consisting of the left- and right-moving central charges $c_L$ and $c_R$, and the Witt class $[\mathcal{C}]$ of the representation category of the left-moving (or right-moving) chiral algebra of any representative RCFT of the class.
\end{remph}

%%%%%%%%%%%%%%%%%%%%%%%%%%%%%%%

\subsection{Strong Witt Equivalence}\label{subsec:strongwitt}

We describe a second definition of Witt equivalence of \voa{}s. Our definition of strong Witt equivalence can be compared to our alternative characterization of the hyperbolic genus in \cite[\autoref*{MR1defi:commgenus}]{MR24a}. Heuristically speaking, in that context, one applies the lattice theoretic notion of genus at the level of the associated lattice and asks that the Heisenberg commutant be preserved. We can apply the same reasoning to produce a strong version of Witt equivalence of \voa{}s.

\begin{defi}[Strong Witt Equivalence]\label{defi:strongwittequivalence}
Two \strat{} \voa{}s are \emph{strongly Witt equivalent} if their Heisenberg commutants are isomorphic and their associated lattices are Witt equivalent as lattices.
\end{defi}
We can immediately justify our use of the word ``strong''.

\begin{prop}\label{prop:weakstrongwitt}
If two \strat{} \voa{}s are strongly Witt equivalent, then they are Witt equivalent.
\end{prop}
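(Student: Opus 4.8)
The plan is to unpack the two hypotheses of strong Witt equivalence and show each one feeds into the definition of (weak) Witt equivalence, namely matching central charges and Witt equivalence of representation categories as modular tensor categories. Let $V$ and $V'$ be strongly Witt equivalent, with dual pair decompositions $C\otimes V_L\subset V$ and $C'\otimes V_{L'}\subset V'$, where $L,L'$ are the associated lattices and $C\cong C'$ are the isomorphic Heisenberg commutants. First I would check central charges: since $c(V)=c(C)+c(V_L)=c(C)+\rk(L)$ and $L,L'$ are Witt equivalent as lattices they have the same signature, hence $\rk(L)=\rk(L')$ (both lattices being positive-definite), and combined with $C\cong C'$ this gives $c(V)=c(V')$.

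The heart of the argument is to produce a chain of conformal extensions witnessing Witt equivalence of the representation categories, using \autoref{prop:alternativecharacterizationWittgenus}. Since $L$ and $L'$ are Witt equivalent even lattices, by \autoref{defi:lattice_witt} (and the reformulation in \autoref{fig:wittlat}) there are same-rank even lattice extensions $N$ of $L$ and $N'$ of $L'$ lying in the same genus, i.e.\ with $N'/N\cong (N')'/N'$ as metric groups. Each isotropic subgroup of $L'/L$ that produces the extension $N$ corresponds, on the vertex algebra side, to a condensable algebra in $\Rep(V_L)$, and hence (tensoring with the identity algebra of $C$) to a condensable algebra in $\Rep(C\otimes V_L)\cong\Rep(C)\boxtimes\Rep(V_L)$; condensing it yields the conformal extension $C\otimes V_N$. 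However, I want a conformal extension of $V$ itself, not of the dual pair. Here I would use that $V$ is a simple-current extension of $C\otimes V_L$: the extension datum of $V$ over $C\otimes V_L$ and the lattice extension datum of $N$ over $L$ can be combined (they live in the representation category of $C\otimes V_L$, over complementary "directions") to produce a conformal extension $W$ of $V$ whose dual pair is $C\otimes V_N$. Doing the same on the primed side gives $W'\supset V'$ with dual pair $C\otimes V_{N'}$ (using $C\cong C'$).

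It then remains to argue $W$ and $W'$ lie in the same bulk genus, i.e.\ $\Rep(W)\cong\Rep(W')$ as ribbon fusion categories and $c(W)=c(W')$. The central charge equality is clear. For the categories: $W$ is a simple-current (hence conformal) extension of $C\otimes V_N$ by the same abelian extension group data as $V$ over $C\otimes V_L$, and likewise for $W'$ over $C\otimes V_{N'}$; since $C\cong C'$ and $V_N, V_{N'}$ have isomorphic discriminant forms $N'/N\cong (N')'/N'$ — so $\Rep(V_N)\cong\Rep(V_{N'})$ as pointed modular tensor categories — the representation categories $\Rep(C\otimes V_N)$ and $\Rep(C\otimes V_{N'})$ are equivalent as ribbon fusion categories, and this equivalence carries the extension algebra for $W$ to that for $W'$ (the extension data being determined by the Heisenberg commutant structure, which is shared). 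Hence $\Rep(W)\cong\Rep(W')$. By \autoref{prop:alternativecharacterizationWittgenus}, $V$ and $V'$ are Witt equivalent. The step I expect to be the main obstacle is the bookkeeping in the middle paragraph: carefully checking that the simple-current extension datum of $V$ over its dual pair and the lattice extension datum combine into a genuine conformal (simple-current) extension $W$ of $V$ with the claimed dual pair — i.e.\ that the two isotropic/condensable pieces are compatible and their "sum" is again isotropic — and then that the resulting $W,W'$ really are simultaneously realized so that the shared structure $C\cong C'$ and $N'/N\cong(N')'/N'$ forces a ribbon equivalence of their representation categories; this is essentially the $\Rep$-level incarnation of the lattice computation $N'/N = H^\perp/H$ together with McRae–Yamauchi-type control of simple-current extensions, and requires invoking the structure theory of \cite{Mas14,HM23} with some care.
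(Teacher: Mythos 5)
Your reading of the hypotheses and the central-charge computation are fine, but the middle of your argument has a genuine gap, and the route you chose is considerably harder than what is actually needed. You propose to build conformal extensions $W\supset V$ and $W'\supset V'$ lying in the same bulk genus and then invoke \autoref{prop:alternativecharacterizationWittgenus}. Two steps of this do not go through as stated. First, extending $V$ ``along the lattice direction'' by an isotropic subgroup $H<L'/L$ is only possible if $H$ is compatible with the simple-current gluing datum of $V$ over its dual pair $C\otimes V_L$ (the subgroup $A<L'/L$ and anti-isometry $\tau$ of \cite[Prop.~on associated lattice decompositions]{MR24a}): the subgroup generated by the two pieces must again be isotropic, which forces an orthogonality condition between $H$ and $A$ that you neither state nor verify, and which fails for a general choice of $H$. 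Second, even granting the construction of $W$ and $W'$, the claim $\Rep(W)\cong\Rep(W')$ does not follow from $\Rep(C\otimes V_N)\cong\Rep(C\otimes V_{N'})$: one additionally needs the two condensable algebras (built from the gluing maps $\tau$ and $\tau'$) to correspond under that equivalence, and your parenthetical ``the extension data being determined by the Heisenberg commutant structure'' is not a justification --- distinct anti-isometries generally give distinct condensable algebras, and subquotients by non-conjugate isotropic subgroups of the same order need not be isometric.

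The paper avoids all of this. Its proof never constructs $W$ and $W'$: it observes that $\Rep(C_1)\cong\Rep(C_2)$ and that $\Rep(V_{L_1})$, $\Rep(V_{L_2})$ are Witt equivalent by \autoref{prop:wwittcomp}, hence (using that Witt equivalence is compatible with the Deligne product $\boxtimes$) $C_1\otimes V_{L_1}$ and $C_2\otimes V_{L_2}$ are Witt equivalent as \voa{}s; then each $V_i$ is Witt equivalent to its conformal subalgebra $C_i\otimes V_{L_i}$ by \autoref{prop:subalgebraimpliesWitt}, and one concludes by transitivity. If you want to salvage your approach you would essentially have to reprove these facts in coordinates; I recommend instead restructuring your argument around \autoref{prop:subalgebraimpliesWitt} and transitivity, which makes the compatibility bookkeeping unnecessary.
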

\begin{proof}
Consider two \strat{} \voa{}s $V_1$ and $V_2$ that are strongly Witt equivalent. Let $C_i$ and $L_i$ be the Heisenberg commutant and associated lattice, respectively, of $V_i$. By assumption, $C_1$ is isomorphic to $C_2$, and $L_1$ is Witt equivalent to $L_2$, and hence $c(V_1)=c(C_1)+\rk(L_1)=c(C_2)+\rk(L_2)=c(V_2)$. Further, $\Rep(C_1)$ is ribbon equivalent to $\Rep(C_2)$, and $\Rep(V_{L_1})$ and $\Rep(V_{L_2})$ are Witt equivalent as modular tensor categories. It follows that $\Rep(C_1\otimes V_{L_1})$ is Witt equivalent to $\Rep(C_2\otimes V_{L_2})$ as a modular tensor category, and further that  $C_1\otimes V_{L_1}$ is Witt equivalent to $C_2\otimes V_{L_2}$ as a \voa{}. On the other hand, $V_i$ is a conformal extension of $C_i\otimes V_{L_i}$, so \autoref{prop:subalgebraimpliesWitt} can be used to deduce that $C_i\otimes V_{L_i}$ is Witt equivalent to $V_i$. It follows by transitivity
%of equivalence relations
that $V_1$ is Witt equivalent to $V_2$.
\end{proof}

We can make \autoref{prop:weakstrongwitt} more precise as follows.
\begin{prop}\label{prop:weakstrongwitt2}
Two \strat{} \voa{}s are strongly Witt equivalent if and only if they are Witt equivalent and their Heisenberg commutants are isomorphic.
\end{prop}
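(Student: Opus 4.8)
The plan is to dispatch the ``only if'' direction immediately: strong Witt equivalence implies (weak) Witt equivalence by \autoref{prop:weakstrongwitt}, and isomorphy of the Heisenberg commutants is part of the definition of strong Witt equivalence. So all the work is in the converse. Suppose $V_1$ and $V_2$ are Witt equivalent \strat{} \voa{}s with isomorphic Heisenberg commutants $C := C_1 \cong C_2$, and let $L_i$ be the associated lattice of $V_i$. By \autoref{defi:lattice_witt}, the goal is to show that $L_1$ and $L_2$ have the same signature and Witt equivalent discriminant forms.

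The signature comparison is routine: $c(V_i) = c(C) + \rk(L_i)$, so $c(V_1) = c(V_2)$ forces $\rk(L_1) = \rk(L_2)$, and positive-definiteness gives $\sign(L_1) = \sign(L_2)$. For the discriminant forms, I would descend to the level of Witt classes of modular tensor categories. Each $V_i$ is a conformal (simple-current) extension of the \strat{} \voa{} $C \otimes V_{L_i}$, so by \autoref{prop:CMSY} it corresponds to a condensable algebra $A_i$ with $\Rep(V_i) \cong (\Rep(C) \boxtimes \Cat(L_i'/L_i))_{A_i}^\loc$; by \autoref{ex:condwitt} this condensation is Witt equivalent to $\Rep(C) \boxtimes \Cat(L_i'/L_i)$. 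Hence Witt equivalence of $V_1$ and $V_2$ yields that $\Rep(C) \boxtimes \Cat(L_1'/L_1)$ and $\Rep(C) \boxtimes \Cat(L_2'/L_2)$ are Witt equivalent as modular tensor categories. I would then cancel the common factor $\Rep(C)$: stacking with $\overline{\Rep(C)}$ and using that $\Rep(C) \boxtimes \overline{\Rep(C)} \cong Z(\Rep(C))$ lies in the trivial Witt class --- together with the fact that taking the Deligne product with a fixed modular tensor category preserves Witt equivalence --- leaves $\Cat(L_1'/L_1)$ and $\Cat(L_2'/L_2)$ Witt equivalent as modular tensor categories. For pointed categories this is exactly Witt equivalence of the metric groups $L_1'/L_1$ and $L_2'/L_2$ (the discussion following \autoref{prop:anisorep}), so together with the signature comparison we conclude $L_1$ and $L_2$ are Witt equivalent as lattices, and thus $V_1$ and $V_2$ are strongly Witt equivalent.

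I do not expect a genuine obstacle; the only step requiring a touch of care is the cancellation of $\Rep(C)$, i.e.\ making precise that Witt equivalence of modular tensor categories is stable under Deligne product with a fixed modular tensor category (equivalently, that Witt classes form a group under $\boxtimes$ with inverse given by the ribbon reverse) and that this remains valid for the ribbon-refined notion of \autoref{defn:Wittequivalencecategories}. This is standard --- it is the group structure of the Witt group of \cite{DMNO13}, in the ribbon form of \cite{Kon14} --- so I would simply invoke it. Alternatively, one could run the same argument through anisotropic representatives via \autoref{prop:anisorep} and avoid group-theoretic language altogether, but the cancellation formulation is the shortest.
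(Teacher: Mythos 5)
Your proof is correct, and the converse direction takes a genuinely different route from the paper's. Both arguments reduce the problem to showing that $\Cat(L_1'/L_1)$ and $\Cat(L_2'/L_2)$ are Witt equivalent, but they peel off the Heisenberg commutant from opposite sides. You condense \emph{upward}: $\Rep(V_i)$ is a condensation of $\Rep(C)\boxtimes\Cat(L_i'/L_i)$ (via \autoref{prop:CMSY} and \autoref{ex:condwitt}), so $\Rep(C)\boxtimes\Cat(L_1'/L_1)$ and $\Rep(C)\boxtimes\Cat(L_2'/L_2)$ are Witt equivalent, and you then cancel $[\Rep(C)]$ using the group structure of the Witt group (multiplication by $[\overline{\Rep(C)}]$ and Witt triviality of $\Rep(C)\boxtimes\overline{\Rep(C)}\cong Z(\Rep(C))$). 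The paper instead condenses \emph{downward}: it invokes the dual-pair (mirror-extension) result of \cite{FFRS06}, as in the proof of \cite[Cor.~\ref*{MR1cor:unpointed}]{MR24a}, to realize $\Rep(V_{L_i})$ directly as a condensation of $\Rep(V_i)\boxtimes\overline{\Rep(C)}$, which performs the cancellation concretely at the level of the VOA coset structure rather than abstractly in the Witt group. The trade-off is which piece of external machinery one leans on: your version needs only the extension direction (already stated in the paper as \autoref{prop:CMSY}/\autoref{prop:algext}) plus the standard fact that Witt classes form a group under $\boxtimes$ with inverse the ribbon reverse (\cite{DMNO13,Kon14}), whereas the paper's version needs the commutant-side statement that for a dual pair $C\otimes V_L\subset V$ the category $\Rep(V_L)$ is a local-module category over $\Rep(V)\boxtimes\overline{\Rep(C)}$. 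Your cancellation step is the only place requiring care, and you have correctly identified it and the reference that justifies it, so there is no gap.
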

\begin{proof}
The forward direction follows from \autoref{prop:weakstrongwitt}.

In order to prove the reverse direction, suppose $V$ and $V'$ are Witt equivalent and both have the Heisenberg commutant $C$. Denote by $L$ and $M$ the associated lattices of $V$ and $V'$, respectively. Because $V$ and $V'$ are assumed to be Witt equivalent, $\rk(L)=c(V)-c(C)=c(V')-c(C)=\rk(M)$. We proceed to show that $\mathcal{C}(L'/L)$ and $\mathcal{C}(M'/M)$ are Witt equivalent as modular tensor categories, from which it follows that $L$ and $M$ are Witt equivalent as lattices, and further that $V$ and $V'$ are strongly Witt equivalent as \voa{}s.

There are the dual pairs
\begin{equation*}
C\otimes V_L\subset V\quad\text{and}\quad C\otimes V_M\subset V'
\end{equation*}
and, like in the proof of \cite[\autoref*{MR1cor:unpointed}]{MR24a}, we obtain
\begin{align*}
\Rep(V_L)&\cong(\Rep(V)\boxtimes\overline{\Rep(C)})^\loc_{A},\\
\Rep(V_M)&\cong(\Rep(V')\boxtimes\overline{\Rep(C)})^\loc_{B}
\end{align*}
for some condensable algebras $A$ and $B$ \cite{FFRS06}. Here, $\overline{\Rep(C)}$ denotes the ribbon reverse of the modular tensor category $\Rep(C)$, as defined, e.g., in \cite[\autoref*{MR1sec:cat}]{MR24a}.

Now, since $\Rep(V)$ and $\Rep(V')$ are Witt equivalent modular tensor categories by assumption, so are $\Rep(V)\boxtimes\overline{\Rep(C)}$ and $\Rep(V')\boxtimes\overline{\Rep(C)}$. The same can be said about their condensations by \autoref{ex:condwitt}. So, $\Rep(V_L)\cong\mathcal{C}(L'/L)$ and $\Rep(V_M)\cong\mathcal{C}(M'/M)$ are Witt equivalent, which by \autoref{prop:wwittcomp} means that the associated lattices $L$ and $M$ are Witt equivalent. Hence, $V$ and $V'$ are in the same strong Witt class.
\end{proof}

Recall that we denoted the Witt class of a \strat{} \voa{} $V$ by $([\mathcal{C}],c)$ where $c$ is the central charge of the \voa{}s in the class and $[\mathcal{C}]$ is the Witt class of the representation category of any representative.

It now follows from \autoref{prop:weakstrongwitt2} that we can label a strong Witt class of \voa{}s by a tuple $(C,[\mathcal{C}],c)$, where additionally $C$ is the Heisenberg commutant of any representative (up to isomorphism).

\begin{rem}\label{rem:strongwittlat}
\autoref{prop:weakstrongwitt2} implies that the map from Witt classes of positive-definite, even lattices to strong Witt classes of \voa{}s defined by mapping the Witt class $[L]$ of a lattice $L$ to the strong Witt class of the lattice \voa{} $V_L$ is well-defined and injective. That is, the notion of strong Witt equivalence (like weak Witt equivalence, see \autoref{rem:wittlat}) generalizes Witt equivalence of lattices, as is intended.
\end{rem}
Together with \autoref{rem:wittlat} this also means, not surprisingly, that for lattice \voa{}s, weak and strong Witt equivalence coincide.

\medskip

We can also immediately see that strong Witt equivalence is coarser than hyperbolic equivalence:
\begin{prop}\label{prop:hypstrongwitt}
If two \strat{} \voa{}s are in the same hyperbolic genus, then they are strongly Witt equivalent.
\end{prop}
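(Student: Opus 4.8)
The plan is to reduce the statement to the lattice-theoretic fact, already recorded in \autoref{fig:lat}, that for even lattices being in the same genus implies Witt equivalence, by passing through the alternative description of the hyperbolic genus established in the first part of this series. Concretely, I would begin by invoking \cite[\autoref*{MR1thm:hypcomm}]{MR24a}, which characterizes the hyperbolic genus of a \strat{} \voa{} in terms of the associated lattice and the Heisenberg commutant of the dual pair decomposition $C\otimes V_L\subset V$. Applied to two \strat{} \voa{}s $V$ and $V'$ in the same hyperbolic genus, with Heisenberg commutants $C$, $C'$ and associated lattices $L$, $M$, this characterization yields directly that $C\cong C'$ and that $L$ and $M$ lie in the same lattice genus.

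It then remains to upgrade ``same genus'' to ``Witt equivalent'' for the associated lattices. This is the easy half of \autoref{prop:lat_ratwit} and is built into \autoref{fig:lat}: by condition~\ref{item:latgen2} of \autoref{defi:latticegenus}, lattices in the same genus satisfy $L'/L\cong M'/M$ and $\sign(L)=\sign(M)$, and since isomorphic metric groups are trivially Witt equivalent as metric groups, \autoref{defi:lattice_witt} is satisfied. Combining the two observations, $V$ and $V'$ have isomorphic Heisenberg commutants and Witt equivalent associated lattices, which is precisely the content of \autoref{defi:strongwittequivalence}, so $V$ and $V'$ are strongly Witt equivalent.

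I do not expect a genuine obstacle here: once the characterization of the hyperbolic genus from \cite{MR24a} is available, the argument is essentially a two-line implication-chasing exercise, entirely parallel to the lattice picture (where being in the same genus trivially implies Witt equivalence). The only point requiring a small amount of care is to make sure the cited characterization really delivers ``same genus'' for the associated lattices rather than something weaker; this is fine, since tensoring with $V_{\II_{1,1}}$ enlarges the associated lattice $L$ to $L\oplus\II_{1,1}$ while leaving the Heisenberg commutant unchanged, and $L\oplus\II_{1,1}\cong M\oplus\II_{1,1}$ is exactly condition~\ref{item:latgen3} of \autoref{defi:latticegenus}. Together with \autoref{prop:weakstrongwitt} and \autoref{thm:rationalinnerorb}, this completes the portion of \autoref{fig:summary} asserting that the hyperbolic genus refines both strong Witt equivalence and inner orbifold equivalence.
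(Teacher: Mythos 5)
Your proposal is correct and follows essentially the same route as the paper: invoke the characterization of the hyperbolic genus from \cite{MR24a} in terms of the Heisenberg commutant and the genus of the associated lattice, then use the elementary fact that lattices in the same genus are Witt equivalent. The extra verification at the end (that tensoring with $V_{\II_{1,1}}$ realizes condition~\ref{item:latgen3} of \autoref{defi:latticegenus}) is harmless but unnecessary, since the cited characterization already delivers the genus statement directly.
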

\begin{proof}
Consider two \strat{} \voa{}s $V_1$ and $V_2$ in the same hyperbolic genus. By the alternative characterization in \cite[\autoref*{MR1defi:commgenus}]{MR24a}, it follows that the Heisenberg commutants of $V_1$ and $V_2$ are isomorphic. Moreover, their associated lattices reside in the same genus, which in particular implies that their associated lattices are Witt equivalent. It follows that $V_1$ and $V_2$ are strongly Witt equivalent.
\end{proof}

\begin{rem}
There are no implication arrows between bulk equivalence and strong Witt equivalence. 

For example, the moonshine module $V^\natural$ and the Leech lattice \voa{} $V_{\Lambda}$ are in the same bulk genus because they both have $c(V^\natural)=c(V_{\Lambda})=24$ and $\Rep(V^\natural)\cong \Rep(V_{\Lambda})\cong \Vect$. However, they are not strongly Witt equivalent because their Heisenberg commutants are non-isomorphic: the Heisenberg commutant of $V_\Lambda$ is trivial but the Heisenberg commutant of $V^\natural$ is $V^\natural$ itself.

On the other hand, the two affine \voa{}s $\mathsf{D}_{8,1}$ and $\mathsf{E}_{8,1}$ are strongly Witt equivalent because they both have trivial Heisenberg commutant and their associated lattices are Witt equivalent (indeed, the $D_8$ lattice embeds into the $E_8$ lattice). However, they are straightforwardly seen to have different representation categories because $\rk(\Rep(\mathsf{E}_{8,1}))=1\neq 4 =\rk(\Rep(\mathsf{D}_{8,1}))$.
\end{rem}

We conclude this section by extending strong Witt equivalence to full RCFTs in the obvious way (cf.\ \autoref{defi:WittRCFT}).
\begin{defiph}[Strong Witt Equivalence]
Two RCFTs ${_V}\mathcal{H}_W$ and ${_{V'}}\mathcal{H}'_{W'}$ are strongly Witt equivalent if $V$ is strongly Witt equivalent to $V'$ and $W$ is strongly Witt equivalent to $W'$.
\end{defiph}

%%%%%%%%%%%%%%%%%%%%%%%%%%%%%%%

\subsection{Examples}\label{subsec:wittexamples}

We now turn to examples. Classifying isomorphism classes of \voa{}s in a given Witt class or strong Witt class is of course generally an even more difficult problem than classifying \voa{}s in a given bulk or hyperbolic genus. For example, as we saw in \autoref{ex:countableinfinity}, a Witt class of \voa{}s generally splits into infinitely many bulk genera, and similarly a strong Witt class may split into infinitely many hyperbolic genera. However, there are some familiar exceptions for central charge at most~$1$, where one can say quite a bit.

\medskip

First, we briefly discuss the case of \voa{}s with Witt trivial representation categories.

\begin{ex}[Witt Classes {$([\Vect],c)$}]\label{ex:trivialwitt}
By \autoref{prop:Wittgenusclassification}, each positive, \strat{} \voa{} in a Witt class $([\Vect],c)$ can be conformally extended to a holomorphic \voa{}. In particular, the central charge must be $c\in 8\N$ \cite{Zhu96}. In fact, if $c\in 8\N$, then the positive \voa{}s in the Witt class $([\Vect],c)$ consist exactly of all \strat{}, holomorphic \voa{}s in the bulk genus $(\Vect,c)$ and all their positive, \strat{} conformal subalgebras.
\end{ex}
It would be interesting to know if there exist any \strat{} \voa{}s that are not positive but condensable to $\Vect$. We can (almost) give a negative answer in some cases \cite{Moe18}. Recall that a large subclass of the modular tensor categories in the trivial Witt class $[\Vect]$ are the twisted Drinfeld doubles $\mathcal{D}_\omega(G)=Z(\Vect_G^\omega)$ for finite groups $G$ and $3$-cocycles $\omega\in Z^3(G,\C^\times)$. Their Lagrangian subalgebras were described in \cite{DS17}.
\begin{prop}\label{prop:vectpos}
Let $V$ be a \strat{} \voa{} with modular tensor category $\Rep(V)\cong\mathcal{D}_\omega(G)$. If $V$ has a conformal extension to a \strat{} \voa{} $W$ with $\Rep(W)\cong\Vect$, then all $V$-modules are non-negatively graded. 
\end{prop}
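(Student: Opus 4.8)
The plan is to rephrase the hypothesis in terms of condensable algebras, apply the Davydov--Simmons classification of Lagrangian algebras in twisted Drinfeld doubles, and reduce the assertion to the (known) non-negativity of the conformal weights of twisted modules of holomorphic \voa{}s.

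Since $\Rep(W)\cong\Vect$ and $W$ is \strat{}, hence $\N$-graded, the conformal extension $V\subset W$ corresponds by \cite{HKL15,CMSY24} (see \autoref{prop:CMSY} and \autoref{prop:algext}) to a Lagrangian algebra $A=A(W)$ in $\Rep(V)\cong\mathcal{D}_\omega(G)=Z(\Vect_G^\omega)$; the hypothesis that $W$ be $\N$-graded is precisely what is needed to run this correspondence without assuming $V$ positive. Decomposing $A=W=\bigoplus_i M_i$ into irreducible $V$-modules, each $M_i$ is an $L_0$-graded $V$-submodule of the $\N$-graded \voa{} $W$, so $\rho(M_i)\geq 0$. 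This settles the (finitely many) irreducible $V$-modules occurring inside $W$, i.e.\ the simple objects of $\Rep(V)$ that become local after condensing $A$ (the ``untwisted sector'').

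It remains to bound $\rho(M)$ for the remaining irreducible $V$-modules. By \cite{DS17}, $A$ is classified by a pair $(H,\mu)$ with $H\leq G$, $[\omega|_H]=0$ in $H^3(H,\C^\times)$ and $\mu$ a trivialization of $\omega|_H$, and the category $(\mathcal{D}_\omega(G))_A$ of all $A$-modules is the corresponding group-theoretical fusion category, graded by a finite set built from $G$. On the \voa{} side this says that $V\subset W$ is, up to a twist, a $G$-orbifold inclusion $W^G\subset W$, and that the induction functor $M\mapsto\Ind_V^W M$ carries every irreducible $V$-module $M$ into the category of $g$-twisted $W$-modules for a uniquely determined finite-order element $g$ (the orbifold theorem; see \cite{Moe18} and the references therein, whose cyclic case already contains the essential ingredient). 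Since $W$ is holomorphic there is a unique irreducible $g$-twisted $W$-module $W(g)$ for each such $g$; thus, via the injection $M\hookrightarrow\Ind_V^W M\cong W(g)^{\oplus k}$ of $V$-modules, we get $\rho(M)\geq\rho(W(g))$. Finally, the conformal weight of a twisted module of a holomorphic, $\N$-graded \voa{} under an automorphism of finite order is non-negative --- this follows from the modular-form arguments of cyclic orbifold theory \cite{Moe18}, the general case reducing to the cyclic subgroups $\langle g\rangle$ --- and therefore $\rho(M)\geq 0$, as claimed.

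I expect the main obstacle to be the orbifold step: making fully precise that the Lagrangian-algebra data realizes $V$ as an orbifold-type subalgebra of $W$, so that \emph{all} irreducible $V$-modules (not merely those inside $W$) are realized inside twisted $W$-modules, together with establishing non-negativity of twisted-module conformal weights for arbitrary finite $G$. Both ingredients are available in the cyclic (and prime-order) case, and for general $G$ the task is organizational rather than conceptual. Note also that this argument yields only $\rho(M)\geq 0$, not $\rho(M)>0$ for $M\not\cong V$ --- which is why the statement only comes ``close'' to proving positivity --- since $\rho(W(g))=0$ can occur for exceptional automorphisms $g$; excluding this (or showing that the associated $V$-module must be $V$ itself) would be required to strengthen the conclusion, cf.\ \cite[\autoref*{MR1prop:essentiallypositive}]{MR24a}.
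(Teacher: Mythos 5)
Your proposal is correct and follows essentially the same route as the paper: realize $V$ as an orbifold subalgebra $W^G$ via the Lagrangian algebra data, note that every irreducible $V$-module embeds into a $g$-twisted $W$-module for some $g\in G$ (the paper cites \cite{DRX17} for this, which handles the ``organizational'' general-$G$ step you flag as the main obstacle), and conclude from the non-negativity of the weight grading of twisted modules of holomorphic \voa{}s established in \cite{Moe18}. Your closing remark that the argument yields only $\rho(M)\geq 0$ rather than strict positivity is exactly why the proposition is stated with ``non-negatively graded''.
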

\begin{proof}
By the assumptions in the proposition, there is a subgroup $G<\Aut(W)$, corresponding to the Lagrangian algebra describing the extension $V\subset W$, such that $V$ is the \fpvosa{} $V=W^G$ (cf.\ \cite{Kir02,DNR21b}).
Moreover, all the irreducible $V$-modules appear as submodules of the $g$-twisted $W$-modules for $g\in G$ \cite{DRX17}. Finally, in \cite{Moe18} we show that all $g$-twisted $W$-modules are non-negatively graded.
\end{proof}

By \autoref{prop:weakstrongwitt2}, in order to describe the decomposition of $([\Vect],c)$ into strong Witt classes, we need to simply add the Heisenberg commutant as a further invariant. It would be interesting to study this in detail for $c=24$. As a first step in this direction, in \cite{HM23} we describe the possible Heisenberg commutants of index-$2$ conformal subalgebras of the holomorphic $c=24$ \voa{}s. In fact, we only do this for roughly half of them, namely the ones that correspond to \emph{non-anomalous} orbifolds of order~$2$.
For these, one finds that, modulo \autoref{conj:moonshineuniqueness}, there are exactly $18$ Heisenberg commutants up to isomorphism.

\medskip

We now come to two examples that deal with \strat{} \voa{}s of central charge at most~$1$. 

\begin{ex}[Virasoro Minimal Models]\label{ex:discretewitt}
We revisit the minimal models in the discrete series from \autoref{ex:minimal}, i.e.\ the positive, \strat{} \voa{}s $L(c_m)$ of central charge $c_{m}=1-6/((m+2)(m+3))<1$ and with (pseudo-unitary) representation category $\mathcal{M}_m=\Rep(L(c_m))$ for $m\in\N$. In the following, we describe the \strat{}, \emph{positive} \voa{}s that reside in the Witt classes $([\mathcal{M}_m],c_m)$, as well as how those \voa{}s split into strong Witt classes.

The strong Witt classes turn out to be trivial. Indeed, we note that any positive \voa{} $V$ with central charge $c<1$ necessarily has trivial weight-1 space, $V_1=\{0\}$, by the results of \cite{DM04b}. Hence, such a $V$ is equal to its own Heisenberg commutant, and so by \autoref{prop:weakstrongwitt2}, checking strong Witt equivalence of positive \voa{}s with $c<1$ is the same as checking isomorphism. 

Next, we note that the \vosa{} generated by the conformal vector (or stress tensor) of a positive, \strat{} \voa{} is simple \cite{DZ08}. In particular, it follows that if $V$ is a positive, \strat{} \voa{}  with central charge $c=c_m$ for some $m$, then it is a conformal extension of $L(c_m)$, and in particular is Witt equivalent to $L(c_m)$ by \autoref{prop:subalgebraimpliesWitt}. Thus, the problem of computing the positive \voa{}s in the Witt classes $([\mathcal{M}_m],c_m)$ is the same as the problem of determining the conformal extensions of $L(c_m)$. These are classified in \cite{DL15}.

For completeness, we write out the (positive members of the) Witt classes in detail. We recall from \cite{Wan93} that the irreducible modules of $L(c_m)$ are given by $L(c_m,h^{(m)}_{r,s})$, where
\begin{equation*}
h^{(m)}_{r,s} = \frac{(r(m+3)-s(m+2))^2-1}{4(m+2)(m+3)}
\end{equation*}
for integers $1\leq r\leq m+1$ and $1\leq s\leq m+2$. Here, $h^{(m)}_{r,s}$ denotes the conformal weight of the irreducible module. In this parametrization, each irreducible module of $L(c_m)$ appears exactly twice since $L(c_m,h^{(m)}_{r,s})\cong L(c_m,h^{(m)}_{m+2-r,m+3-s})$.

Furthermore, the special module $L(c_m,h^{(m)}_{m+1,1})$ is always a simple-current module of $L(c_m)$,
and $\smash{h^{(m)}_{m+1,1}}=m(m+1)/4$ is an integer whenever $m\equiv 0,3\pmod 4$. Then, by \autoref{prop:algext}, there is a unique \voa{} structure on
\begin{equation*}
W(c_m) \coloneqq  L(c_m)\oplus L(c_m,h^{(m)}_{m+1,1})\quad\text{for }m\equiv 0,3\pmod 4.
\end{equation*}
There are also the four exceptional extensions
{\allowdisplaybreaks
\begin{align*}
E(c_9) &\coloneqq  L(c_9)\oplus L(c_9,h^{(9)}_{1,7}),\\
E(c_{10}) &\coloneqq L(c_{10})\oplus L(c_{10},h^{(10)}_{7,1}),\\
E(c_{27}) &\coloneqq  L(c_{27})\oplus L(c_{27},h^{(27)}_{1,11})\oplus L(c_{27},h^{(27)}_{1,19})\oplus L(c_{27},h^{(27)}_{1,29}),\\
E(c_{28})&\coloneqq L(c_{28})\oplus L(c_{28},h^{(28)}_{11,1}) \oplus L(c_{28},h^{(28)}_{19,1}) \oplus L(c_{28},h^{(28)}_{29,1}).
\end{align*}
}%
In terms of these ingredients, the subset $([\mathcal{M}_m],c_m)_+$ of positive \voa{}s in the Witt class $([\mathcal{M}_m],c_m)$ is
\begin{equation*}
([\mathcal{M}_m],c_m)_+ =
\begin{cases}
\{L(c_m)=\C\vac\},       &m=0,\\
\{L(c_m)\},              &m\equiv1,2\pmod4\text{ and }m\neq9,10,\\
\{L(c_m),W(c_m)\},       &m\equiv0,3\pmod4\text{ and }m\neq0,27,28,\\
\{ L(c_m),E(c_m)\},      &m=9,10,\\
\{L(c_m),W(c_m),E(c_m)\},&m=27,28
\end{cases}
\end{equation*}
for all $m\in\N$.
\end{ex}

\begin{rem}
We have seen that the Witt class $([\mathcal{M}_m],c_m)$ contains (at least) all positive, \strat{} \voa{}s of central charge $c_m$. On the other hand, there are certainly \strat{} \voa{}s of central charge $c_m$ that are not positive. For instance, $V\coloneqq L(c_{3,5})^{\otimes2}\otimes V_L$ for any positive-definite, even lattice of rank~$2$ has central charge $c=2c_{3,5}+2=2(-3/5)+2=4/5$.

Nonetheless, we do not believe that such examples interfere with the Witt classes of the discrete series. That is, we conjecture: \emph{all \voa{}s in the Witt classes $([\mathcal{M}_m],c_m)$ for $m\in\N$ are positive} and hence given by the \voa{}s in \autoref{ex:discretewitt} (cf.\ the discussion preceding \autoref{prop:vectpos}).
This conjecture also implies the conjecture: \emph{the bulk genus $(\mathcal{M}_m,c_m)$ of $L(c_m)$ contains only $L(c_m)$}. The latter conjecture boils down to believing that the pseudo-unitary modular tensor category $\mathcal{M}_m=\Rep(L(c_m))$ cannot be realized by a \strat{} \voa{} not satisfying the positivity condition. The absence of positivity violating \voa{}s $V$ with $\Rep(V)=\mathcal{M}_m$ can be proven for many $m$ using \cite[\autoref*{MR1prop:essentiallypositive}]{MR24a}, e.g., for $m=1,2$, though we were not able to produce a proof that works for all $m$ uniformly. 
\end{rem}

The previous example was concerned with the Witt classes of \strat{} \voa{}s of central charge $c<1$. Another case where a complete classification of all \strat{} \voa{}s is conceivable is that of central charge $c=1$, at least once again under the assumption of positivity. For larger
central charges, a classification of all \strat{} \voa{}s becomes quite intractable, and we have to content ourselves, at least for the time being, with studying the \voa{}s within a bulk genus, as we did in \cite[\autoref*{MR1subsec:bulkex}]{MR24a}.

\begin{ex}[\VOA{}s with $c=1$]\label{ex:cc1}
We recall the following folklore conjecture (see, e.g., \cite{Gin88,Kir89,DVVV89}): \emph{every positive, \strat{} \voa{} of central charge $c=1$ is a rank-$1$ lattice \voa{} or a \fpvosa{} (under a finite group of automorphisms) thereof}. (See \cite{Xu05} for a result in the language of conformal nets under a certain ``spectrum condition''.)

This conjecture is false if we drop the positivity condition. For example, we could consider the \strat{} \voa{} $V\coloneqq L(c_{3,5})^{\otimes5}\otimes V_L$ for any positive-definite, even lattice of rank~$4$, which has central charge $c=5c_{3,5}+4=5(-3/5)+4=1$ but is certainly not isomorphic to a \voa{} of the above form because its associated lattice $L$ has rank greater than $1$.

\smallskip

As the positive-definite, even lattices of rank~$1$ are easily enumerated and the automorphism groups of lattice \voa{}s are well-understood \cite{FLM88,Bor92,DN99,HM22}, we can derive a complete list of all \strat{} \voa{}s appearing in the conjecture. We shall then organize them into Witt classes.

All positive-definite, even lattices of rank~$1$ are isometric to $\sqrt{m}A_1=\sqrt{2m}\Z$ for $m\in\Ns$, i.e.\ they are scaled versions of the root lattice $A_1=\sqrt{2}\Z$. The isometry group is $\Aut(\sqrt{m}A_1)=\langle-\id\rangle$ in each case. It will be convenient to distinguish the cases $m=1$ and $m>1$ in the following.

For $m>1$, the \voa{} $\smash{V_{\sqrt{m}A_1}}$ has a $1$-dimensional, abelian weight-$1$ Lie algebra (isometric to $\sqrt{m}A_1\otimes_\Z\C$). This implies that any \fpvosa{} of $\smash{V_{\sqrt{m}A_1}}$ under a finite-order group of automorphisms is either again a lattice \voa{} $\smash{V_{\sqrt{k}A_1}}$ for some finite-index sublattice of $\sqrt{m}A_1$, which is again of the form $\sqrt{k}A_1$ for some $k\geq m$ with $k/m$ a perfect square,
or the \fpvosa{} $\smash{V_{\sqrt{k}A_1}^+}$ under some (unimportant) choice of lift of the $(-\id)$-involution on $\sqrt{k}A_1$.
Both $\smash{V_{\sqrt{k}A_1}}$ and $\smash{V_{\sqrt{k}A_1}^+}$ are \strat{} and positive (see below).

We come to the case $m=1$. In that case, the lattice \voa{} $V_{A_1}\cong\mathsf{A}_{1,1}$ is also the simple affine \voa{} at level~$1$. After all, $A_1$ is the root lattice of $\sl_2$. Any lift of the $(-\id)$-involution on the lattice $A_1$ is an inner automorphism of the \voa{} $V_{A_1}$.
Indeed, $V_{A_1}$ is strongly generated by the standard generators $\{e,f,h\}$ of the weight\nobreakdash-$1$ Lie algebra $\sl_2$ and the outer automorphism group of $\sl_2$ is trivial. Hence, the automorphism group of $V_{A_1}$ is given by the inner automorphism group of $\sl_2$, which is $\mathrm{PSL}(2,\C)$ acting by conjugation.

Now, the \emph{finite} subgroups of $\mathrm{PSL}(2,\C)=\mathrm{SL}(2,\C)/\{\pm\id\}$ or equivalently of $\mathrm{SU}(2)/\{\pm\id\}\cong\mathrm{SO}(3)$, up to conjugation, famously correspond bijectively to the Dynkin diagrams of ADE-type (via the Kleinian or Du Val singularities).
The \fpvosa{}s of $V_{A_1}$ under subgroups of $\Aut(V_{A_1})$ corresponding to Dynkin types $A$ (cyclic subgroups)
and $D$ (dihedral groups)
are of the form $\smash{V_{\sqrt{k}A_1}}$ and $\smash{V_{\sqrt{k}A_1}^+}$, respectively, for $k$ a perfect square, as already discussed for $m>1$. The remaining three subgroups $G(E_n)<\Aut(V_{A_1})$ of type $E_6$, $E_7$ and $E_8$ are exceptional cases called the tetrahedral,
octahedral
and icosahedral
group, absractly isomorphic to $A_4$, $S_4$ and $A_5$, respectively. We denote the corresponding \fpvosa{}s by $V(E_n)\coloneqq\smash{V_{A_1}^{G(E_n)}}$. Except for $G(E_8)\cong A_5$, all these groups are solvable so that we know that the \fpvosa{}s are \strat{} by \cite{Miy15,CM16}, with the possible exception of $V(E_8)$. Of course, one is inclined to believe that also $V(E_8)$ is \strat{}. Moreover, these \fpvosa{}s are all positive (provided they are \strat{}), as their irreducible modules are contained in twisted modules for $V_{A_1}$ \cite{DRX17}, which have positive weight grading \cite{DL96}.

\smallskip

Summarizing the above considerations, we arrive at the following, conjecturally complete, list of positive, \strat{}, $c=1$ \voa{}s:
\begin{enumerate}
\item rank-$1$ lattice \voa{}s, i.e.\ $V_{\sqrt{m}A_1}$ for $m\in\Ns$,
\item \fpvosa{}s $\smash{V_{\sqrt{m}A_1}^+}$, $m\in\Ns$, of the above under (a lift of) the $(-\id)$-involution (charge conjugation orbifold),
\item one of three exceptional \fpvosa{}s $V(E_6)$, $V(E_7)$ or $V(E_8)$ of $V_{A_1}$.
\end{enumerate}

\smallskip

Still assuming the above conjecture, one can then ask how these \voa{}s are organized into Witt classes. To this end, we first note that any positive integer $m$ can be uniquely factorized as $m=k\ell^2$, where $k$ is squarefree. It follows from the conformal extension $V_{\sqrt{k}\ell A_1}\subset V_{\sqrt{k}A_1}$ already discussed above that the \voa{}s $V_{\sqrt{k}\ell A_1}$ all belong to the same Witt class if one fixes $k$ and varies over $\ell\in\Ns$.

On the other hand, we can show that the lattice \voa{}s $V_{\sqrt{k}A_1}$ all belong to different Witt classes as one varies over squarefree integers $k\in\Ns$. To see this, recall that two inequivalent modular tensor categories belong to different Witt classes if they are both anisotropic since there is only one anisotropic modular tensor category in every Witt class, up to ribbon equivalence (see \autoref{prop:anisorep}). Thus, since the categories $\smash{\Rep(V_{\sqrt{k}A_1})}\cong\mathcal{C}((\sqrt{k}A_1)'/(\sqrt{k}A_1))$ are ribbon inequivalent for different values of squarefree $k$ (they have different ranks, for example), it will follow that the \voa{}s $\smash{V_{\sqrt{k}A_1}}$ belong to different Witt classes if we prove that each category $\Rep(V_{\sqrt{k}A_1})$ is anisotropic. 

To demonstrate anisotropy, it suffices to argue that $\smash{V_{\sqrt{k}A_1}}$ has no irreducible module with integral conformal weight (besides the vacuum module). Indeed, the conformal weights of the irreducible modules of $V_{\sqrt{k}A_1}$ are $h_\lambda\equiv\lambda^2/4k\pmod{1}$ for $\lambda=0,\dots,2k-1$, and therefore $\smash{\Rep(V_{\sqrt{k}A_1})}$ is anisotropic provided there is no $\lambda\in\{1,\dots,2k-1\}$ such that $\lambda^2\equiv0\pmod{4k}$. The absence of such a $\lambda$ follows from the fact that $k$ is squarefree.

The non-lattice \voa{}s from above all conformally embed into at least one lattice \voa{} by construction, and so they naturally fall into the Witt classes we have already discussed. In particular, it follows straightforwardly from our discussion that the Witt classes at $c=1$ that contain at least one positive, \strat{} \voa{} are
\begin{equation*}
\bigl(\bigl[\mathcal{C}((\sqrt{k}A_1)'/(\sqrt{k}A_1))\bigr],1\bigr),\quad k\in\Ns\text{ squarefree},
\end{equation*}
and the corresponding subsets of positive, \strat{} \voa{}s are
\begin{equation}\label{eq:witt1}
\begin{split}
&\bigl(\bigl[\mathcal{C}((\sqrt{k}A_1)'/(\sqrt{k}A_1))\bigr],1\bigr)_+\\[-2mm]
&=\{V_{\sqrt{k}\ell A_1} \}_{\ell\in\Ns}\cup\{V_{\sqrt{k}\ell A_1}^+\}_{\ell\in\Ns}\cup
\begin{cases}
\emptyset, & k\neq 1 \\
\{V(E_6),V(E_7),V(E_8)\}, & k=1,
\end{cases}
\end{split}
\end{equation}
all still under the assumption of the above conjecture and provided that $V(E_8)$ is \strat{}.

\smallskip

On the other hand, using the same logic we deployed in the analysis of the Virasoro minimal models, the \voa{}s $\smash{V^+_{\sqrt{m}A_1}}$ for $m\in\Ns$ and $V(E_n)$ for $n=6,7,8$ each sit in their own strong Witt classes. Two lattice \voa{}s $\smash{V_{\sqrt{k}\ell A_1}}$ and $\smash{V_{\sqrt{k'}\ell'A_1}}$ belong to the same strong Witt class if and only if $k=k'$, where again, $k$ is squarefree.
\end{ex}

\begin{exph}[RCFTs with $(c_L,c_R)=(1,1)$]\label{ex:rcftsc=1wittclasses}
In the previous example, we analyzed the \strat{}, $c=1$ chiral algebras. In this example, we study how the full unitary RCFTs with $(c_L,c_R)=(1,1)$ organize into Witt classes. Assuming the conjectural classification of positive, \strat{}, $c=1$ \voa{}s, one finds that every unitary, $(c_L,c_R)=(1,1)$ RCFT is either
\begin{enumerate}
\item a compact free boson CFT $\mathcal{T}^{\mathrm{circ}}_R$ of radius $R=\sqrt{2p/q}$,
\item a charge conjugation orbifold $\mathcal{T}^{\mathrm{orb}}_R \coloneqq  \mathcal{T}^{\mathrm{circ}}_R\big/\Z_2^{\mathrm{C}}$ of a compact free boson CFT of radius $R=\sqrt{2p/q}$, or 
\item one of three exceptional theories $\mathcal{T}_{E_6}$, $\mathcal{T}_{E_7}$ or $\mathcal{T}_{E_8}$, which are constructed as the canonical diagonal RCFTs built on the chiral algebras $V(E_n)$.
\end{enumerate}
Above, $p$ and $q$ are coprime positive integers, and we work in conventions where $\smash{\mathcal{T}^{\mathrm{circ}}_{\sqrt{2}}}$ corresponds to the $\mathrm{SU}(2)_1$ Wess--Zumino--Witten model, i.e.\ the self T-dual radius. T-duality also imposes the additional discrete identification $R\mapsto 2/R$ on the conformal manifold. 

Let $R=\smash{\sqrt{2p/q}}$. Note that $\smash{\mathcal{T}^{\mathrm{circ}}_R}$ contains the \voa{} $\smash{V_{\sqrt{2pq}\Z}}$ as part of its left- and right-moving chiral algebras, and likewise $\smash{\mathcal{T}^{\mathrm{orb}}_R}$ contains the \voa{} $\smash{V_{\sqrt{2pq}\Z}^+}$ as part of its left- and right-moving chiral algebras. An elementary calculation then shows that $\smash{\mathcal{T}^{\mathrm{x}}_R}$ is Witt equivalent to $\smash{\mathcal{T}^{\mathrm{y}}_{R'}}$ if and only if $R/R'$ is a rational number, where $\mathrm{x},\mathrm{y}\in\{\mathrm{circ},\mathrm{orb}\}$. Finally, the exceptional RCFTs $\mathcal{T}_{E_n}$ are of course all Witt equivalent to $\mathcal{T}^{\mathrm{circ}}_{\sqrt{2}}$.
\end{exph}

%%%%%%%%%%%%%%%%%%%%%%%%%%%%%%%
%%%%%%%%%%%%%%%%%%%%%%%%%%%%%%%

\section{Topological Manipulations}\label{sec:orb}

In this section, we explain what it means for two \strat{} \voa{}s (or RCFTs) to be related by topological manipulations.
We shall formalize this concept by defining the mathematical notions of \emph{orbifold equivalence} and \emph{inner orbifold equivalence}, which are inspired by the concept of rational equivalence for lattices, \autoref{defi:lattice_rat}, and discuss the relations of these notions to orbifolding in physics. (The notion of orbifold equivalence was first introduced in \cite{Hoe03}, though it was not given this name.) We also introduce a closely related physical notion of \emph{interface equivalence} and the relationship of this definition to orbifolding. Then, in \autoref{subsec:relationshiptogenwitt}, we shall explore the interplay of these definitions with weak and strong Witt equivalence.

%%%%%%%%%%%%%%%%%%%%%%%%%%%%%%%

\subsection{Orbifold Equivalence}\label{subsec:orb}

Before providing the definition of orbifold equivalence, let us set the stage by reviewing how a physicist would perform an orbifold of a chiral CFT (or holomorphic \voa{})~$V$.

Let $\mathcal{F}$ be a fusion category of topological line operators of a theory $V$. Recall that the inequivalent ways of ``gauging'' the symmetry $\mathcal{F}$ are in bijection with any of the following three mathematical structures \cite{BT18}: (1) indecomposable module categories of $\mathcal{F}$, (2) Morita equivalence classes of gaugeable algebras of $\mathcal{F}$ or (3) Lagrangian algebras of $Z(\mathcal{F})$. The idea is as follows. Gauging proceeds in two steps:
\begin{enumerate}
\item Passing to $V^{\mathcal{F}}$, the conformal subalgebra of operators in $V$ that commute with the lines in $\mathcal{F}$, which has $\Rep(V^{\mathcal{F}})\cong Z(\mathcal{F})$.\footnote{This is a generalization of the result of \cite{Kir02} to non-invertible symmetries.}
\item ``Adding in twisted sectors'', which corresponds to conformally extending from $V^{\mathcal{F}}$ to a new holomorphic \voa{} $V'$.
\end{enumerate}
The different ways of performing the conformal extension of the second step are labeled precisely by module categories of $\mathcal{F}$ (or gaugeable algebras of $\mathcal{F}$ or Lagrangian algebras of $Z(\mathcal{F})$, equivalently). If $V$ and $V'$ are related by such a procedure, we say that they are related by performing a generalized orbifold and write $V'\cong V/\mathcal{M}$ if $V'$ is obtained by gauging a module category $\mathcal{M}$ of $\mathcal{F}$. 

The description given above suggests the following equivalent characterization of what it means to be related by an orbifold:

\begin{propph}\label{lem:orbifoldholo}
Let $V$ and $V'$ be two \strat{}, holomorphic \voa{}s. Then $V'$ can be obtained from $V$ by performing a generalized orbifold if and only if there is a \strat{} \voa{} $W$ that conformally embeds into both $V$ and $V'$. 
\end{propph}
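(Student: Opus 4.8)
The plan is to run both directions through the dictionary between conformal extensions of \strat{} \voa{}s and condensable (here Lagrangian) algebras in their modular tensor categories of modules, combined with the fact that a modular tensor category contains a Lagrangian algebra exactly when it is a Drinfeld center. For the forward direction, suppose $V'\cong V/\mathcal{M}$ is obtained from $V$ by gauging a module category $\mathcal{M}$ of a fusion category $\mathcal{F}$ of topological line operators of $V$. Following the two-step description given above, I would set $W:=V^{\mathcal{F}}$, the conformal subalgebra of operators in $V$ commuting with the lines of $\mathcal{F}$. By the generalization of \cite{Kir02} to non-invertible symmetries one has $\Rep(W)\cong Z(\mathcal{F})$, and both $V$ and $V'$ are recovered from $W$ by conformally extending along a Lagrangian algebra of $Z(\mathcal{F})\cong\Rep(W)$: for $V$ this is the canonical (``regular'') Lagrangian algebra attached to $\mathcal{F}$, and for $V'$ it is the Lagrangian algebra determined by $\mathcal{M}$. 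Hence $W$ conformally embeds into both $V$ and $V'$, and it remains only to check that $W$ is \strat{}. When $\mathcal{F}=\Vect_G$ for a finite group $G$ (so $W=V^{G}$) this is orbifold rationality \cite{Miy15,CM16}; for general $\mathcal{F}$ it is part of the physics input — a generalized orbifold of a rational theory is again rational — which is one reason this is flagged as a Proposition\textsuperscript{ph}.

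For the reverse direction, suppose $W$ is a \strat{} \voa{} that conformally embeds into both holomorphic \voa{}s $V$ and $V'$. By \autoref{prop:CMSY} (applied with $W$ in the role of the \strat{} \voa{} and $V$, which is $\N$-graded since it is holomorphic, in the role of the extension; or \autoref{prop:algext} if $W$ is positive), $V$ corresponds to a condensable algebra $A$ in $\Cat:=\Rep(W)$ with $\Cat_A^{\loc}\cong\Rep(V)$; since $V$ is holomorphic, $\Cat_A^{\loc}\cong\Vect$, so $A$ is Lagrangian, and likewise $V'$ corresponds to a Lagrangian algebra $B$ in $\Cat$. A modular tensor category containing the Lagrangian algebra $A$ is a Drinfeld center, $\Cat\cong Z(\mathcal{F})$ as ribbon fusion categories with $\mathcal{F}:=\Cat_A$ (see, e.g., \cite{DMNO13}, together with the discussion around \autoref{prop:anisorep}). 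Realizing $\mathcal{F}$ as a fusion category of topological line operators of $V$ — so that, by the same generalized Kirillov statement read in reverse, $V^{\mathcal{F}}=W$, the algebra $A$ being the canonical Lagrangian algebra of $Z(\mathcal{F})$ — the second Lagrangian algebra $B$ of $Z(\mathcal{F})\cong\Rep(W)$ corresponds to a module category $\mathcal{M}$ over $\mathcal{F}$, and conformally extending $W$ along $B$, i.e.\ gauging $\mathcal{M}$ in $V$, produces $V'$. Thus $V'\cong V/\mathcal{M}$ is a generalized orbifold of $V$.

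The main obstacle is not the categorical bookkeeping, which is routine given \autoref{prop:equivalentformulationWitt}, \autoref{prop:CMSY}, \autoref{prop:algext} and the Lagrangian-algebra-versus-center theorem, but rather the two physics inputs entering in both directions: (i) that a fusion category $\mathcal{F}$ of topological defect lines of a holomorphic $V$ produces a \strat{} conformal subalgebra $V^{\mathcal{F}}$ with $\Rep(V^{\mathcal{F}})\cong Z(\mathcal{F})$, and conversely that every \strat{} conformal subalgebra $W\subset V$ whose representation category is a Drinfeld center arises this way; and (ii) the strong rationality of $W=V^{\mathcal{F}}$. Both are known in the group-like case and expected in general, and their present status is precisely why this is stated as a Proposition\textsuperscript{ph} rather than as a Proposition.
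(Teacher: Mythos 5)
Your proposal is correct and follows essentially the same route as the paper: the forward direction takes $W=V^{\mathcal{F}}$, and the reverse direction invokes symmetry-subalgebra duality to realize $\Rep(W)\cong Z(\mathcal{F})$ with $\mathcal{F}=\Rep(W)_{L(V)}=\operatorname{Ver}(V/W)$ and $W=V^{\operatorname{Ver}(V/W)}$, so that $V'$ is obtained by gauging a module category of $\mathcal{F}$. You spell out the Lagrangian-algebra/Drinfeld-center bookkeeping more explicitly than the paper does and correctly flag the physics inputs (strong rationality of $V^{\mathcal{F}}$ and the non-invertible Kirillov-type statement) as the reason the result is only a Proposition\textsuperscript{ph}, but the underlying argument is the same.
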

\begin{proof}
The forward direction follows immediately from the discussion above by taking $W=V^{\mathcal{F}}$.

On the other hand, assume that there is a common \strat{} conformal subalgebra $W$ inside both $V$ and $V'$. By the symmetry-subalgebra duality (see \autoref{subsec:symmetrysubalgebraduality}), we may form the category $\operatorname{Ver}(V/W)$ of topological lines of $V$ that commute with the subalgebra $W$. By construction, $W=V^{\operatorname{Ver}(V/W)}$, and so it follows that $V$ and $V'$ are related by gauging a module category of $\operatorname{Ver}(V/W)$. 
\end{proof}

Although a physicist would likely accept the arguments given above, they would not parse for a mathematician. So we separate out the following special case of \autoref{lem:orbifoldholo}, which can be made rigorous with currently available tools.

\medskip

First, we recall the (group-like) orbifold construction for holomorphic \voa{}s. Suppose that $V$ is a \strat{}, holomorphic \voa{} and that $G<\Aut(V)$ is a finite group of automorphisms acting (faithfully) on $V$. We assume, as is widely believed to be the case, that $V^G$ is also \strat{}. If $G$ is solvable (e.g., abelian), then this is a theorem \cite{Miy15,CM16} (see also \cite{McR21,McR21b}). We also assume that $V^G$ satisfies the positivity condition, i.e.\ that all irreducible $V^G$-modules except for $V^G$ have only positive $L_0$-weights (cf.\ \autoref{prop:vectpos}). Then the representation category of $V^G$ is a twisted Drinfeld double $\Rep(V^G)\cong\mathcal{D}_\omega(G)=\Z(\Vect_G^\omega)$ for some $3$-cocycle $\omega\in H^3(G,\C^\times)$. Indeed, this was conjectured in \cite{DVVV89,DPR90}, proved for cyclic $G$ in \cite{Moe16,EMS20a} and in full generality in \cite{DNR21b} (see also \cite{Kir02}). The (holomorphic) orbifold construction then takes $V^G$ and conformally extends it again to a holomorphic \voa{} $V'$,
\begin{equation*}
V\supset V^G\subset V'.
\end{equation*}
These extensions are controlled by Lagrangian condensable algebras in $\mathcal{D}_\omega(G)$, which are described in Corollary~2 of \cite{EG22}. We may and do assume that the holomorphic extension $V'$ of $V^G$ cannot already be obtained as an extension of $V^H$ for some proper subgroup $H<G$ (like the original \voa{} $V$, for which we can take $H$ to be trivial). Then, the results of \cite{EG22} tell us that such a holomorphic extension $V'$ of $V^G$ only exists if $\omega\in H^3(G,\C^\times)$ is trivial, or if $G$ is \emph{non-anomalous} in physics language. Moreover, such an extension is not generally unique, but rather depends on a choice of $\psi\in H^2(G,\C^\times)$, called \emph{discrete torsion} by physicists. We say that $V'$ is an \emph{orbifold (construction)} of $V$ associated with $G$ and with discrete torsion $\psi$.

We now state the following characterization of the orbifold construction in the special case of abelian groups:
\begin{prop}\label{prop:orbifoldholoabelian}
Let $V$ and $V'$ be two \strat{}, holomorphic \voa{}s. Then $V'$ can be obtained from $V$ by performing an orbifold construction by an abelian group (possibly with non-trivial discrete torsion) if and only if there is a positive, \strat{} \voa{} $W$ such that $\Rep(W)$ is pointed and $W$ conformally embeds into $V$ and $V'$. 
\end{prop}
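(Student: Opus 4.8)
The plan is to translate the statement into the language of modular tensor categories via \autoref{prop:algext}, and then reduce the assertion to the known classification of abelian orbifolds in terms of Lagrangian algebras of twisted Drinfeld doubles. For the forward direction, suppose $V'$ is an orbifold construction of $V$ associated with an abelian group $G$ and discrete torsion $\psi\in H^2(G,\C^\times)$. By the discussion preceding the proposition, $G$ must be non-anomalous, so we set $W\coloneqq V^G$; this is \strat{} by \cite{Miy15,CM16} (abelian groups are solvable), we assume it satisfies the positivity condition, and $\Rep(W)\cong\mathcal{D}_\omega(G)=Z(\Vect_G^\omega)$ with $\omega$ trivial, i.e.\ $\Rep(W)\cong Z(\Vect_G)=\mathcal{D}(G)$. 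The crucial observation is that for $G$ abelian, $\Vect_G$ is a pointed braided fusion category, hence its Drinfeld center $Z(\Vect_G)\cong\mathcal{C}(D_G)$ is a \emph{pointed} modular tensor category, where $D_G$ is the metric group $G\oplus\hat G$ with the canonical hyperbolic quadratic form (the evaluation pairing). Thus $\Rep(W)$ is pointed, and $W$ conformally embeds into both $V$ and $V'$ by construction of the orbifold.

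For the reverse direction, suppose $W$ is a positive, \strat{} \voa{} with $\Rep(W)$ pointed that conformally embeds into both holomorphic $V$ and $V'$. Write $\Rep(W)\cong\mathcal{C}(D)$ for a metric group $D=(D,Q)$. By \autoref{prop:algext}, the conformal extension $W\subset V$ corresponds to a condensable (here Lagrangian, since $V$ is holomorphic so $\Rep(V)\cong\Vect$) algebra $A\subset\Rep(W)$, and similarly $W\subset V'$ corresponds to a Lagrangian algebra $B$. In the pointed setting, condensable algebras of $\mathcal{C}(D)$ are classified by isotropic subgroups of $D$ (see the footnote after \autoref{prop:anisorep}), and Lagrangian ones correspond to self-dual isotropic subgroups $H_A,H_B\leq D$ with $H_A^\perp=H_A$ and $H_B^\perp=H_B$. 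The existence of one self-dual isotropic subgroup means $D$ is in the trivial Witt class; I would then invoke the structure of metric groups in the trivial Witt class to write $D\cong H_A\oplus H_A^*$ in hyperbolic form, exhibiting $D\cong D_G$ for the abelian group $G\coloneqq H_A$. Under this identification $\mathcal{C}(D)\cong\mathcal{D}(G)$, and the Lagrangian algebra $A$ corresponds to the canonical Lagrangian algebra whose condensation recovers $\Vect$; this identifies $W\cong V^G$ as a fixed-point subalgebra, with the $G$-action realized concretely inside $V$ (compare the argument in \autoref{prop:vectpos} and the references \cite{Kir02,DNR21b,DRX17}). The second Lagrangian algebra $B$ then corresponds, via the classification of Lagrangian algebras of $\mathcal{D}(G)$ in \cite{EG22}, to a choice of (trivial $\omega$ and) discrete torsion class $\psi\in H^2(G,\C^\times)$, so that $V'$ is precisely the orbifold of $V$ by $G$ with discrete torsion $\psi$. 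One minor point to handle: one should arrange that $W$ is not an extension of $V^H$ for a proper subgroup, which can be achieved by passing to the subgroup of $G$ actually acting nontrivially, or equivalently by replacing $D$ by a smaller hyperbolic summand; this is the normalization already built into the definition of ``orbifold construction'' in the text.

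The main obstacle I anticipate is the precise bookkeeping in the reverse direction connecting the two self-dual isotropic subgroups $H_A, H_B$ of the \emph{same} metric group $D$ to a single abelian group $G$ together with discrete torsion. Concretely: given that $D\cong D_{H_A}$ via the splitting induced by $A$, one must show that \emph{every} other self-dual isotropic subgroup $H_B$ of $D_{H_A}$ gives rise, after composing with the identification $\mathcal{C}(D)\cong\mathcal{D}(H_A)=Z(\Vect_{H_A})$, to a Lagrangian algebra of $Z(\Vect_{H_A})$ of the form appearing in \cite{EG22}, i.e.\ one labelled by a subgroup (here all of $H_A$, since we want a \emph{$G$-orbifold} rather than an $H$-orbifold for proper $H$) and a $2$-cocycle. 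This is essentially a restatement of the classification of Lagrangian algebras in Drinfeld doubles of pointed categories and should follow from \cite{EG22} together with the explicit description of algebras $A(H,\epsilon)$ in pointed MTCs (as in \cite{GLM24}), but matching conventions—especially tracking how the $2$-cochain $\epsilon$ on the isotropic subgroup translates into physical discrete torsion $\psi\in H^2(G,\C^\times)$—will require care. A secondary subtlety is ensuring the positivity hypothesis on $W$ is genuinely needed only to invoke \autoref{prop:algext} and the identification $\Rep(V^G)\cong\mathcal{D}_\omega(G)$ (rather than some non-CFT-type pathology), which is exactly the role it plays in the surrounding results.
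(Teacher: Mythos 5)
Your forward direction is essentially the paper's: set $W=V^G$, use solvability of $G$ for strong rationality, and observe that $Z(\Vect_G)$ is the pointed category on the hyperbolic metric group $G\oplus\hat G$ (the paper instead just cites Corollary~4.3 of \cite{MN18}). That part is fine.

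The reverse direction, however, contains a genuine gap at the step where you ``invoke the structure of metric groups in the trivial Witt class to write $D\cong H_A\oplus H_A^{*}$ in hyperbolic form.'' There is no such structure theorem: being in the trivial Witt class only fixes the (trivial) anisotropic kernel, not the isomorphism type of $D$. A concrete counterexample is $D=\Z_9$ with $Q(x)=x^2/9$, which is non-degenerate and has the self-dual isotropic subgroup $\{0,3,6\}\cong\Z_3$, yet is not isomorphic to $\Z_3\oplus\hat\Z_3$ even as a group; categorically this is $Z(\Vect_{\Z_3}^{\omega})$ for \emph{non-trivial} $\omega$, i.e.\ the anomalous case. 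One can also arrange two \emph{distinct} Lagrangian subgroups in a non-hyperbolic $D$ (e.g.\ $(\Z_9,x^2/9)$ orthogonally summed with a hyperbolic $\Z_3\oplus\hat\Z_3$), so the interesting case $V\ne V'$ does not rescue the claim. Once the identification $\mathcal{C}(D)\cong Z(\Vect_G)$ with untwisted $G$ fails, your subsequent appeal to the classification of Lagrangian algebras of the \emph{untwisted} double in \cite{EG22} is not available as stated, and the whole translation of $B$ into ``$G$-orbifold with discrete torsion'' collapses. This is more serious than the convention-matching ``bookkeeping'' you flag at the end.

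The paper avoids this entirely by never passing through a hyperbolic splitting of $D$: since $V$ is a simple-current extension of $W$ graded by the self-dual isotropic subgroup $I<D$, the character group $G\coloneqq\hat I=\Hom(I,\C^\times)$ acts on $V$ by scaling each graded piece, and one has $V^G=W$ directly from the theory of simple-current extensions. Then $V'$, being a holomorphic conformal extension of $V^G$, is \emph{by definition} an orbifold construction of $V$ for $G$ or for a proper subgroup of $G$ (the latter absorbing exactly the anomalous/non-hyperbolic cases, where the two Lagrangian subgroups intersect non-trivially). If you want to keep your categorical route, you would need to either prove the hyperbolic splitting under the additional hypothesis that the two Lagrangian subgroups intersect trivially and separately reduce to a subgroup when they do not, or work with $\mathcal{D}_\omega(G)$ for possibly non-trivial $\omega$ throughout; the direct simple-current argument is substantially shorter.
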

In particular, this means that there is an inverse orbifold construction by an abelian subgroup $G'<\Aut(V')$, necessarily isomorphic to $G$, i.e.\ $V$ is a conformal extension of $\smash{V'^{G'}\subset V'}$.
\begin{proof}
If $V'$ is obtained from $V$ by an orbifold construction for an abelian (and in particular solvable) subgroup $G<\Aut(V)$, then $W\coloneqq V^G$ is \strat{} and, by assumption, positive. Moreover, $\Rep(V^G)\cong\mathcal{D}_\omega(G)$ is pointed when $G$ is abelian and $\omega$ trivial (see, e.g., Corollary~4.3 in \cite{MN18}). Also, by definition, $W$ conformally embeds into $V$ and $V'$. This proves the forward direction.

For the reverse direction, let $W$ be a positive, \strat{} \voa{} with $\Rep(W)$ pointed, i.e.\ $\Rep(W)\cong\mathcal{C}(D)$ for some metric group $D$. Suppose further that $W$ conformally extends to the holomorphic \voa{}s $V$ and $V'$. Then, $V$ is a simple-current extension of $W$ corresponding to an isotropic subgroup $I<D$ with $I^\bot=I$. By the theory of such extensions, $G\coloneqq\hat{I}=\Hom(I,\C^\times)$ is naturally a subgroup of the automorphism group $\Aut(V)$ and $V^G=W$. Hence, by definition, $V'$ is an orbifold construction of $V$ for the group $G$, or for a proper subgroup of $G$.
\end{proof}

The utility of \autoref{lem:orbifoldholo} (or \autoref{prop:orbifoldholoabelian}) is that it allows us to generalize the notion of being ``orbifold-related'' to (in general non-holomorphic) \strat{} \voa{}s in a rigorous manner.

\begin{defi}[Orbifold Equivalence]\label{defi:orbequiv}
Two \strat{} \voa{}s $V$ and $V'$ are \emph{orbifold equivalent} if there is a sequence of \strat{} \voa{}s $V=V_1,V_2,\dots,V_n=V'$ such that $V_i$ and $V_{i+1}$ share a common (up to isomorphism) \strat{} conformal subalgebra $W_i$,
as depicted in \autoref{fig:orbequiv}.
\end{defi}
\begin{figure}[ht]
\centering
\begin{tikzcd}[column sep=small]
V& & V_2&\dots & V_{n-1} && V'& \text{Orb.\ equiv.}\\
& W_1\arrow[hookrightarrow]{ul}{\text{ext.}}\arrow[hookrightarrow]{ur}[swap]{\text{ext.}} && \dots\arrow[hookrightarrow]{ul}{\text{ext.}}\arrow[hookrightarrow]{ur}[swap]{\text{ext.}} && W_n\arrow[hookrightarrow]{ul}{\text{ext.}}\arrow[hookrightarrow]{ur}[swap]{\text{ext.}}
\end{tikzcd}
\caption{Orbifold equivalence of \strat{} \voa{}s $V$ and $V'$.}
\label{fig:orbequiv}
\end{figure}

\begin{rem}
    From \cite{AR}, it is known that $W_i$ can always be obtained as the fixed-point subalgebra with respect to a hypergroup action on $V_i$. Thus, one can think of \autoref{defi:orbequiv} as a generalization of \autoref{prop:orbifoldholoabelian}, which captures what it means for two strongly rational vertex operator algebras to be related by hypergroup orbifolds (as opposed to orbifolds by abelian finite groups).
\end{rem}
The formulation of orbifold equivalence in \autoref{defi:orbequiv} suggests that it is a vertex algebraic generalization of the notion of rational equivalence of lattices, \autoref{defi:lattice_rat} and \autoref{fig:ratlat}.

Indeed, in analogy to \autoref{rem:wittlat} and \cite[\autoref*{MR1prop:bulkcomp}]{MR24a}, one can prove that the map from rational equivalence classes of positive-definite, even lattices to orbifold classes of \voa{}s defined by mapping the class $[L]$ of a lattice $L$ to the orbifold class of the lattice \voa{} $V_L$ is well-defined and injective:
\begin{prop}\label{prop:orblatvoa}
Two positive-definite, even lattices $L$ and $M$ are rationally equivalent if and only if $V_L$ and $V_M$ are orbifold equivalent \voa{}s.
\end{prop}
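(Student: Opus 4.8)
The plan is to prove the two implications separately, using the lattice fact that rational equivalence coincides with Witt equivalence (\autoref{prop:lat_ratwit}) together with its vertex-algebraic translation (\autoref{prop:wwittcomp}), so that one never has to analyze general conformal subalgebras of lattice \voa{}s directly.

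For the forward direction, I would begin from part~(b) of \autoref{defi:lattice_rat}: if $L$ and $M$ are rationally equivalent, there is an even lattice $K$ that embeds as a full-rank sublattice into both. Since $K$ and $L$ have the same rank, the lattice \voa{} $V_K$ shares its Heisenberg subalgebra, and hence its Sugawara conformal vector, with $V_L$, so $V_K\subset V_L$ is a \emph{conformal} subalgebra; likewise $V_K\subset V_M$. As $K$ is positive-definite and even, $V_K$ is \strat{}. Hence $V_K$ is a common (up to isomorphism) \strat{} conformal subalgebra of $V_L$ and $V_M$, and the length-two chain $V_L, V_M$ exhibits them as orbifold equivalent in the sense of \autoref{defi:orbequiv}.

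For the reverse direction, I would invoke \autoref{prop:orbimplieswitt} (whose own proof is an immediate consequence of \autoref{prop:subalgebraimpliesWitt} and transitivity of Witt equivalence, so there is no circularity): orbifold equivalence of $V_L$ and $V_M$ implies they are Witt equivalent, i.e.\ by \autoref{defi:weakwittequivalence} that $c(V_L)=c(V_M)$ and that $\Rep(V_L)\cong\mathcal{C}(L'/L)$ and $\Rep(V_M)\cong\mathcal{C}(M'/M)$ are Witt equivalent as modular tensor categories. By \autoref{prop:wwittcomp} this means $L$ and $M$ are Witt equivalent as lattices, and \autoref{prop:lat_ratwit} upgrades this to rational equivalence, finishing the argument.

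I do not anticipate a genuine obstacle; the only steps needing a word of justification are that a finite-index inclusion of even lattices induces a conformal inclusion of the associated lattice \voa{}s, and that a \strat{} conformal subalgebra preserves the central charge (used implicitly through \autoref{prop:orbimplieswitt}). One could instead try to make the reverse direction self-contained by propagating the associated lattice along the chain of shared subalgebras, but this is awkward because \strat{} conformal subalgebras of lattice \voa{}s need not themselves be lattice \voa{}s (for instance $\mathsf{G}_{2,1}\otimes\mathsf{F}_{4,1}\subset\mathsf{E}_{8,1}$), so routing through Witt equivalence is the cleaner path.
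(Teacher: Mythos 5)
Your proposal is correct and follows essentially the same route as the paper: the forward direction uses the common full-rank sublattice $K$ to produce the common \strat{} conformal subalgebra $V_K$, and the reverse direction chains \autoref{prop:orbimplieswitt} (equivalently \autoref{prop:subalgebraimpliesWitt}), \autoref{prop:wwittcomp} and \autoref{prop:lat_ratwit} exactly as in the paper. Your closing remark about why one routes through Witt equivalence rather than tracking associated lattices along the chain is a sensible observation, but it is not a departure from the paper's argument.
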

That is, our notion of orbifold equivalence actually generalizes rational equivalence of lattices. Once again, this is by design.
\begin{proof}
By assumption, $L$ and $M$ have a common full-rank sublattice $K$. This implies that the lattice \voa{} $V_K$ is a conformal \vosa{} of $V_L$ and $V_K$, and hence that the latter are orbifold equivalent.

Conversely, suppose that the lattice \voa{}s $V_L$ and $V_M$ are orbifold equivalent. Then, by \autoref{prop:subalgebraimpliesWitt} (or \autoref{prop:orbimplieswitt} below), they are Witt equivalent. But we proved in \autoref{prop:wwittcomp} that this implies that the lattices $L$ and $M$ are Witt equivalent, which by \autoref{prop:lat_ratwit} is the same as rational equivalence.
\end{proof}

\begin{remph}
Of course, a physicist expects to be able to pass from $V_i$ to $V_{i+1}$ via a suitable generalization of the orbifold procedure outlined at the beginning of this section. One should be able to make this more precise by generalizing the notion of symmetry-subalgebra duality discussed in \autoref{subsec:symmetrysubalgebraduality} so that it applies to \strat{} \voa{}s, not just chiral CFTs. It would be interesting to work this out in more detail.
\end{remph}

\begin{rem}\label{rem:nontrans}
Naively, one might think that one could say that $V$ and $V'$ are orbifold equivalent if they share a \strat{} conformal subalgebra $W$, corresponding to taking $n=2$ in \autoref{defi:orbequiv}. However this appears not to be an equivalence relation, the failure of transitivity being the issue. This is in contrast to the situation for lattices (see \autoref{rem:trans}).

We consider the following (conjectural) counterexample, which was pointed out to us by Gerald Höhn, see \cite{Hoe03}. We recall from \autoref{ex:cc1} the conjecture that all \strat{} \voa{}s of central charge $c=1$ satisfying the positivity condition are rank-$1$ lattice \voa{}s or \fpvosa{}s thereof under finite groups of automorphisms.

Now, we consider the \fpvosa{}s $V=\smash{\mathsf{A}_{1,1}^{G(E_8)}}$ and $V'=\smash{\mathsf{A}_{1,1}^{G(E_7)}}$ of $\mathsf{A}_{1,1}=V_{A_1}$, where the octahedral group $G(E_7)$ and the icosahedral group $G(E_8)$ are the finite subgroups of $\mathrm{SL}(2,\C)=\Aut(\mathsf{A}_{1,1})$ corresponding to $E_7$ and $E_8$, respectively, in the ADE classification (see \autoref{ex:cc1}). These both trivially share a common \strat{} conformal subalgebra with $\mathsf{A}_{1,1}$, however they cannot share a common \strat{} conformal subalgebra $W$ with each other, disproving the transitivity.

Indeed, if the above conjecture holds, and if such a $W$ existed, it would again have to be of the form $\mathsf{A}_{1,1}^G$ for some finite subgroup $G<\mathrm{SL}(2,\C)$. But, there is no \emph{finite} subgroup of $\mathrm{SL}(2,\C)$ containing both $G(E_7)$ and $G(E_8)$. Hence, $W$ cannot be \strat{}.
\begin{equation*}
\begin{tikzcd}
&\mathsf{A}_{1,1}\\
\mathsf{A}_{1,1}^{G(E_8)}\arrow[hookrightarrow]{ur}&&\mathsf{A}_{1,1}^{G(E_7)}\arrow[hookrightarrow]{ul}\\
& ?\arrow[hookrightarrow]{ul}\arrow[hookrightarrow]{ur}
\end{tikzcd}
\end{equation*}
Certainly, the intersection $V\cap V'=\mathsf{A}_{1,1}^{\langle G(E_7),G(E_8)\rangle}$, after choosing a concrete realization of $G(E_7)$ and $G(E_8)$ inside $\mathrm{SL}(2,\C)$, is a conformal subalgebra of $V$ and $V'$, but it is most likely not \strat{}; it certainly is not if we believe the conjecture.

To establish this counterexample, we do not need the full strength of the aforementioned conjecture, but if suffices to have classified all \strat{} conformal subalgebras of $\mathsf{A}_{1,1}$. 
\end{rem}

\smallskip

For full RCFTs, we define orbifold equivalence in terms of the more standard notion of gauging generalized global symmetries.\footnote{See, e.g., \cite{BT18} for the relevant background on gauging finite symmetries of 2d CFTs.}

\begin{defiph}[Orbifold Equivalence]
Two RCFTs $\mathcal{T}$ and $\mathcal{T}'$ are orbifold equivalent if there is a sequence of theories $\mathcal{T}=\mathcal{T}_1,\mathcal{T}_2,\dots,\mathcal{T}_n=\mathcal{T}'$, fusion categories $\mathcal{F}_i$ acting on $\mathcal{T}_i$, and module categories $\mathcal{M}_i$ of $\mathcal{F}_i$ such that $\mathcal{T}_i/\mathcal{M}_i\cong \mathcal{T}_{i+1}$.
\end{defiph}

\begin{remph}\label{rem:nottransRCFT}
Again, one is tempted to declare that one can always take $n=2$ in the above definition of orbifold equivalence, but this does not appear to define a transitive relation. This may seem surprising to a physicist who is familiar with orbifold groupoids \cite{GK21}.\footnote{We thank Justin Kulp, Sahand Seifnashri and Yifan Wang for useful related discussions.} Indeed, the entire point of the orbifold groupoid picture is to describe how to compose two orbifold operations so that they can be performed ``in one shot'', 
\begin{equation}\label{eqn:oneshotorbifold}
\mathcal{T}/\mathcal{M}_1/\mathcal{M}_2 \cong \mathcal{T}/\mathcal{M}_3.
\end{equation}
To resolve this tension, recall that when orbifolding a symmetry $\mathcal{F}$ using a module category $\mathcal{M}_1$, the orbifolded theory $\mathcal{T}/\mathcal{M}_1$ is guaranteed to have a ``dual'' symmetry, which mathematically is described by the category of $\mathcal{F}$-module functors from $\mathcal{M}_1$ to itself, $\operatorname{Fun}_{\mathcal{F}}(\mathcal{M}_1,\mathcal{M}_1)^{\mathrm{op}}$.\footnote{If $\mathcal{C}$ is a fusion category, then $\mathcal{C}^{\mathrm{op}}$ denotes the fusion category with its tensor product reversed; see, e.g., Definition~2.1.5 in \cite{EGNO15}.} In the orbifold groupoid, once one has orbifolded a module category $\mathcal{M}_1$ of $\mathcal{F}$, the orbifolds one performs after that are only allowed to involve module categories $\mathcal{M}_2$ of this dual category $\operatorname{Fun}_{\mathcal{F}}(\mathcal{M}_1,\mathcal{M}_1)^{\mathrm{op}}$. Under these conditions, the orbifolds by $\mathcal{M}_1$ and $\mathcal{M}_2$ can be composed and performed in one shot by a module category $\mathcal{M}_3$ of $\mathcal{F}$. But more generally, one could choose $\mathcal{M}_2$ such that it is a module category of some symmetry that is not contained in $\operatorname{Fun}_{\mathcal{F}}(\mathcal{M}_1,\mathcal{M}_1)^{\mathrm{op}}$, and in this situation it is not clear that it is always possible to find a \emph{finite} symmetry of $\mathcal{T}$, and a corresponding module category $\mathcal{M}_3$ such that equation~\eqref{eqn:oneshotorbifold} holds. 

As an example, consider the \voa{} $V = \smash{\mathsf{A}_{1,1}^{G(E_8)}}$ described in \autoref{rem:nontrans}, and let $\mathcal{T}$ be the diagonal RCFT constructed from $V$. It is believed that any finite-rank fusion category symmetry acting on $\mathcal{T}$ can be embedded into its (finite-rank) category of Verlinde lines, $\operatorname{Ver}(\mathcal{T}/V,V)$. Indeed, suppose there is some fusion category $\mathcal{F}$ that does not commute with the chiral algebra $V$. Then one can form the subalgebra $V^{\mathcal{F}}$ of operators in $V$ that commute with the lines in $\mathcal{F}$. Such a subalgebra $V^{\mathcal{F}}$ would ostensibly define a new positive, \strat{}, $c=1$ \voa{}, contradicting the conjectured classification in \autoref{ex:cc1}. 

Now, since one can go from the $\mathrm{SU}(2)_1$ Wess--Zumino--Witten model to $\mathcal{T}$ by gauging a $G(E_8)$ symmetry, one can go in the reverse direction by gauging a $\Rep(G(E_8))$ symmetry. From the $\mathrm{SU}(2)_1$ theory we may perform orbifolds to get to any of the theories $\smash{\mathcal{T}^{\mathrm{circ}}_{\sqrt{2}p/q}}$ described in \autoref{ex:rcftsc=1wittclasses}. Therefore, $\mathcal{T}$ is orbifold equivalent to all of the theories $\smash{\mathcal{T}^{\mathrm{circ}}_{\sqrt{2}p/q}}$. However, as we argued in the previous paragraph, there are only finitely many finite-rank fusion categories acting on $\mathcal{T}$, and hence only finitely many ``one-shot'' orbifolds one can perform on $\mathcal{T}$. So, one must genuinely iteratively perform several orbifolds starting from $\mathcal{T}$ in order to reach most of the theories $\mathcal{T}^{\mathrm{circ}}_{\sqrt{2}p/q}$.

We stress that the theory $\mathcal{T}$ still has infinitely many topological line operators, e.g., because it is obtained by orbifolding the $\mathrm{SU}(2)_1$ Wess--Zumino--Witten model, which has a continuous symmetry group. Most of these lines however do not participate in a finite-rank fusion category. Presumably, if one insists on going from $\mathcal{T}$ to $\smash{\mathcal{T}^{\mathrm{circ}}_{\sqrt{2}p/q}}$ via a one-shot orbifold, one must gauge algebra objects of finite quantum dimension inside the full, \emph{infinite rank} category of topological line operators of $\mathcal{T}$. 

We suspect that there should be a general notion of gauging algebra objects of finite quantum dimension inside categories of topological line operators of infinite rank. This is probably what is needed to make one-shot orbifolding an equivalence relation. We leave a more detailed investigation of this possibility to future work.
\end{remph}

%%%%%%%%%%%%%%%%%%%%%%%%%%%%%%%

\subsection{Interface Equivalence}\label{subsec:int}

Recall that 3d TQFTs are related by generalized orbifolding if and only if they can be separated by a topological interface \cite{Mul24}. Inspired by this result, we introduce the following definition for \strat{} \voa{}s.

\begin{defiph}[Interface Equivalence]
Two \strat{} \voa{}s $V$ and $W$ are \emph{interface equivalent} if there exists a topological interface $\mathcal{I}$ separating the 3d TQFTs $(\Rep(V),c(V))$ and $(\Rep(W),c(W))$ that can terminate on a topological line interface $I$ of finite quantum dimension between the gapless chiral boundaries defined by $V$ and $W$, as in \autoref{fig:interfaceequivalence}.
\end{defiph}

\begin{figure}[ht]
\begin{center}
\begin{tikzpicture}
\filldraw[fill=blue!30,fill opacity=.7,thick,draw=none] (0,-2.5) -- (-6,-1-1.5) -- (-5,-1) -- (1,-1) -- cycle;
\filldraw[fill=red!30,fill opacity=.7,thick,draw=none] (-3,0) -- (1-3,1.5) -- (1-3,-1) -- (-3,-1-1.5) -- cycle;
\draw[very thick,purple] (1-3,-1) -- (-3,-1-1.5);
\node[] at (-.5-.4,-.5) {$\Rep(W)$};
\node[] at (-.5-4+.5,-.5) {$\Rep(V)$};
\node[] at (-2.5,-.5) {$\mathcal{I}$};
\node[] at (-2.5-3+1.2,-1.75) {$V$};
\node[] at (-2.5+3-1.2,-1.75) {$W$};
\node[] at (-2.5+3-1.2-1.4,-1.75) {$I$};
\end{tikzpicture}
\end{center}
\caption{Pictorial representation of interface equivalence.}\label{fig:interfaceequivalence}
\end{figure}
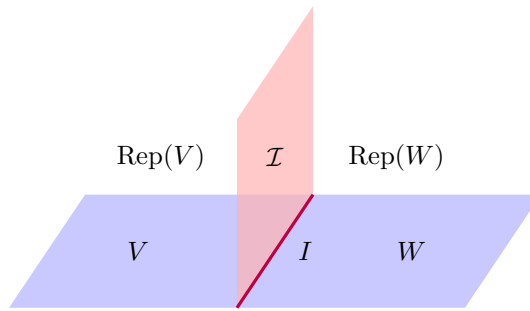

The following will be useful in later sections.  
\begin{propph}\label{prop:technicallemma}
If $V$ and $W$ are interface equivalent, then any topological interface $\mathcal{I}'$ between $\Rep(V)$ and $\Rep(W)$ can terminate on a topological line interface $I'$ of finite quantum dimension between the gapless chiral boundaries defined by $V$ and $W$.
\end{propph}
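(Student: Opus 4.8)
The plan is to reduce an arbitrary topological interface to the distinguished one provided by interface equivalence, by joining the two along a topological line interface of finite quantum dimension and then collapsing. To set up, unpack the hypothesis: interface equivalence of $V$ and $W$ supplies a topological interface $\mathcal{I}$ between the $3$d TQFTs $(\Rep(V),c(V))$ and $(\Rep(W),c(W))$, together with a topological line interface $I$ of finite quantum dimension between the gapless chiral boundaries defined by $V$ and by $W$, on which $\mathcal{I}$ terminates (as in \autoref{fig:interfaceequivalence}). Note that the existence of such a topological interface already forces $c(V)=c(W)$, since folding along $\mathcal{I}$ turns it into a topological boundary condition of the stacked theory $(\Rep(V)\boxtimes\overline{\Rep(W)},c(V)-c(W))$, which can only admit one when $c(V)-c(W)=0$.

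Next, given an arbitrary topological interface $\mathcal{I}'$ between $\Rep(V)$ and $\Rep(W)$, the first substantive step is to produce a topological line interface $J$ of finite quantum dimension joining the surfaces $\mathcal{I}$ and $\mathcal{I}'$. Folding along these surfaces, $\mathcal{I}$ and $\mathcal{I}'$ become topological (i.e.\ gapped) boundary conditions of the modular tensor category $\mathcal{E}\coloneqq\Rep(V)\boxtimes\overline{\Rep(W)}$, that is, Lagrangian algebras $A$ and $B$ of $\mathcal{E}$; the topological line interfaces between these two boundary conditions are precisely the $A$-$B$-bimodule objects of $\mathcal{E}$, a finite semisimple category which is never zero since it contains the free bimodule $A\otimes B$, with $\operatorname{FPdim}(A\otimes B)=\operatorname{FPdim}(A)\operatorname{FPdim}(B)<\infty$. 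Unfolding yields $J$. (Equivalently, one may simply invoke the general principle that the bicategory of topological interfaces between two fixed $3$d TQFTs, whenever non-empty, is connected, with all $1$-morphisms of finite quantum dimension.)

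Now assemble the configuration. Starting from the sandwich of \autoref{fig:interfaceequivalence}, replace the wall between $\Rep(V)$ and $\Rep(W)$ by $\mathcal{I}'$ everywhere except in a thin slab of thickness $\epsilon$ abutting the gapless boundary; inside the slab let $\mathcal{I}'$ join $\mathcal{I}$ along $J$ at distance $\epsilon$, and let $\mathcal{I}$ terminate on $I$ at the gapless boundary exactly as before. Letting $\epsilon\to 0$ fuses the parallel line defects $J$ and $I$ into a single topological line $I'$; by construction $I'$ is a line along which the gapless boundary defined by $V$, the gapless boundary defined by $W$, and the surface $\mathcal{I}'$ all meet, i.e.\ a topological line interface between the two gapless chiral boundaries on which $\mathcal{I}'$ terminates, and $\operatorname{FPdim}(I')\le\operatorname{FPdim}(J)\operatorname{FPdim}(I)<\infty$. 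Since $\mathcal{I}'$ was an arbitrary topological interface between $\Rep(V)$ and $\Rep(W)$, this is the claimed statement.

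The only step that is not purely formal is the existence and finiteness of the connecting junction $J$, or equivalently the connectedness of the bicategory of topological interfaces between two $3$d TQFTs together with the finiteness of its $1$-morphisms; the remaining moves are standard topological manipulations of surface and line defects. Since \autoref{prop:technicallemma} carries the $\mathrm{ph}$ label, we are content to argue at a physics level of rigor, though the folding-plus-free-bimodule argument above strongly suggests that the key input can be made into a theorem about gapped boundaries of modular tensor categories.
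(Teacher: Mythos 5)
Your proposal is correct at the intended (physics) level of rigor, but it reaches the conclusion by a different key input than the paper. Both arguments share the final move — combine the given termination line $I$ of $\mathcal{I}$ with something that converts $\mathcal{I}$ into $\mathcal{I}'$ — but they differ in how that converter is produced. You construct a finite-quantum-dimension junction line $J$ between $\mathcal{I}$ and $\mathcal{I}'$ directly, by folding so that the two interfaces become Lagrangian algebras $A$ and $B$ in $\Rep(V)\boxtimes\overline{\Rep(W)}$ and observing that the category of $A$-$B$-bimodules is nonzero (it contains the free bimodule $A\otimes B$ of finite Frobenius--Perron dimension); you then collapse a slab to fuse $J$ onto $I$. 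The paper instead invokes the result that any topological interface $\mathcal{I}'$ is obtained from $\mathcal{I}$ by $1$-gauging a fusion category of lines on the worldvolume of $\mathcal{I}$, and then applies the standard fact that gauging a $2$d theory (here, the worldvolume theory of $\mathcal{I}$) maps a boundary condition (here, the termination line $I$) to a dual multiplet of boundary conditions of the gauged theory. The two mechanisms are physically equivalent — the junction obtained by gauging in half of the worldvolume is precisely your $J$ — but your existence argument for $J$ is more self-contained and closer to a purely category-theoretic statement about module categories over Lagrangian algebras that one could hope to make rigorous, whereas the paper's route makes the relation to generalized orbifolding (and the citation trail to the $1$-gauging literature) explicit. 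Your bound $\operatorname{FPdim}(I')\le\operatorname{FPdim}(J)\operatorname{FPdim}(I)$ for the finiteness of the quantum dimension is a welcome extra detail that the paper leaves implicit.
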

\begin{proof}
We content ourselves with sketching the physical intuition. First, suppose that a 2d QFT $\mathcal{T}$ admits a fusion category symmetry $\mathcal{F}$, and a boundary condition~$B$. By fusing the lines in $\mathcal{F}$ onto the boundary condition $B$, one generates an ``$\mathcal{F}$-multiplet''  $\mathcal{B}$ of boundary conditions (mathematically described by an $\mathcal{F}$-module category, see, e.g., \cite{CRSS23} for a recent review) that transform into one another under the action of $\mathcal{F}$. It is known that if one performs an orbifold of $\mathcal{T}$ by a module category $\mathcal{M}$, then the orbifolded theory $\mathcal{T}/\mathcal{M}$ will possess a category of boundary conditions that is dual to $\mathcal{B}$ (see, e.g., \cite{HLS21,CRZ24} for more details). Thus, the existence of a boundary condition of $\mathcal{T}$ guarantees the existence of a boundary condition for any of its orbifolds $\mathcal{T}/\mathcal{M}$.

We can apply this same way of thinking to $2$-dimensional interfaces, as opposed to 2d QFTs. Indeed, a topological interface $\mathcal{I}$ between the 3d TQFTs $\Rep(V)$ and $\Rep(W)$ always supports a fusion category $\mathcal{F}$ of topological line operators on its world-volume, and it is always possible to reach any other topological interface $\mathcal{I}'$ by performing a 1-gauging of $\mathcal{F}$ \cite{RSS23}. Using the same kinds of arguments as in the previous paragraph, the existence of the line interface $I$ between the $V$ and $W$ boundaries on which $\mathcal{I}$ terminates guarantees, for any other topological interface~$\mathcal{I}'$, the existence of a line interface $I'$ between the $V$ and $W$ boundaries on which $\mathcal{I}'$ can terminate.
\end{proof}

\begin{propph}\label{prop:interfaceorb}
If two \strat{} \voa{}s $V$ and $V'$ are orbifold equivalent, then they are interface equivalent.
\end{propph}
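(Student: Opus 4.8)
The plan is to argue at the physics level of rigor used elsewhere in this subsection, in two stages: first reduce to the case of a single common conformal subalgebra, and then build the required interface and its terminating line from anyon-condensation data. To begin, I would check that interface equivalence is transitive. Given interface equivalences witnessed by pairs $(\mathcal{I}_1,I_1)$ between $V$ and $V''$ and $(\mathcal{I}_2,I_2)$ between $V''$ and $V'$, one fuses $\mathcal{I}_1$ onto $\mathcal{I}_2$ across the bulk $(\Rep(V''),c(V''))$ to obtain a topological interface $\mathcal{I}$ between $(\Rep(V),c(V))$ and $(\Rep(V'),c(V'))$ --- which makes sense since $c(V)=c(V'')=c(V')$ --- and fuses $I_1$ onto $I_2$ along the gapless chiral boundary defined by $V''$ to obtain a line interface $I$ between the boundaries defined by $V$ and $V'$ on which $\mathcal{I}$ terminates, its quantum dimension being a summand of $d(I_1)d(I_2)$ and hence still finite. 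Because \autoref{defi:orbequiv} builds transitivity into orbifold equivalence through a finite chain $V=V_1,\dots,V_n=V'$ of \strat{} \voa{}s with $V_i$ and $V_{i+1}$ sharing a common \strat{} conformal subalgebra $W_i$, this reduces the proposition to the case $n=2$, i.e.\ to the situation where $V$ and $V'$ share a single \strat{} conformal subalgebra $W$.

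In that case I would use \autoref{prop:CMSY} (or \autoref{prop:algext} when $W$ is positive): the extension $W\subset V$ is encoded by a condensable algebra $A$ in $\Rep(W)$ with $\Rep(W)_A^\loc\cong\Rep(V)$, and $W\subset V'$ by a condensable algebra $B$ with $\Rep(W)_B^\loc\cong\Rep(V')$. Invoking the physical picture of \autoref{fig:condensationextension}, condensing the anyons of $A$ in half of spacetime produces a topological interface $\mathcal{J}_A$ between $(\Rep(W),c(W))$ and $(\Rep(V),c(V))$ which, fused onto the gapless chiral boundary defined by $W$, reproduces that defined by $V$; hence at the corner where $\mathcal{J}_A$ terminates on the $W$-boundary there sits a topological line interface $I_A$ between the boundaries defined by $W$ and $V$, of finite quantum dimension (controlled by $\dim_{\Rep(W)}(A)$). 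The analogous construction for $B$ gives $(\mathcal{J}_B,I_B)$. I would then assemble these: sandwich a slab of the bulk TQFT $(\Rep(W),c(W))$ between copies of $\mathcal{J}_A$ and $\mathcal{J}_B$ and collapse the slab. This yields a topological interface $\mathcal{I}$ separating $(\Rep(V),c(V))$ and $(\Rep(V'),c(V'))$ --- legitimate since $c(V)=c(W)=c(V')$ --- while the corner lines coming from $I_A$ and $I_B$ merge into a single topological line interface $I$ between the boundaries defined by $V$ and $V'$, of quantum dimension a summand of $d(I_A)d(I_B)$ and hence finite, on which $\mathcal{I}$ terminates. This is precisely the configuration of \autoref{fig:interfaceequivalence}. (Having exhibited one such pair $(\mathcal{I},I)$, one could also invoke \autoref{prop:technicallemma} to upgrade $\mathcal{I}$ to an arbitrary topological interface between $\Rep(V)$ and $\Rep(V')$.)

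The hard part is not the categorical bookkeeping, which is already supplied by \autoref{prop:CMSY} and the theory of anyon condensation, but rather making rigorous the ``collapse the slab'' step and the behavior of the corner junction lines in the second stage --- that is, setting up a precise calculus of topological interfaces with corners between rational chiral algebras and confirming that the merged line $I$ genuinely has finite quantum dimension and supports the termination of $\mathcal{I}$. The transitivity argument in the first stage has the same flavor. We expect all of this to be within reach of current technology but regard a fully rigorous treatment as beyond the present scope.
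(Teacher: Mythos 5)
Your proposal is correct and follows essentially the same route as the paper: both handle the single-subalgebra case by performing the anyon condensation associated with the extensions $W\subset V$ and $W\subset V'$ in half of spacetime to produce topological interfaces terminating on line junctions, fuse them to get the interface equivalence between $V$ and $V'$, and then chain/fuse along the sequence $V_1,\dots,V_n$ to handle general orbifold equivalence. The only difference is cosmetic — you establish transitivity first and then the one-step case, while the paper does the reverse.
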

\begin{proof}
We first show that if $V$ and $V'$ share a \strat{} conformal subalgebra $W$, then they are interface equivalent. To see this, we start with the TQFT $(\Rep(W),c(W))$ in the presence of its gapless chiral boundary defined by $W$. Recall that $V$ defines a condensable algebra $A(V)$ in $\Rep(W)$, and that performing anyon condensation of $(\Rep(W),c(W))$ using $A(V)$ in the presence of the $W$ boundary produces the TQFT $(\Rep(V),c(V))$ in the presence of the $V$ boundary. If we perform this anyon condensation in half of spacetime, then we obtain an interface between $\Rep(V)$ and $\Rep(W)$, ending on a line interface between the $V$ and $W$ boundaries. Similar statements apply replacing $V$ with $V'$. If we fuse these two (line) interfaces interfaces together, then we see that $V$ and $V'$ are interface equivalent. See \autoref{fig:interfaceorb}.

Now, if two \strat{} \voa{}s $V$ and $V'$ are orbifold equivalent, then there is a sequence $V=V_1,\dots,V_n=V'$ of \strat{} \voa{}s such that $V_i$ shares a \strat{} conformal subalgebra $W_i$ with $V_{i+1}$. We may apply the arguments of the previous paragraph to each pair of $V_i$ and $V_{i+1}$ to produce interfaces $(\mathcal{I}_i,I_i)$ between them and then fuse these all together to produce an interface $(\mathcal{I},I)$ between $V$ and $V'$.
\end{proof}

\begin{figure}[ht]
\begin{center}
{\footnotesize
\begin{tikzpicture}[scale=.57]
\filldraw[fill=blue!30,fill opacity=.7,draw=none] (0,-2.5) -- (-3,-1-1.5) -- (-2,-1) -- (1,-1) -- cycle;
\node[] at (-.5-.4,-.5) {$\Rep(W)$};
\node[] at (-2.5+3-1.2-.4,-1.75) {$W$};
\node[] at (2.5,-.5) {$\xrightarrow{\text{condense}}$};
\end{tikzpicture}\hspace{-.1in}
\begin{tikzpicture}[scale=.57]
\filldraw[fill=blue!30,fill opacity=.7,draw=none] (0,-2.5) -- (-6,-1-1.5) -- (-5,-1) -- (1,-1) -- cycle;
\filldraw[fill=red!30,opacity=.7,draw=none] (-3-1,0) -- (1-3-1,1.5) -- (1-3-1,-1) -- (-3-1,-1-1.5) -- cycle;
\node[] at (-2.5+3-1.2-1.75,-.5) {$\Rep(W)$};
\filldraw[fill=yellow!30,,opacity=.7,draw=none] (-3-1+2,0) -- (1-3-1+2,1.5) -- (1-3-1+2,-1) -- (-3-1+2,-1-1.5) -- cycle;
\draw[very thick,red] (1-3+1,-1) -- (-3+1,-1-1.5);
\draw[very thick,blue] (1-3-1,-1) -- (-3-1,-1-1.5);
\node[] at (-.5-.4+1,-.5) {$\Rep(V')$};
\node[] at (-.95-4,-.5) {$\Rep(V)$};
\node[] at (-2.5-3+1.2-.4,-1.75) {$V$};
\node[] at (-2.5+3-1.2+.4,-1.75) {$V'$};
\node[] at (-2.5+3-1.2-2,-1.75) {$W$};
\end{tikzpicture}
\begin{tikzpicture}[scale=.57]
\node[] at (-6,-.5) {$\xrightarrow{\text{fuse}}$};
\filldraw[fill=blue!30,fill opacity=.7,draw=none] (0,-2.5) -- (-6,-1-1.5) -- (-5,-1) -- (1,-1) -- cycle;
\filldraw[fill=orange!30,  fill opacity=.7,draw=none] (-3,0) -- (1-3,1.5) -- (1-3,-1) -- (-3,-1-1.5) -- cycle;
\draw[very thick,purple] (1-3,-1) -- (-3,-1-1.5);
\node[] at (-.5-.4,-.5) {$\Rep(V')$};
\node[] at (-.5-4+.5,-.5) {$\Rep(V)$};
\node[] at (-2.5-3+1.2,-1.75) {$V$};
\node[] at (-2.5+3-1.2,-1.75) {$V'$};
\end{tikzpicture}
}
\end{center}
\caption{Depiction of the proof of \autoref{prop:interfaceorb}.}\label{fig:interfaceorb}
\end{figure}

\begin{remph}
It is tempting to assert, in the spirit of \cite{DLWW24}, that the converse of \autoref{prop:interfaceorb} holds as well. While this certainly seems conceivable to us, it is far from obvious. We will offer further intuition below, in the more familiar (to physicists) context of full RCFTs.
\end{remph}

\smallskip

The notion of interface equivalence can be extended to RCFTs as follows.

\begin{defiph}[Interface Equivalence]
Two RCFTs are interface equivalent if they can be separated by a topological line interface of finite quantum dimension.
\end{defiph}

Again, by using the trick of ``performing an orbifold in half of spacetime'', we can derive the following analog of \autoref{prop:interfaceorb}.

\begin{propph}\label{prop:interfaceorbRCFT}
If two RCFTs are orbifold equivalent, then they are interface equivalent.
\end{propph}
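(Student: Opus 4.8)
The plan is to mimic the proof of \autoref{prop:interfaceorb}, transporting the argument from the 3d-TQFT-with-chiral-boundary picture to the 2d-RCFT picture, and to reduce the general case to a single orbifold step by fusing interfaces. The key input is the ``gauge in half of spacetime'' construction, which is exactly what the definition of interface equivalence of RCFTs is designed to recognize.

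First I would treat the case $n=2$, i.e.\ two RCFTs $\mathcal{T}$ and $\mathcal{T}'\cong\mathcal{T}/\mathcal{M}$ related by gauging a module category $\mathcal{M}$ of a fusion category $\mathcal{F}$ acting on $\mathcal{T}$. Performing this gauging in, say, the left half of the plane means inserting the appropriate (weighted) network of $\mathcal{F}$-lines determined by the data of $\mathcal{M}$ only for $x<0$; the locus $x=0$ where this gauging terminates is by construction a topological line interface $I$ between $\mathcal{T}=\mathcal{T}|_{x>0}$ and $\mathcal{T}/\mathcal{M}=\mathcal{T}|_{x<0}$. I would then argue that $I$ has finite quantum dimension: the interface is assembled out of the finitely many simple objects of $\mathcal{F}$ (equivalently, out of the simple objects of the gaugeable algebra attached to $\mathcal{M}$), each of finite quantum dimension, so the quantum dimension of $I$ is a finite sum of finite numbers. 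Physically this is just the statement that gauging a finite symmetry cannot turn a finite boundary entropy into an infinite one.

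Next I would handle a general sequence $\mathcal{T}=\mathcal{T}_1,\dots,\mathcal{T}_n=\mathcal{T}'$ with $\mathcal{T}_i/\mathcal{M}_i\cong\mathcal{T}_{i+1}$. By the previous paragraph each consecutive pair is separated by a topological line interface $I_i$ of finite quantum dimension. Fusing $I_1,\dots,I_{n-1}$ along the intermediate theories $\mathcal{T}_2,\dots,\mathcal{T}_{n-1}$ produces a single topological line interface $I=I_{n-1}\circ\cdots\circ I_1$ between $\mathcal{T}$ and $\mathcal{T}'$, whose quantum dimension is bounded by the product of those of the $I_i$ and hence still finite. This exhibits $\mathcal{T}$ and $\mathcal{T}'$ as interface equivalent, completing the argument.

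The main obstacle is one of rigor rather than of idea: making precise the ``orbifold in half of spacetime'' construction for a general (in particular non-invertible) fusion-category symmetry, and, relatedly, controlling the quantum dimension when $\mathcal{F}$ has infinitely many simple objects but one gauges only a finite-dimensional algebra object inside it (the subtlety flagged in \autoref{rem:nottransRCFT} and the surrounding footnotes). Since the statement is asserted only at a physics level of rigor, I would be content with the heuristic half-space argument above, remarking that a fully rigorous version should follow from a $2$d analog of the interface-composition results of \cite{Mul24,DLWW24}.
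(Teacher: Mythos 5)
Your proposal matches the paper's argument: the paper likewise invokes the ``perform the orbifold in half of spacetime'' trick to produce a topological line interface of finite quantum dimension for each single gauging step, and then fuses these interfaces across the sequence $\mathcal{T}_1,\dots,\mathcal{T}_n$, exactly as in the proof of \autoref{prop:interfaceorb}. Your added remarks on finiteness of the quantum dimension and on the rigor caveats (infinite-rank symmetry categories) are consistent with the paper's own caveats in \autoref{rem:nottransRCFT}, so nothing essential is missing.
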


\begin{remph}
One might come away from \cite{DLWW24} with the impression that the converse to \autoref{prop:interfaceorbRCFT} holds. Indeed, op.\ cit.\ shows that if $I$ is a line interface between two RCFTs $\mathcal{T}$ and $\mathcal{T}'$, then the fusion of $I$ with its orientation reversal, $A\coloneqq I^\ast \otimes I$, defines a (non-simple) topological line operator inside $\mathcal{T}$, and in fact a finite quantum dimension algebra object that can be gauged to reach $\mathcal{T}'$. However, in this argument, no reference is made to the ambient category in which $A$ sits. Even though $A$ decomposes into finitely many simple line operators by virtue of its finite quantum dimension, it is not clear that the category generated by the simple lines appearing in $A$ has \emph{finite rank}. 

Indeed, returning to the example discussed at the end of \autoref{rem:nottransRCFT}, if one composes the half-space gauging line interface between $\mathcal{T}$ and $\mathrm{SU}(2)_1$ with the half-space gauging line interface between $\mathrm{SU}(2)_1$ and $\smash{\mathcal{T}^{\mathrm{circ}}_{\sqrt{2}p/q}}$, then one obtains an interface $I$ between $\mathcal{T}$ and $\smash{\mathcal{T}^{\mathrm{circ}}_{\sqrt{2}p/q}}$. Even though $I^\ast\otimes I$ must contain finitely many simple topological lines, evidently the category these simple lines generate must have infinite rank for most choices of $p$ and $q$. 

This leads us back to the idea that orbifold equivalence and interface equivalence should become the same if, in the definition of orbifold equivalence, one is allowed to gauge finite quantum dimension algebra objects inside categories of topological line operators that have infinite rank. 
\end{remph}

We have defined closely related notions of orbifold equivalence and interface equivalence for both \strat{} \voa{}s and full RCFTs. The following results describe the relationships between the definitions for \voa{}s and those for CFTs. 

\begin{propph}\label{prop:orbrelationshipchiralfull}
The RCFTs ${_V}\mathcal{H}_W$ and ${_{V'}}\mathcal{H}'_{W'}$ are orbifold equivalent if and only if $V$ is orbifold equivalent to $V'$ and $W$ is orbifold equivalent to $W'$ as \strat{} \voa{}s.
\end{propph}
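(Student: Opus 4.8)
The plan is to establish the two implications separately, throughout at a physics level of rigor.

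\emph{Forward direction.} Assuming ${_V}\mathcal{H}_W$ and ${_{V'}}\mathcal{H}'_{W'}$ are orbifold equivalent, I would reduce to a single gauging step ${_A}\mathcal{G}_B\to{_{A'}}\mathcal{G}'_{B'}$ in the connecting chain. Recall (see \autoref{rem:nottransRCFT} and symmetry-subalgebra duality, \autoref{subsec:symmetrysubalgebraduality}) that gauging a fusion category symmetry $\mathcal{F}$ of ${_A}\mathcal{G}_B$ is carried out by first passing to the subalgebra of local operators commuting with the lines in $\mathcal{F}$ — whose maximal left- and right-moving chiral algebras are conformal subalgebras $A^{\mathcal{F}}\subseteq A$ and $B^{\mathcal{F}}\subseteq B$ — and then adding twisted sectors, which conformally extends $A^{\mathcal{F}}$ to $A'$ and $B^{\mathcal{F}}$ to $B'$. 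Granting (as one does at this level of rigor, consistent with the hypergroup picture of \cite{AR} recalled after \autoref{defi:orbequiv}) that $A^{\mathcal{F}}$ and $B^{\mathcal{F}}$ are again \strat{}, this step displays $A^{\mathcal{F}}$ as a common \strat{} conformal subalgebra of $A$ and $A'$, and $B^{\mathcal{F}}$ as one of $B$ and $B'$. Concatenating along the chain then yields a sequence of common \strat{} conformal subalgebras connecting $V$ to $V'$, and another connecting $W$ to $W'$; hence $V$ and $V'$ are orbifold equivalent, and so are $W$ and $W'$.

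\emph{Reverse direction.} Now assume $V=V_1,\dots,V_n=V'$ with common \strat{} conformal subalgebras $U_i\subseteq V_i,V_{i+1}$, and $W=W_1,\dots,W_m=W'$ similarly; note that $c(V)=c(V')$ and $c(W)=c(W')$, orbifold-equivalent \voa{}s being Witt equivalent by \autoref{prop:orbimplieswitt}. I would isolate two ``moves'' on RCFTs, each realized by a sequence of orbifolds. \textbf{Move (i):} any two RCFTs ${_A}\mathcal{G}_B$ and ${_A}\widetilde{\mathcal{G}}_B$ with the same chiral algebras are orbifold equivalent — since both exist, $\Rep(A)\boxtimes\overline{\Rep(B)}$ is Witt trivial, hence of the form $Z(\mathcal{F})$, and the two Lagrangian algebras correspond to indecomposable $\mathcal{F}$-module categories $\mathcal{M},\widetilde{\mathcal{M}}$ (cf.\ \cite{DMNO13}); gauging the dual symmetry $\operatorname{Fun}_{\mathcal{F}}(\mathcal{M},\mathcal{M})^{\mathrm{op}}$ of ${_A}\mathcal{G}_B$ by its regular module category returns the canonical RCFT attached to the regular $\mathcal{F}$-module category, and gauging $\widetilde{\mathcal{M}}$ inside its Verlinde line category $\mathcal{F}$ then produces ${_A}\widetilde{\mathcal{G}}_B$. \textbf{Move (ii):} if $A$ and $\widetilde{A}$ share a \strat{} conformal subalgebra $U$ (so $A,\widetilde{A}$ are the extensions of $U$ by condensable algebras $C,\widetilde{C}\in\Rep(U)$), then any RCFT ${_A}\mathcal{G}_B$ is orbifold equivalent to some ${_{\widetilde{A}}}\mathcal{G}'_B$: using $\Rep(A)\boxtimes\overline{\Rep(B)}=\bigl(\Rep(U)\boxtimes\overline{\Rep(B)}\bigr)_{C\boxtimes\mathbf{1}}^{\loc}$, pull $\mathcal{G}$ back along this condensation to a Lagrangian algebra of $\Rep(U)\boxtimes\overline{\Rep(B)}$ containing $C\boxtimes\mathbf{1}$, i.e.\ an RCFT ${_U}\mathcal{G}^{\mathrm{pb}}_B$ orbifold equivalent to ${_A}\mathcal{G}_B$ (the half-space condensation of $C$ realizing the orbifold, in the spirit of \autoref{lem:orbifoldholo}, \autoref{prop:interfaceorb} and \autoref{fig:condensationextension}); since $\widetilde{A}$ is Witt equivalent to $A$, the category $\Rep(\widetilde{A})\boxtimes\overline{\Rep(B)}$ is likewise Witt trivial, so pulling one of its Lagrangian algebras back to $\Rep(U)\boxtimes\overline{\Rep(B)}$ gives a Lagrangian algebra containing $\widetilde{C}\boxtimes\mathbf{1}$, i.e.\ an RCFT ${_U}\mathcal{K}_B$; move (i) connects ${_U}\mathcal{G}^{\mathrm{pb}}_B$ to ${_U}\mathcal{K}_B$, and condensing $\widetilde{C}$ on the left of ${_U}\mathcal{K}_B$ lands on an RCFT ${_{\widetilde{A}}}\mathcal{G}'_B$.

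Granting these moves, the argument concludes: applying move (ii) iteratively along $V_1,\dots,V_n$ with $W$ held fixed carries ${_V}\mathcal{H}_W$ to some ${_{V'}}\mathcal{H}^{\flat}_W$; applying it along $W_1,\dots,W_m$ with $V'$ held fixed carries this to some ${_{V'}}\mathcal{H}^{\sharp}_{W'}$; and move (i) finally carries ${_{V'}}\mathcal{H}^{\sharp}_{W'}$ to the target ${_{V'}}\mathcal{H}'_{W'}$. The main obstacle I expect is move (i) — the claim that every physical modular invariant attached to a fixed pair of chiral algebras is reachable from the canonical one by gauging Verlinde lines — which rests on the classification of Lagrangian algebras of a Drinfeld center by its module categories together with the orbifold-groupoid composition recalled in \autoref{rem:nottransRCFT}. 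A pervasive secondary subtlety is the standing assumption that the fixed-point subalgebras $A^{\mathcal{F}}$ and the ``backwards condensations'' encountered remain \strat{}, and the need to make rigorous the assertion that restricting a chiral algebra to a conformal subalgebra is realized by an orbifold of the ambient RCFT.
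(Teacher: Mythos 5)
Your proposal is correct and follows essentially the same route as the paper: the forward direction (restricting the symmetry action to the maximal chiral algebras and taking fixed-point subalgebras) is identical, and your reverse direction rests on the same two ingredients the paper uses, namely that a common \strat{} conformal subalgebra realizes orbifold-relatedness of the RCFTs built on top of it, and that any two RCFTs with the same chiral algebras are connected by gauging a module category of their Verlinde lines. The only difference is bookkeeping: you change the left and right chiral algebras one at a time via your moves (i) and (ii), whereas the paper builds the whole chain ${_{V_i}}(\mathcal{H}_i)_{W_i}$ at once and compares consecutive entries over the common pair $(X_i,Y_i)$.
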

\begin{proof}
Suppose that ${_V}\mathcal{H}_W$ is orbifold equivalent to ${_{V'}}\mathcal{H}'_{W'}$. Then there is a sequence of theories 
\begin{equation*}
{_V}\mathcal{H}_W={_{V_1}}(\mathcal{H}_1)_{W_1},\dots,{_{V_n}}(\mathcal{H}_n)_{W_n}={_{V'}}\mathcal{H}'_{W'}
\end{equation*}
that are pairwise related via orbifolds by symmetry categories $\mathcal{F}_i$. We assume without loss of generality that the $V_i$ and $W_i$ are taken to be the \emph{maximal} left- and right-moving chiral algebras. In that case, the action of $\mathcal{F}_i$ on ${_{V_i}}(\mathcal{H}_i)_{W_i}$ restricts to an action on the corresponding chiral algebras, and we can therefore define $X_i=V_i^{\mathcal{F}_i}$ and $Y_i=W_i^{\mathcal{F}_i}$. By the nature of the orbifold construction, $X_i$ embeds conformally into $V_{i}$ and $V_{i+1}$, and similarly $Y_i$ embeds conformally into $W_i$ and $W_{i+1}$. Thus, $V$ is orbifold equivalent to $V'$, and $W$ is orbifold equivalent to $W'$.

In the other direction, suppose that there is a sequence $V=V_1,\dots,V_n=V'$ and interpolating subalgebras $X_1,\dots,X_{n-1}$ such that $X_i\subset V_i,V_{i+1}$. Similarly, assume there are \voa{}s $W=W_1,\dots, W_n=W'$ and $Y_1,\dots,Y_{n-1}$ such that $Y_i\subset W_i,W_{i+1}$. As $\mathcal{H}$ defines a Lagrangian algebra in $\Rep(V)\boxtimes \overline{\Rep(W)}$, it follows that $V$ and $W$ have Witt equivalent representation categories. Also, the \voa{}s in the set $\{V_i\}\cup \{X_i\}$ are all Witt equivalent to one another by \autoref{prop:subalgebraimpliesWitt}, as are those in $\{W_i\}\cup \{Y_i\}$. In particular, $V_i$ and $W_i$ have Witt equivalent representation categories, and hence one can always find a Lagrangian algebra $\mathcal{H}_i$ in $\Rep(V_i)\boxtimes \overline{\Rep(W_i)}$. Take $\mathcal{H}_1=\mathcal{H}$ and $\mathcal{H}_n=\mathcal{H}'$. In this way, we obtain a sequence of RCFTs ${_V}\mathcal{H}_W={_{V_1}}(\mathcal{H}_1)_{W_1},\dots,{_{V_n}}(\mathcal{H}_n)_{W_n}={_{V'}}\mathcal{H}'_{W'}$. 

To finish the proof, we must simply establish that ${_{V_{i+1}}}(\mathcal{H}_{i+1})_{W_{i+1}}$ can be obtained from ${_{V_{i}}}(\mathcal{H}_{i})_{W_{i}}$ via an orbifold. To see this, note that because of the conformal embeddings $X_i\subset V_i,V_{i+1}$ and $Y_i\subset W_{i},W_{i+1}$ we may replace the triples ${_{V_{i}}}(\mathcal{H}_{i})_{W_{i}}$ and ${_{V_{i+1}}}(\mathcal{H}_{i+1})_{W_{i+1}}$ with ${_{X_i}}(\mathcal{H}_i)_{Y_i}$ and ${_{X_i}}(\mathcal{H}_{i+1})_{Y_i}$, respectively. In particular, this shows that ${_{V_{i+1}}}(\mathcal{H}_{i+1})_{W_{i+1}}$ and ${_{V_{i}}}(\mathcal{H}_{i})_{W_{i}}$ can be thought of as RCFTs which are both built on top of the \emph{same} chiral algebras. It is known that one may always pass between different any two RCFTs with the same chiral algebras by orbifolding a module category of the category of Verlinde lines, $\operatorname{Ver}(\mathcal{H}_i/X_i,Y_i)$ (see, e.g., Section 5.2 of \cite{BT18}).
\end{proof}

\begin{propph}\label{prop:rcftvoainterface}
Two RCFTs ${_V}\mathcal{H}_W$ and ${_{V'}}\mathcal{H}_{W'}$ are interface equivalent if and only if $V$ is interface equivalent to $V'$ and $W$ is interface equivalent to $W'$ as \strat{} \voa{}s.
\end{propph}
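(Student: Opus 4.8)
The strategy mirrors the proof of \autoref{prop:orbrelationshipchiralfull}, but with orbifold equivalence replaced by interface equivalence throughout, and using the 2+1d "inflation" picture of \autoref{fig:KSinterface} as the physical engine. First I would establish the forward direction: given a topological line interface $I$ of finite quantum dimension between ${_V}\mathcal{H}_W$ and ${_{V'}}\mathcal{H}'_{W'}$, inflate it to a 2+1d configuration as in \autoref{fig:KSinterface}. The line interface $I$ inflates into a pair of $2$-dimensional topological interfaces $\mathcal{I}_1$ (between $\Rep(V)$ and $\Rep(V')$) and $\mathcal{I}_2$ (between $\Rep(W)$ and $\Rep(W')$), each of which terminates on a topological line junction — $I_1$ between the gapless chiral boundaries defined by $V$ and $V'$, and $I_2$ between those defined by $W$ and $W'$. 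The finiteness of the quantum dimension of $I$ should descend to finiteness of the quantum dimensions of $I_1$ and $I_2$ (one can cut the inflated picture in half and read off the invariants). This is precisely the data witnessing that $V$ is interface equivalent to $V'$ and that $W$ is interface equivalent to $W'$ as \strat{} \voa{}s.

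For the reverse direction, suppose $V$ is interface equivalent to $V'$ via an interface $(\mathcal{I}_1, I_1)$ and $W$ is interface equivalent to $W'$ via $(\mathcal{I}_2, I_2)$. Since $\mathcal{H}$ defines a Lagrangian algebra in $\Rep(V)\boxtimes\overline{\Rep(W)}$, the categories $\Rep(V)$ and $\Rep(W)$ are Witt equivalent, and likewise $\Rep(V')$ and $\Rep(W')$; moreover $\mathcal{I}_1$ forces $\Rep(V)$ Witt equivalent to $\Rep(V')$ (and similarly for $W$, $W'$). Thus all four representation categories lie in a single Witt class, and one can assemble the 2+1d sandwich of \autoref{fig:KSinterface}: place the boundary conditions $V$, $\overline{W}$ on the outer ends, the interfaces $\mathcal{I}_1$, $\mathcal{I}_2$ in the bulk, and the topological surfaces $\mathcal{H}$, $\mathcal{H}'$ at the appropriate loci. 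The key input is \autoref{prop:technicallemma}: once we know $\mathcal{I}_1$ admits \emph{some} terminating line interface of finite quantum dimension (namely $I_1$), \emph{any} other topological interface between $\Rep(V)$ and $\Rep(V')$ — in particular the one compatible with the positions of $\mathcal{H}$ and $\mathcal{H}'$ — also admits such a termination, and similarly for $\mathcal{I}_2$. Dimensionally reducing the resulting configuration along the interval direction produces a topological line interface of finite quantum dimension between ${_V}\mathcal{H}_W$ and ${_{V'}}\mathcal{H}'_{W'}$, exhibiting interface equivalence of the two RCFTs.

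The main obstacle — and the reason this is a Proposition\textsuperscript{ph} rather than a Proposition — is controlling the quantum dimensions through the inflation/deflation process, and more precisely making rigorous the claim that the four-way junction in the middle of \autoref{fig:KSinterface} can be completed. In the forward direction one needs that the inflated interfaces $\mathcal{I}_1$, $\mathcal{I}_2$ genuinely terminate on \emph{finite}-quantum-dimension line junctions, which is the 2+1d shadow of the finiteness hypothesis on $I$; in the reverse direction one needs \autoref{prop:technicallemma} together with the existence of the appropriate line junction $I$ decorating the intersection of the four surfaces $\mathcal{I}_1$, $\mathcal{I}_2$, $\mathcal{H}$, $\mathcal{H}'$. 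As in the surrounding discussion, I would present this at a physics level of rigor, citing \autoref{prop:technicallemma} and the inflation principle illustrated in \autoref{fig:KS} and \autoref{fig:KSinterface}, and noting that a fully rigorous treatment would require a mathematical theory of topological interfaces between gapless chiral boundary conditions of the sort currently unavailable.
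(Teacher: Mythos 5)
Your proposal is correct and follows essentially the same route as the paper: inflate to 2+1d \`a la Kapustin--Saulina for the forward direction, and assemble the sandwich of \autoref{fig:KSinterface} from the data $(\mathcal{I}_1,I_1)$, $(\mathcal{I}_2,I_2)$, $\mathcal{H}$, $\mathcal{H}'$ for the reverse. The only real difference is that the paper disposes of the four-way junction you flag as the main obstacle by appealing to the general principle that a topological line junction always exists at the intersection of any number of topological interfaces between 3d TQFTs, so your detour through \autoref{prop:technicallemma} is unnecessary here --- the terminating junctions $I_1$ and $I_2$ on the gapless boundaries are already supplied by the hypothesis, and \autoref{prop:technicallemma} concerns those boundary terminations rather than the bulk four-way intersection that actually needs justifying.
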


\begin{proof}
Suppose that ${_V}\mathcal{H}_W$ and ${_{V'}}\mathcal{H}_{W'}$ are interface equivalent. By inflating to three dimensions, in the spirit of Kapustin and Saulina \cite{KS11}, we find a picture as in the left of \autoref{fig:KSinterface}. From this picture, one straightforwardly sees that $V$ is interface equivalent to $V'$ and $W$ is interface equivalent to $W'$.

For the reverse direction, we note that it is always possible to find a topological line junction sitting at the intersection of any number of topological interfaces between 3d TQFTs. (This is a generalization of the statement that topological surface operators in a 3d TQFT always support topological line operators on their boundary.) Thus, if there is an interface equivalence $(\mathcal{I}_1,I_1)$ between $V$ and $V'$ and an interface equivalence $(\mathcal{I}_2,I_2)$ between $W$ and $W'$, then we can always find a topological line junction sitting at the four-way intersection of the interfaces $\mathcal{H}$, $\mathcal{I}_1$, $\mathcal{H}'$ and $\mathcal{I}_2$. Thus, we can always form the picture on the left of \autoref{fig:KSinterface}, and upon dimensional reduction along the interval direction we obtain a topological line interface between ${_V}\mathcal{H}_{W}$ and ${_{V'}}\mathcal{H}'_{W'}$. 
\end{proof}

%%%%%%%%%%%%%%%%%%%%%%%%%%%%%%%

\subsection{Inner Orbifold Equivalence}
We shall also discuss the following specialization of orbifold equivalence. For this notion,  transitivity does indeed hold because it is essentially inherited from the corresponding property for lattices (see \autoref{rem:trans}):
\begin{defi}[Inner Orbifold Equivalence]\label{defi:innerorbequ}
Two \strat{} \voa{}s $V$ and $V'$ are \emph{inner orbifold equivalent} if there is a sequence of \strat{} \voa{}s $V=V_1,V_2,\dots,V_n=V'$ such that $V_i$ and $V_{i+1}$ share a \strat{} conformal subalgebra $W_i$,
and furthermore such that the concrete realization of $W_i$ inside $V_i$ (and similarly for $V_{i+1}$) can be chosen such that $W_i$ and $V_i$ have the same Heisenberg commutant.
\end{defi}

\begin{defiph}[Inner Orbifold Equivalence]
Two RCFTs ${_V}\mathcal{H}_W$ and ${_{V'}}\mathcal{H}'_{W'}$ are \emph{inner orbifold equivalent} if $V$ is inner orbifold equivalent to $V'$ and $W$ is inner orbifold equivalent to $W'$.
\end{defiph}

We shall see in \autoref{rem:innerorbtrans} that we can equivalently demand the (formally weaker) condition that $V$ and $V'$ be connected ``in one step'', i.e.\ that there is a conformal subalgebra $W$ of both $V$ and $V'$ with the same Cartan subalgebra as $V$ and $V'$.
\begin{rem}\label{rem:innerorbtoorb}
It is immediate from the definition that inner orbifold equivalence implies orbifold equivalence, both for \strat{} \voa{}s and for RCFTs.
\end{rem}
We have chosen the label ``inner'' because, in some special cases, inner orbifold equivalence reduces to a (group-like) orbifold construction under inner automorphisms (see \autoref{prop:innerneighchar} and \autoref{rem:nameinner}).

Finally, it follows directly from the definition
that inner orbifold equivalence (like orbifold equivalence, see \autoref{prop:orblatvoa}) reduces to rational equivalence of lattices if we consider lattice \voa{}s. Of course, this is by design.

This also means that for lattice \voa{}s, the notions of orbifold equivalence and inner orbifold equivalence coincide (like weak and strong Witt equivalence, see \autoref{rem:wittlat} and \autoref{rem:strongwittlat}).

%%%%%%%%%%%%%%%%%%%%%%%%%%%%%%%
%%%%%%%%%%%%%%%%%%%%%%%%%%%%%%%

\section{Relation Between Witt Equivalence and Topological Manipulations}\label{subsec:relationshiptogenwitt}

We now return to the question posed in the introduction. \emph{When are two \strat{} \voa{}s, or two RCFTs, related by topological manipulations?} In \autoref{subsec:necessary}, we argue that Witt equivalence of the two theories is necessary, and we conjecture that it is also sufficient. In \autoref{subsec:partialconverse}, we prove a partial result in the direction of this conjecture, and in \autoref{subsec:exampleswittorb} we give a few examples. In \autoref{subsec:classificationrelation}, we describe a kind of ``structure theorem'' for the classification of RCFTs that follows from this conjecture. In  \autoref{subsec:deformationrelation} we describe the relationship of these ideas to properties of deformation classes of QFTs.

%%%%%%%%%%%%%%%%%%%%%%%%%%%%%%%

\subsection{Necessary Condition}\label{subsec:necessary}

The following result is a straightforward consequence of the machinery we have developed thus far.
\begin{thm}\label{prop:orbimplieswitt}
If two \strat{} \voa{}s $V$ and $V'$ are orbifold equivalent, then they are Witt equivalent. 
\end{thm}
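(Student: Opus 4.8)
The statement follows almost immediately by unwinding the definitions and chaining together results already established in the excerpt. The plan is to first reduce to the one-step case: by \autoref{defi:orbequiv}, orbifold equivalence of $V$ and $V'$ means there is a chain $V=V_1,V_2,\dots,V_n=V'$ in which consecutive \voa{}s $V_i$ and $V_{i+1}$ share a common \strat{} conformal subalgebra $W_i$. Since weak Witt equivalence of \voa{}s (\autoref{defi:weakwittequivalence}) is itself an equivalence relation—in particular transitive, because Witt equivalence of modular tensor categories is transitive and the central charge is literally equal along the chain—it suffices to prove that each pair $V_i$, $V_{i+1}$ is Witt equivalent, and then compose.

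So fix one step: $W$ is a \strat{} \voa{} that conformally embeds into both $V_i$ and $V_{i+1}$. By \autoref{prop:subalgebraimpliesWitt}, a conformal subalgebra is Witt equivalent to the ambient \voa{}; hence $W$ is Witt equivalent to $V_i$ and $W$ is Witt equivalent to $V_{i+1}$. (Concretely: $W$ and $V_i$ have the same central charge since $V_i$ is a direct sum of $W$-modules all of integral conformal weight difference, so $c(W)=c(V_i)$, and $\Rep(W)$ is Witt equivalent to $\Rep(V_i)$ by item~(2) of \autoref{prop:equivalentformulationWitt} applied to the condensable algebra $A=A(V_i)$ in $\Rep(W)$ furnished by \autoref{prop:CMSY}, using that $\Rep(V_i)\cong\Rep(W)_A^{\loc}$.) Composing the two Witt equivalences $V_i\sim W\sim V_{i+1}$ gives $V_i$ Witt equivalent to $V_{i+1}$.

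Finally, applying this to every link in the chain $V=V_1,\dots,V_n=V'$ and using transitivity of Witt equivalence along the chain yields that $V$ is Witt equivalent to $V'$, as claimed.

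I do not expect any genuine obstacle here; the only point requiring a little care is making sure the central charge bookkeeping is correct along the chain (which is immediate, as $W_i$ is a \emph{conformal} subalgebra, so all central charges in sight are equal), and that one invokes the version of \autoref{prop:subalgebraimpliesWitt}/\autoref{prop:equivalentformulationWitt} that does not secretly assume positivity—none of them do, since \autoref{prop:CMSY} gives the needed condensable algebra for any \strat{} conformal extension.
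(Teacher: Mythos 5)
Your proof is correct and follows essentially the same route as the paper: reduce to a single link of the chain, apply \autoref{prop:subalgebraimpliesWitt} to both inclusions of the common subalgebra $W_i$, and conclude by transitivity of Witt equivalence. The extra detail you supply (unwinding \autoref{prop:equivalentformulationWitt} and \autoref{prop:CMSY}) is just the content of \autoref{prop:subalgebraimpliesWitt} made explicit.
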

\begin{proof}
By the definition of orbifold equivalence, it suffices to prove that if two \strat{} \voa{}s $V$ and $V'$ share a common \strat{} subalgebra $W$, then they are Witt equivalent. This follows immediately from \autoref{prop:subalgebraimpliesWitt} and the fact that Witt equivalence is an equivalence relation.
\end{proof}

\begin{propph}\label{prop:interfaceimplieswitt}
If two \strat{} \voa{}s $V$ and $V'$ are interface equivalent, then they are Witt equivalent.
\end{propph}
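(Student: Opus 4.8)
The plan is simply to unwind the definitions: interface equivalence of $V$ and $V'$ carries, as part of its data, a topological interface $\mathcal{I}$ separating the bulk 3d TQFTs $(\Rep(V),c(V))$ and $(\Rep(V'),c(V'))$ (together with a finite-quantum-dimension line interface $I$ on which $\mathcal{I}$ can terminate, which we shall not need). The existence of such a topological interface $\mathcal{I}$ is precisely the physical characterization of Witt equivalence recorded in \autoref{defiph:wittgenus}, so that $V$ and $V'$ are Witt equivalent. In this sense interface equivalence is strictly a refinement of the condition ``the bulk 3d TQFTs can be separated by a topological interface'', and the latter already forces Witt equivalence; the line interface $I$ plays no role in this implication.

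To make the conclusion precise at the level of the mathematical \autoref{defi:weakwittequivalence}, I would fold: unfolding $\mathcal{I}$ along itself turns a topological interface between $(\Rep(V),c(V))$ and $(\Rep(V'),c(V'))$ into a topological boundary condition of the stacked TQFT $(\Rep(V)\boxtimes\overline{\Rep(V')},\,c(V)-c(V'))$. As recalled in \autoref{subsec:physics}, a 3d TQFT $(\mathcal{C},c)$ admits a topological boundary condition if and only if $c=0$ and $\mathcal{C}\cong Z(\mathcal{F})$ for some spherical fusion category $\mathcal{F}$. Applying this with $\mathcal{C}=\Rep(V)\boxtimes\overline{\Rep(V')}$ yields, on the one hand, $c(V)=c(V')$ (the vanishing of the gravitational anomaly), and, on the other hand, a ribbon equivalence $\Rep(V)\boxtimes\overline{\Rep(V')}\cong Z(\mathcal{F})$, which by \autoref{defn:Wittequivalencecategories} is exactly the statement that $\Rep(V)$ and $\Rep(V')$ are Witt equivalent as modular tensor categories. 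Combining the two, $V$ and $V'$ are Witt equivalent in the sense of \autoref{defi:weakwittequivalence}.

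There is no genuine obstacle in this argument; it is essentially a translation of definitions, and in particular one cannot (and need not) route through orbifold equivalence, since no implication from interface equivalence to orbifold equivalence has been established. The only point deserving attention is the central-charge bookkeeping, i.e.\ remembering that a topological interface between two TQFTs forces their chiral central charges to agree — this is the ``$c=0$'' half of the boundary-existence criterion after folding, and it is what supplies the equality $c(V)=c(V')$ required by \autoref{defi:weakwittequivalence}.
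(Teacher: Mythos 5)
Your proof is correct and takes essentially the same route as the paper's: the paper's proof is just your first paragraph, observing that interface equivalence includes by definition a topological interface between the bulk TQFTs, which is exactly the physical characterization of Witt equivalence in \autoref{defiph:wittgenus}. Your second paragraph (folding to a boundary of $\Rep(V)\boxtimes\overline{\Rep(V')}$ to recover $c(V)=c(V')$ and the $Z(\mathcal{F})$ condition) is a correct elaboration consistent with \autoref{subsec:physics}, which the paper simply leaves implicit.
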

\begin{proof}
If $V$ and $V'$ are interface equivalent, then in particular their corresponding TQFTs $(\Rep(V),c(V))$ and $(\Rep(V'),c(V'))$ can certainly be separated by a topological interface, which is physically what it means for $V$ and $V'$ to be Witt equivalent.
\end{proof}

We extend these two assertions to full RCFTs.
\begin{thmph}\label{thm:orbimplieswittRCFT}
If two RCFTs ${_V}\mathcal{H}_W$ and ${_{V'}}\mathcal{H}'_{W'}$ are orbifold equivalent, then they are Witt equivalent.
\end{thmph}
\begin{proof}
By \autoref{prop:orbrelationshipchiralfull}, $V$ is orbifold equivalent to $V'$ and $W$ is orbifold equivalent to $W'$. Then, by \autoref{prop:orbimplieswitt}, $V$ and $V'$ are Witt equivalent as \strat{} \voa{}s, and so are $W$ and $W'$. This is the definition of ${_V}\mathcal{H}_W$ and ${_{V'}}\mathcal{H}'_{W'}$ being Witt equivalent as RCFTs.
\end{proof}

\begin{thmph}\label{thm:interfaceimplieswittRCFT}
If two RCFTs ${_V}\mathcal{H}_W$ and ${_{V'}}\mathcal{H}'_{W'}$ are interface equivalent, then they are Witt equivalent. 
\end{thmph}
\begin{proof}
By \autoref{prop:rcftvoainterface}, $V$ is interface equivalent to $V'$ and $W$ is interface equivalent to $W'$. By \autoref{prop:interfaceimplieswitt}, $V$ is Witt equivalent to $V'$ and $W$ is Witt equivalent to $W'$, which is by definition what it means for ${_V}\mathcal{H}_W$ to be Witt equivalent to ${_{V'}}\mathcal{H}'_{W'}$.
\end{proof}

Thus, we see that Witt inequivalence is a kind of obstruction to being orbifold or interface equivalent. In particular, by \autoref{rem:invariantsofwitt}, one has a criterion for testing whether two RCFTs may be related by a sequence of generalized orbifolds, or separated by a topological line interface: one computes the triple $([\Rep(V)],c_L,c_R)=([\Rep(W)],c_L,c_R)$ corresponding to ${_V}\mathcal{H}_W$ and the triple $([\Rep(V')],c_L',c_R')=([\Rep(W')],c_L',c_R')$ corresponding to ${_{V'}}\mathcal{H}'_{W'}$ and simply checks whether or not they are equal. (Here, e.g., $[\Rep(V)]$ is the Witt class of the modular tensor category $\Rep(V)$.) If the triples are equal, then ${_V}\mathcal{H}_W$ and ${_{V'}}\mathcal{H}'_{W'}$ might be related by an orbifold; if they are not equal, then the theories are definitively not related by an orbifold.

\begin{ex}[Orbifolding \VOA{}s with $c=1$]\label{ex:cc1wittorbifold}
We recall from \autoref{ex:cc1} the conjectural classification of the positive, \strat{} \voa{}s of central charge $c=1$ and how these \voa{}s fall into Witt classes. \autoref{prop:orbimplieswitt} states that for two of these \voa{}s to be orbifold equivalent, they must be in the same Witt class. And, indeed, this is apparent from equation~\eqref{eq:witt1}.

Conversely, we can actually read off that all positive, \strat{} \voa{}s in the same Witt class are even orbifold equivalent. Of course, this depends on the classification conjecture for $c=1$.
\end{ex}

In the above example, we argued that Witt and orbifold equivalence coincide (conjecturally) for positive, \strat{}, $c=1$ \voa{}s. The same can be said about the corresponding RCFTs:
\begin{exph}[Orbifolding RCFTs with $(c_L,c_R)=(1,1)$]\label{exph:orbifoldingc=1}
We recall from \autoref{ex:rcftsc=1wittclasses} that $\mathcal{T}^{\mathrm{x}}_R$ is Witt equivalent to $\mathcal{T}^{\mathrm{y}}_{R'}$ if and only if $R/R'\in\mathbb{Q}$, where $\mathrm{x},\mathrm{y}\in\{\mathrm{circ},\mathrm{orb}\}$. Thus, \autoref{thm:orbimplieswittRCFT} says that one can pass from $\mathcal{T}^{\mathrm{x}}_R$ to $\mathcal{T}^{\mathrm{y}}_{R'}$ via a sequence of orbifolds only if $R/R'\in \mathbb{Q}$. For example, the $\mathrm{SU}(2)_1$ Wess--Zumino--Witten model (corresponding to $\smash{\mathcal{T}^{\mathrm{circ}}_{\sqrt{2}}}$) and the bosonized Dirac fermion (corresponding to $\mathcal{T}_2^{\mathrm{circ}}$) cannot be related by a sequence of orbifolds, in spite of the fact that they are both RCFTs with the same central charges. 

Conversely, it is known that one may indeed move between any two of the known $(c_L,c_R)=(1,1)$ unitary RCFTs via a sequence of orbifolds, provided they belong to the same Witt class. For example, one may go from $\smash{\mathcal{T}_{\sqrt{2}}^{\mathrm{circ}}}$ to $\smash{\mathcal{T}_{\sqrt{2}p/q}}$ by gauging a $\Z_p\times \Z_q$ subgroup of the $\mathrm{U}(1)_{\mathrm{mom}}\times \mathrm{U}(1)_{\mathrm{win}}$ momentum and winding symmetries.
\end{exph}

The previous examples
might embolden one to conjecture that converses to \autoref{prop:orbimplieswitt} and \autoref{prop:interfaceimplieswitt} (and \autoref{thm:orbimplieswittRCFT} and \autoref{thm:interfaceimplieswittRCFT} in the context of RCFTs) hold. We shall discuss this in detail in the next two sections.

%%%%%%%%%%%%%%%%%%%%%%%%%%%%%%%

\subsection{Converse Conjectures}

Recall from \autoref{prop:orblatvoa} that orbifold equivalence generalizes rational equivalence of even lattices. Moreover, in the context of lattices, in addition to rational equivalence implying Witt equivalence, also the converse holds (see \autoref{prop:lat_ratwit}). Together with \autoref{prop:wwittcomp}, we obtain:
\begin{prop}
Let $V_L$ and $V_M$ be lattice \voa{}s. Then they are Witt equivalent if and only if they are orbifold equivalent.
\end{prop}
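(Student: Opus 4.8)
The plan is to chain together the equivalences already established in the excerpt. The statement to prove is that for lattice \voa{}s $V_L$ and $V_M$ (associated to positive-definite, even lattices $L$ and $M$), being Witt equivalent is the same as being orbifold equivalent. The key point is that \autoref{prop:orblatvoa} already tells us that $V_L$ and $V_M$ are orbifold equivalent if and only if $L$ and $M$ are rationally equivalent as lattices, and \autoref{prop:wwittcomp} tells us that $V_L$ and $V_M$ are Witt equivalent if and only if $L$ and $M$ are Witt equivalent as lattices. So the whole statement reduces to the purely lattice-theoretic fact that rational equivalence and Witt equivalence of even lattices coincide, which is exactly \autoref{prop:lat_ratwit}.

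Concretely, I would argue as follows. First, suppose $V_L$ and $V_M$ are Witt equivalent. By \autoref{prop:wwittcomp}, this means precisely that $L$ and $M$ are Witt equivalent as even lattices (in the sense of \autoref{defi:lattice_witt}). By \autoref{prop:lat_ratwit}, Witt equivalence of even lattices is the same as rational equivalence, so $L$ and $M$ are rationally equivalent. By \autoref{prop:orblatvoa}, this gives that $V_L$ and $V_M$ are orbifold equivalent. Conversely, if $V_L$ and $V_M$ are orbifold equivalent, then by \autoref{prop:orblatvoa} the lattices $L$ and $M$ are rationally equivalent, hence Witt equivalent by \autoref{prop:lat_ratwit}, hence $V_L$ and $V_M$ are Witt equivalent by \autoref{prop:wwittcomp} (or, more directly, by \autoref{prop:orbimplieswitt} for the one direction one needs).

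There is really no obstacle here — this is a three-line deduction from results already in hand, included for completeness to highlight that in the lattice case the dashed arrow in \autoref{fig:summary} (\autoref{conj:wittorb}) becomes an honest theorem. The only thing to be careful about is matching up the ``positive-definite, even'' hypothesis: lattice \voa{}s $V_L$ are \strat{} exactly when $L$ is positive-definite, so both \autoref{prop:wwittcomp} and \autoref{prop:orblatvoa} apply verbatim, and no positivity-of-modules subtleties arise since lattice \voa{}s are automatically positive.

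\begin{proof}
By \autoref{prop:wwittcomp}, the lattice \voa{}s $V_L$ and $V_M$ are Witt equivalent if and only if the even lattices $L$ and $M$ are Witt equivalent as lattices (\autoref{defi:lattice_witt}). By \autoref{prop:lat_ratwit}, this is the case if and only if $L$ and $M$ are rationally equivalent. Finally, by \autoref{prop:orblatvoa}, $L$ and $M$ are rationally equivalent if and only if $V_L$ and $V_M$ are orbifold equivalent. Chaining these equivalences proves the claim.
\end{proof}
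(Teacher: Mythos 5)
Your proof is correct and is exactly the paper's argument: the proposition is stated there as an immediate consequence of chaining \autoref{prop:wwittcomp}, \autoref{prop:lat_ratwit} and \autoref{prop:orblatvoa}, which is precisely what you do.
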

It is an interesting open question to determine to what extent the same is true for \strat{} \voa{}s in general, i.e.\ to what extent the converses to \autoref{prop:orbimplieswitt}, \autoref{prop:interfaceimplieswitt}, \autoref{thm:orbimplieswittRCFT} and \autoref{thm:interfaceimplieswittRCFT} hold.

\begin{conj}\label{conj:wittorb}
If two positive, \strat{} \voa{}s belong to the same Witt class, then they are orbifold equivalent.
\end{conj}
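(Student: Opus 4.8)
The plan is to reduce Conjecture \ref{conj:wittorb} to a statement purely about the bulk genus, in the spirit of the reformulation (later stated as \autoref{conj:anisobulkorb} and \autoref{prop:wittbulkorb}) that appears in the excerpt's summary diagram. Concretely, suppose $V$ and $V'$ are positive, \strat{} \voa{}s belonging to the same Witt class $([\mathcal{C}],c)$. By \autoref{prop:Wittgenusclassification}, each of $V$ and $V'$ arises as a conformal subalgebra of some positive, \strat{} \voa{} $W$, respectively $W'$, lying in the anisotropic bulk genus $(\mathcal{C}_{\mathrm{an}},c)$. Since a conformal subalgebra trivially provides an orbifold-equivalence step (take $n=2$ with the smaller \voa{} as the common subalgebra $W_1$), we have that $V$ is orbifold equivalent to $W$ and $V'$ is orbifold equivalent to $W'$. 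Because orbifold equivalence is transitive by construction (\autoref{defi:orbequiv}), it therefore suffices to prove that $W$ and $W'$, which lie in the \emph{same} bulk genus, are orbifold equivalent. This is exactly the content of the auxiliary conjecture, so the first step of my plan is simply to carry out this reduction cleanly.

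The second and substantive step is to establish orbifold equivalence of two positive, \strat{} \voa{}s $W$ and $W'$ in a common bulk genus $(\mathcal{D}, c)$. Here I would try to mimic the lattice proof of \autoref{prop:lat_ratwit}: in the lattice case one passes to a common full-rank sublattice; in the \voa{} case one wants a common \strat{} conformal subalgebra. A natural candidate is to try to find, inside $\Rep(W) \cong \Rep(W')$, a \emph{common} condensable algebra-free sub-structure — that is, to realize $W$ and $W'$ as extensions of \voa{}s with the same (larger, possibly trivial) representation category, such as by embedding both into a common holomorphic \voa{} and taking commutants. The route I would pursue: use a weak-reconstruction-type input (\autoref{conj:weakreconstruction}/\autoref{conj:weakreconstruction2}) to primitively embed $W$ into a holomorphic \voa{} $\widehat{W}$ and $W'$ into a holomorphic $\widehat{W}'$ with \strat{} commutants; then appeal to the known structure of holomorphic \voa{}s of the relevant central charge (e.g.\ via the cyclic orbifold machinery of \cite{EMS20a,Moe16,DNR21b,EG22}) to connect $\widehat{W}$ and $\widehat{W}'$ by a chain of orbifolds, and finally descend this chain back down to $W$ and $W'$ using that taking commutants interacts well with conformal extensions. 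A cleaner alternative worth trying first: since $\Rep(W)$ and $\Rep(W')$ are \emph{equivalent} (not merely Witt equivalent), directly search for a \strat{} \voa{} $U$ with a surjective-enough map of representation categories that $U$ sits as a common subalgebra — for instance a suitable parafermion-type or Heisenberg-coset subalgebra that only sees the "lattice part" — and then reduce to the rational-equivalence-of-lattices statement \autoref{prop:lat_ratwit} on the associated lattices, combined with \autoref{thm:rationalinnerorb} which already handles the strongly Witt equivalent case.

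The main obstacle, and the reason this is only a conjecture, is precisely the passage from "same bulk genus" to "common \strat{} conformal subalgebra". Unlike the lattice situation, where the intersection of two full-rank sublattices is again a full-rank sublattice (\autoref{rem:trans}), the analogous intersection of two \strat{} \voa{}s need not be \strat{} — indeed \autoref{rem:nontrans} exhibits a conjectural counterexample at $c=1$ built from the $E_7$ and $E_8$ subgroups of $\mathrm{SL}(2,\C)$, which is exactly why transitivity must be hard-wired into \autoref{defi:orbequiv}. So one cannot hope for a "one-step" argument; one must genuinely produce a \emph{chain}. Worse, as noted in \autoref{rem:fakemoonshine}, even the holomorphic sub-case of this conjecture would imply uniqueness of the moonshine module (\autoref{conj:moonshineuniqueness}), so a fully unconditional proof is out of reach with current technology. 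Accordingly, the realistic target is a conditional theorem: \emph{assuming \autoref{conj:weakreconstruction} (weak reconstruction) and \autoref{conj:moonshineuniqueness}, plus the known classifications of holomorphic \voa{}s in small central charge, \autoref{conj:wittorb} holds}, with the genuinely new work concentrated in showing that any two positive, \strat{} \voa{}s in a fixed anisotropic bulk genus can be joined by finitely many conformal-extension/restriction moves — which I expect to require a careful analysis of how the Heisenberg commutant and associated lattice vary across the bulk genus, bootstrapping from \autoref{thm:rationalinnerorb} in the case where those commutants can be made to agree.
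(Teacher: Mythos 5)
The statement you are asked about is a \emph{conjecture} in the paper: the authors explicitly do not prove it, and state that doing so is likely at least as hard as proving the uniqueness of the moonshine module (\autoref{rem:fakemoonshine}). So the honest benchmark is not a proof but the paper's partial results, and against those your proposal fares well on the first step and has a concrete problem in the second. Your reduction to the bulk genus is correct and is essentially verbatim the paper's \autoref{prop:wittbulkorb}: use \autoref{prop:anisorep} and \autoref{prop:algext} (equivalently \autoref{prop:Wittgenusclassification}) to conformally extend $V$ and $V'$ into \voa{}s $W$ and $W'$ in the anisotropic bulk genus, note that a conformal extension is a one-step orbifold equivalence, and invoke the built-in transitivity of \autoref{defi:orbequiv}. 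Your identification of the obstructions (failure of one-step transitivity per \autoref{rem:nontrans}, fake moonshine modules per \autoref{rem:fakemoonshine}) and of the realistic conditional targets also matches the paper, whose actual partial results are \autoref{prop:witttrivialorbifoldingVOAs} (Witt classes $([\Vect],c)$, $c\leq 24$, modulo \autoref{conj:moonshineuniqueness}) and \autoref{thm:rationalinnerorb} (the strongly Witt equivalent case).

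The genuine gap is in your proposed route through weak reconstruction. A primitive embedding $W\subset\widehat{W}$ with \strat{} commutant, as supplied by \autoref{conj:weakreconstruction2}, is \emph{not} a conformal embedding: the conformal vector of $W$ differs from that of $\widehat{W}$, and the central charges differ. Orbifold equivalence is defined via \emph{conformal} subalgebras and therefore preserves the central charge (indeed, by \autoref{prop:orbimplieswitt} it preserves the entire Witt class, of which $c$ is a component). Consequently, connecting $\widehat{W}$ and $\widehat{W}'$ by orbifolds does not yield any orbifold-equivalence relation between $W$ and $W'$; there is no mechanism for "descending the chain through commutants," and this is precisely why the paper's \autoref{thmph:classification} must introduce the extra, non-topological operation of quotienting decoupled degrees of freedom. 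Moreover, the two holomorphic ambient theories $\widehat{W}$ and $\widehat{W}'$ produced by weak reconstruction need not even have the same central charge, and connecting arbitrary holomorphic \voa{}s by orbifolds is only known (modulo \autoref{conj:moonshineuniqueness}) for $c\leq 24$. Your cleaner alternative --- reducing to the associated lattices and \autoref{thm:rationalinnerorb} --- only covers the strongly Witt equivalent case, i.e.\ it requires the Heisenberg commutants to agree, which is exactly the hypothesis the full conjecture does not grant. So your proposal correctly reproduces the paper's reduction and its assessment of difficulty, but the substantive second step remains open, and the specific weak-reconstruction strategy you sketch cannot close it as written.
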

\begin{conjph}\label{conj:wittinterface}
If two positive, \strat{} \voa{}s belong to the same Witt class, then they are interface equivalent.
\end{conjph}
\begin{conjph}\label{conj:wittorbRCFT}
If two RCFTs with positive chiral algebras belong to the same Witt class, then they are orbifold equivalent.
\end{conjph}
\begin{conjph}\label{conj:wittinterfaceRCFT}
If two RCFTs with positive chiral algebras belong to the same Witt class, then they are interface equivalent.
\end{conjph}

These conjectures are not all independent. By following various arrows established so far, we obtain the following diagram:
\begin{equation}\label{eqn:conjectureequivalences}
\begin{tikzcd}
\text{\autoref{conj:wittorb} }\arrow[r, Leftrightarrow] \arrow[d, Rightarrow]
& \text{\autoref{conj:wittorbRCFT}} \arrow[d,Rightarrow] \\
\text{\autoref{conj:wittinterface}}\arrow[r, Leftrightarrow]
&  \text{\autoref{conj:wittinterfaceRCFT}}
\end{tikzcd}
\end{equation}

\begin{remph}
If \autoref{conj:wittorb} is true, then orbifold equivalence is the same as interface equivalence (because they are both the same as Witt equivalence).
\end{remph}

\smallskip

We shall not establish these conjectures in this work. This is probably a very difficult problem, which, as we shall see in \autoref{rem:fakemoonshine} below, is likely at least as difficult as proving the moonshine uniqueness, \autoref{conj:moonshineuniqueness}.

However, we can achieve a partial simplification of the problem. Working with an entire Witt class of \voa{}s is psychologically daunting. For instance,  \autoref{ex:countableinfinity} illustrates that there are often infinitely many \voa{}s in a Witt class. We therefore note that \autoref{conj:wittorb} can be recast in the following weaker form, which requires one to consider just bulk genera, which are thought to contain only finitely many \voa{}s.
\begin{conj}\label{conj:anisobulkorb}
If two positive, \strat{} \voa{}s belong to the same bulk genus and their representation categories are anisotropic, then they are orbifold equivalent.
\end{conj}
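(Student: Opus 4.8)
This is a conjecture, so the realistic goal is twofold: to pin down precisely what it is equivalent to, and to indicate how it could be established in tractable cases. For the first part --- which I expect is recorded in the forthcoming \autoref{prop:wittbulkorb} --- note that \autoref{conj:anisobulkorb} is visibly a special case of \autoref{conj:wittorb}, since two positive, \strat{} \voa{}s in a common anisotropic bulk genus share a central charge and have equivalent (in particular Witt equivalent) representation categories. For the reverse implication, assume \autoref{conj:anisobulkorb} and let $V,V'$ be positive, \strat{} \voa{}s in a Witt class $([\mathcal{C}],c)$. By \autoref{prop:Wittgenusclassification}, $V$ is a conformal subalgebra of some positive, \strat{} \voa{} $W$ in the bulk genus $(\mathcal{C}_{\mathrm{an}},c)$, and similarly $V'\subset W'$ with $W'$ in the same (anisotropic) bulk genus; hence $W$ and $W'$ are orbifold equivalent by \autoref{conj:anisobulkorb}. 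Since $V$ (resp.\ $V'$) is a \strat{} conformal subalgebra of both $V$ and $W$ (resp.\ of $V'$ and $W'$), one obtains one-step orbifold equivalences $V\to W$ and $W'\to V'$, and the transitivity built into \autoref{defi:orbequiv} chains these with $W\to W'$ to give that $V$ and $V'$ are orbifold equivalent. This reduces the full circle of conjectures to the superficially finite problem about bulk genera, which is the point of the reformulation.

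To attack \autoref{conj:anisobulkorb} itself, I would work one anisotropic bulk genus $(\mathcal{C}_{\mathrm{an}},c)$ at a time. Because bulk genera are expected (and, in low central charge, known) to be finite, there are only finitely many positive, \strat{} \voa{}s $W_1,\dots,W_k$ to link up; note that within this genus every pair is automatically Witt equivalent, so the content is entirely the converse of \autoref{prop:orbimplieswitt}. The plan is then: (i) enumerate or tightly constrain the $W_i$ via the dual-pair structure $C\otimes V_L\subset W_i$ together with dimension and character estimates; (ii) for each pair $W_i,W_j$ produce a common \strat{} conformal subalgebra, with the natural candidate a cyclic fixed-point subalgebra $W_i^{\Z_n}\cong W_j^{\Z_m}$ coming from an inner automorphism, so that $W_i$ and $W_j$ are related by a (reverse) cyclic orbifold construction in the style of \cite{EMS20a,MS23}; and (iii) patch these one-step links into a connected graph. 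The holomorphic case $(\Vect,c)$ with $c\le 24$ is the model: there \cite{DM04,ELMS21,MS23} show that all \strat{}, holomorphic \voa{}s of such central charge, apart from possible fake moonshine modules, are connected by cyclic orbifolds, which is exactly the mechanism behind \autoref{prop:witttrivialorbifoldingVOAs}.

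The hard part is that this program has no uniform solution even in the holomorphic case: step (i) there rests on the complete classification of $c=24$ holomorphic \voa{}s and on the moonshine uniqueness \autoref{conj:moonshineuniqueness}, and \autoref{rem:fakemoonshine} indicates that even the narrowest instance of the forward direction of \autoref{conj:wittorb} would already imply that uniqueness; outside the holomorphic setting there is not even a classification of the \voa{}s in a generic anisotropic bulk genus. Thus the decisive --- and currently out-of-reach --- ingredient is a ``rigidity'' statement: that a fixed anisotropic bulk genus is connected under cyclic (or, more generally, hypergroup) orbifolds, in analogy with the lattice-theoretic fact that a single-spinor-genus genus of even lattices of rank at least $3$ is connected under $p$-neighbors. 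Barring such a general mechanism, I would expect progress on \autoref{conj:anisobulkorb} to continue genus by genus in small central charge, exactly as in the examples worked out later in the paper.
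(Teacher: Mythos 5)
The statement is a conjecture, and the paper offers no proof of it; what you can legitimately establish is exactly what you do establish, namely its equivalence with \autoref{conj:wittorb}, and your argument for that equivalence (reducing via \autoref{prop:Wittgenusclassification}/\autoref{prop:algext} to conformal extensions into the anisotropic bulk genus and chaining one-step orbifold equivalences) is the same as the paper's proof of \autoref{prop:wittbulkorb}. Your survey of the supporting evidence (the holomorphic case $c\leq 24$ via cyclic orbifolds and generalized deep holes, the dependence on \autoref{conj:moonshineuniqueness}) likewise matches the paper's discussion in \autoref{ex:deepholes}, \autoref{prop:witttrivialorbifoldingVOAs} and \autoref{rem:fakemoonshine}.
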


\begin{prop}\label{prop:wittbulkorb}
\autoref{conj:wittorb} and \autoref{conj:anisobulkorb} are equivalent.
\end{prop}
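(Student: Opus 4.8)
The plan is to prove the two implications separately. The direction ``\autoref{conj:wittorb} $\Rightarrow$ \autoref{conj:anisobulkorb}'' is essentially trivial: if $V$ and $V'$ lie in the same bulk genus with anisotropic representation categories, then in particular $c(V)=c(V')$ and $\Rep(V)\cong\Rep(V')$ as ribbon fusion categories, so certainly $\Rep(V)$ and $\Rep(V')$ are Witt equivalent modular tensor categories; hence $V$ and $V'$ are Witt equivalent in the sense of \autoref{defi:weakwittequivalence}, and \autoref{conj:wittorb} immediately yields that they are orbifold equivalent. (Note that anisotropy of the representation categories is not even used here; it is only needed for the converse.)

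For the direction ``\autoref{conj:anisobulkorb} $\Rightarrow$ \autoref{conj:wittorb}'', I would start from two positive, \strat{} \voa{}s $V$ and $V'$ in the same Witt class $([\mathcal{C}],c)$ and reduce them to anisotropic representatives. By \autoref{prop:Wittgenusclassification}, each of $V$ and $V'$ arises as a conformal subalgebra of a positive, \strat{} \voa{} $W$, respectively $W'$, lying in the bulk genus $(\mathcal{C}_{\mathrm{an}},c)$ of the anisotropic representative. Concretely, one picks a maximal condensable algebra $A$ in $\Rep(V)$, so that $\Rep(V)_A^{\loc}\cong\mathcal{C}_{\mathrm{an}}$, and lets $W$ be the corresponding conformal extension of $V$ furnished by \autoref{prop:algext}; since $V$ is positive, $W$ is again positive and \strat{}, and $\Rep(W)\cong\mathcal{C}_{\mathrm{an}}$ is anisotropic. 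Similarly for $W'\supset V'$. Now $W$ and $W'$ both have central charge $c$ and both have representation category (ribbon equivalent to) $\mathcal{C}_{\mathrm{an}}$, i.e.\ they lie in the same bulk genus, and this bulk genus has anisotropic representation category by construction. Applying \autoref{conj:anisobulkorb} to $W$ and $W'$ gives that $W$ and $W'$ are orbifold equivalent.

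It then remains to assemble the chain: $V$ is a conformal subalgebra of $W$, so $V$ and $W$ share the common \strat{} conformal subalgebra $V$ (taking $W_1 = V$ in \autoref{defi:orbequiv}); likewise $V'$ and $W'$ share the common subalgebra $V'$. Concatenating the orbifold-equivalence sequences
$V \,,\, W$ (one step, via the subalgebra $V$), then $W = V_1, \dots, V_n = W'$ (the sequence provided by \autoref{conj:anisobulkorb}), then $W' \,,\, V'$ (one step, via $V'$), produces a single sequence of positive, \strat{} \voa{}s connecting $V$ to $V'$ in which consecutive terms share a common \strat{} conformal subalgebra. Hence $V$ and $V'$ are orbifold equivalent, establishing \autoref{conj:wittorb}. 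I expect the only subtle point — and thus the main thing to verify carefully rather than a genuine obstacle — is that all the intermediate \voa{}s produced (the maximal extensions $W$, $W'$ and the terms of the bulk-genus sequence) are again positive and \strat{}, so that the resulting chain is a legitimate orbifold-equivalence sequence; this is handled by the positivity-preservation statements already recorded in \autoref{prop:algext} and \autoref{prop:CMSY}, together with the hypothesis in \autoref{conj:anisobulkorb} that the connecting sequence consists of positive, \strat{} \voa{}s.
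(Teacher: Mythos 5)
Your proposal is correct and follows essentially the same route as the paper's proof: the forward direction via bulk equivalence implying Witt equivalence, and the reverse direction by conformally extending $V$ and $V'$ to \voa{}s $W$ and $W'$ in the anisotropic bulk genus (via \autoref{prop:anisorep} and \autoref{prop:algext}), applying \autoref{conj:anisobulkorb} to $W$ and $W'$, and concatenating the orbifold-equivalence chains. The additional care you take over positivity of the intermediate \voa{}s is a valid refinement of what the paper leaves implicit in \autoref{prop:Wittgenusclassification}.
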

\begin{proof}
\autoref{conj:wittorb} obviously implies \autoref{conj:anisobulkorb} because bulk equivalence implies Witt equivalence.

On the other hand, assume \autoref{conj:anisobulkorb} and consider two positive, \strat{} \voa{}s $V$ and $V'$ belonging to the same Witt class. By \autoref{prop:anisorep}, there is a unique anisotropic representative $\mathcal{C}_{\mathrm{an}}$ of the Witt class $[\Rep(V)]=[\Rep(V')]$. Now, by \autoref{prop:algext}, $V$ and $V'$ can be conformally embedded into \voa{}s $W$ and $W'$, respectively, both belonging to the bulk genus $(\mathcal{C}_{\mathrm{an}},c(V))=(\mathcal{C}_{\mathrm{an}},c(V'))$. \autoref{conj:anisobulkorb} then implies that $W$ is orbifold equivalent to $W'$, and because $V$ is orbifold equivalent to $W$ (and $V'$ orbifold equivalent to $W'$), it follows that $V$ is orbifold equivalent to $V'$, and hence that \autoref{conj:wittorb} is true.
\end{proof}

As a special illustrative case, \autoref{prop:wittbulkorb} shows that, in order to prove that any two positive, \strat{} \voa{}s with the same central charge and \emph{Witt trivial}\footnote{Recall from \autoref{sec:cat} that $\mathcal{C}$ being Witt trivial means that it is of the form $Z(\mathcal{F})$ for some spherical fusion category, or equivalently that $\mathcal{C}$ is condensable to $\Vect$.} representation categories are orbifold equivalent, it is actually sufficient to prove just that any two \strat{}, \emph{holomorphic} \voa{}s of the same central charge are orbifold equivalent. A variant of the latter statement was conjectured to be true in Section~4.2 of \cite{Joh21}. We briefly review the evidence in its favor.
\begin{ex}[Generalized Deep Holes]\label{ex:deepholes}
Recall the classification results of \strat{}, holomorphic \voa{}s surveyed in \cite[\autoref*{MR1ex:schellekens}]{MR24a} and discussed in more detail in \cite[\autoref*{MR1sec:holo}]{MR24a}. 

For $c=0,8,16$ the bulk genus $(\Vect,c)$ consists only of lattice \voa{}s. These are then all (inner) orbifold equivalent because, e.g., the $2$-neigh\-bor\-hood graph of the corresponding positive-definite, even, unimodular lattices is connected and lattice neighborhood can clearly be lifted to neighborhood of lattice \voa{}s. See \autoref{ex:c=16orbifolds} below for more details.

For $c=24$, the situation is more difficult because most \voa{}s in the bulk genus $(\Vect,24)$ are not lattice \voa{}s. Nonetheless, the main result of \cite{ELMS21,MS23} establishes that, assuming \autoref{conj:moonshineuniqueness} on the uniqueness of the moonshine module, all \strat{}, holomorphic \voa{}s of central charge~$24$ can be obtained as cyclic orbifold constructions of the Leech lattice \voa{} $V_\Lambda$ associated with \emph{generalized deep holes} and are thus all orbifold equivalent to one another. (See also \cite{HM23} for another approach with the same consequence.)
\end{ex}
We record this important observation in the following two propositions:
\begin{prop}\label{prop:witttrivialorbifoldingVOAs}
Assuming \autoref{conj:moonshineuniqueness} on the uniqueness of the moonshine module, \autoref{conj:wittorb} is true for the infinitely many positive, \strat{} \voa{}s belonging to the Witt classes $([\Vect],c)$ with $c\leq 24$. 
\end{prop}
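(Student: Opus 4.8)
The plan is to reduce \autoref{prop:witttrivialorbifoldingVOAs} to the known classification of \strat{}, holomorphic \voa{}s of central charge $c\leq24$, using \autoref{prop:wittbulkorb} to replace the daunting task of handling an entire Witt class by the more tractable one of handling a single bulk genus. First I would invoke \autoref{prop:wittbulkorb}: it suffices to prove \autoref{conj:anisobulkorb} for the bulk genera $(\mathcal{C}_{\mathrm{an}},c)$ with $[\mathcal{C}_{\mathrm{an}}]=[\Vect]$ and $c\leq24$. But the anisotropic representative of the trivial Witt class is $\Vect$ itself (see \autoref{prop:anisorep} and the discussion after \autoref{ex:condwitt}), so the relevant bulk genera are precisely $(\Vect,c)$ for $c\in\{0,8,16,24\}$ (recall $c\in8\N$ for holomorphic \voa{}s by \cite{Zhu96}), and \autoref{conj:anisobulkorb} in this case is exactly the statement that any two \strat{}, holomorphic \voa{}s of equal central charge $c\leq24$ are orbifold equivalent.

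Next I would dispatch the cases $c=0,8,16$. For these, the bulk genus $(\Vect,c)$ consists only of lattice \voa{}s $V_N$ with $N$ a positive-definite, even, unimodular lattice of rank $c$ (there is one such for $c=0$, one for $c=8$, namely $E_8$, and two for $c=16$). Since these lattices all lie in a single genus, and by \autoref{prop:lat_ratwit} Witt equivalence equals rational equivalence, any two are rationally equivalent; then by \autoref{prop:orblatvoa} the corresponding lattice \voa{}s are orbifold equivalent. (Concretely one can lift the connectedness of the $2$-neighborhood graph of even unimodular lattices to neighborhood of lattice \voa{}s, as indicated in \autoref{ex:deepholes} and \autoref{ex:c=16orbifolds}.) For $c=24$ I would cite the main result of \cite{ELMS21,MS23}: assuming \autoref{conj:moonshineuniqueness}, every \strat{}, holomorphic \voa{} of central charge $24$ is obtained from the Leech lattice \voa{} $V_\Lambda$ by a cyclic orbifold construction associated with a generalized deep hole; since each such construction passes through a common \strat{} conformal subalgebra $V_\Lambda^{G}=V'^{G'}$ and hence realizes an orbifold equivalence with $V_\Lambda$, transitivity of orbifold equivalence (built into \autoref{defi:orbequiv}) gives that all of them are mutually orbifold equivalent.

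Combining: for every $c\in\{0,8,16,24\}$, the bulk genus $(\Vect,c)$ lies within a single orbifold equivalence class (modulo \autoref{conj:moonshineuniqueness} in the case $c=24$, to rule out hypothetical ``fake'' moonshine modules), so \autoref{conj:anisobulkorb} holds for all bulk genera with trivial Witt class and $c\leq24$. By \autoref{prop:wittbulkorb} (more precisely, by the proof strategy there: embed $V$ and $V'$ positively into holomorphic $W$ and $W'$ via \autoref{prop:algext}, use that $V\sim W$ and $V'\sim W'$ are orbifold equivalent, then chain through $W\sim W'$), \autoref{conj:wittorb} follows for all positive, \strat{} \voa{}s in the classes $([\Vect],c)$, $c\leq24$. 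Finally, \autoref{ex:countableinfinity} shows each such class is infinite, giving the ``infinitely many'' clause. The main obstacle, and the only place where the hypothesis \autoref{conj:moonshineuniqueness} is genuinely needed, is the $c=24$ case: without moonshine uniqueness, the classification of \cite{ELMS21,MS23,HM23} would leave open the existence of a holomorphic $c=24$ \voa{} not of generalized-deep-hole type, which could conceivably lie in a separate orbifold class; everything else is a routine assembly of results already established in the excerpt.
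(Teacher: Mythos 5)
Your proposal is correct and follows essentially the same route as the paper: reduce via \autoref{prop:wittbulkorb} to showing that all \strat{}, holomorphic \voa{}s of a fixed central charge $c\leq 24$ are orbifold equivalent, handle $c=0,8,16$ through the lattice \voa{}s (connectedness of the unimodular lattice genus under neighborhood), and handle $c=24$ via the generalized-deep-hole classification of \cite{ELMS21,MS23} under \autoref{conj:moonshineuniqueness}. This is precisely the content of \autoref{ex:deepholes} together with the discussion preceding it, which is what the paper records as the justification for the proposition.
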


\begin{propph}\label{prop:witttrivialorbifolding}
    Assuming \autoref{conj:moonshineuniqueness} on the uniqueness of the moonshine module,  \autoref{conj:wittorbRCFT} is true for the infinitely many positive RCFTs belonging to the Witt classes $([\Vect],c_L,c_R)$ with $c_L,c_R\leq 24$.
\end{propph}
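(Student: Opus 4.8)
The plan is to reduce this statement about RCFTs to its purely chiral counterpart, \autoref{prop:witttrivialorbifoldingVOAs}, by way of \autoref{prop:orbrelationshipchiralfull}. Let ${_V}\mathcal{H}_W$ and ${_{V'}}\mathcal{H}'_{W'}$ be two RCFTs with positive chiral algebras lying in a common Witt class of the form $([\Vect],c_L,c_R)$ with $c_L,c_R\leq24$. First I would unpack what this membership means. By \autoref{defi:WittRCFT} and \autoref{rem:invariantsofwitt}, it is equivalent to asking that $V$ be Witt equivalent to $V'$ and $W$ be Witt equivalent to $W'$ as \strat{} \voa{}s, and that the common RCFT Witt invariant be $([\Vect],c_L,c_R)$; by \autoref{defi:weakwittequivalence} this forces $c(V)=c(V')=c_L$, $c(W)=c(W')=c_R$, and $[\Rep(V)]=[\Rep(V')]=[\Rep(W)]=[\Rep(W')]=[\Vect]$ (the equality of the left- and right-moving Witt classes being automatic, since $\mathcal{H}$ is a Lagrangian algebra of $\Rep(V)\boxtimes\overline{\Rep(W)}$ and hence defines a topological interface between $\Rep(V)$ and $\Rep(W)$). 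In particular $V$ and $V'$ are positive, \strat{} \voa{}s in the Witt class $([\Vect],c_L)$, and likewise $W,W'$ sit in $([\Vect],c_R)$; recall moreover from \autoref{ex:trivialwitt} that $c_L,c_R\in8\N$, so the bound $c_L,c_R\leq24$ confines them to $\{0,8,16,24\}$.

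Next I would invoke \autoref{prop:witttrivialorbifoldingVOAs}: granting \autoref{conj:moonshineuniqueness}, any two positive, \strat{} \voa{}s belonging to $([\Vect],c)$ with $c\leq24$ are orbifold equivalent. Applying this to $V,V'$ and separately to $W,W'$ shows that $V$ is orbifold equivalent to $V'$ and $W$ is orbifold equivalent to $W'$ as \strat{} \voa{}s. Finally, \autoref{prop:orbrelationshipchiralfull} asserts that ${_V}\mathcal{H}_W$ and ${_{V'}}\mathcal{H}'_{W'}$ are orbifold equivalent as RCFTs precisely when $V$ is orbifold equivalent to $V'$ and $W$ to $W'$; combining this with the previous line yields the conclusion. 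The infinitude claim is inherited from the chiral one in \autoref{prop:witttrivialorbifoldingVOAs}: for instance, the diagonal (Cardy) RCFTs built on the infinitely many mutually non-isomorphic positive, \strat{} \voa{}s in $([\Vect],8)$, such as the fixed-point subalgebras $\mathsf{E}_{8,1}^{\Z_n}$ of \autoref{ex:countableinfinity}, all lie in $([\Vect],8,8)$ and are pairwise non-isomorphic.

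At this level of granularity the argument contains essentially no obstacle: all of the mathematical substance is imported --- on the chiral side from \autoref{prop:witttrivialorbifoldingVOAs} (which in turn rests on the deep-hole description of holomorphic \voa{}s of central charge $24$ from \cite{ELMS21,MS23}, on the connectedness of the $2$-neighborhood graphs of positive-definite even unimodular lattices of rank $\leq16$, on \autoref{conj:moonshineuniqueness}, and on the reduction \autoref{prop:wittbulkorb} of Witt-class statements to bulk-genus statements) and, on the ``gluing'' side, from \autoref{prop:orbrelationshipchiralfull}. The only point that genuinely needs care is the bookkeeping in the first paragraph, namely that the RCFT Witt invariant $([\Vect],c_L,c_R)$ really does pin both chiral halves simultaneously to the trivial Witt class and to the prescribed central charges; this is exactly the content of \autoref{rem:invariantsofwitt}, so no new input is required. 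If one wished to remove the dependence on \autoref{conj:moonshineuniqueness}, the same scheme would reduce the problem to the (still open) statement that all \strat{}, holomorphic \voa{}s of central charge $24$ are orbifold equivalent, including the hypothetical ``fake'' moonshine modules.
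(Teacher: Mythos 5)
Your proposal is correct and follows essentially the same route the paper intends: unpack the RCFT Witt class into the two chiral Witt classes $([\Vect],c_L)$ and $([\Vect],c_R)$ via \autoref{rem:invariantsofwitt}, apply the chiral result \autoref{prop:witttrivialorbifoldingVOAs} to each side, and glue back to an RCFT orbifold equivalence with \autoref{prop:orbrelationshipchiralfull}. The paper records this proposition without a separate written proof precisely because it is this immediate combination of \autoref{prop:witttrivialorbifoldingVOAs} and \autoref{prop:orbrelationshipchiralfull}, so no further comparison is needed.
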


\begin{rem}\label{rem:fakemoonshine}
Any ``fake'' moonshine module, i.e.\ any \strat{}, holomorphic \voa{} of central charge~$24$ with $V_1=\{0\}$ and $V\ncong V^\natural$, would likely yield a counterexample to \autoref{conj:wittorb} and \autoref{conj:anisobulkorb}.

Indeed, it is not difficult to show that if such a ``fake'' moonshine module $V$ existed, it could not be orbifold equivalent under iterated \emph{cyclic} orbifold constructions (see \autoref{sec:neighborhood} below) to any of the $71$ Schellekens \voa{}s. (And likely, this argument can be extended to the more general notion of orbifold equivalence given in \autoref{defi:orbequiv}.)

It suffices to show that any cyclic orbifold construction of $V$ is again a ``fake'' moonshine module. To see this, we first note that because $V_1=\{0\}$, any cyclic orbifold construction applied to $V$ must yield a \voa{} with an abelian weight-$1$ Lie algebra, which only leaves the Leech lattice \voa{} $V_\Lambda$, the moonshine module $V^\natural$ or a ``fake'' moonshine module.\footnote{The reason is that a reductive Lie algebra can only have a zero fixed-point Lie subalgebra (under a cyclic group)
if it is abelian \cite{Kac90}.} On the other hand, all the cyclic orbifold constructions starting from the moonshine module $V^\natural$ were determined in \cite{PPV16} and they either give $V^\natural$ or $V_\Lambda$. Similarly, it was shown in \cite{Car18} that any cyclic orbifold construction of $V_\Lambda$ with vanishing weight-$1$ space must already be $V^\natural$. This proves the claim.
\end{rem}

%%%%%%%%%%%%%%%%%%%%%%%%%%%%%%%

\subsection{Partial Converse}\label{subsec:partialconverse}

Establishing \autoref{conj:wittorb} (or \autoref{conj:anisobulkorb}) in full generality appears to be a very difficult problem. For example, as we just saw in \autoref{rem:fakemoonshine}, as a very narrow special case, one would likely obtain the uniqueness of the moonshine module, \autoref{conj:moonshineuniqueness}. Instead, we prove the following partial result, which shows that at least \emph{strongly} Witt equivalent \voa{}s are necessarily orbifold equivalent, and in fact \emph{inner} orbifold equivalent.

\begin{thm}\label{thm:rationalinnerorb}
Two \strat{} \voa{}s $V$ and $V'$ are inner orbifold equivalent if and only if they are strongly Witt equivalent. 
\end{thm}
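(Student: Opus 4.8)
The plan is to reduce the statement about \voa{}s to the corresponding statement about lattices, \autoref{prop:lat_ratwit}, using the dual pair decomposition $C\otimes V_L\subset V$. First I would prove the easy direction: if $V$ and $V'$ are inner orbifold equivalent, then they are strongly Witt equivalent. By transitivity (see \autoref{rem:trans} and \autoref{rem:innerorbtrans}) it suffices to treat one step, so suppose $V$ and $V'$ share a \strat{} conformal subalgebra $W$ realized inside both in such a way that $W$ and $V$ have the same Heisenberg commutant $C$, and likewise $W$ and $V'$ have the same Heisenberg commutant $C'$; then $C\cong C'$ since both equal the Heisenberg commutant of $W$. Writing $C\otimes V_K\subset W$, $C\otimes V_L\subset V$ and $C\otimes V_M\subset V'$ for the dual pairs, the condition that $W\subset V$ preserves the Heisenberg commutant means exactly that $V_K\subset V_L$ is a conformal (hence full-rank) lattice \voa{} embedding, i.e.\ $K$ is a full-rank sublattice of $L$, and similarly $K$ is a full-rank sublattice of $M$. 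Thus $L$ and $M$ are rationally equivalent, hence Witt equivalent by \autoref{prop:lat_ratwit}, and the Heisenberg commutants agree, so $V$ and $V'$ are strongly Witt equivalent.

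For the converse, suppose $V$ and $V'$ are strongly Witt equivalent, with common Heisenberg commutant $C$ and Witt equivalent associated lattices $L$ and $M$. By \autoref{prop:lat_ratwit}, $L$ and $M$ are rationally equivalent, so there is an even lattice $K$ embedding as a full-rank sublattice into both $L$ and $M$. The key point is that $C\otimes V_K$ is a \strat{} \voa{} (tensor products of \strat{} \voa{}s are \strat{}) which embeds conformally into $C\otimes V_L\subset V$ and into $C\otimes V_M\subset V'$; and by construction this embedding preserves the Heisenberg commutant, namely $C$, since $C\otimes V_K$ has Heisenberg commutant $C$ (as $K$ is positive-definite, $(V_K)_1$ spans a Cartan subalgebra, so the Heisenberg commutant of $C\otimes V_K$ is the Heisenberg commutant of $C$, i.e.\ $C$ itself, using $C_1=\{0\}$). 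Hence $V$ and $V'$ are connected "in one step" by a \strat{} conformal subalgebra with the Heisenberg-commutant-preserving property, which is precisely inner orbifold equivalence (in the one-step form of \autoref{rem:innerorbtrans}).

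One subtlety I would want to address carefully: the notion of "the concrete realization of $W_i$ inside $V_i$" in \autoref{defi:innerorbequ}. When we say $K$ embeds as a full-rank sublattice of $L$, the induced embedding $C\otimes V_K\hookrightarrow C\otimes V_L\hookrightarrow V$ is a genuine conformal \vosa{}, and its centralizer structure should be checked: one needs that the Heisenberg algebra generated by $(C\otimes V_K)_1$ inside $V$ coincides (after possibly adjusting the realization) with that of $C\otimes V_L$, which holds because $(V_K)_1\otimes_\Z\C\cong K\otimes_\Z\C\cong L\otimes_\Z\C$ spans the same Cartan subalgebra. This is where the "concrete realization" clause earns its keep, and it mirrors exactly the lattice situation in \autoref{rem:trans} where one intersects concretely realized sublattices.

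The main obstacle I anticipate is not conceptual but bookkeeping: making rigorous the identification of Heisenberg commutants under these conformal embeddings, i.e.\ verifying that "associated lattice" and "Heisenberg commutant" behave functorially enough along the chain $C\otimes V_K\subset C\otimes V_L\subset V$ so that the condition in \autoref{defi:innerorbequ} is literally met. Once the dictionary "inner orbifold step $\leftrightarrow$ common full-rank sublattice of associated lattices with fixed Heisenberg commutant" is set up cleanly — essentially the \voa{} analog of \autoref{rem:trans} — the theorem follows from \autoref{prop:lat_ratwit} together with \autoref{rem:innerorbtrans} for the transitivity in the multi-step case. I would likely cite the dual pair results of \cite{Mas14,HM23} and the alternative characterization of the hyperbolic genus from \cite{MR24a} to streamline the commutant computations.
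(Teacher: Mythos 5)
Your proposal is correct and follows essentially the same route as the paper's proof: both directions reduce to the dual pair decomposition $C\otimes V_L\subset V$, translate the Heisenberg-commutant-preserving condition into a full-rank sublattice relation between associated lattices, and then invoke \autoref{prop:lat_ratwit} (the paper phrases the forward direction via transitivity of rational equivalence of the associated lattices rather than of strong Witt equivalence, but this is the same argument). The only cosmetic point is that \autoref{rem:innerorbtrans} is itself a consequence of this proof, so it should not be cited as an input; your reverse direction already establishes the one-step connection directly, so nothing is actually circular.
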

\begin{proof}
Suppose $V$ and $V'$ are inner orbifold equivalent, recalling the notation from \autoref{defi:innerorbequ}. In each intermediate step, since $W_i$ and $V_i$ have the same Heisenberg commutant $C$, we can choose a Cartan subalgebra of $W_i$ that is also a Cartan subalgebra of $V_i$. Indeed, any Cartan subalgebra of $W_i$ can be extended to a Cartan subalgebra of $V_i$, but the condition on the Heisenberg commutants implies that these Cartan subalgebras have the same dimension.

Hence, $V_i$ has the dual pair $C\otimes V_L$ and $W_i$ the dual pair $C\otimes V_K$ where $K$ is a full-rank sublattice $L$ because $V_K=\Com_{W_i}(C)\subset\Com_{V_i}(C)=V_L$. Applying the same reasoning to $W_i$ and $V_{i+1}$, we arrive at the conclusion that $V_i$ and $V_{i+1}$ have rationally equivalent associated lattices. The last property is transitive (see \autoref{rem:trans}), resulting in the same statement for $V$ and $V'$. Hence, $V$ and $V'$ are strongly Witt equivalent.

To show the converse direction, suppose that $V$ contains the dual pair $C\otimes V_L$ and $V'$ the dual pair $C\otimes V_M$ (strictly speaking, up to isomorphism) where the associated lattices $L$ and $M$ are rationally equivalent and hence have a common finite-index sublattice $K$. Hence, both $V$ and $V'$ are conformal
extensions of $W\coloneqq C\otimes V_K$, and so they are orbifold equivalent. Since $C$ is the Heisenberg commutant of $W$, the proof is complete.
\end{proof}

\begin{rem}\label{rem:innerorbtrans}
The proof also shows that while in general two orbifold equivalent \voa{}s cannot be connected ``in one step'' (see \autoref{rem:nontrans}), this is the case for inner orbifold equivalence (cf.\ \autoref{rem:trans}).
\end{rem}

\begin{thmph}\label{thmph:rcftinnerorb=stronglywitt}
Two RCFTs are inner orbifold equivalent if and only if they are strongly Witt equivalent.
\end{thmph}

As a corollary to \autoref{thm:rationalinnerorb}, we obtain:
\begin{cor}\label{cor:orbifoldrelation}
If $V$ and $V'$ are two \strat{} \voa{}s in the same hyperbolic genus, then they are inner orbifold equivalent.

In the special case where $V$ and $V'$ are further holomorphic, one can be obtained from the other by an orbifold construction for an abelian group of automorphisms, possibly with discrete torsion.
\end{cor}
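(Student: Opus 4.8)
The plan is to deduce the first assertion directly from the two facts already in hand: Proposition~\ref{prop:hypstrongwitt}, which says that two \strat{} \voa{}s in the same hyperbolic genus are strongly Witt equivalent, and Theorem~\ref{thm:rationalinnerorb}, which says that strongly Witt equivalent \voa{}s are inner orbifold equivalent. Chaining these gives that if $V$ and $V'$ lie in the same hyperbolic genus, they are inner orbifold equivalent. (By Remark~\ref{rem:innerorbtrans} one may even take $n=2$, i.e.\ they are connected ``in one step'' by a common \strat{} conformal subalgebra $W$ whose Heisenberg commutant agrees with that of $V$ and of $V'$.) This part is essentially immediate.

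For the second, holomorphic, assertion, I would unwind the inner orbifold equivalence produced above. Since $V$ and $V'$ are holomorphic, their Heisenberg commutants are trivial lattice \voa{}s only in the degenerate sense; more precisely, writing $C$ for the common Heisenberg commutant and $L$, $M$ for the associated lattices of $V$, $V'$, the dual pairs read $C\otimes V_L\subset V$ and $C\otimes V_M\subset V'$, and by Theorem~\ref{thm:rationalinnerorb} the lattices $L$ and $M$ are rationally equivalent, hence share a common finite-index sublattice $K$; moreover, for holomorphic $V$ and $V'$ of the same central charge $c$, the lattices $L$, $M$, $K$ all have rank $c - c(C)$ and $L$, $M$ are themselves of finite index in $L' = K'$ appropriately. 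The key point is that the extension $C\otimes V_K \subset V$ is controlled by an isotropic subgroup $I_L$ of the discriminant form of $C\otimes V_K$, and since $V$ is holomorphic, $I_L$ is self-dual (Lagrangian). Thus $G_L \coloneqq \widehat{I_L} = \Hom(I_L,\C^\times)$ is a finite abelian subgroup of $\Aut(V)$ with $V^{G_L} = C\otimes V_K$, and similarly $V'^{G_M} = C\otimes V_K$ for a finite abelian $G_M < \Aut(V')$. Hence $V$ and $V'$ are both holomorphic conformal extensions of the same \voa{} $C\otimes V_K$ with pointed representation category; applying Proposition~\ref{prop:orbifoldholoabelian} (with $W = C\otimes V_K$, whose representation category $\mathcal{C}((C\otimes V_K)'/(C\otimes V_K))$ is pointed) yields that $V'$ is obtained from $V$ by an orbifold construction for an abelian group, possibly with discrete torsion.

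I expect the main obstacle to be bookkeeping around the positivity and CFT-type hypotheses needed to invoke Proposition~\ref{prop:orbifoldholoabelian}: one must check that $W = C\otimes V_K$ is positive (so that its representation category is a pseudo-unitary, pointed modular tensor category), which follows because $C$ is \strat{} with $C_1 = \{0\}$ and $V_K$ is a positive lattice \voa{}, so all nontrivial irreducible modules of the tensor product have positive conformal weight. A secondary subtlety is the precise matching of the finite abelian group appearing on the $V$ side with the one on the $V'$ side; as noted in the discussion after Proposition~\ref{prop:orbifoldholoabelian} these are necessarily isomorphic, but one should remark that here both equal $I_L^\perp/\text{(something)}$-type data intrinsic to the discriminant form of $C\otimes V_K$, so no additional argument is required beyond citing that proposition. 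Everything else is a routine chain of already-established implications.
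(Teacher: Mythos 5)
Your proposal is correct and follows essentially the same route as the paper: the first assertion by chaining Proposition~\ref{prop:hypstrongwitt} with Theorem~\ref{thm:rationalinnerorb}, and the second by observing that $V$ and $V'$ are both conformal extensions of the common subalgebra $C\otimes V_K$ with pointed representation category and then invoking Proposition~\ref{prop:orbifoldholoabelian}. Your explicit unwinding via self-dual isotropic subgroups and the dual groups $\widehat{I_L}$ merely re-derives the content of that proposition (and the notation $\mathcal{C}((C\otimes V_K)'/(C\otimes V_K))$ for its representation category is a slip, since $C\otimes V_K$ is not a lattice), but none of this affects the validity of the argument.
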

\begin{proof}
The first assertion immediately follows from \autoref{thm:rationalinnerorb} and the fact that hyperbolic equivalence implies strong Witt equivalence, \autoref{prop:hypstrongwitt}.

Now, because $V$ and $V'$ are inner orbifold equivalent, they are both conformal extensions of $C\otimes V_K$, where $C$ is the Heisenberg commutant of both $V$ and $V'$, and $K$ is some lattice. In the special case that $V$ and $V'$ are further holomorphic, $C$ has a pointed representation category, and thus so does $C\otimes V_K$. It then follows from \autoref{prop:orbifoldholoabelian} that $V$ and $V'$ are related by orbifolding an abelian group, possibly with discrete torsion.
\end{proof}

\begin{rem}\label{rem:orbhypgen}
A fortiori, \autoref{cor:orbifoldrelation} establishes that hyperbolic equivalence implies orbifold equivalence. In general, the converse of that assertion fails: famously, the moonshine module $V^\natural$ is a $\Z_2$-orbifold construction of the Leech lattice \voa{} $V_{\Lambda}$ \cite{FLM88}, but $V^\natural$ and $V_{\Lambda}$ certainly belong to different hyperbolic genera, as discussed in \cite[\autoref*{MR1ex:schellekens2}]{MR24a} and in more detail in \cite[\autoref*{MR1sec:holhyp}]{MR24a}.

In fact, as we reviewed in \autoref{ex:deepholes}, any of the $71$ \strat{}, holomorphic \voa{}s of central charge $c=24$ (that are not ``fake'' moonshine modules) can be obtained as cyclic orbifold constructions of $V_{\Lambda}$, in general of order greater than~$2$, but they fall into a total of $12$ hyperbolic genera.

Note that even inner orbifold equivalence is in general weaker than hyperbolic equivalence. In \autoref{sec:neighborhood}, in the case of holomorphic \voa{}s $V$ and $V'$, we shall make precise what we have to additionally require in order to find a notion of orbifold equivalence (namely inner $p$-neigh\-bor\-hood) that is equivalent to hyperbolic equivalence, at least conjecturally.
\end{rem}

%%%%%%%%%%%%%%%%%%%%%%%%%%%%%%%

\subsection{Examples}\label{subsec:exampleswittorb}

We now discuss a few examples. 
\begin{ex}[Holomorphic \VOA{}s of $c=16$]\label{ex:c=16orbifolds}
The two \strat{}, holomorphic \voa{}s of central charge~$16$ are the lattice \voa{}s $\smash{V_{E_8^2}=\mathsf{E}_{8,1}^{\otimes 2}}$ and $\smash{V_{D_{16}^+}}$,
where $\smash{D_{16}^+}$ in an even (unimodular) extension of index~$2$ of the root lattice $D_{16}$. Both are lattice \voa{}s and thus have trivial Heisenberg commutants. It then follows from \cite[\autoref*{MR1cor:holhyp}]{MR24a} that they belong to the same hyperbolic genus.

\autoref{cor:orbifoldrelation} states that it must be possible to go from one \voa{} to the other by performing an inner orbifold. Indeed, note that there are the following conformal extensions of index~$2$
\begin{equation*}
V_{E_8^2}\supset V_{(D_8^2)^+}\subset V_{D_{16}^+},
\end{equation*}
where $\smash{(D_8^2)^+}$ in an even, index-$2$ extension of the root lattice $D_8^2$. In other words, $\smash{V_{E_8^2}}$ and $\smash{V_{D_{16}^+}}$ are $\Z_2$-orbifold constructions of one another. We should point out that in this example, we are simply reproducing the fact that two lattices in the same genus are rationally equivalent. We point to \cite{KLT86} for an early appearance in the physics literature, and to \cite{HM23} for more details.
\end{ex}

In \autoref{sec:neighborhood}, we discuss in detail a special case of orbifold equivalence, namely $n$-neighborhood or being related by a $\Z_n$-orbifold construction. The above is an example of this. In \autoref{subsec:neighborexamples}, we consider the holomorphic \voa{}s of central charge $c=24$.

\begin{exph}[Narain Theories and Deformed Wess--Zumino--Witten Models]\label{exph:generalizednarain}
Consider for simplicity the diagonal $\mathrm{SU}(N+1)_k$ Wess--Zumino--Witten RCFT, $k,N\in\Ns$, though our comments are more general. The chiral algebra of this theory is the affine Kac--Moody algebra $\mathsf{A}_{N,k}$. The Heisenberg commutant or interacting sector of $\mathsf{A}_{N,k}$ is known as a parafermion chiral algebra \cite{DR17} (see also \cite[\autoref*{MR1ex:para}]{MR24a}) and is sometimes written as $C=K(\mathfrak{sl}_{N+1},k)$; the associated lattice is the $A_{N+1}$ root lattice with its norms scaled by a factor of $k$.

As reviewed in \cite{DHJ21}, these theories admit $J\bar{J}$ (or, current-current) exactly marginal deformations; one might call the resulting conformal manifold the ``generalized Narain moduli space'' of $\mathrm{SU}(N+1)_k$. Such current-current deformations are known to leave the Heisenberg commutant (or interacting sector) unchanged because they only touch the lattice (free sector). In particular, it follows from \autoref{prop:weakstrongwitt2} that if $\mathcal{T}$ and $\mathcal{T}'$ are two RCFTs that are connected to $\mathrm{SU}(N+1)_k$ by current-current deformations, then $\mathcal{T}$ is strongly Witt equivalent to $\mathcal{T}'$ if and only if $\mathcal{T}$ is Witt equivalent to $\mathcal{T}'$. 

Then, \autoref{thmph:rcftinnerorb=stronglywitt} asserts that two rational theories on the generalized Narain moduli space of $\mathrm{SU}(N+1)_k$ can be related by orbifolds if and only if they are Witt equivalent. This is a (partial) generalization of our observations in \autoref{exph:orbifoldingc=1}, which pertained to the special case $N=k=1$.  
\end{exph}

\begin{ex}[Witt Classes with Anisotropic Representatives of Low Rank]
Let $\mathcal{C}_{\mathrm{an}}$ be an anisotropic unitary modular tensor category with $\rk(\mathcal{C}_{\mathrm{an}})\leq 4$ \cite{RSW09} other than $\smash{\overline{(A_1,5)}_{1/2}}$ (see op.\ cit.\ for the definition of the latter). Further, let $c\in\Q$ be the unique rational number satisfying $0<c\leq 8$ and $e^{2\pi i c/8}=e^{2\pi i c(\mathcal{C}_{\mathrm{an}})/8}$. Here, we recall from \cite[\autoref*{MR1eqn:chiralcentralcharge}]{MR24a} the definition of the chiral central charge $c(\mathcal{C}_{\mathrm{an}})\in \mathbb{Q}/8\mathbb{Z}$ of $\mathcal{C}_{\mathrm{an}}$.

By the results of \cite{Ray23}, there is a unique positive, \strat{} \voa{} $V$ in the bulk genus $(\mathcal{C}_{\mathrm{an}},c)$.\footnote{By the results of \cite{MR24a}, there are in fact no non-positive, \strat{} \voa{}s in the bulk genus $(\mathcal{C}_{\mathrm{an}},c)$.} It follows, using the logic of \autoref{prop:wittbulkorb}, that any two positive, \strat{} \voa{}s in the (infinitely large) Witt class $([\mathcal{C}_{\mathrm{an}}],c)$ of $V$ are orbifold equivalent. Indeed, by \autoref{prop:Wittgenusclassification}, any positive \voa{} that is Witt equivalent to $V$ can be conformally extended to $V$, and hence is orbifold equivalent to $V.$

Using \autoref{prop:orbrelationshipchiralfull}, this result can be extended to full RCFTs. In particular, one finds that the entire Witt class $([\mathcal{C}_{\mathrm{an}}],c_L,c_R)=([\mathcal{C}_{\mathrm{an}}],c,c)$ of the canonical diagonal RCFT built on top of $V$ is connected by orbifolds. For example, every RCFT that is Witt equivalent to the $G_{2,1}$ Wess--Zumino--Witten model can be reached by performing a sequence of orbifolds on the latter. 
\end{ex}

%%%%%%%%%%%%%%%%%%%%%%%%%%%%%%%

\section{Classification and Deformation Classes of CFTs}

In this brief section, we highlight some physical consequences of the conjectures articulated in the previous section.

\subsection{Relationship to RCFT Classification}\label{subsec:classificationrelation}

First, we show that \autoref{conj:wittorb} leads to a kind of structure theorem on RCFTs ${_V}\mathcal{H}_W$ for which $V$ and $W$ are positive. 

One's first instinct might  be to conjecture that orbifolding is a powerful enough tool to be able to connect any pair of RCFTs with the same central charges. However, \autoref{prop:orbimplieswitt} and \autoref{thm:orbimplieswittRCFT} imply that orbifolding can never move between different Witt classes, so this instinct is too naive. On the other hand, the operation of taking a coset (or commutant) of, say, a \strat{} \voa{} $V$ \emph{can} change its Witt class, both by changing the central charge $c(V)$ and by changing $[\Rep(V)]$. Therefore, one might hope that it is possible to use orbifolding and cosets in combination to navigate through the entirety of theory space.

We now make this intuition precise, first for \voa{}s and then for full RCFTs.

\begin{prop}
Assuming \autoref{conj:weakreconstruction} (or more directly \autoref{conj:weakreconstruction2}) and that \autoref{conj:wittorb} holds for \strat{} \voa{}s $V'$ with $[\Rep(V')]=[\Vect]$, it follows that any positive, \strat{} \voa{} $V$ can be obtained by starting with $\mathsf{E}_{8,1}^{\otimes n}\cong V_{E_8^n}$ for some sufficiently  large $n\in\mathbb{Z}_{\geq 0}$, performing a sequence of orbifolds and then taking a commutant. 
\end{prop}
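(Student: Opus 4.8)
The plan is to use the conjectural weak reconstruction hypothesis to embed $V$ into a holomorphic \voa{}, apply the holomorphic case of the Witt--orbifold conjecture to connect that holomorphic \voa{} to $\mathsf{E}_{8,1}^{\otimes n}$, and recover $V$ at the end by a single coset. Concretely, let $V$ be a positive, \strat{} \voa{}. By \autoref{conj:weakreconstruction2} (which is equivalent to \autoref{conj:weakreconstruction}), there is a primitive conformal embedding $V\subset U$ into a \strat{}, holomorphic \voa{} $U$ such that the commutant $C\coloneqq\Com_U(V)$ is \strat{}, and $\Com_U(C)=V$ by primitivity. The central charge $c(U)=c(V)+c(C)$ is a positive multiple of $8$ by \cite{Zhu96}, say $c(U)=8n$. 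Since $U$ is holomorphic, $\Rep(U)\cong\Vect$, so $U$ lies in the Witt class $([\Vect],8n)$; the same is true of $\mathsf{E}_{8,1}^{\otimes n}\cong V_{E_8^n}$.

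The key step is then to invoke the hypothesis that \autoref{conj:wittorb} holds for \voa{}s in the trivial Witt class: this gives that $U$ and $V_{E_8^n}$, both positive and \strat{} with $[\Rep(\cdot)]=[\Vect]$ and equal central charge, are orbifold equivalent. In other words, there is a chain of \strat{} \voa{}s $V_{E_8^n}=U_1,\dots,U_m=U$ in which consecutive members share a common \strat{} conformal subalgebra. Thus starting from $\mathsf{E}_{8,1}^{\otimes n}$ one performs this sequence of orbifold moves (in the sense of \autoref{defi:orbequiv}) to arrive at $U$. Finally one takes the commutant $\Com_U(C)$; by primitivity of the embedding this returns exactly $V$. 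This realizes $V$ by the advertised recipe: start with $\mathsf{E}_{8,1}^{\otimes n}$, perform a sequence of orbifolds, then take a single commutant.

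A couple of routine points need to be checked along the way. First, one should confirm that $n\in\N$: this is immediate since $c(U)\in 8\N$ and $c(U)\geq c(V)\geq 0$, with $n=0$ only in the degenerate case $V=C=\C\vac$. Second, one must know that $\Com_U(C)$ is a \emph{well-defined} operation returning a \strat{} \voa{}, but this is exactly the content of primitivity together with the fact that $C$ is \strat{} (so that $\Com_U(C)$ is \strat{} as well, using the double-commutant theory underlying \autoref{conj:weakreconstruction2}); indeed $\Com_U(\Com_U(V))=V$ holds by definition of a primitive embedding. Third, the phrase ``performing a sequence of orbifolds'' should be read in the precise sense of orbifold equivalence, \autoref{defi:orbequiv}: this is what the invoked special case of \autoref{conj:wittorb} literally provides.

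The main obstacle is entirely contained in the hypotheses: we are assuming both \autoref{conj:weakreconstruction} (to produce the holomorphic cover $U$ with \strat{} commutant) and the holomorphic case of \autoref{conj:wittorb} (to connect $U$ to $\mathsf{E}_{8,1}^{\otimes n}$ by orbifolds). Granting these, the argument above is essentially a bookkeeping exercise: the real work is hidden in \autoref{prop:witttrivialorbifoldingVOAs} and \autoref{ex:deepholes}, which make the second hypothesis plausible (and unconditional up to \autoref{conj:moonshineuniqueness}) for $c\leq 24$. No new difficulty arises in assembling these inputs into the stated structural result.
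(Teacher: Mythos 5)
Your proposal is correct and follows essentially the same route as the paper's own proof: use \autoref{conj:weakreconstruction2} to embed $V$ primitively into a holomorphic $U$ with strongly rational commutant, note that $U$ and $\mathsf{E}_{8,1}^{\otimes n}$ (with $n=c(U)/8$) lie in the same Witt class $([\Vect],8n)$ so the assumed special case of \autoref{conj:wittorb} connects them by orbifolds, and then recover $V$ as the double commutant $\Com_U(\Com_U(V))$. The only difference is that you spell out a few routine checks (e.g.\ $n\in\N$, strong rationality of the commutant) that the paper leaves implicit.
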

\begin{proof}
Moving backwards, it follows from \autoref{conj:weakreconstruction2} that $V$ can be embedded into a holomorphic \voa{} $U$ and obtained via the coset construction as $V=\Com_U(W)$, where $W=\Com_U(V)$. On the other hand, $U$ and $\mathsf{E}_{8,1}^{\otimes n}$ are Witt equivalent for $n\coloneqq c(U)/8\in\N$ and satisfy $[\Rep(U)]=[\Rep(\mathsf{E}_{8,1}^{\otimes n})]=[\Vect]$; by assumption, they can be connected by a sequence of orbifolds. 
\end{proof}

\begin{rem}
The proof above reveals that we do not require the full might of \autoref{conj:wittorb}. We only need to demand that it hold for \voa{}s with $[\Rep(V')]=[\Vect]$. But we showed in \autoref{prop:witttrivialorbifoldingVOAs} that this is true when $c(V')\leq 24$, at least if one accepts the uniqueness of the moonshine module, \autoref{conj:moonshineuniqueness}.
\end{rem}

\smallskip

In the extension to full RCFTs, we shall use the terminology ``quotienting decoupled degrees of freedom'' to refer to the operation $\mathcal{T}_1\otimes \mathcal{T}_2\to \mathcal{T}_1$ of passing from a trivially decoupled tensor product $\mathcal{T}_1\otimes\mathcal{T}_2$ of RCFTs to one of the tensor factors, e.g., $\mathcal{T}_1$. Then the following  holds.
\begin{thmph}\label{thmph:classification}
Assuming \autoref{conj:weakreconstruction} (or more directly \autoref{conj:weakreconstruction2}) and that \autoref{conj:wittorbRCFT} holds for RCFTs ${_{V'}}\mathcal{H}'_{W'}$ with $\Rep(V')$ being Witt trivial, it follows that any positive RCFT ${_V}\mathcal{H}_W$ can be obtained by starting with $\mathsf{E}_{8,1}^{\otimes n}\otimes\overline{\mathsf{E}}_{8,1}^{\otimes m}$ for some large enough $n,m\in\N$, performing a sequence of orbifolds and then quotienting decoupled degrees of freedom.
\end{thmph}
\begin{proof}
From \autoref{conj:weakreconstruction2}, it follows that $V$ and $W$ can be embedded into holomorphic \voa{}s (or chiral CFTs) $A$ and $B$, respectively. Define $V'=\Com_A(V)$ and $W'=\Com_B(W)$, which are both \strat{} \voa{}s. Since $\Rep(V)$ is Witt equivalent to $\Rep(W)$ and since $\Rep(V')\cong \overline{\Rep(V)}$ and $\Rep(W')\cong \overline{\Rep(W)}$, it follows that $\Rep(V')$ is Witt equivalent to $\Rep(W')$. In particular, one can form an RCFT of the form ${_{V'}}\mathcal{H}'_{W'}$. The tensor product ${_V}\mathcal{H}_W\otimes {_{V'}}\mathcal{H}'_{W'}$ can be thought of as an RCFT with respect to the chiral algebras $V\otimes V'$ and $W\otimes W'$. That is, there is some $\mathcal{H}''$ such that
\begin{equation*}
{_V}\mathcal{H}_W\otimes {_{V'}}\mathcal{H}'_{W'}\cong {_{V\otimes V'}}\mathcal{H}''_{W\otimes W'}.
\end{equation*}
Let $n=c(A)/8=c(V\otimes V')/8\in\N$ and $m=c(B)/8=c(W\otimes W')/8\in\N$. Then $\smash{\mathsf{E}_{8,1}^{\otimes n}\otimes \overline{\mathsf{E}}_{8,1}^{\otimes m}}$ is Witt equivalent to ${_{V\otimes V'}}\mathcal{H}''_{W\otimes W'}$. Therefore, by \autoref{conj:wittorbRCFT}, there is a sequence of orbifolds that takes $\smash{\mathsf{E}_{8,1}^{\otimes n}\otimes \overline{\mathsf{E}}_{8,1}^{\otimes m}}$ to ${_V}\mathcal{H}_W\otimes {_{V'}}\mathcal{H}'_{W'}\cong {_{V\otimes V'}}\mathcal{H}''_{W\otimes W'}$. Then, the target RCFT ${_V}\mathcal{H}_W$ is obtained by quotienting the decoupled degrees of freedom corresponding to ${_{V'}}\mathcal{H}'_{W'}$.
\end{proof}

%%%%%%%%%%%%%%%%%%%%%%%%%%%%%%%

\subsection{Relationship to Deformation Classes of QFTs}\label{subsec:deformationrelation}

There is a folklore slogan \cite{Sei19} that \emph{anomalies are the only obstruction to connecting QFTs by continuous deformations}. We described in the introduction how a narrow special case of this idea leads to the expectation that any 2d CFT with vanishing gravitational anomaly should admit a boundary condition. The following conjecture is a slightly stronger version of this expectation that applies to RCFTs.
\begin{conjph}\label{conj:RCFTboundaries}
Every RCFT with vanishing gravitational anomaly admits a boundary condition with finite $g$-function. 
\end{conjph}

Recall that the $g$-function  of a boundary condition $B$ is defined as the overlap of the corresponding boundary state $|B\rangle$ with the vacuum $|0\rangle$, i.e.\ $g=\langle B|0\rangle$ \cite{AL91}. The main result of this brief subsection is to point out the following:
\begin{thmph}\label{thmph:recharacterizationmainconj}
\autoref{conj:wittinterface}, \autoref{conj:wittinterfaceRCFT} and \autoref{conj:RCFTboundaries} are all equivalent to one another. 
\end{thmph}
Informally, the statement that the gravitational anomaly is the only obstruction to an RCFT admitting a boundary condition is the same as the statement that Witt inequivalence is the only obstruction to two RCFTs being related by topological manipulations.

\begin{proof}
It suffices to show the equivalence of \autoref{conj:wittinterface} and \autoref{conj:RCFTboundaries} because the equivalence of \autoref{conj:wittinterface} and \autoref{conj:wittinterfaceRCFT} was already established in equation~\eqref{eqn:conjectureequivalences}.

Assume \autoref{conj:wittinterface}, and suppose we are given an RCFT ${_V}\mathcal{H}_W$ for which we aim to construct a boundary. We learn from \autoref{prop:technicallemma} that there is a simple topological line interface $I$ of finite quantum dimension between the gapless chiral boundaries $V$ and $W$ on which $\mathcal{H}$ terminates, as on the right of \autoref{fig:foldingboundary}. By folding this picture as if closing a book, one obtains a boundary condition $B$ with finite $g$-function of the RCFT ${_V}\mathcal{H}_W$. Thus, \autoref{conj:RCFTboundaries} holds.

Conversely, suppose that any RCFT admits a boundary condition with finite $g$-function, and suppose $V$ and $W$ are two Witt equivalent \strat{} \voa{}s. Choose any topological interface $\mathcal{H}$ between $\Rep(V)$ and $\Rep(W)$ and form the RCFT ${_V}\mathcal{H}_W$. By assumption, this RCFT has a boundary condition. One can inflate this to a $3$-dimensional Kapustin--Saulina picture as on the right of \autoref{fig:foldingboundary} to obtain a topological line interface $I$ between $V$ and $W$ on which $\mathcal{H}$ terminates. This means that $V$ and $W$ are interface equivalent, and hence that \autoref{conj:wittinterface} is true.
\end{proof}

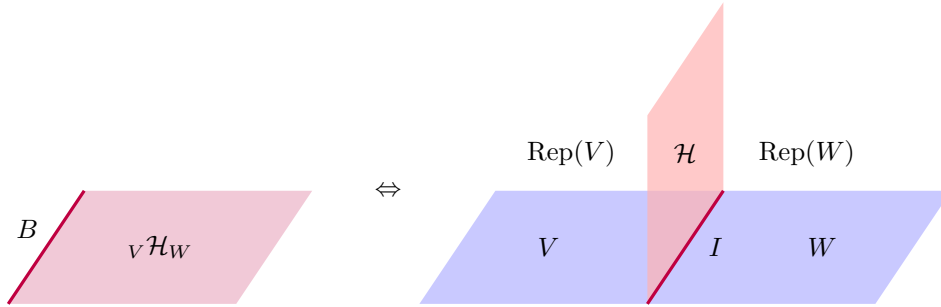
\begin{figure}[ht]
\begin{center}
\begin{tikzpicture}
\filldraw[fill=purple!30,thick,fill opacity=.7,draw=none] (0,-2.5) -- (-3,-1-1.5) -- (-2,-1) -- (1,-1) -- cycle;
\draw[very thick,purple] (-3,-1-1.5)--(-2,-1);
\node[] at (-1,-1.75) {${_V}\mathcal{H}_W$};
\node[] at (2,-1) {$\Leftrightarrow$};
\node[] at (-2.75,-1-.5) {$B$};
\end{tikzpicture}
\begin{tikzpicture}
\filldraw[fill=blue!30,fill opacity=.7,thick,draw=none] (0,-2.5) -- (-6,-1-1.5) -- (-5,-1) -- (1,-1) -- cycle;
\filldraw[fill=red!30,fill opacity=.7,thick,draw=none] (-3,0) -- (1-3,1.5) -- (1-3,-1) -- (-3,-1-1.5) -- cycle;
\draw[very thick,purple] (1-3,-1) -- (-3,-1-1.5);
\node[] at (-.5-.4,-.5) {$\Rep(W)$};
\node[] at (-.5-4+.5,-.5) {$\Rep(V)$};
\node[] at (-2.5,-.5) {$\mathcal{H}$};
\node[] at (-2.5-3+1.2,-1.75) {$V$};
\node[] at (-2.5+3-1.2,-1.75) {$W$};
\node[] at (-2.5+3-1.2-1.4,-1.75) {$I$};
\end{tikzpicture}
\end{center}
\caption{A $3$-dimensional depiction of a boundary condition $B$ of an RCFT ${_V}\mathcal{H}_W$.}\label{fig:foldingboundary}
\end{figure}

%%%%%%%%%%%%%%%%%%%%%%%%%%%%%%%
%%%%%%%%%%%%%%%%%%%%%%%%%%%%%%%

\section{Witt Equivalence and Generalized Symmetries}\label{sec:wittgensym}

In this section, we discuss the relationship between Witt equivalence and generalized global symmetries. In \autoref{subsec:symmetrysubalgebraduality}, we relate the problem of classifying \voa{}s within a weak Witt class to the problem of understanding symmetries. In \autoref{subsec:galois}, we refine this idea further by describing how a generalization of quantum Galois theory enters into the picture. In \autoref{subsec:classifyingsymmetries}, we describe an approach to classifying generalized global symmetries in RCFTs. We conclude with examples in \autoref{subsec:symwittex}. In particular, we give a physical argument that all the finite symmetries of the $\mathrm{SU}(2)_1$ Wess--Zumino--Witten model are invertible, and we also find two Fibonacci symmetries of the moonshine module.

This section is largely written at a physics level of rigor, though we expect that much of what we say here can be made mathematically sharp.

%%%%%%%%%%%%%%%%%%%%%%%%%%%%%%%

\subsection{Symmetry-Subalgebra Duality}\label{subsec:symmetrysubalgebraduality}

Recall that every Witt class $[\mathcal{C}]$ of modular tensor categories contains a unique anisotropic representative $\mathcal{C}_{\mathrm{an}}$. The significance of this privileged representative $\mathcal{C}_{\mathrm{an}}$ for the theory of \voa{}s is highlighted in \autoref{prop:Wittgenusclassification}, which asserts that every positive, \strat{} \voa{} in the Witt class $([\mathcal{C}],c)$ arises as a conformal subalgebra of a \voa{} in the bulk genus $(\mathcal{C}_{\mathrm{an}},c)$. 

In this subsection, we provide a reinterpretation of \autoref{prop:Wittgenusclassification} that invokes the language of generalized global symmetries. The connection goes through a concept called symmetry-subalgebra duality (see, e.g., \cite{Bis17,Ray23,JW19,JW20b,JW20a,KLWZZ20,CW23,CJW22}), which can be derived using ideas from symmetry TQFTs.

For simplicity, we restrict our attention to the case that $\mathcal{C}_{\mathrm{an}}=\Vect$, though everything we say can be extended to more general modular tensor categories with suitable modifications. In particular, let $V$ be a \strat{}, \emph{holomorphic} \voa{}. We claim that any \strat{} conformal subalgebra $W$ defines a symmetry of $V$, which we denote as $\operatorname{Ver}(V/W)$ for ``Verlinde''. The reason for this name is that, as we will see, just as the Verlinde lines \cite{Ver88} of an RCFT are the symmetries of the theory that commute with the chiral algebras, the category $\operatorname{Ver}(V/W)$ by definition consists of those topological lines of $V$ that commute with the subalgebra $W$.

\begin{figure}[ht]
\begin{center}
\begin{tikzpicture}
\filldraw[black,fill=blue!30,fill opacity=.7,  thick,draw=none] (-6,0) -- (-6,-1-1.5) -- (-5,-1) -- (1-6,1.5)    -- cycle;
 \draw[black, thick,dashed](1-3,-1)--(-5,-1);
\filldraw[black,fill=red!30,fill opacity=.7,  thick,draw=none] (-3,0) -- (1-3,1.5) -- (1-3,-1) -- (-3,-1-1.5) -- cycle;
\draw[black,  thick,dashed] (-6,0)--(-3,0);
\draw[black,  thick,dashed] (-3,-2.5) -- (-6,-1-1.5);
\draw[black, thick,dashed] (1-3,1.5)--(1-6,1.5);
\node[] at (-.5-4+.5,-.5) {$\Rep(W)$};
\node[] at (-3,-2.8) {$L(V)$};
\node[] at (-2.5-3.5,-2.8) {$W$};
\tikzstyle{s}=[draw,decorate,<->]
\path[s] (-1.5,-.75) to (2,-.75);
\filldraw[black,fill=purple!30,fill opacity=.7,  thick,draw=none] (-3+6-.5,0) -- (1-3+6-.5,1.5) -- (1-3+6-.5,-1) -- (-3+6-.5,-1-1.5) -- cycle;
\node[] at (-2.5+6-.5-.5,-2.8) {$V$};
\node at (-2.5+6-.5,-.2) [circle,fill,inner sep=1.5pt]{};
\node at (-2.5+6-.5,.1) {$\mathcal{O}(x)$};
\node at (-2.5+6-.5-8.5,-.2) [circle,fill,inner sep=1.5pt]{};
\node at (-2.5+6-.5-8.5,-.5) {$\mathcal{O}(x)$};
\draw[thick,teal] (-2.5+6-.5-.5,-.5)--(-2.5+6-.5-.5+.9,-.5-.7);
\draw[thick,teal] (-2.5+6-.5-.5-5.5,-.5)--(-2.5+6-.5-.5+.9-5.5,-.5-.7);
\node[] at (-2.5+6-.5,-.5-.25-.4){$\mathcal{L}$};
\node[] at (-2.5+6-.5-5.5,-.5-.25+.15){$\mathcal{L}$};
\end{tikzpicture}
\end{center}
\caption{A local operator $\mathcal{O}(x)\in W$ commutes with a topological line $\mathcal{L}\in \operatorname{Ver}(V/W)$.}\label{fig:symTFT}
\end{figure}
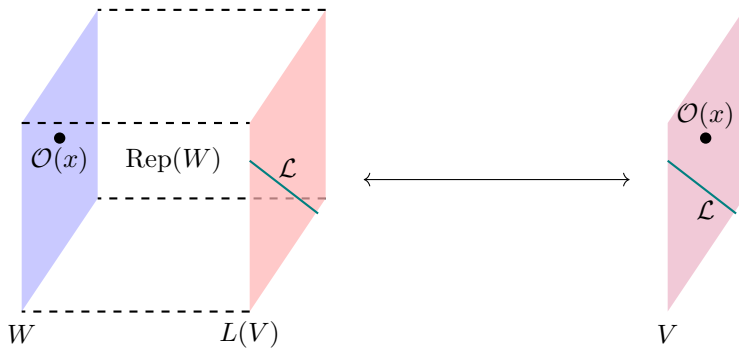

Indeed, note that $V$ defines a Lagrangian algebra $L(V)$ in $\Rep(W)$, and therefore a topological boundary condition in the TQFT $(\Rep(W),c(W))$.\footnote{More pedantically, since the 3d TQFT has non-vanishing central charge, what we call a ``topological boundary condition'' should be more properly thought of as a topological interface to the invertible TQFT with the same central charge.} Consider placing $(\Rep(W),c(W))$ on an interval, with its $W$ boundary imposed on one end and the topological boundary condition $L(V)$ imposed on the other. The topological boundary condition supports a fusion category worth of topological line operators $\operatorname{Ver}(V/W)\coloneqq \Rep(W)_{L(V)}$, which is the category of $L(V)$-modules inside the modular tensor category $\Rep(W)$. On the other hand, by dimensionally reducing along the interval direction, we recover the original theory $V$, and immediately see that $\operatorname{Ver}(V/W)$ can be thought of as furnishing a symmetry of $V$. Moreover, as advertised, this $3$-dimensional perspective reveals that the local operators in $W$ are transparent to the lines in $\operatorname{Ver}(V/W)$, by virtue of the fact that the latter are supported on the topological boundary while the former live on the $W$ boundary (which is often called the ``physical'' boundary in the literature on symmetry TQFTs). See \autoref{fig:symTFT}.

\begin{rem}
We note in passing that $\operatorname{Ver}(V/W)$ not only abstractly captures a symmetry category acting on $V$ but further concretely encodes the twisted sectors of this symmetry. Indeed, each simple object of $\operatorname{Ver}(V/W)$ can be thought of either as a topological line operator $\mathcal{L}$ \emph{or} as a (non-local) module $M$ of the conformal subalgebra $W$: the latter defines the space of operators that can live at the endpoint of the former, i.e.\ $V_{\mathcal{L}}\cong M$ as $W$-modules, where $V_{\mathcal{L}}$ is the ``$\mathcal{L}$-twisted sector''.
\end{rem}

There is also a clear way to associate a conformal subalgebra to a symmetry. Namely, given a fusion category $\mathcal{F}$ acting on $V$ by symmetries, one can obtain the conformal subalgebra $V^{\mathcal{F}}$ consisting of the operators in $V$ that commute with the lines in $\mathcal{F}$. The bulk TQFT on which $V^{\mathcal{F}}$ occurs as a boundary is $(Z(\mathcal{F}),c(V))$, where $Z(\mathcal{F})$ is the Drinfeld center of $\mathcal{F}$.

The main claim of symmetry-subalgebra duality is that these maps are naturally inverse to one another and hence define a duality,
\begin{equation*}
V^{\operatorname{Ver}(V/W)}=W,\quad\operatorname{Ver}(V/V^{\mathcal{F}})=\mathcal{F}.
\end{equation*}
In other words, every \strat{} conformal subalgebra of a \strat{}, holomorphic \voa{} defines a symmetry, and conversely every symmetry defines a \strat{} conformal subalgebra.

Symmetry-subalgebra duality then allows us to re-interpret \autoref{prop:Wittgenusclassification} as follows:
\begin{propph}\label{prop:Wittgenusfromsymmetries}
Every positive, \strat{} \voa{} in the Witt class $([\Vect],c)$ arises as $V^{\mathcal{F}}$ for some \strat{}, holomorphic \voa{} $V$ with central charge $c$ and some fusion category $\mathcal{F}$ acting on $V$ by symmetries.
\end{propph}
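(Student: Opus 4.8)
The plan is to deduce \autoref{prop:Wittgenusfromsymmetries} as essentially a dictionary translation of \autoref{prop:Wittgenusclassification} through symmetry-subalgebra duality. Recall that \autoref{prop:Wittgenusclassification} asserts that any positive, \strat{} \voa{} $W$ in the Witt class $([\Vect],c)$ embeds as a conformal subalgebra into some \strat{} \voa{} $V$ lying in the bulk genus $(\mathcal{C}_{\mathrm{an}},c) = (\Vect,c)$, i.e.\ into a \strat{}, holomorphic \voa{} $V$ of central charge $c$; moreover the extension $W\subset V$ is positive since $W$ is. So the first step is simply to invoke \autoref{prop:Wittgenusclassification} to produce this holomorphic $V\supset W$ with $c(V)=c$.

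Second, I would invoke symmetry-subalgebra duality as developed in \autoref{subsec:symmetrysubalgebraduality}: given the \strat{} conformal subalgebra $W\subset V$ with $V$ holomorphic, one forms the fusion category $\mathcal{F} := \operatorname{Ver}(V/W) = \Rep(W)_{L(V)}$, the category of modules over the Lagrangian algebra $L(V)$ defined by $V$ inside the modular tensor category $\Rep(W)$ (note that $[\Rep(W)] = [\Vect]$, so $\Rep(W)$ does admit such a Lagrangian algebra, and in fact $\Rep(W)\cong Z(\mathcal{F})$, consistently with the picture). This $\mathcal{F}$ acts on $V$ by symmetries, and the content of the duality is the identity $V^{\operatorname{Ver}(V/W)} = W$, so that $W = V^{\mathcal{F}}$ as desired. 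Conversely, every positive, \strat{} \voa{} of the form $V^{\mathcal{F}}$ with $V$ holomorphic is a conformal subalgebra of $V$, hence Witt equivalent to $V$ by \autoref{prop:subalgebraimpliesWitt}, hence lies in the Witt class $([\Vect],c)$; this gives the (easy) reverse inclusion and shows the characterization is sharp.

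The main obstacle here is not in the logical skeleton — which is a two-line unwinding of definitions — but in the fact that symmetry-subalgebra duality itself, and in particular the precise statement that $\operatorname{Ver}(V/W)$ is a genuine (finite) fusion category symmetry of $V$ with $V^{\operatorname{Ver}(V/W)}=W$, is only established here at a physics level of rigor (this is why the statement is labeled \textbf{Proposition}\textsuperscript{ph}). Making it fully rigorous would require a mathematical theory of non-invertible symmetries acting on holomorphic \voa{}s extending the quantum Galois theory of \cite{Kir02, DNR21b} to the non-invertible setting, together with a proof that the Galois correspondence between such symmetries and \strat{} conformal subalgebras is a bijection; the positivity hypothesis on $W$ (equivalently $\Rep(W)\cong\mathcal{D}_\omega(G)$-type behavior, cf.\ \autoref{prop:vectpos}) is what one expects to need to control the grading of twisted sectors and rule out pathologies. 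I would therefore present the proof as: (i) cite \autoref{prop:Wittgenusclassification} for the holomorphic cover $V\supset W$; (ii) apply symmetry-subalgebra duality to identify $W = V^{\mathcal{F}}$ with $\mathcal{F} = \operatorname{Ver}(V/W)$; (iii) note the converse via \autoref{prop:subalgebraimpliesWitt}; and flag that step (ii) rests on the (physics-level) duality of \autoref{subsec:symmetrysubalgebraduality}, which is the only non-rigorous ingredient.
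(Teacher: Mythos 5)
Your proposal matches the paper exactly: the paper presents \autoref{prop:Wittgenusfromsymmetries} as a direct reinterpretation of \autoref{prop:Wittgenusclassification} (specialized to $\mathcal{C}_{\mathrm{an}}=\Vect$, so the cover is holomorphic) through symmetry-subalgebra duality, with $\mathcal{F}=\operatorname{Ver}(V/W)$ and $W=V^{\mathcal{F}}$, and it likewise flags that the duality itself is only established at a physics level of rigor. Your added converse via \autoref{prop:subalgebraimpliesWitt} is a harmless (and correct) supplement not needed for the statement as phrased.
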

Thus, the classification of positive \voa{}s in the Witt class $([\Vect],c)$ amounts to the enumeration of \strat{}, holomorphic \voa{}s of central charge $c$, as well as their finite (generally non-invertible) global symmetries.

Finally, we note that \autoref{prop:Wittgenusfromsymmetries} can be generalized to Witt classes $([\mathcal{C}],c)$ with $[\mathcal{C}]$ different from $[\Vect]$ using the results of \cite{AR}.

%%%%%%%%%%%%%%%%%%%%%%%%%%%%%%%

\subsection{Quantum Galois Theory}\label{subsec:galois}

One can actually say more about the structure of Witt classes of the form $([\Vect],c)$. Not only do all positive \voa{}s in such Witt classes conformally embed into a holomorphic \voa{} (see \autoref{prop:Wittgenusclassification}), but further, the Hasse diagrams formed by these embeddings are controlled by the symmetries of the holomorphic \voa{}s.

Suppose one fixes a \strat{}, holomorphic \voa{} $V$ and a
\strat{} conformal subalgebra $W\subset V$. The intermediate \strat{} \voa{}s $U$ satisfying $W\subset U\subset V$ are in bijection with
condensable algebras $A=A(U)$ in $\Rep(W)$ that are contained in the maximal condensable algebra corresponding to the conformal extension $V$.

On the other hand, symmetry-subalgebra duality provides us a fusion category symmetry $\mathcal{F}$ of $V$ satisfying $\Rep(W)=Z(\mathcal{F})$. By Theorem~4.10 of \cite{DMNO13}, the condensable algebras of $Z(\mathcal{F})$, and thus the intermediate \voa{}s~$U$, are in bijection with fusion subcategories $\mathcal{G}\subset \mathcal{F}$. The explicit map can be described via symmetry-subalgebra duality as 
\begin{equation}\label{eqn:galoisconnection}
U\mapsto \operatorname{Ver}(V/U),\quad\mathcal{G}\mapsto V^{\mathcal{G}}.
\end{equation}

Moreover, following \cite{DJX13}, let us define the degree of a conformal extension of $U$ to a holomorphic \voa{} $V$ as 
\begin{equation*}
[V:U]\coloneqq \lim_{\tau\to 0}\frac{\ch_V(\tau)}{\ch_U(\tau)}.
\end{equation*}
By equation~(4.1) of \cite{DJX13}, we can equivalently define $[V:U]=\operatorname{FPDim}(L(V))$, the Frobenius--Perron dimension of the Lagrangian algebra $L(V)$ in $\Rep(U)$ defined by the embedding $U\subset V$. Then by Theorem~4.10 of \cite{DMNO13}, we find a relationship between the degree of the extension $U\subset V$ and the dimension of the category of symmetries of $V$ preserved by $U$, namely
\begin{equation*}
[V:U]=\operatorname{FPDim}(\operatorname{Ver}(V/U)).
\end{equation*} 

Putting this all together, we find the following, which can be thought of as a kind of ``non-invertible'' generalization of the Galois theory put forward in \cite{DM97,DLM96,HMT99,DJX13}:
\begin{propph}[Quantum Galois Theory for Chiral CFTs]\label{propph:galois}
Let $V$ be a \strat{}, holomorphic \voa{} and $W$ a \strat{} \vosa{} of $V$. Equation~\eqref{eqn:galoisconnection} defines an antitone Galois connection between \vosa{}s $U$ of $V$ containing $W$, and fusion subcategories $\mathcal{G}$ of the category $\operatorname{Ver}(V/W)$ of symmetries of $V$ preserving $W$. Moreover, if $U$ is associated to $\mathcal{G}$ via this correspondence, then $[V:U]=\operatorname{FPDim}(\mathcal{G})$.
\end{propph}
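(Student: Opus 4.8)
The plan is to assemble the statement from three ingredients that have essentially all been laid out already: symmetry-subalgebra duality (\autoref{prop:Wittgenusfromsymmetries} and the discussion preceding it), the bijection between condensable algebras in $Z(\mathcal F)$ and fusion subcategories of $\mathcal F$ (Theorem~4.10 of \cite{DMNO13}), and the correspondence between condensable algebras in $\Rep(W)$ and intermediate \strat{} extensions (\autoref{prop:algext} and \autoref{prop:CMSY}). First I would fix the fusion category $\mathcal F=\operatorname{Ver}(V/W)$ obtained from symmetry-subalgebra duality, so that $\Rep(W)\cong Z(\mathcal F)$ as modular tensor categories, and recall that the conformal extension $V\supset W$ corresponds to a Lagrangian algebra $L(V)$ in $\Rep(W)=Z(\mathcal F)$. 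Under Theorem~4.10 of \cite{DMNO13}, condensable algebras of $Z(\mathcal F)$ are in bijection with fusion subcategories of $\mathcal F$; the Lagrangian ones correspond to $\mathcal F$ itself (via the canonical Lagrangian algebra whose condensation recovers $\Vect$), which is consistent with $W=V^{\mathcal F}$.

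Next I would set up the Galois connection. Given an intermediate \strat{} \voa{} $U$ with $W\subset U\subset V$, \autoref{prop:CMSY} says $U$ defines a condensable algebra $A(U)$ in $\Rep(W)$ contained in (a subalgebra of) $L(V)$, and conversely every such condensable algebra gives such a $U$ (using positivity / $\N$-gradedness as in \autoref{prop:algext}). Restricting the \cite{DMNO13} bijection to condensable subalgebras of the fixed Lagrangian algebra $L(V)$, these are matched with fusion subcategories $\mathcal G$ of $\mathcal F$. I would then check that the two maps in \eqref{eqn:galoisconnection}, namely $U\mapsto\operatorname{Ver}(V/U)$ and $\mathcal G\mapsto V^{\mathcal G}$, realize this bijection concretely: $\operatorname{Ver}(V/U)$ is by definition the category of $L(V)$-modules inside $\Rep(U)$, equivalently (after identifying $\Rep(U)$ with the condensation $\Rep(W)_{A(U)}^{\loc}$) the fusion subcategory of $\mathcal F$ cut out by $A(U)$; and $V^{\mathcal G}$ is the conformal subalgebra of $V$ commuting with the lines of $\mathcal G$, which by symmetry-subalgebra duality sits between $W=V^{\mathcal F}$ and $V=V^{\mathbf 1}$. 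Order-reversal is immediate: a larger $U$ imposes a stronger commutation constraint on lines, hence $\operatorname{Ver}(V/U)$ shrinks, and dually a larger $\mathcal G$ forces $V^{\mathcal G}$ to shrink; so the connection is antitone. That the two composites are the identity is exactly $V^{\operatorname{Ver}(V/U)}=U$ and $\operatorname{Ver}(V/V^{\mathcal G})=\mathcal G$, which is the content of symmetry-subalgebra duality applied to the holomorphic \voa{} $V$ (note $V$ being holomorphic is what makes $\operatorname{Ver}(V/U)$ and $Z(\operatorname{Ver}(V/U))\cong\Rep(U)$ behave as required, since $\mathcal C_{\mathrm{an}}=\Vect$).

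Finally, for the dimension formula I would use the definition $[V:U]=\operatorname{FPdim}(L(V))$ as the Frobenius--Perron dimension of the Lagrangian algebra $L(V)$ in $\Rep(U)$ (this is equation~(4.1) of \cite{DJX13}), together with the fact that $\operatorname{Ver}(V/U)=\Rep(U)_{L(V)}$ is the category of $L(V)$-modules, whose total Frobenius--Perron dimension satisfies $\operatorname{FPdim}(\operatorname{Ver}(V/U))=\operatorname{FPdim}(\Rep(U))/\operatorname{FPdim}(L(V))$. Since $\Rep(U)$ is a modular tensor category with $\operatorname{FPdim}(\Rep(U))=\operatorname{FPdim}(L(V))^2$ (the Lagrangian condition), this collapses to $\operatorname{FPdim}(\operatorname{Ver}(V/U))=\operatorname{FPdim}(L(V))=[V:U]$; combined with the identification of $\operatorname{Ver}(V/U)$ with the fusion subcategory $\mathcal G$ corresponding to $U$ under \cite[Thm.~4.10]{DMNO13}, this gives $[V:U]=\operatorname{FPdim}(\mathcal G)$.

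\textbf{Main obstacle.} The conceptual steps are all in place; the part requiring genuine care is the precise dictionary between "condensable subalgebras of $L(V)$ inside $\Rep(W)$" and "fusion subcategories of $\mathcal F$", and checking that this dictionary is implemented by the maps \eqref{eqn:galoisconnection} rather than by some twist or opposite of them — i.e., matching the \cite{DMNO13} bijection (which is naturally stated for $Z(\mathcal F)$ and its subcategories $Z(\mathcal F)_{\mathcal G}$) with the \voa{}-theoretic $\operatorname{Ver}(V/-)$ and $V^{(-)}$ operations, and verifying the restriction to subalgebras of the fixed Lagrangian algebra is compatible on both sides. Since this is a physics-level (Proposition$^{\mathrm{ph}}$) statement, I would present this matching at the level of rigor of the surrounding discussion, flagging that the fully rigorous version would require spelling out symmetry-subalgebra duality as a precise theorem for \strat{} \voa{}s.
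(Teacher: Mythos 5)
Your proposal matches the paper's own (physics-level) argument essentially step for step: the paper likewise assembles the statement from symmetry-subalgebra duality (giving $\Rep(W)\cong Z(\mathcal{F})$ with $\mathcal{F}=\operatorname{Ver}(V/W)$), the bijection of Theorem~4.10 of \cite{DMNO13} between condensable algebras of $Z(\mathcal{F})$ and fusion subcategories of $\mathcal{F}$, the correspondence between such algebras and intermediate extensions $W\subset U\subset V$, and the identification $[V:U]=\operatorname{FPDim}(L(V))$ from equation~(4.1) of \cite{DJX13}. The only cosmetic difference is that you derive $\operatorname{FPDim}(\operatorname{Ver}(V/U))=\operatorname{FPDim}(L(V))$ from the quotient formula for the module category together with the Lagrangian condition, whereas the paper attributes that equality directly to Theorem~4.10 of \cite{DMNO13}; both routes use the same facts.
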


%%%%%%%%%%%%%%%%%%%%%%%%%%%%%%%

\subsection{Classifying Generalized Symmetries of RCFTs}\label{subsec:classifyingsymmetries}

Before moving on, we explain how information about the classification of RCFTs within a Witt class can often be translated into information about the symmetries of the individual members of the Witt class.

Consider an RCFT ${_V}\mathcal{H}_W$ that admits an action of a fusion category $\mathcal{F}$. Further assume that the chiral algebras $V$ and $W$ are taken to be maximal. The action of $\mathcal{F}$ on the full Hilbert space $\mathcal{H}$ then restricts to an action of $\mathcal{F}$ on the left-moving chiral algebra $V$, and also to an action on the right-moving chiral algebra $W$. In particular, we may form the subalgebra $X\coloneqq V^{\mathcal{F}}\subset V$ of operators in $V$ that commute with the topological lines in $\mathcal{F}$, and also the subalgebra $Y\coloneqq W^{\mathcal{F}}\subset W$. These subalgebras are expected to be \strat{}.

The total Hilbert space $\mathcal{H}$ of the CFT, which defines a Lagrangian algebra of $\Rep(V)\boxtimes\overline{\Rep(W)}$, can also be thought of as a Lagrangian algebra over $\Rep(X)\boxtimes\overline{\Rep(Y)}$.\footnote{Indeed, because $X$ and $Y$ are conformal subalgebras of $V$ and $W$, respectively, it follows that $\Rep(V)\boxtimes \overline{\Rep(W)}$ can be obtained via anyon condensation from $\Rep(V^{\mathcal{F}})\boxtimes\overline{\Rep(W^{\mathcal{F}})}$. By Proposition~2.3.2 of \cite{Dav10}, a condensable algebra in a modular tensor category obtained by anyon condensation can be thought of as a condensable algebra in the modular tensor category before anyon condensation. In particular, $\mathcal{H}$ may be thought of as a Lagrangian algebra in $\Rep(V^{\mathcal{F}})\boxtimes\overline{\Rep(W^{\mathcal{F}})}$.} More physically, we may think of the RCFT  ${_V}\mathcal{H}_W$ as being built on top of the smaller chiral algebras $X$ and $Y$, i.e.\ ${_V}\mathcal{H}_W\cong {_X}\mathcal{H}_Y$.

Recall that that the category of Verlinde lines of an RCFT is by definition the category of topological line operators of the theory that commute with the left- and right-moving chiral algebras. In the present situation, the category of Verlinde lines of ${_X}\mathcal{H}_Y$ is 
\begin{equation*}
\operatorname{Ver}(\mathcal{H}/X,Y)\coloneqq  \big(\Rep(X)\boxtimes\overline{\Rep(Y)}\big)_{\mathcal{H}},
\end{equation*}
i.e.\ the category of $\mathcal{H}$-modules inside $\Rep(X)\boxtimes\overline{\Rep(Y)}$. 
By construction, the symmetry category $\mathcal{F}$ commutes with the chiral algebras $X=V^{\mathcal{F}}$ and $Y=W^{\mathcal{F}}$, and therefore, it must embed as a subcategory of $\operatorname{Ver}(\mathcal{H}/X,Y)$. 

Now, suppose we know the Witt classes of $V$ and $W$. In particular, this means that we have a list of \strat{} conformal subalgebras of $V$ and $W$; therefore, we have a menu of putative options for what the fixed point subalgebras $X=V^{\mathcal{F}}$ and $Y=W^{\mathcal{F}}$ could be, and hence, a menu of Verlinde categories into which an arbitrary $\mathcal{F}$ must be able to embed. Thus, we have the following expectation:
\begin{propph}\label{prop:rcftsymmetries}
Let ${_V}\mathcal{H}_W$ be an RCFT. Any fusion category $\mathcal{F}$ that acts on ${_V}\mathcal{H}_W$ by symmetries occurs as a fusion subcategory of a Verlinde category of the form $\operatorname{Ver}(\mathcal{H}/X,Y)$ for some \strat{} conformal subalgebras $X\subset V$ and $Y\subset W$.
\end{propph}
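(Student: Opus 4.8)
The plan is to implement the physical argument sketched just above the statement, making the three-dimensional ``symmetry TQFT'' picture the organizing principle. First I would set up the data: given an RCFT ${_V}\mathcal{H}_W$ with $V$ and $W$ taken to be the maximal left- and right-moving chiral algebras, and given a fusion category $\mathcal{F}$ acting on ${_V}\mathcal{H}_W$ by (generalized) global symmetries, the action on the full Hilbert space $\mathcal{H}$ restricts to actions of $\mathcal{F}$ on $V$ and on $W$ separately (the topological lines commute with the stress tensor and hence preserve the holomorphic and antiholomorphic subsectors). This lets me define $X \coloneqq V^{\mathcal{F}}$ and $Y \coloneqq W^{\mathcal{F}}$, the conformal subalgebras of operators transparent to the lines in $\mathcal{F}$, which I would argue (citing the expected regularity of fixed-point subalgebras, as in \cite{Miy15,CM16,AR}) are again \strat{}.

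Next I would argue that $\mathcal{H}$, which by hypothesis is a Lagrangian algebra in $\Rep(V)\boxtimes\overline{\Rep(W)}$, can equivalently be regarded as a Lagrangian algebra in $\Rep(X)\boxtimes\overline{\Rep(Y)}$. The mechanism is anyon condensation: since $X\subset V$ and $Y\subset W$ are conformal extensions, $\Rep(V)\boxtimes\overline{\Rep(W)}$ is obtained from $\Rep(X)\boxtimes\overline{\Rep(Y)}$ by condensing a condensable algebra (Propositions~\ref{prop:CMSY}, \ref{prop:algext}), and a condensable/Lagrangian algebra in the condensed category pulls back to one in the original category by Proposition~2.3.2 of \cite{Dav10}. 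Concretely this realizes the CFT as ${_V}\mathcal{H}_W\cong{_X}\mathcal{H}_Y$, built directly on the smaller chiral algebras $X$ and $Y$. The Verlinde category of this presentation is $\operatorname{Ver}(\mathcal{H}/X,Y)=\bigl(\Rep(X)\boxtimes\overline{\Rep(Y)}\bigr)_{\mathcal{H}}$.

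Then comes the crux: I would show $\mathcal{F}$ embeds as a fusion subcategory of $\operatorname{Ver}(\mathcal{H}/X,Y)$. By construction every line in $\mathcal{F}$ commutes with all local operators in $X=V^{\mathcal{F}}$ and in $Y=W^{\mathcal{F}}$, hence with the full chiral algebras of the presentation ${_X}\mathcal{H}_Y$; since the Verlinde category is \emph{by definition} the category of all topological line operators commuting with the left- and right-moving chiral algebras, $\mathcal{F}$ sits inside it. Making this fully rigorous at the level of module categories — identifying ``commutes with $X$ and $Y$'' with ``is an $\mathcal{H}$-module in $\Rep(X)\boxtimes\overline{\Rep(Y)}$'', and checking the embedding is monoidal — is where I expect the real work to be; this is also the step that is currently only at a physics level of rigor, which is why the statement is flagged as Proposition$^{\mathrm{ph}}$. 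Finally, since one ranges over all fusion categories $\mathcal{F}$ acting on ${_V}\mathcal{H}_W$, and each produces \emph{some} pair of \strat{} conformal subalgebras $X\subset V$, $Y\subset W$ of the above form, the statement follows: any such $\mathcal{F}$ occurs as a fusion subcategory of a Verlinde category $\operatorname{Ver}(\mathcal{H}/X,Y)$ for suitable $X,Y$. The main obstacle is the third step — controlling the passage between the ``physical'' description of $\mathcal{F}$ as symmetry lines and the ``categorical'' description as $\mathcal{H}$-modules, i.e.\ making the symmetry TQFT sandwich argument precise — and a secondary technical point is ensuring the fixed-point subalgebras $X$ and $Y$ are genuinely \strat{} when $\mathcal{F}$ is non-invertible, where one would appeal to \cite{AR}.
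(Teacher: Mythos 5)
Your proposal follows essentially the same route as the paper: restrict the $\mathcal{F}$-action to the maximal chiral algebras, pass to the fixed-point subalgebras $X=V^{\mathcal{F}}$ and $Y=W^{\mathcal{F}}$, reinterpret $\mathcal{H}$ as a Lagrangian algebra over $\Rep(X)\boxtimes\overline{\Rep(Y)}$ via the same anyon-condensation argument (and the same citation of Proposition~2.3.2 of \cite{Dav10}), and conclude that $\mathcal{F}$ embeds into $\operatorname{Ver}(\mathcal{H}/X,Y)$ because its lines commute with $X$ and $Y$ by construction. Your explicit flagging of where the physics-level-of-rigor gap lies is a fair and accurate assessment of why the statement carries the superscript ``ph''.
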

We will see that this kind of reasoning can be very effectively applied to $c=1$ (unitary) RCFTs to classify their global symmetries.

Finally, we note that, because a \strat{} \voa{} with $c\geq 1$ is often expected to have infinitely many \strat{} conformal subalgebras, \autoref{prop:rcftsymmetries} sets the complementary expectation that an RCFT with $c_L,c_R\geq 1$ should often have infinitely many global symmetries.

%%%%%%%%%%%%%%%%%%%%%%%%%%%%%%%

\subsection{Examples}\label{subsec:symwittex}

We next provide some examples illustrating the connection between Witt classes and global symmetries.

\begin{exph}[A Simple Example of Symmetry-Subalgebra Duality]
In this example, we describe the duality between a $\mathbb{Z}_2$ symmetry and a conformal subalgebra of $\mathsf{E}_{8,1}$.

Consider the current algebra $\mathsf{D}_{8,1}$. By virtue of the fact that $D_8=\mathfrak{so}_{16}$ is a Lie subalgebra of $E_8$, there is a conformal embedding of $\mathsf{D}_{8,1}$ into $\mathsf{E}_{8,1}$. One can associate a symmetry to this subalgebra as follows. By restriction, $\mathsf{E}_{8,1}$ decomposes into irreducible modules of $\mathsf{D}_{8,1}$ as 
\begin{equation}\label{eqn:E8decomp}
\mathsf{E}_{8,1}\cong \mathsf{D}_{8,1}\oplus \mathsf{D}_{8,1}(s),
\end{equation}
where $\mathsf{D}_{8,1}(s)$ is the spinor representation of $\mathsf{D}_{8,1}$. This decomposition naturally defines a $\Z_2$-symmetry of $\mathsf{E}_{8,1}$, which acts as $+1$ on states in $\mathsf{D}_{8,1}$ and as $-1$ on states in $\mathsf{D}_{8,1}(s)$, and we claim that $\operatorname{Ver}(\mathsf{E}_{8,1}/\mathsf{D}_{8,1})\cong \Vect_{\mathbb{Z}_2}$ describes precisely this $\Z_2$-symmetry. 

In the other direction, one may associate a conformal subalgebra to this $\mathbb{Z}_2$ symmetry by observing that the fixed-point subalgebra $\mathsf{E}_{8,1}^{\Z_2}$ is isomorphic to $\mathsf{D}_{8,1}$. 

From the $3$-dimensional perspective afforded by \autoref{fig:symTFT}, $\mathsf{D}_{8,1}$ has the toric code (i.e.\ $\Z_2$-gauge theory) as its bulk TQFT. Labeling the anyons of the toric code as $\mathbf{1},e,m,f$, the following identification with modules of $\mathsf{D}_{8,1}$ holds,
\begin{equation*}
\mathbf{1}\leftrightarrow\mathsf{D}_{8,1}, \ \ e\leftrightarrow\mathsf{D}_{8,1}(s), \ \ m \leftrightarrow \mathsf{D}_{8,1}(c), \ \ f\leftrightarrow \mathsf{D}_{8,1}(v),
\end{equation*}
where $\mathsf{D}_{8,1}(c)$ and $\mathsf{D}_{8,1}(v)$ are the conjugate spinor and vector representations, respectively. The decomposition in equation~\eqref{eqn:E8decomp} can thus be interpreted as defining the Lagrangian algebra $L(\mathsf{E}_{8,1})=\mathbf{1}\oplus e$ of the toric code. This Lagrangian algebra corresponds to the ``electric'' boundary condition of the toric code, which is known to support (non-anomalous) $\Z_2$-symmetry lines on its worldvolume; this furnishes another way to see that $\operatorname{Ver}(\mathsf{E}_{8,1}/\mathsf{D}_{8,1}) = \Vect_{\Z_2}$. If one compactifies the toric code on an interval with the topological boundary condition defined by $\mathbf{1}\oplus e$ on one end and the $\mathsf{D}_{8,1}$ chiral gapless boundary condition on the other, then one obtains the theory $\mathsf{E}_{8,1}$. 
\end{exph}

\begin{exph}[Generalized Symmetries of Chiral $\mathsf{E}_{8,1}$]
Generalizing the previous example, consider the problem of classifying all positive, \strat{} \voa{}s in the Witt class $([\Vect],8)$. This is equivalent to classifying positive, \strat{} \voa{}s in all bulk genera of the form $(Z(\mathcal{F}),8)$, where $\mathcal{F}$ is a spherical fusion category. 
By the logic of \autoref{prop:Wittgenusfromsymmetries}, noting that $\mathsf{E}_{8,1}$ is the unique \strat{}, holomorphic \voa{} with central charge $c=8$, this problem in turn is essentially equivalent to classifying all actions of spherical fusion categories $\mathcal{F}$ on $\mathsf{E}_{8,1}$ and enumerating the corresponding fixed-point subalgebras. 

It is natural to organize this problem by the rank of $\mathcal{F}$. For example, it is solved when $\rk(\mathcal{F})=2$ \cite{Ray23}, where it is known that the positive\footnote{In fact, by the results of \cite{MR24a}, there are no non-positive \voa{}s in these bulk genera.} \voa{}s in the corresponding bulk genera are 
\begin{equation*}
(Z(\Vect_{\Z_2}),8) = \{ \mathsf{D}_{8,1} \},\;(Z(\Vect_{\Z_2}^\omega),8) = \{ \mathsf{A}_{1,1}\mathsf{E}_{7,1}\},\;(Z(\mathrm{Fib}),8) = \{\mathsf{G}_{2,1}\mathsf{F}_{4,1}\}.
\end{equation*}
Many more examples are known; for example, see \cite{BKN22} for work on the case where $\mathcal{F}$ is a Tambara--Yamagami category \cite{TY98}.

There has been ongoing work to identify a \strat{}, $c=8$ \voa{} $V$ that has the Drinfeld center $Z(\mathrm{Hg})$ of the Haagerup fusion category $\mathrm{Hg}$ as its representation category, i.e.\ to show that the symbol $(\mathcal{C},c)=(Z(\mathrm{Hg}),8)$ corresponds to a valid bulk genus, and to construct a \voa{} representative of this genus. For example, it is known that if such a theory exists, then its character vector $\smash{\ch_{V(\alpha)}(\tau)=\tr_{V(\alpha)}q^{L_0-1/3}}$ must be one of two options, calculated in \cite{EG11}. 
From \autoref{prop:Wittgenusfromsymmetries}, we learn that the existence of such a \voa{} $V$ would imply that $\mathsf{E}_{8,1}$ has a $\mathrm{Hg}$ symmetry satisfying $\mathsf{E}_{8,1}^{\mathrm{Hg}}\cong V$. 
\end{exph}

\begin{exph}[Fibonacci Symmetry of the Monster CFT and Quantum Galois Theory]\label{exph:fibmonster}
As another example, we use symmetry-subalgebra duality and quantum Galois theory to study non-invertible symmetries of the moonshine module $V^\natural$. This chiral CFT is celebrated for its invertible symmetries, which furnish the simple sporadic monster group, $\smash{\Aut(V^\natural)}\cong \mathbb{M}$. However, the full category of topological line defects contains (infinitely) many non-invertible symmetries as well. See \cite{LS21,Vol24}, for examples of Tambara-Yamagami categories \cite{TY98} acting on $V^\natural$. 

Here, we investigate actions of the unitary Fibonacci category $\mathrm{Fib}$ on $V^\natural$.\footnote{While this work was being completed, the paper \cite{FH24} appeared on the arXiv, which has overlapping content.} By symmetry-subalgebra duality, if such a symmetry exists, the operators of the monster CFT that are transparent to the Fibonacci line participate in a \strat{} conformal subalgebra $W\coloneqq \smash{(V^\natural)^{\mathrm{Fib}}}$ with $\Rep(W)=Z(\mathrm{Fib})\cong (G_2,1)\boxtimes (F_4,1)$. Conversely, we may prove the existence of a Fibonacci symmetry by finding such a conformal subalgebra $W$. It follows from the results of \cite{Ray23} (see Section~E.4.18) that there are only two possibilities for the character vector $\Ch_W$ of such a putative \voa{} $W$ if it is to satisfy positivity (and hence, only two possibilities for the Fibonacci twisted sector partition functions of $V^\natural$):
\begin{equation}\label{eqn:firstoption}
\Ch_W(\tau) =
\begin{pmatrix*}[l]
q^{-1}+59045 q+6092960 q^2+241395800 q^3
+\dots \\
8671 q^{3/5}+1804672 q^{8/5}+97010251 q^{13/5}
+\dots \\
q^{-3/5}+1568 q^{2/5}+672918 q^{7/5}+45506688 q^{12/5}+\dots \\
137839 q+15400800 q^2+622904170 q^3+\dots \\
\end{pmatrix*}
\end{equation}
or
\begin{equation}\label{eqn:secondoption}
\Ch_W(\tau) =
\begin{pmatrix*}[l]
q^{-1}+55920 q+5961710 q^2+239095800 q^3
+\dots \\
2 q^{-2/5}+7299 q^{3/5}+1731552 q^{8/5}+95550116 q^{13/5}+\dots \\
1938 q^{2/5}+693294 q^{7/5}+45905184 q^{12/5}
+\dots \\
140964 q+15532050 q^2+625204170 q^3+14648672160 q^4+\dots \\
\end{pmatrix*}.
\end{equation}
We shall show that both options are realized by conformal subalgebras of the moonshine module, and hence that there are at least two Fibonacci topological line defects in $V^\natural$ that cannot be conjugated into each other using invertible symmetries. It would be interesting to prove that these two are the only Fibonacci actions on $V^\natural$ up to conjugation.

\smallskip

\noindent\emph{Option 1}. First, let us describe how to construct a Fibonacci action of the moonshine module corresponding to \eqref{eqn:firstoption}.

In \cite{HLY12}, it was shown that every element of the 3A conjugacy class of $\mathbb{M}$ fixes a $\Z_3$-parafermion \voa{} $\mathcal{P}(3) = L(\sfrac45)\oplus L(\sfrac45,3)\subset \smash{V^\natural}$ (see \cite[\autoref*{MR1ex:para}]{MR24a} for the definition). Also, the commutant $\smash{\textsl{VF}_{24}^\natural=V^\natural/\mathcal{P}(3)}$ was shown to possess an action of the Fischer sporadic group $\mathrm{Fi}_{24}$. The tensor product $\smash{\mathcal{P}(3)\otimes \textsl{VF}_{24}^\natural}$ occurs as a \strat{} conformal subalgebra of $\smash{V^\natural}$ with representation category given by $\smash{\Rep(\mathcal{P}(3)\otimes\textsl{VF}_{24}^\natural)\cong Z(\Vect_{\Z_3}\boxtimes \mathrm{Fib})}$, and $\smash{V^\natural}$ is obtained from $\smash{\mathcal{P}(3)\otimes\textsl{VF}_{24}^\natural}$ by condensing the canonical Lagrangian algebra corresponding to the Dirichlet boundary condition. Thus, the subalgebra $\smash{\mathcal{P}(3)\otimes \textsl{VF}_{24}^\natural\subset V^\natural}$ defines a $\smash{\operatorname{Ver}(V^\natural/\mathcal{P}(3)\otimes\textsl{VF}_{24}^\natural)}=\Vect_{\Z_3}\boxtimes \mathrm{Fib}$ symmetry of $\smash{V^\natural}$, with the $\Vect_{\Z_3}$ living in the 3A conjugacy class of $\mathbb{M}$.

Let us study the predictions of quantum Galois theory. \autoref{propph:galois} says that the \vosa{}s of $\smash{V^\natural}$ containing $\smash{\mathcal{P}(3)\otimes \textsl{VF}_{24}^\natural}$ are in bijection with fusion subcategories of $\Vect_{\Z_3}\boxtimes\mathrm{Fib}$. Besides $\Vect_{\Z_3}\boxtimes\mathrm{Fib}$ and $\Vect$, which correspond to $\smash{\mathcal{P}(3)\otimes \textsl{VF}_{24}^\natural}$ and $\smash{V^\natural}$, respectively, there are the fusion subcategories $\Vect_{\Z_3}$ and $\mathrm{Fib}$. We call the corresponding \vosa{}s predicted by \autoref{propph:galois} $U$ and $W$, respectively. There are thus the following Hasse diagrams:
\begin{equation*}
\begin{tikzcd}
& V^\natural \\
U\arrow[hookrightarrow]{ur}&& W\arrow[hookrightarrow]{ul}\\
& \mathcal{P}(3)\otimes \textsl{VF}_{24}^\natural \arrow[hookrightarrow]{ul}\arrow[hookrightarrow]{ur}
\end{tikzcd}\hspace{.2in}\leftrightarrow\hspace{.2in}
\begin{tikzcd}
& \Vect\arrow[hookrightarrow]{dl}\arrow[hookrightarrow]{dr} \\
\Vect_{\Z_3}\arrow[hookrightarrow]{dr}&& \mathrm{Fib}\arrow[hookrightarrow]{dl}\\
& \Vect_{\Z_3}\boxtimes \mathrm{Fib} 
\end{tikzcd}
\end{equation*}

Because $U$ and $W$ contain $\smash{\mathcal{P}(3)\otimes\textsl{VF}_{24}^\natural}$ as a conformal subalgebra, we may decompose them into modules for it. For example, $U$ decomposes as
\begin{equation*}
U\cong \mathcal{P}(3)\otimes \textsl{VF}_{24}^\natural \oplus \mathcal{P}(3,[2,0])\otimes \textsl{VF}_{24}^\natural(2,0),
\end{equation*}
where we have used the notation of \cite{BHLLR21} to label the modules of $\mathcal{P}(3)$ and $\smash{\textsl{VF}_{24}^\natural}$. Let us do the same for $W$ in more detail. We may obtain it and its irreducible modules as simple-current extensions of $\smash{\mathcal{P}(3)\otimes\textsl{VF}_{24}^\natural}$, 
\begin{equation}\label{eqn:Wmodules}
\begin{split}
W_{(\mathbf{1},\mathbf{1})} &= \mathcal{P}(3)  \textsl{VF}_{24}^\natural \oplus \mathcal{P}(3,[3,1])  \textsl{VF}_{24}^\natural(3,1)\oplus \mathcal{P}(3,[3,-1])  \textsl{VF}_{24}^\natural(3,-1),\\
W_{(\mathbf{1},\tilde\tau)} &= \mathcal{P}(3)  \textsl{VF}_{24}^\natural(2,0)\oplus \mathcal{P}(3,[3,1])  \textsl{VF}_{24}^\natural(1,1)\oplus \mathcal{P}(3,[3,-1])  \textsl{VF}_{24}^\natural(2,2),\\
W_{(\tau,\mathbf{1})} &= \mathcal{P}(3,[2,0])   \textsl{VF}_{24}^\natural \oplus \mathcal{P}(3,[1,1])  \textsl{VF}_{24}^\natural(3,1) \oplus \mathcal{P}(3,[2,2])  \textsl{VF}_{24}^\natural(3,-1),\\
W_{(\tau,\tilde\tau)} &= \mathcal{P}(3,[2,0])\textsl{VF}_{24}^\natural(2,0) \!\oplus\! \mathcal{P}(3,[1,1])\textsl{VF}_{24}^\natural(1,1) \!\oplus\! \mathcal{P}(3,[2,2]) \textsl{VF}_{24}^\natural(2,2),
\end{split}
\end{equation}
where to avoid clutter we have suppressed the symbol $\otimes$. Here, we are labeling the irreducible modules of $W=W_{(\mathbf{1},\mathbf{1})}$ by the simple objects $(\mathbf{1},\mathbf{1})$, $(\mathbf{1},\tilde\tau)$, $(\tau,\mathbf{1})$ and $(\tau,\tilde\tau)$ of $\Rep(W)\cong Z(\mathrm{Fib})\cong (G_2,1)\boxtimes (F_4,1)$, where $\tau$ and $\tilde\tau$ are the non-trivial simple objects of $(G_2,1)$ and $(F_4,1)$, respectively. 
One may indeed check that the character vector of this $W$ matches equation~\eqref{eqn:firstoption} by using equations (3.20) and (3.25) of \cite{BHLLR21}.\footnote{We note a typo in the transcription of the parafermion characters in version~$2$ of \cite{BHLLR21} on the arXiv and in the published version. The correct characters are $\psi^{(3)}_{1,1}(\tau) = \psi^{(3)}_{2,2}(\tau) = q^{1/30}(1+q+ 2q^2 +3q^3 +5q^4+ 7q^5+\dots)$ and $\psi_{2,0}^{(3)}(\tau) = q^{11/30}(1+2q+2q^2+4q^3+5q^4+8q^5+\dots)$.}

The characters of $V^\natural$ twisted by the Fibonacci symmetry corresponding to $W$ can be computed in terms of the ordinary characters of $W$, which can in turn be computed in terms of the characters of $\mathcal{P}(3)$ and $\textsl{VF}_{24}^\natural$ using the identifications in equation~\eqref{eqn:Wmodules}.  For example, using $\smash{\hat{\mathcal{L}}}$ to denote the operator acting on $\smash{V^\natural}$ corresponding to the non-trivial line $\mathcal{L}$ in the Fibonacci category $\mathrm{Fib}$, one obtains
\begin{equation}\label{eqn:fiblineop}
\begin{split}
\tr_{V^\natural}\hat{\mathcal{L}}q^{L_0-1} &= \varphi \ch_{W_{(\mathbf{1},\mathbf{1})}}(\tau)-\frac{1}{\varphi} \ch_{W_{(\tau,\tilde\tau)}}(\tau),
\end{split}
\end{equation}
where $\varphi=(1+\sqrt{5})/2$. One may then compute the $\mathcal{L}$-twisted sector $\smash{V^\natural_{\mathcal{L}}}$ by taking a modular $S$-transformation of equation~\eqref{eqn:fiblineop}. Alternatively, one may proceed abstractly by noting that $\smash{V^\natural}$ defines a Lagrangian algebra $L$ inside $\Rep(W)$, and the category $\smash{\operatorname{Ver}(V^\natural/W)\cong \mathrm{Fib}}$ is given by $\Rep(W)_L$, the category of $L$-modules inside $\Rep(W)$. It is known that there are precisely two irreducible $L$-modules, $L$ and $M$, in $\Rep(W)$, corresponding to the trivial line and the Fibonacci line $\mathcal{L}$, respectively. They decompose as simple objects of $\Rep(W)$ as 
\begin{equation*}
L\cong W_{(\mathbf{1},\mathbf{1})}\oplus W_{(\tau,\tilde\tau)}, \ \ M\cong W_{(\mathbf1,\tilde\tau)}\oplus W_{(\tau,\mathbf{1})}\oplus W_{(\tau,\tilde\tau)}.
\end{equation*}
In particular, the second decomposition reveals that the graded dimension of the $\mathcal{L}$-twisted sector in $V^\natural$ is 
\begin{equation*}
\tr_{V^\natural_{\mathcal{L}}}q^{L_0-1}=\ch_{W_{(\mathbf{1},\tilde\tau)}}(\tau) + \ch_{W_{(\tau,\mathbf{1})}}(\tau)+\ch_{W_{(\tau,\tilde\tau)}}(\tau).
\end{equation*}
See \cite{CRZ24} for details on how to obtain more involved genus-1 partition functions that encode the action of Fibonacci symmetry in $V^\natural$ and in other 2d CFTs. (In the language of op.\ cit., the characters $\Ch_W(\tau)$ are the ``representation basis'' partition functions of $V^\natural$ with respect to a putative Fibonacci symmetry.)

We mention that this \voa{} $W\subset V^\natural$ with $\Rep(W)\cong Z(\mathrm{Fib})$ and character \eqref{eqn:firstoption} has been independently discovered in unpublished work \cite{CLY24}. There, it appears as an intermediate step in a ``non-classical'' construction of the moonshine module $V^\natural$.

\smallskip

\noindent\emph{Option 2}. 
It is also possible to realize a Fibonacci symmetry of $V^\natural$ whose corresponding conformal subalgebra has the character vector given by \eqref{eqn:secondoption}.

Let $U_{\mathrm{2A}}$ be the unique simple-current extension of the simple Virasoro \voa{}s $L(\sfrac12)\otimes L(\sfrac{7}{10})$ (cf.\ \autoref{ex:discretewitt}) that decomposes as 
\begin{equation*}
U_{\mathrm{2A}}=L(\sfrac12)\otimes L(\sfrac{7}{10})\oplus L(\sfrac12,\sfrac12)\otimes L(\sfrac{7}{10},\sfrac32).
\end{equation*}
The representation theory of this \strat{} \voa{} was studied in \cite{LY00}, and one can infer from op.\ cit.\ that 
\begin{equation*}
\Rep(U_{\mathrm{2A}})\cong (F_4,1)\boxtimes (D_4,1).
\end{equation*}
As a fusion category, i.e.\ forgetting the braiding and twist, one can verify that $\Rep(U_{\mathrm{2A}})\cong \mathrm{Fib} \boxtimes \Vect_{\mathbb{Z}_2\times\mathbb{Z}_2}$.

It is known that $U_{\mathrm{2A}}$ embeds primitively into $V^\natural$, so that its commutant $\tilde{U}_{\mathrm{2A}}= \Com_{V^\natural}(U_{\mathrm{2A}})$ has representation category given by $\Rep(\tilde{U}_{\mathrm{2A}})\cong \overline{\Rep(U_{\mathrm{2A}})}$. In equation~(3.55) of \cite{BHLLR21}, the character vector of $\tilde{U}_{\mathrm{2A}}$ was computed, and it was argued that $\tilde{U}_{\mathrm{2A}}$ admits an action of ${^2}\!E_6(2).2$ by symmetries, with the Steinberg subgroup ${^2}\!E_6(2)$ acting by inner automorphisms.

Now, the \voa{} $U_{\mathrm{2A}}\otimes \tilde{U}_{\mathrm{2A}}$ sits as a conformal subalgebra inside $V^\natural$ and induces an action of $\mathrm{Ver}(V^\natural/U_{\mathrm{2A}}\otimes \tilde{U}_{\mathrm{2A}})\cong \mathrm{Fib}\boxtimes \Vect_{\mathbb{Z}_2\times\mathbb{Z}_2}$ by symmetries.\footnote{It is known that each of the three non-identity involutions in the factor $\Vect_{\mathbb{Z}_2\times\mathbb{Z}_2}$ live in the 2A conjugacy class of $\mathrm{Aut}(V^\natural)\cong\mathbb{M}$, where we follow the conventions of \cite{CCNPW85} in labeling conjugacy classes of the monster group.} Galois theory then predicts that intermediate subalgebras sitting in between $U_{\mathrm{2A}}\otimes \tilde{U}_{\mathrm{2A}}$ and $V^\natural$ are in correspondence with fusion subcategories of $\mathrm{Ver}(V^\natural/U_{\mathrm{2A}}\otimes \tilde{U}_{\mathrm{2A}})$. In particular, one such fusion subcategory is $\mathrm{Fib}$, and we claim that the corresponding \vosa{}, $W=(V^\natural)^{\mathrm{Fib}}$, is a simple-current extension of $U_{\mathrm{2A}}\otimes \tilde{U}_{\mathrm{2A}}$, with modules decomposing as
\begin{equation}\label{eqn:secondFibsubalgebra}
\begin{split}
W_{(\mathbf{1},\mathbf{1})}&= U_{\mathrm{2A}} \tilde{U}_{\mathrm{2A}}\oplus U_{\mathrm{2A}}(2)\tilde{U}_{\mathrm{2A}}(2)\oplus U_{\mathrm{2A}}(3)\tilde{U}_{\mathrm{2A}}(3) \oplus U_{\mathrm{2A}}(4)\tilde{U}_{\mathrm{2A}}(4),\\ 
W_{(\mathbf{1},\tilde\tau)}&= U_{\mathrm{2A}}(1)\tilde{U}_{\mathrm{2A}}\oplus U_{\mathrm{2A}}(5)\tilde{U}_{\mathrm{2A}}(2)\oplus U_{\mathrm{2A}}(6)\tilde{U}_{\mathrm{2A}}(3) \oplus U_{\mathrm{2A}}(7)\tilde{U}_{\mathrm{2A}}(4),\\
W_{(\tau,\mathbf{1})}&= U_{\mathrm{2A}} \tilde{U}_{\mathrm{2A}}(1)\oplus U_{\mathrm{2A}}(2)\tilde{U}_{\mathrm{2A}}(5)\oplus U_{\mathrm{2A}}(3)\tilde{U}_{\mathrm{2A}}(6) \oplus U_{\mathrm{2A}}(4)\tilde{U}_{\mathrm{2A}}(7),\\ 
W_{(\tau,\tilde\tau)}&= U_{\mathrm{2A}}(1)\tilde{U}_{\mathrm{2A}}(1)\oplus U_{\mathrm{2A}}(5)\tilde{U}_{\mathrm{2A}}(5)\oplus U_{\mathrm{2A}}(6)\tilde{U}_{\mathrm{2A}}(6) \oplus U_{\mathrm{2A}}(7)\tilde{U}_{\mathrm{2A}}(7),
\end{split}
\end{equation}
where we have labeled modules of $U_{\mathrm{2A}}$ and $\tilde{U}_{\mathrm{2A}}$ following equations (3.47) and (3.55) of \cite{BHLLR21}. It is straightforward to check that the graded dimensions of the modules appearing in \eqref{eqn:secondFibsubalgebra} are compatible with \eqref{eqn:secondoption} to low orders in the $q$-expansion.
\end{exph}

\begin{exph}[Every Finite Symmetry of the $\mathrm{SU}(2)_1$ RCFT Is Invertible]\label{exph:SU(2)1symmetries}
This example is based on \cite{CLR24}. See also \cite{FGRS07,TW24} for previous work on topological line interfaces in $(c_L,c_R)=(1,1)$ CFTs.

Working under the assumption that the conjectural classification of positive, $c=1$, \strat{} chiral algebras spelled out in \autoref{ex:cc1} is complete, we may use \autoref{prop:rcftsymmetries} to argue that every finite (unitary) symmetry category acting on the $\mathrm{SU}(2)_1$ Wess--Zumino--Witten RCFT is invertible. 

We use the notation $\mathcal{H}_{\mathrm{SU}(2)_1}$ to denote the Hilbert space of the theory, which can be thought of as a Lagrangian algebra object in $\Rep(V_{A_1})\boxtimes\smash{\overline{\Rep(V_{A_1})}}$. Our strategy will be to show that the Verlinde categories $\operatorname{Ver}(\mathcal{H}_{\mathrm{SU}(2)_1}/V,W)$ are all invertible, where $V$ and $W$ run over \strat{} conformal subalgebras of $\mathsf{A}_{1,1}$.  Since any arbitrary finite symmetry of $\mathrm{SU}(2)_1$ embeds into one of these Verlinde categories as a fusion subcategory, this shows that any finite symmetry of $\mathrm{SU}(2)_1$ is invertible.

As a lemma, we note that if $V'\subset V$ and $W'\subset W$ are \strat{} conformal subalgebras, then 
\begin{equation*}
\operatorname{Ver}(\mathcal{H}/V,W)\subset \operatorname{Ver}(\mathcal{H}/V',W').
\end{equation*}
This is because the topological lines of an RCFT that commute with the chiral algebras $V$ and $W$ will certainly also commute with the subalgebras $V'$ and $W'$.

From \autoref{ex:cc1}, we learn that every positive, \strat{} conformal subalgebra of $\mathsf{A}_{1,1}$ is (conjecturally) either one of three exceptional cases $V(E_6)$, $V(E_7)$ or $V(E_8)$, or else of the form $V_{\ell A_1}$ or $V_{\ell A_1}^+$ for some $\ell\in\Ns$. From the lemma above, we see that we do not need to consider the case that $V$ or $W$ is $V_{\ell A_1}$ because $V_{\ell A_1}^+$ is contained in it as a subalgebra. 

We just consider the case that $V=V_{\ell A_1}^+$ and $W=V_{\ell' A_1}^+$ for some $\ell$ and $\ell'$; all other cases can be treated analogously. By invoking the lemma again, we find without loss of generality that we can take $\ell=\ell'$ because
\begin{equation*}
\operatorname{Ver}(\mathcal{H}_{\mathrm{SU}(2)_1}/V_{\ell A_1}^+, V_{\ell' A_1}^+) \subset \operatorname{Ver}(\mathcal{H}_{\mathrm{SU}(2)_1}/V_{\ell \ell' A_1}^+, V_{\ell \ell' A_1}^+).
\end{equation*}
Recall that a simple object of a unitary fusion category is invertible if and only if its quantum dimension is $1$. Using this, we can argue that $\operatorname{Ver}(\mathcal{H}_{\mathrm{SU}(2)_1}/V_{\ell A_1}^+,V_{\ell A_1}^+)$ is an invertible category by computing its rank and global dimension and showing that they agree. 

To evaluate its dimension, note that $\operatorname{Ver}(\mathcal{H}_{\mathrm{SU}(2)_1}/V_{\ell A_1}^+,V_{\ell A_1}^+)$ can be thought of as the ``dual symmetry'' obtained by starting with the diagonal RCFT on the orbifold branch built on the chiral algebra $\smash{V^+_{\ell A_1}}$, and gauging the $\smash{\Rep(V^+_{\ell A_1})}$ symmetry. Thus, $\operatorname{Ver}(\mathcal{H}_{\mathrm{SU}(2)_1}/V_{\ell A_1}^+,V_{\ell A_1}^+)$ is Morita equivalent to $\Rep(V^+_{\ell A_1})$, and hence their categorical dimensions agree. In particular, we deduce that
\begin{equation*}
\dim\bigl(\operatorname{Ver}(\mathcal{H}_{\mathrm{SU}(2)_1}/V_{\ell A_1}^+,V_{\ell A_1}^+)\bigr)= \dim\bigl(\Rep(V^+_{\ell A_1})\bigr)=\sum_{i} S_{0i}^2 = 8\ell^2,
\end{equation*}
where the $S$-matrix of $V^+_{\ell A_1}$ can be found in Table~4 of \cite{DVVV89}.

On the other hand, it follows from Claim~3 in \cite{Ost03b} that the rank of the Verlinde category is given by 
\begin{equation*}
\rk\bigl(\operatorname{Ver}(\mathcal{H}_{\mathrm{SU}(2)_1}/V_{\ell A_1}^+,V_{\ell A_1}^+)\bigr)=\sum_{i,j}Z_{ij}^2,
\end{equation*}
where $Z_{ij}$ is the non-negative integer matrix describing the decomposition of the Hilbert space of $\mathrm{SU}(2)_1$ into irreducible representations of $V^+_{\ell A_1}$, i.e.
\begin{equation*}
\mathcal{H}_{\mathrm{SU}(2)_1}\cong \bigoplus_{i,j} Z_{ij} V^+_{\ell A_1}(i) \otimes \overline{V^+_{\ell A_1}(j)}.
\end{equation*}
Let $\tilde\alpha=1/\sqrt{2 k}$. Following \cite{DN99b}, the irreducible modules of $V^+_{\sqrt{k} A_1}$ are 
\begin{equation*}
\bigl\{ V_{\sqrt{k}A_1}^\pm, V_{\sqrt{k}A_1+k\tilde{\alpha}}^\pm\bigr\}\cup\bigl\{ V_{\sqrt{k}A_1+i\tilde{\alpha}}\bigm| i=1,\dots,k-1\bigr\}\cup\bigl\{V_{\sqrt{k}A_1}^{T_i,\pm}\bigm| i=1,2\bigr\},
\end{equation*}
and it is straightforward to determine that $V_{A_1}$ and its non-trivial irreducible module $V_{A_1}(\sfrac14)$ decompose as follows. If $\ell$ is even, then 
\begin{align*}
V_{A_1} &\cong V_{\ell A_1}^+\oplus V_{\ell A_1}^-\oplus V^+_{\ell A_1 + \ell\tilde{\alpha}}\oplus V_{\ell A_1+\ell^2\tilde{\alpha}} \oplus \bigoplus_{j=1}^{\ell/2-1} 2 \cdot V_{\ell A_1+2j\ell\tilde{\alpha}},\\[-3mm]
V_{A_1}(\sfrac14) &\cong \bigoplus_{j=0}^{\ell/2-1} 2\cdot V_{\ell A_1 + (2j+1)\ell\tilde\alpha},
\end{align*}
whereas if $\ell$ is odd, then 
\begin{align*}
V_{A_1}&\cong V_{\ell A_1}^+ \oplus V_{\ell A_1}^- \oplus \bigoplus_{j=1}^{(\ell-1)/2}2\cdot V_{\ell A_1 + 2j\ell\tilde\alpha},\\[-3mm]
V_{A_1}(\sfrac14)&\cong V_{\ell A_1+\ell^2\tilde\alpha}^+ \oplus V_{\ell A_1+\ell^2\tilde\alpha}^-\oplus \bigoplus_{j=0}^{(\ell-3)/2}2\cdot V_{\ell A_1 + (2j+1)\ell\tilde\alpha}.
\end{align*}
Using the fact that 
\begin{equation*}
\mathcal{H}_{\mathrm{SU}(2)_1} \cong V_{A_1}\otimes \overline{V_{A_1}}\oplus V_{A_1}(\sfrac14)\otimes \overline{V_{A_1}(\sfrac14)}
\end{equation*}
we find in the first case that $Z=\operatorname{diag}(Z',Z')$, where $Z'$ is the $(\ell+3)/2\times (\ell+3)/2$ matrix
\begin{equation*}
Z'=
\begin{pmatrix}
1 & 1 & 2 & \cdots & 2 \\
1 & 1 & 2 & \cdots & 2 \\
2 & 2 & 4 & \cdots & 4 \\
\vdots & \vdots & \vdots & \ddots & \vdots \\
2 & 2 & 4 & \cdots & 4
\end{pmatrix}
\end{equation*}
and in the second case that $Z=\operatorname{diag}(Z_1,Z_2)$, where $Z_1$ and $Z_2$ are the $(\ell+6)/2\times (\ell+6)/2$ and $\ell/2\times\ell/2$ matrices
\begin{equation*}
Z_1 =
\begin{pmatrix}
1 & 1 & 1 & 1 & 2 & \cdots & 2 \\
1 & 1 & 1 & 1 & 2 &\cdots & 2 \\
1 & 1 & 1 & 1 & 2 & \cdots & 2 \\
1 & 1 & 1 & 1 & 2 &\cdots & 2 \\
2 & 2 & 2 & 2 & 4 & \cdots & 4 \\
\vdots & \vdots & \vdots & \vdots & \vdots & \ddots & \vdots \\
2 & 2 & 2 & 2 & 4 & \cdots & 4
\end{pmatrix},\quad
Z_2 = 
\begin{pmatrix}
4 & \cdots & 4 \\
\vdots & \ddots & \vdots \\
4 & \cdots & 4
\end{pmatrix}.
\end{equation*}
In both cases, one finds that 
\begin{equation*}
\sum_{i,j} Z_{ij}^2 = 8\ell^2.
\end{equation*}
In particular, the rank and dimension of $\operatorname{Ver}(\mathcal{H}_{\mathrm{SU}(2)_1}/V_{\ell A_1}^+,V_{\ell A_1}^+)$ are equal to one another, so it is an invertible category. 

One can repeat this analysis for the remaining Verlinde categories and conclude that all finite symmetries of $\mathrm{SU}(2)_1$ are invertible, and hence must be finite subgroups of $\Aut(\mathcal{H}_{\mathrm{SU}(2)_1})=\mathrm{SO}(4)$.  
\end{exph}

%%%%%%%%%%%%%%%%%%%%%%%%%%%%%%%
%%%%%%%%%%%%%%%%%%%%%%%%%%%%%%%

\section{Neighborhood}\label{sec:neighborhood}

In this section, we introduce the notion of $n$-neigh\-bor\-hood of \strat{} \voa{}s. The connection between the (bulk) genus and neighborhood will turn out to be less clear than in the lattice context (see \autoref{sec:lat}).

We discuss the holomorphic case in detail in \autoref{subsec:holneigh}, and concrete examples in \autoref{subsec:neighborexamples}. See \cite{Mon98,HM23} for some earlier work for $n=2,3$.

%%%%%%%%%%%%%%%%%%%%%%%%%%%%%%%

\subsection{Neighborhood and Orbifolds}\label{subsec:orbneighborhood}

Inspired by the corresponding lattice definition, \autoref{defi:latneigh}, we shall define the notion of $n$-neigh\-bor\-hood for \voa{}s, which for holomorphic \voa{}s coincides with the $\Z_n$-orbifold construction studied in \cite{FLM88,EMS20a,Moe16}. This notion is a specialization of orbifold equivalence, \autoref{defi:orbequiv}, where we allow only one orbifold step and specify the index of the conformal extensions. Often, we shall take $n=p$ where $p$ is a prime.

\begin{defi}[Neighborhood]\label{defi:neighbor}
Let $V$ and $V'$ be \strat{} \voa{}s and $n\in\Ns$. We say $V$ and $V'$ are \emph{$n$-neighbors} if there is a \strat{} \voa{} $W$ such that both $V$ and $V'$ are (isomorphic to) conformal extensions of $W$ by a simple-current module of order $n$ (i.e.\ $\Z_n$-extensions).
\end{defi}

The definition is depicted in \autoref{fig:neighbor}. Equivalently, the \voa{}s $V$ and $V'$ admit automorphisms $g$ and $g'$, respectively, of order~$n$ whose \fpvosa{}s (also called orbifolds) are both isomorphic to $W$, recalling that the strong rationality of $V$ or $V'$ implies that of $W$ \cite{Miy15,CM16,McR21} for solvable and in particular cyclic groups of automorphisms.

\begin{figure}[ht]
\centering
\begin{tikzcd}
V\arrow[-,dashed]{rr}{n-\text{neighbor}}& & V'\\
& W\arrow[hookrightarrow]{ul}{\Z_n\text{-ext.}}\arrow[hookrightarrow]{ur}[swap]{\Z_n\text{-ext.}}
\end{tikzcd}
\caption{Neighborhood of \strat{} \voa{}s $V$ and $V'$.}
\label{fig:neighbor}
\end{figure}

We shall investigate in what way $n$-neigh\-bor\-hood is compatible with the various notions of genera of \voa{}s that we introduced in \cite{MR24a}, in particular in the holomorphic case discussed in \autoref{subsec:holneigh}. We begin with some initial suppositions and observations related to the bulk genus:

\begin{rem}\label{rem:neighborbulk}
Recall from \autoref{sec:lat} that two even lattices $L$ and $M$ that are (iterated) $p$-neighbors of each other for some prime $p$ are in the same lattice genus as long as $p$ has nothing to do with $L'/L$ and $M'/M$ in the sense that $p$ does not divide the discriminant $|L'/L|$ of $L$, necessarily equal to the discriminant of $M$.

One is tempted to formulate an analogous statement for \voa{}s, perhaps along the lines of: \emph{for \strat{} \voa{}s $V$ and $V'$ and a prime $p$ not dividing $|G(\Rep(V))|$, where $G(\Rep(V))$ is the group of simple currents, if $V$ and $V'$ are iterated $p$-neighbors, then $V$ and $V'$ are in the same bulk genus, i.e.\ $\Rep(V)\cong\Rep(V')$.} 

While we are not certain that such a statement is true in general, it does hold when $\Rep(V)\cong\Vect$. In that case, the representation category $\Rep(W)$ of the common conformal subalgebra $W$ of $V$ and $V'$ must be a twisted Drinfeld double $\mathcal{D}_\omega(\Z_p)$, and any condensation by a simple current of order~$p$ with trivial twist (defining a condensable algebra) yields again $\Vect$. This case is discussed in detail in \autoref{subsec:holneigh}.
\end{rem}

\begin{rem}
In contrast to lattices, a converse of the above statement is certainly false. Indeed, as we shall mention in \autoref{ex:schellekens3}, the $p$-neigh\-bor\-hood graph of the \strat{}, holomorphic \voa{}s of central charge $24$ is known to be disconnected for any prime $p$.

However, one might conjecture: \emph{if one is allowed to consider $p$-neighbors for different primes $p$, then any bulk genus becomes connected under $p$-neigh\-bor\-hood.}

This is known to be true for the bulk genus $(\Vect,24)$ assuming \autoref{conj:moonshineuniqueness} (cf.\ \autoref{ex:deepholes}). Moreover, note that this conjecture would already imply \autoref{conj:anisobulkorb} and hence \autoref{conj:wittorb} because iterated $p$-neighbors are orbifold equivalent.
\end{rem}

Carving out a clearer relationship between neighborhood of \voa{}s and the bulk genus is left for future work, but we shall address the problem experimentally in \autoref{subsec:neighborexamples}.

\medskip

Not surprisingly, $n$-neigh\-bor\-hood for \voa{}s generalizes lattice $n$-neigh\-bor\-hood, \autoref{defi:latneigh}, in the following sense:

\begin{ex}[Lattice \VOA{}s]
Let $L$ and $M$ be two positive-definite, even lattices that are $n$-neighbors for some $n\in\Ns$, via the index-$n$ sublattice $K$. Then the corresponding \strat{} \voa{}s $V_L$ and $V_M$ are $n$-neighbors in the sense of \autoref{defi:neighbor}, with $W=V_K$.

In fact, they are inner $n$-neighbors, with trivial Heisenberg commutant $C$, as in \autoref{defi:innerneighbor} below. If we further assume that $L$ and $M$ are unimodular (and that $n=p$ is prime), then $V_L$ and $V_M$ are inner $p$-neighbors of type~I, as described in \autoref{subsec:primeinner}.
\end{ex}

%%%%%%%%%%%%%%%%%%%%%%%%%%%%%%%

\subsection{Inner Neighborhood}\label{subsec:innerneighborhood}

If we restrict attention to neighbors that are also inner orbifold equivalent in the sense of \autoref{defi:innerorbequ}, it becomes feasible to make contact with the notion of hyperbolic genus in the commutant formulation of \cite[\autoref*{MR1defi:commgenus}]{MR24a}. In \autoref{subsec:holneigh}, we shall describe this connection in detail under the simplifying assumption that $V$ and $V'$ are holomorphic (see \autoref{conj:hypgenpneighbor}, \autoref{prop:hypgenpneighbor1} and \autoref{prop:hypgenpneighbor2}).

\medskip

We begin with a short reminder about inner automorphisms. Recall from \cite[\autoref*{MR1sec:innerauts}]{MR24a} that an automorphism of a, say \strat{}, \voa{} $V$ is called \emph{inner} if it is a product of elements of the form $e^{a_0}$ for $a\in V_1$. The inner automorphisms form a normal subgroup of the full automorphism group, $\Inn(V)<\Aut(V)$.

If $V$ is \strat{}, then $V_1$ is a finite-dimensional, reductive Lie algebra \cite{DM04b}. Let $\hh$ be a choice of Cartan subalgebra of $V_1$ or equivalently of $V$. Then, there is the abelian subgroup $T=T(\hh)\coloneqq\langle\{e^{a_0}\mid a\in\hh\}\rangle=\{e^{a_0}\mid a\in\hh\}<\Inn(V)$.

Under mild assumptions (e.g., when the kernel of the restriction map $V\to V_1$ is contained in $T$, which holds for all lattice \voa{}s \cite{DN99} and for the Schellekens \voa{}s in central charge $24$ with $V_1\neq\{0\}$ \cite{LS20b}),
any inner automorphism in $\Inn(V)$ is conjugate in $\Inn(V)$ to an automorphism in $T$ \cite{HM22}, for one and hence any choice of Cartan subalgebra $\hh$.

\medskip

We now specialize the notion of neighborhood of \voa{}s, analogously to the definition of inner orbifold equivalence, \autoref{defi:innerorbequ}:

\begin{defi}[Inner Neighborhood]\label{defi:innerneighbor}
Let $V$ and $V'$ be \strat{} \voa{}s. Let $n\in\Ns$. We say $V$ and $V'$ are \emph{inner $n$-neighbors} if there is a \strat{} \voa{} $W$ such that both $V$ and $V'$ are (isomorphic to) conformal extensions of $W$ by a simple-current module of order $n$ (i.e.\ $\Z_n$-extensions) and furthermore such that the concrete realization of $W$ inside $V$ (and similarly for $V'$) can be chosen such that $W$ and $V$ have the same Heisenberg commutant.
\end{defi}
In the orbifold picture, where $W\cong V^g$ for some automorphism $g\in\Aut(V)$, the condition that $W$ have the same Heisenberg commutant as $V$ can be reformulated as saying that there is a choice of Cartan subalgebra of $V$ that is fixed point-wise by $g$ and is also a Cartan subalgebra of $W$ (and analogously for $V'$).

Based on this, we can characterize the two \voa{} automorphisms corresponding to a pair of inner neighbors $V$ and $V'$.
\begin{prop}\label{prop:innerneighchar}
Let $V$ and $V'$ be \strat{} \voa{}s. Then $V$ and $V'$ are $n$-neighbors under an inner orbifold equivalence if and only if the corresponding automorphisms $g\in\Aut(V)$ and $g'\in\Aut(V')$ of order $n$ are both (up to algebraic conjugacy) in $T$, i.e.\ of the form $g=e^{h_0}$ and $g'=e^{h'_0}$ for elements $h$ and $h'$ in some choices of Cartan subalgebras of $V$ and $V'$, respectively. 
\end{prop}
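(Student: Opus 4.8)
The plan is to unwind both sides of the equivalence through the definition of inner neighborhood, \autoref{defi:innerneighbor}, combined with the structure theory of inner automorphisms recalled in \autoref{subsec:innerneighborhood}. The key observation is that for a cyclic group $\langle g\rangle$ acting on a \strat{} \voa{} $V$, the fixed-point subalgebra $V^g=W$ having the same Heisenberg commutant as $V$ is equivalent to saying that the rank of $W_1$ (i.e.\ $\dim$ of a Cartan subalgebra of $W$) equals the rank of $V_1$; and this in turn forces $g$ to act trivially on a full Cartan subalgebra $\hh$ of $V$, hence $g$ fixes $\hh$ pointwise.

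First, I would prove the forward direction. Suppose $V$ and $V'$ are inner $n$-neighbors via a common \strat{} conformal subalgebra $W$, realized (after choosing the right concrete embeddings) so that $C_W\cong C_V$, where $C_V,C_W$ denote Heisenberg commutants. Write $W\cong V^g$ for an automorphism $g$ of order $n$; since $W\subset V$ is a $\Z_n$-extension, $g$ generates the dual group of the corresponding simple-current group, so $g$ is determined by this data. The condition $C_W\cong C_V$ means that $\rk(W_1)=\rk(V_1)$: indeed, $W_1\subset V_1^g\subset V_1$, and the associated lattice of $W$, being a full-rank sublattice of the associated lattice of $V$ (because $V_{K}=\Com_W(C_W)\subset\Com_V(C_V)=V_L$ as in the proof of \autoref{thm:rationalinnerorb}), has the same rank. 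Choose a Cartan subalgebra $\hh_W$ of $W$; it extends to a Cartan subalgebra $\hh$ of $V$, and the rank equality forces $\hh=\hh_W$, so $g$ fixes $\hh$ pointwise. By the structure result recalled in \autoref{subsec:innerneighborhood} (any inner automorphism is conjugate in $\Inn(V)$ to one in $T(\hh)$), and since an automorphism fixing $\hh$ pointwise and being a simple-current-extension automorphism is itself inner and hence — after the standard algebraic conjugation — of the form $e^{h_0}$ with $h\in\hh$, we conclude $g\in T$ up to algebraic conjugacy; likewise for $g'$.

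For the converse, suppose $g=e^{h_0}$ with $h$ in a Cartan subalgebra $\hh$ of $V$, and $g$ has order $n$, and similarly $g'=e^{h'_0}$ with $h'\in\hh'\subset V'$ of order $n$. Set $W\coloneqq V^g$. Since $g$ acts trivially on $\hh$, we have $\hh\subset W$, and because $\hh$ is a Cartan subalgebra of $V$ it is also a Cartan subalgebra of $W$; hence $\Com_W(\hh\text{-Heisenberg})$ has the same rank data, which is exactly the statement that $W$ and $V$ share the same Heisenberg commutant (the Heisenberg commutant depends only on a maximal Heisenberg vertex subalgebra, determined by $\hh$, which sits identically inside $V$ and $W$). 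The extension $W\subset V$ is by a simple-current module of order $n$ since $g$ has order $n$ (this is the standard correspondence between cyclic orbifolds and simple-current extensions for \strat{}, e.g.\ holomorphic, \voa{}s, but more generally for the abelian orbifold/extension dictionary). The same applies to $V'$, $g'$, $W'\coloneqq V'^{g'}$. To finish, one needs $W\cong W'$; but this is precisely the content being assumed when we say ``$V$ and $V'$ are $n$-neighbors under an inner orbifold equivalence'' — the hypothesis of the proposition packages $g,g'$ as a compatible pair, i.e.\ $V^g\cong V'^{g'}$ — so this direction is essentially a reorganization of definitions once the Cartan/Heisenberg-commutant translation is in place.

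The main obstacle I anticipate is the careful handling of the phrase ``up to algebraic conjugacy'' and the implicit hypotheses guaranteeing that every relevant inner automorphism is conjugate into $T$: as the excerpt notes, this requires mild assumptions (the kernel of $V\to V_1$ lying in $T$, valid for lattice \voa{}s and the $c=24$ Schellekens \voa{}s). A clean proof should either invoke these assumptions explicitly or argue that the automorphisms arising from simple-current $\Z_n$-extensions are automatically of the required form because they act as scalars on each graded piece determined by the simple-current grading, hence diagonalize simultaneously with a Cartan torus. Getting the bookkeeping right between ``$g$ fixes $\hh$ pointwise'', ``$\hh$ is a Cartan subalgebra of $V^g$'', and ``$C_{V^g}\cong C_V$'' — i.e.\ verifying these three are genuinely equivalent and not just implications in one direction — is the crux, and mirrors exactly the rank argument used in the proof of \autoref{thm:rationalinnerorb}.
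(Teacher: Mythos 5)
Your backward direction is correct and matches the paper's (which simply notes that $g=e^{h_0}$ acts trivially on $\hh$ and on $C=\Com_V(M_{\hat\hh}(1,0))$, since $h_0$ annihilates the commutant, so $V^g$ inherits the same Heisenberg commutant). The first half of your forward direction --- that equality of Heisenberg commutants forces a common Cartan subalgebra $\hh$ fixed pointwise by $g$ --- is also the paper's starting point.

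The gap is in how you get from ``$g$ fixes $\hh$ pointwise'' to ``$g=e^{h_0}$.'' Your main argument invokes the structure result that inner automorphisms are conjugate in $\Inn(V)$ into $T(\hh)$, but you never establish that $g$ is inner; the parenthetical claim that ``an automorphism fixing $\hh$ pointwise and being a simple-current-extension automorphism is itself inner'' is exactly what needs proof. Fixing a Cartan subalgebra pointwise does not imply innerness --- automorphisms can act trivially on $V_1$ yet nontrivially on higher weight spaces, which is precisely why the paper needs the side hypothesis that the kernel of $\Aut(V)\to\Aut(V_1)$ lies in $T$ in other contexts (and why that hypothesis is nontrivial). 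The paper avoids this entirely by a direct argument: since $C$ is the Heisenberg commutant of both $V$ and $V^g$, the automorphism $g$ fixes the whole dual pair $C\otimes M_{\hat\hh}(1,0)$ pointwise, not just $\hh$. One then uses the multiplicity-free decomposition $V=\bigoplus_{\alpha\in P}C^{\tau(\alpha+L)}\otimes M_{\hat\hh}(1,\alpha)$ of $V$ into \emph{irreducible} $C\otimes M_{\hat\hh}(1,0)$-modules: $g$ preserves each summand (it commutes with all $h_0$) and must act on it by a scalar $\lambda(\alpha)$ by irreducibility, $\lambda$ is a character of the lattice $P$ because $g$ is a \voa{} automorphism of the $P$-graded algebra $V$, and every such character is realized as $e^{h_0}$ for some $h\in\hh$ (with $h\in(1/n)P'$ since $g$ has order $n$). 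Your closing paragraph gestures at a ``scalars on graded pieces'' argument, but the grading that makes it work is the $P$-grading coming from the dual pair $C\otimes V_L$, not the $\Z_n$ simple-current grading of $V$ over $W$ --- the latter only tells you $g$ acts by $n$-th roots of unity on the eigenspaces $V^{(i)}$, which is automatic and does not identify $g$ with an exponential of a Cartan element.
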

\begin{proof}
The backward direction is immediate as $g=e^{h_0}$ and $g'=e^{h'_0}$ act trivially on the Cartan subalgebras and Heisenberg commutants. (See \cite{HM23} for a more detailed discussion in the case of $n=2$. Below we even state an explicit formula for the associated lattice of $V^g$.)

We now prove the forward direction. Let $W=V^g$ be the \fpvosa{} under an automorphism $g$ of $V$ of order~$n$. Suppose that both $V$ and $V^g$ have the Heisenberg commutant $C$ and, by the comment above, the same Cartan subalgebra $\hh$ (and hence the same Heisenberg \voa{} $M_{\hat\hh}(1,0)$ generated by it). That is, $g$ fixes $C\otimes M_{\hat\hh}(1,0)\subset V$ point-wise. In particular, $g$ commutes with $h_0$ for all $h\in\hh$.

Recall from \cite[\autoref*{MR1prop:asslatdecomp}]{MR24a} that the \voa{} $V$ decomposes (without multiplicities) into irreducible modules for $C\otimes M_{\hat\hh}(1,0)$ as
\begin{equation*}
V=\bigoplus_{\alpha\in P}C^{\tau(\alpha+L)}\otimes M_{\hat\hh}(1,\alpha),
\end{equation*}
where $P\subset L'$ is a sublattice with $A=P/L$.

We consider how $V$ splits up into eigenspaces $V^g=V^{(0)},V^{(1)},\ldots,V^{(n-1)}$ for~$g$, which are modules for $V^g$ and hence for $\smash{C\otimes M_{\hat\hh}(1,0)\subset V^g}$. Since $g$ commutes with $h_0$ for all $h\in\hh$, its action restricts to each summand $C^{\tau(\alpha+L)}\otimes M_{\hat\hh}(1,\alpha)$ in the above direct sum. On the other hand, each summand can only belong to one eigenspace $V^{(i)}$ of $g$. If not, $C^{\tau(\alpha+L)}\otimes M_{\hat\hh}(1,\alpha)$ would decompose into a direct sum of more than one module for $C\otimes M_{\hat\hh}(1,0)\subset V^g$, contradicting the irreducibility of $C^{\tau(\alpha+L)}\otimes M_{\hat\hh}(1,\alpha)$. In other words,
\begin{equation*}
g|_{C^{\tau(\alpha+L)}\otimes M_{\hat\hh}(1,\alpha)}=\lambda(\alpha)\id,\quad\alpha\in P,
\end{equation*}
for some function $\lambda\colon P\to\C^\times$. As $V$ is $P$-graded as a \voa{} and $g$ is a \voa{} automorphism, $\lambda$ must be a homomorphism. It finally follows that we may write $g=e^{h_0}$ for some $h\in\hh$. In fact, since $g$ has order~$n$, $h\in (1/n)P'$. This proves the assertion for $V$, but for $V'$ one proceeds exactly in the same manner.
\end{proof}
\begin{rem}\label{rem:nameinner}
The proposition gives an a-posteriori justification for the name \emph{inner orbifold equivalence} in \autoref{defi:innerorbequ}, which generalizes inner neighborhood. Indeed, $g=e^{h_0}$ and $g'=e^{h'_0}$ are examples of inner automorphisms in $T\subset\Inn(V)$.

Moreover, as we explained above, it is often the case that any inner automorphism of $V$ or $V'$ can be conjugated into an automorphism of this form.
\end{rem}

\autoref{prop:innerneighchar} allows us to very explicitly describe the relation between \voa{}s that are inner neighbors, in particular with regard to their lattice structure. This will be useful for the explicit computations in \autoref{subsec:neighborexamples}. Suppose that $V$ is a \strat{} \voa{} with associated lattice $L$, Heisenberg commutant $C$ and corresponding dual pair $C\otimes V_L$ with decomposition
\begin{equation*}
V=\bigoplus_{\alpha+L\in A}C^{\tau(\alpha+L)}\otimes V_{\alpha+L}
\end{equation*}
as in \cite[\autoref*{MR1prop:asslatdecomp}]{MR24a}, where $A=P/L$ for some sublattice $P\subset L'$. Then, the \fpvosa{} under the automorphism $g=e^{2\pi i h_0}$ for some $h$ in the Cartan subalgebra $\hh=\C\vac\otimes\{k(-1)\ee_0\mid k\in L\otimes_\Z\C\}$, which has order~$n$ if and only if $h$ has order $n$ in $P'\otimes_\Z\Q/P'$, has the following form: 
\begin{equation*}
V^g=\bigoplus_{\alpha+L\in A}C^{\tau(\alpha+L)}\otimes V_{(\alpha+L)^h}
\end{equation*}
with $(\alpha+L)^h\coloneqq\{\beta\in\alpha+L\,|\,\langle\beta,h\rangle\in\Z\}$. The associated lattice of $V^g$ is $L^h$ so that $V^g$ is a simple-current extension of the dual pair $C\otimes V_{L^h}$ (see also \cite{HM23}). The sets $(\alpha+L)^h$ are either empty or cosets in $(L^h)'/L^h$. $L^h$ has index dividing $n$ in $L$, but could be equal to $L$. The $\Z_n$-extension of $V^g$ to the neighbor $V'$ is analogous.

We shall discuss further details in \autoref{subsec:holneigh} in the case when the neighbors $V$ and $V'$ are holomorphic (and when $n=p$ is a prime).

%%%%%%%%%%%%%%%%%%%%%%%%%%%%%%%

\subsection{Holomorphic Neighbors}\label{subsec:holneigh}

In the following, we specialize the discussion to the situation where the two $n$-neighbors $V$ and $V'$ (for now, not necessarily inner) are \strat{} and holomorphic.

In fact, it suffices to assume that $V$ is holomorphic, i.e.\ that $\Rep(V)\cong\Vect$. Then also $\Rep(V')\cong\Vect$ so that $V$ and $V'$ are in the same bulk genus and the conjecture in \autoref{rem:neighborbulk} is true (with the condition on $p$ becoming vacuous).

This situation, where $V$ and $V'$ are holomorphic \voa{}s and $W\coloneqq V^g\cong V'^{g'}$ for some automorphisms $g$ and $g'$ of order~$n$, was already studied in \cite{EMS20a,Moe16,DRX17}, where it is called \emph{cyclic orbifold construction}. There, it is
shown that $\Rep(W)$ is a twisted Drinfeld double $\mathcal{D}_\omega(\Z_n)$ for some cocycle $\omega\in H^3(\Z_n,\C^\times)$, a certain pointed modular tensor category (see also \cite{DPR90,DNR21b,GR24}).

\begin{rem}\label{rem:type0}
Given two holomorphic $n$-neighbors $V$ and $V'$, we may without loss of generality assume that
\begin{enumerate}
\item the cohomology class of $\omega$ (called the anomaly, or in \cite{EMS20a,Moe16} the type)
is trivial, in which case $\Rep(W)$ is the pointed modular tensor category (see, e.g., \cite{EGNO15}) associated with the metric group $H_n\coloneqq\Z_n\times\Z_n$ with quadratic form $(i,j)\mapsto ij/n+\Z$, and
\item the isotropic subgroups in $H_n$ corresponding to $V$ and $V'$ (i.e.\ to the condensable algebras they define) have trivial intersection (and they are maximal as $V$ and $V'$ are holomorphic).
\end{enumerate}
If not, $V$ and $V'$ are $d$-neighbors for some $d\mid n$ with $d<n$ such that these conditions are satisfied \cite{Moe16}.
\end{rem}

The group $H_n=\Z_n\times\Z_n$ with quadratic form $(i,j)\mapsto ij/n+\Z$ is the discriminant form of $\II_{1,1}(n)$, the unique even, unimodular lattice $\II_{1,1}$ of signature $(1,1)$ with the quadratic form scaled by $n$. For example, for $n=p$ prime, the Jordan decomposition \cite{CS99} of $H_p$ is $2_{\II}^{+2}$, $3^{-2}$, $5^2$, $7^{-2}$, $11^{-2}$, etc.

The automorphism $g\in\Aut(V)$ characterizing the $\Z_n$-extension of $W$ to $V$ uniquely determines the automorphism $g'\in\Aut(V')$ describing the $\Z_n$-extension of $W$ to $V'$ (called the inverse-orbifold automorphism) and vice versa.

\medskip

After these general remarks, we return to the study of inner $n$-neigh\-bor\-hood, now assuming that both $V$ and $V'$ are holomorphic. In this situation, we want to state a precise relation to the hyperbolic genus, in the Heisenberg commutant formulation of \cite[\autoref*{MR1defi:commgenus}]{MR24a}. The following is immediate:
\begin{prop}\label{prop:hypgenpneighbor1}
Let $V$ and $V'$ be \strat{}, holomorphic \voa{}s. For fixed $n\in\Ns$, if $V$ and $V'$ are are iterated inner $n$-neighbors, then they are in the same hyperbolic genus.
\end{prop}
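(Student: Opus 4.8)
The plan is to reduce the statement to the corresponding fact for lattices, using the commutant characterization of the hyperbolic genus from \cite[\autoref*{MR1defi:commgenus}]{MR24a}. First I would observe that it suffices to treat a single inner $n$-neighbor step: if $V$ and $V'$ are iterated inner $n$-neighbors, then there is a chain $V=V_1,\dots,V_k=V'$ of \strat{}, holomorphic \voa{}s where consecutive terms are inner $n$-neighbors, and since ``being in the same hyperbolic genus'' is an equivalence relation, it is enough to show $V_i$ and $V_{i+1}$ lie in the same hyperbolic genus. So from now on assume $V$ and $V'$ are inner $n$-neighbors via a common conformal subalgebra $W$ with $W\cong V^g\cong V'^{g'}$, where by \autoref{defi:innerneighbor} the realizations of $W$ inside $V$ and $V'$ can be chosen so that $W$, $V$ and $V'$ all share the same Heisenberg commutant $C$.

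Next I would invoke \autoref{prop:innerneighchar}, which tells us that $g=e^{h_0}$ and $g'=e^{h'_0}$ for elements $h,h'$ of Cartan subalgebras of $V$ and $V'$ that also serve as Cartan subalgebras of $W$; in particular $g$ and $g'$ act trivially on $C$. Writing the dual-pair decompositions, $C\otimes V_L\subset V$ and $C\otimes V_M\subset V'$, with $L,M$ the associated lattices, the explicit description of the fixed-point subalgebra given after \autoref{rem:nameinner} shows that $W$ has the dual pair $C\otimes V_{L^h}\subset W$ and simultaneously $C\otimes V_{M^{h'}}\subset W$, where $L^h$ is a finite-index sublattice of $L$ (of index dividing $n$) and $M^{h'}$ a finite-index sublattice of $M$. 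By uniqueness of the associated lattice of a \strat{} \voa{} (it is the associated lattice of $W$), we get $L^h\cong M^{h'}$. Hence $L$ and $M$ share a common finite-index sublattice, i.e.\ $L$ and $M$ are rationally equivalent, hence Witt equivalent as lattices by \autoref{prop:lat_ratwit}; iterating, the associated lattices of $V$ and $V'$ are Witt equivalent. But for \emph{holomorphic} \voa{}s the discriminant form of the associated lattice is determined by the Heisenberg commutant $C$ (they satisfy $L'/L\cong\overline{D(C)}$ for the appropriate discriminant form $D(C)$ of $\Rep(C)$), so equal Heisenberg commutant forces $L'/L\cong M'/M$; combined with rational equivalence, which also gives $\operatorname{sign}(L)=\operatorname{sign}(M)$, this upgrades Witt equivalence of $L$ and $M$ to same-genus, by \autoref{defi:latticegenus}\ref{item:latgen2}. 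Then by the commutant characterization of the hyperbolic genus in \cite[\autoref*{MR1defi:commgenus}]{MR24a} — same Heisenberg commutant and associated lattices in the same genus — we conclude $V$ and $V'$ are in the same hyperbolic genus.

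Alternatively, and perhaps more cleanly, one can package the above by appealing to \autoref{thm:rationalinnerorb}: inner $n$-neighbors are in particular inner orbifold equivalent, so $V$ and $V'$ are strongly Witt equivalent, meaning their Heisenberg commutants agree and their associated lattices are Witt equivalent. For holomorphic \voa{}s the extra genus refinement (matching discriminant forms) is automatic once the commutants agree, so strong Witt equivalence plus holomorphy upgrades to hyperbolic equivalence via the commutant formulation of the hyperbolic genus. The main obstacle I anticipate is precisely justifying this last upgrade: one must carefully use that for a \strat{}, holomorphic \voa{} the associated lattice's discriminant form is pinned down by (the representation category of) the Heisenberg commutant — this is implicit in the structure theory from \cite{HM23,Mas14} and the commutant description of the hyperbolic genus, but it deserves an explicit sentence rather than being taken for granted. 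Everything else is bookkeeping with finite-index sublattices and the already-established transitivity of rational equivalence of lattices (\autoref{rem:trans}).
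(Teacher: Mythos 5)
Your proof is correct and follows essentially the same route as the paper's, which simply notes that inner neighbors have, by definition, the same Heisenberg commutant and then cites \cite[\autoref*{MR1cor:holhyp}]{MR24a} (for holomorphic \voa{}s, same Heisenberg commutant implies same hyperbolic genus) — your argument just unpacks that citation via the commutant characterization of the hyperbolic genus together with the observation recorded in \autoref{rem:samelatgen}. One stylistic remark: the rational-equivalence detour through the common sublattice $L^h\cong M^{h'}$ is superfluous, since once the Heisenberg commutants agree the associated lattices already have the same discriminant form (dual to $\Rep(C)$) and the same rank $c(V)-c(C)$, hence lie in the same lattice genus directly.
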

\begin{proof}
If $V$ and $V'$ are iterated inner $n$-neighbors, then, by definition, they have the same Heisenberg commutant. Then, \cite[\autoref*{MR1cor:holhyp}]{MR24a} implies that they are in the same hyperbolic genus.
\end{proof}

We also want to formulate the converse direction, but this will turn out to be more difficult (see \autoref{conj:hypgenpneighbor} and \autoref{prop:hypgenpneighbor2}).

Recall from \cite[\autoref*{MR1sec:holhyp}]{MR24a} that, as $V$ is holomorphic, the statement of \cite[\autoref*{MR1prop:asslatdecomp}]{MR24a} simplifies and $V$ (and similarly $V'$) is a simple-current extension of the dual pair $C\otimes V_L$ of the form
\begin{equation}\label{sec:holdecomp}
V=\bigoplus_{\alpha+L\in L'/L}C^{\tau(\alpha+L)}\otimes V_{\alpha+L}
\end{equation}
for some ribbon-reversing tensor equivalence $\tau\colon\Rep(V_L)\to\Rep(C)$. In particular, $\Rep(C)$ is pointed and pseudo-unitary, i.e.\ $\Rep(C)\cong\mathcal{C}(R_C)$ for the metric group $\smash{R_C\cong\overline{L'/L}}$. Here, for a metric group $D$, we denote by $\overline{D}$ the same abelian group with the quadratic form multiplied by $-1$.

\begin{rem}\label{rem:samelatgen}
Because both $V$ and $V'$ have the same Heisenberg commutant $C$ and are holomorphic, their associated lattices $L$ and $M$, respectively, must have the same discriminant form, namely the one dual to the pointed modular tensor category $\Rep(C)$. Hence, $L$ and $M$ must lie in the same lattice genus, regardless of whether $n$ is coprime to $d(L)=d(M)$ or not (cf.\ the discussion in \autoref{sec:lat}).
\end{rem}
In the following, we describe the relation between the associated lattices $L$ and $M$ of $V$ and $V'$ more thoroughly, in particular in \autoref{subsec:primeinner} in the prime case.

To this end, we specialize the inner orbifold picture, \autoref{prop:innerneighchar}, to the holomorphic case. Let $g$ be an inner automorphism in $T=T(\hh)$ of the form $g=e^{2\pi i h_0}$ for some $h$ in the Cartan subalgebra $\hh\coloneqq\C\vac\otimes\{k(-1)\ee_0\mid k\in L\otimes_\Z\C\}$. Assume that $g$ has order $n$ or equivalently that $h\in L\otimes_\Z\Q/L$ has order~$n$, necessarily implying that $h\in(1/n)L/L$. The \fpvosa{} $W=V^g$ decomposes as
\begin{equation*}
W=V^g=\bigoplus_{\alpha+L\in L'/L}C^{\tau(\alpha+L)}\otimes V_{(\alpha+L)^h}
\end{equation*}
where $(\alpha+L)^h\coloneqq\{\beta\in\alpha+L\mid\langle\beta,h\rangle\in\Z\}$. The latter are cosets in $(L^h)'/L^h$ if they are non-empty, i.e.\ the $V_{(\alpha+L)^h}$ are irreducible modules for $V_{L^h}$. Moreover,
\begin{equation*}
K\coloneqq L^h=\{\beta\in L\mid\langle\beta,h\rangle\in\Z\}
\end{equation*}
is the associated lattice of $W=V^g$ and $C$ the Heisenberg commutant, so that $C\otimes V_{L^h}\subset V^g$ is a dual pair in $W$. The lattice index $[L:L^h]$ is some divisor of $n$.

We further assume without loss of generality (see \autoref{rem:type0}) that $g=e^{2\pi i h_0}$ has type~$0$ or equivalently that $\langle h,h\rangle/2\in(1/n)\Z$. Then, as explained above, there is a unique extension of $W$ to a holomorphic \voa{} $V'$, necessarily a $\Z_n$-extension, such that $V'$ and $V$ have minimal intersection. This is the $n$-neighbor or orbifold construction $V'=V^{\orb(g)}$. $V'$ is composed of parts of the unique irreducible $g^i$-twisted modules $V$-modules $V(g^i)$ for $i\in\Z_n$. Now, for an inner automorphism $g=e^{2\pi i h_0}$ these modules take the simple form
\begin{equation*}
V(g^i)=\bigoplus_{\alpha+L\in L'/L}C^{\tau(\alpha+L)}\otimes V_{ih+\alpha+L}
\end{equation*}
for $i\in\Z_n$, where we may view $V_{ih+\alpha+L}$ as (not necessarily irreducible) module over $V_{L^h}\subset V^g$ because $ih+L'\subset(L^h)'$ \cite{Li96}. The type-$0$ condition simply means that $V(g)$ has $L_0$-weights in $(1/n)\Z$.

%%%%%%%%%%%%%%%%%%%%%%%%%%%%%%%

\subsection{Prime Order}\label{subsec:primeinner}

We now, for simplicity, restrict to the case that $n=p$ is prime. We continue the discussion above, studying inner $p$-neigh\-bor\-hood for \strat{}, holomorphic \voa{}s.

The condition that $g$ have order $p$ is now equivalent to $h\in(1/p)L\setminus L$.
Since $p$ is prime, the orbifold construction $V'=V^{\orb(g)}$ can be written as
\begin{equation*}
V'=V^{\orb(g)}=V^g\oplus\bigoplus_{i\in\Z_p\setminus\{0\}}V(g^i)_\Z,
\end{equation*}
where $V(g^i)_\Z$ denotes the irreducible $V^g$-submodule of $V(g^i)$ obtained by keeping only the elements of integral $L_0$-weights. Hence,
\begin{equation*}
V'=V^{\orb(g)}=\bigoplus_{\alpha+L\in L'/L}C^{\tau(\alpha+L)}\otimes\Bigl(V_{(\alpha+L)^h}
\oplus\bigoplus_{i\in\Z_p\setminus\{0\}} V_{(ih+\alpha+L)_\Z}\Bigr)
\end{equation*}
with $(ih+\alpha+L)_\Z\coloneqq\{\beta\in ih+\alpha+L\mid\langle\beta,\beta\rangle/2\equiv\langle\alpha,\alpha\rangle/2\pmod{1}\}$. These sets decompose into cosets of $L^h$ so that $V_{(ih+\alpha+L)_\Z}$ is a (not necessarily irreducible) module over $V_{L^h}\subset V^g$.
Because we assumed that $C$ is also the Heisenberg commutant of $V'=V^{\orb(g)}$, the associated lattice of $V'$ is given by
\begin{equation*}
M\coloneqq L^h\cup\bigcup_{i\in\Z_p\setminus\{0\}}(ih+L)_\mathrm{ev}
\end{equation*}
where $(\cdot)_\mathrm{ev}$ denotes the subset of vectors $v$ of norm $\langle v,v\rangle/2\in\Z$. That the above expression is indeed an (even) lattice is a slightly non-trivial observation that uses $\langle h,h\rangle/2\in(1/p)\Z$.

Generalizing the discussion in \cite{HM23} to arbitrary primes, we distinguish two types of inner automorphisms:
\begin{prop}\label{prop:neightwocases}
Let $V$ and $V'$ be inner $p$-neighbors for some prime $p$. Let $g=e^{2\pi i h_0}\in\Aut(V)$ for $h\in\hh$, with $h\in(1/p)L\setminus L$ and $\langle h,h\rangle/2\in(1/p)\Z$, be the corresponding inner automorphism on the $V$-side. Then, one of two cases can occur:
\begin{enumerate}
\item[(I)] $h\notin L'$ (if and only if $[L:L^h]=p$),
\item[(IIa)] $h\in L'$ (if and only if $[L:L^h]=1$) and $\langle h,h\rangle/2\in(1/p)\Z\setminus\Z$,
\end{enumerate}
where $L$ is the associated lattice of $V$. The automorphism $g'=e^{2\pi i h'_0}\in\Aut(V')$ describing the $p$-neigh\-bor\-hood on the $V'$-side must satisfy the analogous condition.
\end{prop}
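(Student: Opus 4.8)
The plan is to reduce everything to two elementary facts about the lattice $L^h=\{\beta\in L\mid\langle\beta,h\rangle\in\Z\}$, plus one argument ruling out a spurious case. First I would observe that $L^h$ is the kernel of the homomorphism $\phi\colon L\to\Q/\Z$, $\beta\mapsto\langle\beta,h\rangle+\Z$. Since $h\in(1/p)L$ and $L$ is even (hence integral), $ph\in L\subseteq L'$, so $p\phi=0$ and the image of $\phi$ is a subgroup of $(1/p)\Z/\Z$; as $p$ is prime, this image is either trivial or all of $(1/p)\Z/\Z$, whence $[L:L^h]\in\{1,p\}$. Moreover $\phi$ is trivial precisely when $\langle\beta,h\rangle\in\Z$ for all $\beta\in L$, i.e.\ when $h\in L'$. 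This already establishes the two parenthetical equivalences: $[L:L^h]=p\iff h\notin L'$ and $[L:L^h]=1\iff h\in L'$.

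It then remains to rule out the case $h\in L'$ together with $\langle h,h\rangle/2\in\Z$ — the only possibility not covered by (I) and (IIa), given the standing hypothesis $\langle h,h\rangle/2\in(1/p)\Z$. I would argue by contradiction using the explicit description of the associated lattice $M$ of $V'=V^{\orb(g)}$ obtained earlier in \autoref{subsec:primeinner}, namely $M=L^h\cup\bigcup_{i\in\Z_p\setminus\{0\}}(ih+L)_{\mathrm{ev}}$, which is valid exactly under the hypotheses of the present proposition (in particular, that $C$ is the common Heisenberg commutant of $V$, $W$ and $V'$, which is built into the definition of inner $p$-neighbor, \autoref{defi:innerneighbor}). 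In the case at hand $L^h=L$, and for $\beta=ih+\ell$ with $\ell\in L$ one computes $\langle\beta,\beta\rangle/2=i^2\langle h,h\rangle/2+i\langle h,\ell\rangle+\langle\ell,\ell\rangle/2\in\Z$, using $h\in L'$, $\langle h,h\rangle/2\in\Z$ and evenness of $L$; hence $(ih+L)_{\mathrm{ev}}=ih+L$ and $M=\langle L,h\rangle$ is an even overlattice of $L$ with $[M:L]=p$. But $V$ and $V'$ are holomorphic with the same Heisenberg commutant $C$, so by \autoref{rem:samelatgen} their associated lattices have isometric discriminant forms; comparing orders gives $|M'/M|=|L'/L|/p^{2}\neq|L'/L|$, a contradiction. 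Therefore, when $h\in L'$ one must have $\langle h,h\rangle/2\notin\Z$, i.e.\ case (IIa), while $h\notin L'$ is case (I); since these are exhaustive, the dichotomy follows.

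Finally, for the assertion on the $V'$-side I would invoke the symmetry of the configuration: $g'$ is nontrivial (as $V'\neq W$) and satisfies $g'^{\,p}=\id$, and \autoref{prop:innerneighchar} applied to the pair $(V',V)$ presents it as $g'=e^{2\pi i h'_0}$ with $h'$ in a Cartan subalgebra of $V'$, necessarily of order exactly $p$, so that $h'\in(1/p)M\setminus M$ where $M$ is the associated lattice of $V'$; and by \autoref{rem:type0} applied to $(V',V)$ (cf.\ \cite{EMS20a,Moe16}) we may take $g'$ to have type $0$ as well, i.e.\ $\langle h',h'\rangle/2\in(1/p)\Z$. Thus $g'$ and $M$ satisfy exactly the hypotheses already analysed, and the same dichotomy yields (I) $h'\notin M'$ or (IIa) $h'\in M'$ with $\langle h',h'\rangle/2\in(1/p)\Z\setminus\Z$. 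The only step requiring genuine care is the exclusion of the case $h\in L'$, $\langle h,h\rangle/2\in\Z$: one must be sure the formula for $M$ is legitimately available (it is, precisely because the inner-$p$-neighbor hypothesis forces $C$ to be the Heisenberg commutant of $V'$) and that the discriminant-form comparison of \autoref{rem:samelatgen} indeed applies to both $V$ and $V'$.
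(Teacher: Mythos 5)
Your proof is correct. The paper itself does not give an independent argument here: its ``proof'' consists of the single remark that the $p=2$ analysis in Table~4 of \cite{HM23} carries over verbatim to arbitrary primes. Your write-up therefore supplies details the paper outsources, and it does so soundly. The first step --- identifying $L^h$ as the kernel of $\beta\mapsto\langle\beta,h\rangle+\Z$, whose image lies in $(1/p)\Z/\Z$ because $ph\in L\subseteq L'$, so that $[L:L^h]\in\{1,p\}$ with $[L:L^h]=1$ iff $h\in L'$ --- is exactly the right elementary observation and immediately gives both parenthetical equivalences. The substantive content is the exclusion of the residual case $h\in L'$, $\langle h,h\rangle/2\in\Z$, and your argument for this is valid: under the inner hypothesis $C$ is the Heisenberg commutant of $V'$ as well, so the formula $M=L^h\cup\bigcup_{i\neq0}(ih+L)_{\mathrm{ev}}$ from \autoref{subsec:primeinner} really does compute the associated lattice of $V'$; in the excluded case it collapses to $M=L+\Z h$ with $[M:L]=p$, hence $|M'/M|=|L'/L|/p^2$, contradicting \autoref{rem:samelatgen}. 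This is the same mechanism that underlies the paper's subsequent type-I/IIa bookkeeping via $[R_C:A'][K'/K:A]=p^2$ (where the forbidden case would correspond to $[K'/K:A]=1$), but your discriminant count reaches the conclusion directly without setting up that machinery. The transfer to the $V'$-side via \autoref{prop:innerneighchar} and the type-$0$ condition is also handled correctly. The one point worth stating explicitly in a final write-up is that the whole discussion takes place in the holomorphic setting of \autoref{subsec:holneigh} (otherwise \autoref{rem:samelatgen} and the formula for $M$ are not available), but you have implicitly used this correctly throughout.
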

\begin{proof}
The proof given in \cite{HM23} (see in particular Table~4) for $p=2$ holds verbatim for arbitrary primes $p$.
\end{proof}
There can also be inner automorphisms of $V$ describing $p$-neighbors satisfying (IIb) $h\in L'$ (if and only if $[L:L^h]=1$) and $\langle h,h\rangle/2\in\Z$, but they can only correspond to $g'\in\Aut(V')$ being non-inner, so they do not appear in the context of inner neighbors.

\medskip

We describe the precise relation between the associated lattices of $V$, $V'$ and $W$ and between the corresponding simple-current decompositions for types I and IIa. It will be convenient to base this description on $W$ with the associated lattice $K$ and describe $V$ and $V'$ as extensions thereof. The discussion is borrowed from Section~6 in \cite{HM23}, with all powers of $2$ replaced by the corresponding $p$-power.

Recall the decomposition \eqref{sec:holdecomp} of $V$, and in particular that the Heisenberg commutant $C$ of $V$ and $V'$ is pointed and pseudo-unitary with $\Rep(C)\cong\mathcal{C}(R_C)$ for the metric group $R_C\cong\overline{L'/L}$. The \voa{} $C$ also being the Heisenberg commutant of $W$, the latter is of the form
\begin{equation*}
W=\bigoplus_{\alpha+K\in A}C^{\tau(\alpha+K)}\otimes V_{\alpha+K}
\end{equation*}
for some subgroups $A<K'/K$ and $A'<R_C$ and an anti-isometry $\tau\colon A\to A'$. The \voa{} $W$ is characterized as extension of the dual pair $C\otimes V_K$ by the isotropic subgroup  
\begin{equation*}
I=I_W=\{(\tau(x),x)\,|\,x\in A\}<K'/K\times R_C,
\end{equation*}
implying that $\Rep(W)\cong\mathcal{C}(I^\bot/I)$, which by assumption must be the (untwisted) Drinfeld double $\mathcal{D}_0(\Z_p)\cong\mathcal{C}(H_p)$, with $H_p\cong\Z_p\times\Z_p$ as groups.

The holomorphic \voa{}s $V$ and $V'$ are extensions of $W$, and so they correspond to the two (trivially intersecting) maximally isotropic subgroups of $I^\bot/I\cong H_p$, which are of order~$p$. This entails that $[R_C:A'][K'/K:A]=p^2$. Based on this equation, one can distinguish four cases (cf.\ \cite{HM23}), of which only two, namely I and IIa are relevant in the setting of inner neighbors:
\begin{enumerate}
\item[(I)] $[R_C:A']=1$ and $[K'/K:A]=p^2$ (so that $|K'/K|=p^2\,|R_C|$),
\item[(IIa)] $[R_C:A']=p$ and $[K'/K:A]=p$ (so that $|K'/K|=|R_C|$).
\end{enumerate}

\subsubsection*{Type I}
In this case, $A$ has index $p^2$ in $K'/K$ so that $H_p\cong I^\bot/I\cong A^\bot$ (and ${A'}^\bot$ is trivial). Hence, $A^\bot$ contains two maximally isotropic subgroups (each of order $p$), say generated by the elements $\gamma_1$ and $\gamma_2$ of norm~$0$. Then $V$ and $V'$ are of the form
\begin{equation*}
V^{(}{'}{}^{)}=\bigoplus_{\alpha+K\in A}C^{\tau(\alpha+K)}\otimes (V_{\alpha+K}\oplus V_{\gamma_i+\alpha+K})
\end{equation*}
with dual pairs $C\otimes(V_K\oplus V_{\gamma_i+K})=C\otimes V_{K\cup(\gamma_i+K)}$, $i=1,2$. The lattices
\begin{equation*}
L\coloneqq K\cup(\gamma_1+K)\quad\text{and}\quad M\coloneqq K\cup(\gamma_2+K)
\end{equation*}
are the associated lattices of $V$ and $V'$, respectively. They are $p$-extensions of $K$, the associated lattice of $W$.
\begin{rem}\label{rem:typeI}
We have shown that for inner neighborhood of type~I, the associated lattices $L$ and $M$ are usual lattice $p$-neighbors (and recall from \autoref{rem:samelatgen} that $L$ and $M$ are in the same lattice genus). Hence, classical lattice results (see \autoref{sec:lat}) yield restrictions on the existence of inner neighbors of type~I.
\end{rem}

\subsubsection*{Type IIa}
This case is slightly more complicated. Here, both $A$ and $A'$ have index~$p$ in $K'/K$ and $R_C$, respectively. In contrast to the previous case, $I^\bot$ is now strictly larger than $I+{A'}^\bot\times A^\bot$. Indeed, $I+{A'}^\bot\times A^\bot$ forms a (non-isotropic) subgroup of $I^\bot/I$ of order~$p$. On the other hand, there are two elements $\gamma_1=(c'_1,c_1)+I$ and $\gamma_2=(c'_2,c_2)+I$ of order~$p$ and norm~$0$ (generating isotropic subgroups) in $I^\bot/I\cong H_p$ that correspond to the extensions of $W$ to $V$ and $V'$, respectively,
\begin{equation*}
V^{(}{'}{}^{)}=\bigoplus_{\alpha+K\in A}\bigl(C^{\tau(\alpha+K)}\otimes V_{\alpha+K}\bigr)\oplus\bigl(C^{c'_i+\tau(\alpha+K)}\otimes V_{c_i+\alpha+K}\bigr),
\end{equation*}
$i=1,2$, each with dual pair $C\otimes V_K$. The lattice $K$ is the associated lattice of $V$, $V'$ and $W$ (compatible with \autoref{rem:samelatgen}). We can extend the anti-isometry $\tau\colon A\to A'$ to $\tilde\tau_i\colon K'/K\to R_C$ by demanding $\tilde\tau_i(c_i+K)=c'_i\in R_C$ so that
\begin{equation*}
V^{(}{'}{}^{)}=\bigoplus_{\alpha+K\in K'/K}C^{\tilde\tau_i(\alpha+K)}\otimes V_{\alpha+K}
\end{equation*}
for $i=1,2$, recovering the usual decomposition for holomorphic \voa{}s.

\begin{rem}\label{rem:typeIIa}
By definition, inner $p$-neighbors of type~IIa can only exist if $p$ divides the determinant of the associated lattice (which is the same for both neighbors, namely $K$). This heavily constrains the existence of inner $p$-neighbors of type~IIa.
\end{rem}

To summarize, for type~I, the extension from $W$ to $V$ and $V'$ is along the associated lattice $K$ (corresponding to usual lattice neighborhood), while the gluing map $\tau$ is unchanged. On the other hand, for type~IIa, the extension is along the gluing map $\tau$, while the associated lattice remains the same.

\medskip

Based on experimental evidence (see, e.g., \autoref{ex:schellekens3b}) we conjecture, in analogy to the corresponding result for lattices in \autoref{sec:lat}:
\begin{conj}\label{conj:hypgenpneighbor}
Let $V$ and $V'$ be \strat{}, holomorphic \voa{}s. Fix a prime $p$ large enough. Then $V$ and $V'$ are in the same hyperbolic genus if and only if they are iterated inner $p$-neighbors. (In fact, for the forward direction, we add the assumption that the genus of the associated lattice of $V$, which is also the genus of that of $V'$ by \autoref{rem:samelatgen}, does not contain more than one spinor genus.)
\end{conj}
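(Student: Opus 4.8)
The reverse implication is \autoref{prop:hypgenpneighbor1}, so the substance is the forward direction, and the strategy is to mimic the lattice-theoretic proof that a single spinor genus forms a single $p$-neighbor class, carrying the Heisenberg commutant along for the ride. Suppose $V$ and $V'$ are \strat{}, holomorphic and in the same hyperbolic genus. By the commutant description of the hyperbolic genus in \cite[\autoref*{MR1defi:commgenus}]{MR24a} their Heisenberg commutants are isomorphic to a common $C$, and by \autoref{rem:samelatgen} their associated lattices $L$ and $M$ lie in one lattice genus $\mathcal{G}$, of some rank $r$. For $p$ large we have $p\nmid d(L)=d(M)$, and then the analysis preceding \autoref{rem:typeI} gives two things: the bilinear form on $L/pL$ is nondegenerate, so every index-$p$ sublattice of $L$ has the form $L^h$ for a suitable $h\in\tfrac1pL\setminus L$, and moreover $K=L^h$ admits a second even overlattice of index $p$ (i.e.\ $L$ has a genuine $p$-neighbor through $K$) exactly when the relevant vector is isotropic modulo $p$, which is precisely the type-$0$ condition $\langle h,h\rangle/2\in\tfrac1p\Z$; also, every inner $p$-neighbor step between holomorphic \voa{}s with Heisenberg commutant $C$ is automatically of type~I, hence projects to an honest lattice $p$-neighbor step while $C$ and its gluing map are dragged along. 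If $r\le1$ then $\mathcal{G}$ is a single isometry class, so $L\cong M$, hence $V\cong V'$ and the chain is empty; the case $r=2$ is more delicate and would be handled separately, so assume $r\ge3$.

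First I would lift a lattice-theoretic neighbor chain. Since $\mathcal{G}$ is a single spinor genus, $r\ge3$, and $p$ is an odd prime with $p\nmid d(L)$, the results recalled in \autoref{sec:lat} (e.g.\ \cite{Sch91}) provide a chain of lattice $p$-neighbors $L=L_0,L_1,\dots,L_n=M$ without immediate backtracking; each step is a genuine $p$-neighbor, so by the previous paragraph it is realizable by an order-$p$, type-$0$ inner automorphism. Inductively, given a \strat{}, holomorphic $V_i$ with associated lattice $L_i$ and Heisenberg commutant $C$, let $g_i=e^{2\pi i(h_i)_0}\in\Inn(V_i)$ be such an automorphism with $L_i^{h_i}=K_i$ the index-$p$ sublattice connecting $L_i$ to $L_{i+1}$; then $W_i=V_i^{g_i}$ is \strat{} by \cite{Miy15,CM16}, the orbifold $V_{i+1}\coloneqq V_i^{\orb(g_i)}$ is \strat{}, holomorphic with Heisenberg commutant $C$ by \cite{EMS20a,Moe16}, and by the explicit type-I formula for the associated lattice of an inner-automorphism orbifold in \autoref{subsec:primeinner} its associated lattice is the $p$-neighbor of $L_i$ through $K_i$ other than $L_i$, namely $L_{i+1}$. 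Thus $V_i$ and $V_{i+1}$ are inner $p$-neighbors, and after $n$ steps we reach a \strat{}, holomorphic \voa{} $V_n$ with the same Heisenberg commutant $C$ and the same associated lattice $M$ as $V'$.

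The remaining point, which I expect to be the real obstacle, is to upgrade ``$V_n$ has the same Heisenberg commutant and associated lattice as $V'$'' to ``$V_n\cong V'$''. In general these two \voa{}s can still differ by their gluing data, i.e.\ by the anti-isometry $M'/M\to R_C$ used to assemble them from the dual pair $C\otimes V_M$; passing to isomorphism classes only quotients this data by the images of $O(M)$ and $\Aut(C)$, which need not act transitively. One must therefore additionally show that the group of ``monodromy'' moves obtained from closed chains of type-I inner $p$-neighbor steps $V_n\to\cdots\to V_n$ (coming from index-$p$ sublattices $K\subset M$ both of whose $p$-neighbors are isometric to $M$), together with $O(M)$ and $\Aut(C)$, acts transitively on the admissible gluing maps over each lattice in $\mathcal{G}$; equivalently, that the graph whose vertices are isomorphism classes of \strat{}, holomorphic \voa{}s with Heisenberg commutant $C$ and associated lattice in $\mathcal{G}$, with inner $p$-neighbor edges, is connected. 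This is the exact analog, with an added level structure recording the gluing, of the classical fact that spinor genus and $p$-neighbor class coincide for lattices of rank $\ge3$, and I expect a complete proof to require the corresponding strong-approximation input for the relevant spin group with a congruence condition at $p$ — which is why the statement is only conjectural; \autoref{prop:hypgenpneighbor2} is what one can extract from the lattice side without fully settling this last point.
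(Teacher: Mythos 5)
The statement you are trying to prove is a \emph{conjecture} in the paper: the authors explicitly present it as based on experimental evidence (\autoref{ex:schellekens3b}) and prove only the weaker \autoref{prop:hypgenpneighbor2}, in which the neighbor steps are allowed to be non-inner and to use different primes. So there is no proof in the paper to compare against, and your proposal does not close the gap either --- you say so yourself in the last paragraph. What you have written is, in effect, a reconstruction of the strategy behind \autoref{prop:hypgenpneighbor2}: step one (lifting a lattice $p$-neighbor chain $L=L_0,\dots,L_n=M$ to a chain of type-I inner $p$-neighbor VOA steps, using $\spn_\Z((1/p)M,M')=(1/p)M'$ for $p\nmid d(M)$ to find a type-$0$ representative $h\in(1/p)M$) matches the first half of the paper's proof of that proposition almost exactly, and your identification of the remaining obstruction --- that $V_n$ and $V'$ may still differ by the anti-isometry $L'/L\to R_C$, and that one must show the inner $p$-neighbor moves at a single fixed large prime act transitively on gluing data --- is precisely where the paper retreats to \autoref{lem:isotropicneighbor}, which only yields non-inner steps at primes dividing $|K'/K\times R_C|$. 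The paper even flags (after the proof of \autoref{prop:hypgenpneighbor2}, and concretely in genus J of \autoref{ex:schellekens3b}) that inner order-$2$ and order-$3$ moves fail to connect the two gluings there, while type-I inner $p$-neighbors for $p\geq 5$ do --- via self-neighbor moves on the unique lattice in its genus --- which is exactly the ``monodromy'' mechanism you describe. So your diagnosis of the open point is accurate and consistent with the paper's own discussion.

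One genuine error worth flagging: in your reduction you assert that if $\rk(L)\leq 1$ then $L\cong M$ implies $V\cong V'$ and the chain is empty. This is false for the same reason you later identify as the main obstacle: two strongly rational, holomorphic VOAs with isomorphic Heisenberg commutants and isomorphic associated lattices need not be isomorphic, since they can differ by the gluing anti-isometry modulo the actions of $\O(L)$ and $\Aut(C)$. Genus J in \autoref{ex:schellekens3b} is exactly an instance of two non-isomorphic VOAs sharing the unique lattice in its genus. So even the ``degenerate'' cases of your induction cannot be dispatched without resolving the gluing-transitivity question, and the low-rank shortcut should be removed.
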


The main result of this section will be a somewhat weaker statement on the relation between hyperbolic equivalence and (inner) neighborhood.

Recall that in \autoref{prop:hypgenpneighbor1} we already showed that inner neighborhood of holomorphic \voa{}s preserves the hyperbolic genus (cf.\ \autoref{rem:neighborbulk}). Our goal is to prove a statement in the converse direction.

We first need the following statement on metric groups:
\begin{lem}\label{lem:isotropicneighbor}
Let $D$ be a metric group. Suppose there are self-dual, isotropic subgroups $I$ and $I'$.
Then there is a sequence of self-dual, isotropic subgroups $I_0=I,I_1,\dots,I_n=I'$ such that $[I_i:I_i\cap I_{i+1}]=[I_{i+1}:I_i\cap I_{i+1}]$ is prime for all $i=0,\dots,n-1$.
\end{lem}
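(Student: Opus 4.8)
The plan is to induct on the order $|I \cdot I' / (I \cap I')|$, or more precisely on some measure of how far $I$ and $I'$ are from coinciding, and to produce a single intermediate self-dual isotropic subgroup $I_1$ such that $[I : I \cap I_1]$ is prime, after which one recurses on the pair $(I_1, I')$. The key observation is a local-to-global reduction: since $D$ decomposes as an orthogonal direct sum of its $q$-primary components $D = \bigoplus_q D_q$ over prime powers $q$, any isotropic (resp. self-dual isotropic) subgroup likewise decomposes compatibly, so it suffices to treat the case where $D$ is a $p$-group for a single prime $p$; the general sequence is then obtained by concatenating the sequences built in each primary component (padding with repetitions so lengths match). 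Within a $p$-group, the claim is that one can always find a self-dual isotropic $I_1$ with $[I:I\cap I_1] = p$, which gives the sharper bound that each step index is exactly $p$, not merely prime.

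First I would reduce to the $p$-primary case as above. Then, working in a $p$-group $D$, suppose $I \neq I'$. The plan is to pick an element $x \in I' \setminus I$ of order $p$ (possible since $I' \not\subset I$ and we can take a suitable $p$-th power), and to modify $I$ ``in the direction of $x$''. Concretely: consider $H \coloneqq (I \cap x^\perp)$, which has index $1$ or $p$ in $I$; since $x \notin I = I^\perp$ there is some $y \in I$ with $B(x,y) \neq 0$, so in fact $[I : H] = p$. Now $H$ is isotropic, and $H^\perp \supseteq \langle H, x, y\rangle$; one checks $H^\perp / H$ is a metric group of order $p^2$ containing the isotropic element $x + H$ (using $Q(x) = 0$), hence $H^\perp/H \cong \mathcal{C}(H_p)$-type data, i.e. it is a hyperbolic plane over $\Z_p$, which contains exactly two self-dual isotropic subgroups of order $p$: one is $\langle y \rangle + H)/H$ (giving back $I$), and the other, call its preimage $I_1$, is a new self-dual isotropic subgroup of $D$ with $I \cap I_1 = H$, so $[I : I\cap I_1] = [I_1 : I \cap I_1] = p$, and crucially $x \in I_1$.

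The final step is to verify that this process terminates: since $x \in I_1$ but $x \notin I$, passing from $I$ to $I_1$ strictly increases $|I \cap I'|$ (we have $I \cap I' \subseteq H \cap I' \subsetneq$ ... — more carefully, $\langle I \cap I', x\rangle \subseteq I_1 \cap I'$ properly contains $I \cap I'$), so after finitely many steps we reach $I'$ itself. Assembling: the induction yields $I = I_0, I_1, \dots, I_n = I'$ with all indices equal to $p$ in the primary case, and concatenation over primes gives the general statement with prime indices. The main obstacle I anticipate is the bookkeeping in the reduction step — specifically, confirming that $H^\perp/H$ is genuinely a nondegenerate hyperbolic plane $\mathbb{Z}_p \oplus \mathbb{Z}_p$ (one must rule out $H^\perp/H \cong \mathbb{Z}_{p^2}$ or $\mathbb{Z}_p$ with a nonzero quadratic form, using that it contains a self-dual isotropic subgroup, namely $I/H$) and that it therefore has precisely two self-dual isotropic lines, one of which differs from $I/H$; this is exactly the structure already exploited in \autoref{rem:type0} and the surrounding discussion of $H_p$, so it should go through cleanly, but it is where the argument has to be written carefully rather than waved through.
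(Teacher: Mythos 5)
Your proposal is correct, and it is more than the paper actually writes down: the paper's argument is only a sketch that sets up Kneser-style $p$-distances $d_p(I,I')$ and then defers the ``crucial step'' (finding a $p$-neighbor $\tilde I$ of $I$ with $d_p(\tilde I,I')<d_p(I,I')$ and all other distances unchanged) to the analogous lattice argument in Abschnitt~28 of Kneser's book. Your route has the same inductive skeleton --- handle one prime at a time, decrease the distance to $I'$ by a single neighbor step --- but you replace the distance bookkeeping by the orthogonal primary decomposition $D=\bigoplus_p D_p$, under which self-dual isotropic subgroups split as $I=\bigoplus_p I_p$ and indices multiply; this makes the ``other distances unchanged'' clause automatic and is a genuine simplification available for finite metric groups that has no analogue for lattices. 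You then supply the explicit neighbor step: $H=I\cap x^\perp$ has index $p$ in $I$, $I_1=\langle H,x\rangle$ is isotropic of order $|I|=\sqrt{|D|}$ and hence self-dual by the order count alone (so the classification of $H^\perp/H$ as a hyperbolic plane, while true given the two distinct isotropic lines $I/H$ and $\langle x+H\rangle$, is not actually needed), $I\cap I_1=H$, and $I\cap I'\subseteq x^\perp$ forces $I_1\cap I'\supsetneq I\cap I'$, so the process terminates.

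One small repair: you cannot always choose $x\in I'\setminus I$ of order $p$ in $D$. For example, in $D=(\Z/4)^2$ with $Q(a,b)=ab/4$, take $I=\langle(2,0),(0,2)\rangle$ and $I'=\langle(0,1)\rangle$; then $I'\setminus I=\{(0,1),(0,3)\}$ consists of elements of order $4$, and the unique order-$2$ element of $I'$ already lies in $I$. What you actually need, and what your construction uses, is an element $x\in I'\setminus I$ with $px\in I$: take any $z\in I'\setminus I$, let $k\geq1$ be minimal with $p^kz\in I$, and set $x=p^{k-1}z$. Then $Q(x)=0$ since $x\in I'$, and $px\in I=I^\perp$ gives $pB(x,w)=B(px,w)=0$ for all $w\in I$, so $B(x,\cdot)$ maps $I$ into $\tfrac1p\Z/\Z$ and $[I:H]=p$ as you claim; moreover $px\in I\cap x^\perp=H$, so $[\langle H,x\rangle:H]=p$. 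With this substitution every step of your argument goes through verbatim.
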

Two self-dual, isotropic subgroups $I$ and $I'$ with $[I:I\cap I']=[I':I\cap I']=p$ for some prime $p$ are called \emph{$p$-neighbors}. We note that the primes appearing in the above lemma are not necessarily all the same. On the other hand, they must certainly divide $|D|$.
\begin{proof}[Sketch of Proof]
The proof is very similar to the one for the analogous result for lattices, as given in Abschnitt~28 of \cite{Kne02}. Given two self-dual, isotropic subgroups $I$ and $I'$, denote by $m(I,I')\coloneqq[I:I\cap I']=[I':I\cap I']$ the index of their intersection in each of them. Then, for any prime $p$, the \emph{$p$-distance} $d_p(I,I')$ is defined as $d_p(I,I')=s\in\N$ with $p^s\parallel m(I,I')$. Clearly, $d_p(I,I')>0$ is only possible if $p$ divides $|D|$, which is the case for finitely many primes $p$.

The crucial step is to show that for any fixed prime~$p$, there is a $p$-neighbor $\tilde{I}$ of~$I$ such that $d_p(\tilde{I},I')<d_p(I,I')$, while not changing the $p'$-distances $d_{p'}(\tilde{I},I')$ for all primes $p'\neq p$.

After repeating this step $d_p(I,I')$ many times for each prime $p$, one arrives at a self-dual, isotropic subgroup $\tilde{I}$ of $D$ that is an iterated neighbor of $I$ and satisfies $d_p(\tilde{I},I')=0$ for all primes $p$. But then $\tilde{I}=I$, and we are done.
\end{proof}

We come to the main result:

\begin{prop}\label{prop:hypgenpneighbor2}
Let $V$ and $V'$ be \strat{}, holomorphic \voa{}s. If $V$ and $V'$ are in the same hyperbolic genus, then they are iterated (not necessarily inner) $p$-neighbors for possibly different primes $p$.
\end{prop}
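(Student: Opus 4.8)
The plan is to reduce the statement to \autoref{lem:isotropicneighbor} by way of the commutant description of the hyperbolic genus. First I would invoke the alternative characterization of the hyperbolic genus from \cite{MR24a} to conclude that $V$ and $V'$ have isomorphic Heisenberg commutants $C$ and that their associated lattices $L$ and $M$ lie in the same lattice genus (cf.\ \autoref{rem:samelatgen}); in particular $L$ and $M$ are rationally equivalent, so after fixing an isometry $L\otimes_\Z\Q\cong M\otimes_\Z\Q$ and setting $K\coloneqq L\cap M$ one obtains a common full-rank sublattice. Then $W_0\coloneqq C\otimes V_K$ is a \strat{} \voa{} which, after choosing suitable realizations, is a conformal subalgebra of both $V$ and $V'$ (since $V_K\subset V_L$ and $V_K\subset V_M$), and $\Rep(W_0)\cong\Rep(C)\boxtimes\mathcal{C}(K'/K)\cong\mathcal{C}(D)$ is pointed, where $D\coloneqq R_C\oplus K'/K$.

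Next I would translate the conformal extensions $W_0\subset V$ and $W_0\subset V'$ into condensable algebras in $\Rep(W_0)$, which by pointedness correspond to isotropic subgroups $I_V,I_{V'}<D$; since $V$ and $V'$ are holomorphic, $\mathcal{C}(I_V^\perp/I_V)\cong\Vect\cong\mathcal{C}(I_{V'}^\perp/I_{V'})$, so $I_V$ and $I_{V'}$ are self-dual. \autoref{lem:isotropicneighbor} then yields a chain of self-dual, isotropic subgroups $I_V=I_0,I_1,\dots,I_n=I_{V'}$ with $[I_j:I_j\cap I_{j+1}]=[I_{j+1}:I_j\cap I_{j+1}]=:p_j$ prime. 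Let $W_j$ and $V_j$ denote the conformal extensions of $W_0$ corresponding to $I_j\cap I_{j+1}$ and $I_j$, respectively. For each $j$, the subquotient $(I_j\cap I_{j+1})^\perp/(I_j\cap I_{j+1})$ has order $p_j^2$ and contains the isotropic subgroup $I_j/(I_j\cap I_{j+1})$ of order $p_j$, hence is the hyperbolic metric group $H_{p_j}$; so $\Rep(W_j)\cong\mathcal{C}(H_{p_j})\cong\mathcal{D}_0(\Z_{p_j})$, and both $V_j$ and $V_{j+1}$ are conformal extensions of $W_j$ by a simple-current module of order $p_j$. Thus, once one knows that all the $W_j$ and $V_j$ are \strat{}, $V_j$ and $V_{j+1}$ are $p_j$-neighbors in the sense of \autoref{defi:neighbor}, and the chain $V=V_0,V_1,\dots,V_n=V'$ exhibits $V$ and $V'$ as iterated $p$-neighbors for the (possibly distinct) primes $p_j\mid|D|$.

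The one genuinely non-formal point, which I expect to be the main obstacle, is verifying that the intermediate \voa{}s $V_j$ (and with them $W_j$) are of CFT-type and therefore \strat{}: \autoref{prop:CMSY} gives this once $V_j$ is known to be simple and $\N$-graded, and simplicity is immediate, but $\N$-gradedness is not automatic for extensions of the possibly non-positive \voa{} $W_0$. I would resolve this by induction along the chain. The base case $V_0=V$ holds by hypothesis. Assuming $V_j$ is \strat{}, the order-$p_j$ simple-current extension $W_j\subset V_j$ realizes $W_j$ as the fixed-point subalgebra $V_j^{g_j}$ of an order-$p_j$ automorphism $g_j$, which is \strat{} by \cite{Miy15,CM16} and of CFT-type since $(V_j^{g_j})_0=\C\vac$; moreover $\Rep(W_j)\cong\mathcal{D}_0(\Z_{p_j})$ is untwisted, so $W_j$ admits exactly two holomorphic extensions, corresponding to the two isotropic lines of $H_{p_j}$, namely $V_j$ itself and the cyclic orbifold construction $V_j^{\orb(g_j)}$, both \strat{} and of CFT-type by \cite{EMS20a,Moe16,DRX17}. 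Since $V_{j+1}$ is a holomorphic extension of $W_j$, it must be one of these two, closing the induction. A secondary point to check carefully is the proof of \autoref{lem:isotropicneighbor} itself (the metric-group analogue of Kneser's neighbor argument), but this is stated separately and may be assumed.
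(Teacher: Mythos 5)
Your proof is correct, but it is organized differently from the paper's. The paper proceeds in two stages: it first aligns the associated lattices by lattice $p$-neighborhood for a single fixed prime $p\nmid d(L)$, lifting each lattice step to an inner type-I neighbor step via an explicit inner automorphism $e^{2\pi i h_0}$, and only then --- once $V$ and the intermediate $\tilde V$ share the dual pair $C\otimes V_L$ --- applies \autoref{lem:isotropicneighbor} to the two self-dual isotropic subgroups encoding the gluing maps. You instead collapse both discrepancies (lattice and gluing) into a single application of \autoref{lem:isotropicneighbor} over the smaller common subalgebra $C\otimes V_{L\cap M}$, whose representation category is still pointed and pseudo-unitary, so the lemma applies verbatim. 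Your route is shorter and, pleasingly, makes explicit a point the paper leaves implicit: that the intermediate holomorphic extensions are strongly rational and of CFT-type. Your induction through cyclic orbifold constructions handles this cleanly ($W_j=V_j^{g_j}$ is \strat{} and of CFT-type by \cite{Miy15,CM16} and $W_j\subset V_j$, its Drinfeld double is untwisted because the subquotient $H_{p_j}$ has two distinct isotropic lines, and $V_{j+1}=V_j^{\orb(g_j)}$ is then \strat{} by the cyclic orbifold theory of \cite{EMS20a,Moe16}). What the paper's two-stage organization buys, and what your argument discards, is the finer bookkeeping of which steps are inner and of which type (I versus IIa): the lattice-alignment stage can be carried out with inner type-I neighbors at one fixed prime, and this information feeds directly into \autoref{conj:hypgenpneighbor} and the discussion following the proof about when the whole chain can be taken inner. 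For the statement as literally posed --- iterated, not necessarily inner, neighbors for possibly different primes --- your argument is complete.
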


Of course, we already hypothesized at the end of \autoref{subsec:orbneighborhood} that allowing iterated $p$-neighbors, possibly non-inner and for possibly different primes $p$, actually connects all \voa{}s in a \emph{bulk} genus, but this statement seems currently out of reach, and would imply \autoref{conj:anisobulkorb} and \autoref{conj:wittorb}.

\begin{proof}
Suppose that $V$ and $V'$ are hyperbolically equivalent. Then, in particular, they have the same Heisenberg commutant $C$ and their associated lattices $L$ and $M$, respectively, are in the same lattice genus by \cite[\autoref*{MR1thm:hypcomm}]{MR24a}.

We proceed in two steps. First, recall from \autoref{sec:lat} that $L$ and $M$ are iterated $p$-neighbors for any fixed prime $p$ not dividing $d(L)=d(M)$. We suppose without loss of generality that $L$ and $M$ are actual and not iterated $p$-neighbors. If not, we would have to repeat the following argument a finite number of times. The lattices $L$ and $M$ being (non-trivial) $p$-neighbors means that they have a common index-$p$ sublattice $K$. This lattice $K$ can be written as $K=M^h=\{\beta\in M\,|\,\langle\beta,h\rangle\in\Z\}$ for some non-trivial $h+M'\in(1/p)M'/M'$, noting that shifts by elements of $M'$ do not affect $K=M^h$. In fact, for those primes not dividing $d(M)$ one can show that $\spn_\Z((1/p)M,M')=(1/p)M'$. 
This means that we can find a representative $h$ of this class modulo $M'$ such that $h\in(1/p)M$. Moreover, the fact hat $K=M^h$ extends to the two $p$-neighbors $L$ and $M$ (which are not trivially the same) means that $K'/K\cong H_p$ and hence that $\langle h,h\rangle\in(1/p)\Z$.

We then define the inner automorphism $g\coloneqq e^{2\pi i h_0}\in\Aut(V')$. As, $h\in(1/p)M$ and $\langle h,h\rangle\in(1/p)\Z$, $g$ has order $p$ and type~$0$ on $V'$. Hence, $g$ defines an inner $p$-neighbor, say $\smash{\tilde{V}}$, of the \voa{} $V'$. Moreover, we assumed that $h\in(1/p)M'\setminus M'$, meaning that this neighborhood is of type~I. By construction, $\smash{\tilde{V}={V'}^{\orb(g)}}$ has as associated lattice the $p$-neighbor of $M$ defined by $h$, which is the associated lattice $L$ of $V$.

So far, we proved that $V'$ is connected by iterated inner $p$-neigh\-bor\-hood (for a fixed prime $p$, of type~I) to the \strat{}, holomorphic \voa{}~$\tilde{V}$. Comparing $V$ and $\tilde{V}$, we note that they are both simple-current extensions of the dual pair $C\otimes V_L$, i.e.
\begin{equation*}
V=\!\!\bigoplus_{\alpha+L\in L'/L}\!\! C^{\tau(\alpha+L)}\otimes V_{\alpha+L},\quad\tilde{V}=\!\!\bigoplus_{\alpha+L\in L'/L}\!\! C^{\sigma(\alpha+L)}\otimes V_{\alpha+L}
\end{equation*}
for some anti-isometries $\tau,\sigma\colon L'/L\to R_C$. It remains to prove that $V$ and $\tilde{V}$ are iterated $p$-neighbors, now possibly for several different primes $p$.

We denote by $I=\{(\tau(x),x)\,|\,x\in A\}$ and $\tilde{I}=\{(\sigma(x),x)\,|\,x\in A\}$ the self-dual, isotropic subgroups of $K'/K\times R_C$ corresponding to $V$ and $\tilde{V}$, respectively. By the theory of simple-current extensions \cite{Yam04}, the automorphisms of $V$ that fix $C\otimes V_L$ pointwise are naturally isomorphic to $\Aut(V)_{C\otimes V_L}\cong\hat{I}\eqqcolon G$, the dual group of $I$, and analogously for $\tilde{V}$.
Moreover, $V^G=C\otimes V_L={\tilde{V}}^{\tilde{G}}$. This shows that $\tilde{V}$ is an orbifold construction of $V$ for the abelian group $G<\Aut(V)$.

In fact, as both $V$ and $\tilde{V}$ are simple-current extension of the same \voa{} $C\otimes V_L$, by interpolating the isotropic subgroups $I$ and $\tilde{I}$ as described in \autoref{lem:isotropicneighbor} we deduce that the orbifold construction of $\tilde{V}$ from $V$ by the abelian group $G$ can be split up into a sequence of $\Z_p$-orbifold constructions (or $p$-neigh\-bor\-hood) for some primes $p$ dividing $|G|$. This completes the proof.
\end{proof}

We note that the $\Z_p$-orbifold constructions in the very last step are not necessarily inner. Or in other words, the intermediate isotropic subgroups do not have to correspond to holomorphic extensions of $V$ with Heisenberg commutant $C$ (cf.\ \cite[\autoref*{MR1rem:wrongext}]{MR24a}), even though both $I$ and $\tilde{I}$ do. We shall actually observe this phenomenon in \autoref{ex:schellekens3b} for the hyperbolic genus J, where even though $|K'/K\times R_C|$ and hence any $|G|=|I|$ has only prime factors $2$ and $3$, the hyperbolic genus is not connected when combining inner orbifolds of order $2$ and $3$.

On the other hand, if the $\Z_p$-orbifold constructions in the sequence in the last step can be chosen to be all inner, meaning that $C$ is the Heisenberg commutant in each intermediate step, then they are already of type~IIa.
Indeed, by \autoref{prop:neightwocases}, each step describes an inner $p$-neigh\-bor\-hood of type I or IIa, but as $C\otimes V_L$ is fixed point-wise, and $C$ is the Heisenberg commutant, it cannot be of type I.

%%%%%%%%%%%%%%%%%%%%%%%%%%%%%%%

\subsection{Examples}\label{subsec:neighborexamples}

We study some examples of neighborhood of holomorphic \voa{}s. As a good testing ground, we look at the bulk genus $(\Vect,24)$, i.e.\ the \strat{}, holomorphic \voa{}s of central charge $24$, which we already discussed in \cite[\autoref*{MR1ex:schellekens}]{MR24a} and \cite[\autoref*{MR1ex:schellekens2}]{MR24a} in the context of bulk and hyperbolic equivalence, respectively.
(We also recall that in \autoref{ex:c=16orbifolds}, we already discussed $2$-neighborhood for the case $c=16$, which is much simpler.) We begin by considering not necessarily inner neighborhood.

\begin{ex}[Schellekens \VOA{}s, Neighborhood]\label{ex:schellekens3}~
\begin{enumerate}[wide]
\item We recalled in \autoref{ex:deepholes} that all the \voa{}s in the bulk genus $(\Vect,24)$ are orbifold equivalent if we ignore potential ``fake'' moonshine modules, i.e.\ assuming \autoref{conj:moonshineuniqueness}. In fact, each such \voa{} is a direct (often non-inner) $n$-neighbor of the Leech lattice \voa{} $V_\Lambda$ for some $n\in\Ns$.

\smallskip

\item On the other hand, we argued in \autoref{rem:fakemoonshine} that if there were any ``fake'' moonshine modules, they could not be connected to the Leech lattice \voa{} or any other of the $71$ Schellekens \voa{}s by (repeated) cyclic orbifold constructions.

\smallskip

\item Another related statement is the fact that for any fixed prime $p$, the $p$-neigh\-bor\-hood graph of $(\Vect,24)$ is disconnected, independently of whether there are ``fake'' moonshine modules or not. Indeed, this was already rigorously observed in \cite{Mon98} for $p=2,3$. The result for $p=2$ was confirmed in \cite{HM23}, where, in fact, the complete $2$-neigh\-bor\-hood graph was computed.

If one only wants to prove the disconnectedness, it suffices to look at the weight-$1$ Lie algebras of the Schellekens \voa{}s and consider the classification of possible fixed-point Lie subalgebras from \cite{Kac90}. It is then not difficult to show that the $p$-neigh\-bor\-hood graph is disconnected for all primes $p$.

Indeed, as the outer automorphism groups of all Schellekens \voa{}s are finite \cite{BLS23}, for large enough primes $p$ there are only inner $\Z_p$-orbifold constructions, and they preserve the Heisenberg commutant and hence the hyperbolic genus (see \autoref{prop:hypgenpneighbor1}), of which there is more than one in $(\Vect,24)$. Hence, we only need to explicitly check the assertion for finitely many primes $p$, which is readily performed.
\end{enumerate}
\end{ex}

In view of \autoref{conj:hypgenpneighbor}, we turn our attention to inner neighborhood. Looking at \autoref{rem:typeI} and \autoref{rem:typeIIa}, and at the proof of \autoref{prop:hypgenpneighbor2}, there are
two effects that can cause a hyperbolic genus to split up into more than one connected component under inner $p$-neigh\-bor\-hood for a fixed prime $p$:
\begin{enumerate}[wide]
\item The genus of the associated lattices may not be not connected under lattice $p$-neigh\-bor\-hood for certain primes $p$ (see \autoref{sec:lat}), which typically can only happen if $p$ divides the determinant
(or when the spinor genus and genus do not agree). On the other hand, every lattice in the genus must appear as associated lattice of some \voa{} in the hyperbolic genus. Hence, recalling that type-I inner $p$-neigh\-bor\-hood simply corresponds to lattice $p$-neigh\-bor\-hood and that type-IIa inner neighborhood does not change the associated lattice at all, we conclude that the hyperbolic genus will also be disconnected for these (finitely many) primes $p$.

\item For a fixed associated lattice $L$, different anti-isometries $\tau\colon L'/L\to R_C$ may result in non-isomorphic holomorphic \voa{}s (cf.\ the discussion at the end of \cite[\autoref*{MR1ex:schellekens2}]{MR24a}). A more precise account of the exact number of non-isomorphic extensions of $C\otimes V_L$ is given in \cite{Hoe17,HM23}. We saw that these holomorphic extensions are connected by a chain of $p$-neigh\-bor\-hood for certain (possibly different) primes $p$ that divide $d(L)$. These steps are not necessarily inner, but if they are all inner, they are of type~IIa. In any case, this may cause the hyperbolic genus to become disconnected under inner $p$-neigh\-bor\-hood.

On the other hand, the holomorphic \voa{}s may also be connected by type-I inner neighborhood,
as we shall see in the following, and it seems
that this causes the hyperbolic genus to become connected under inner $p$-neigh\-bor\-hood if the prime $p$ is large enough (cf.\ \autoref{conj:hypgenpneighbor}).
\end{enumerate}

\begin{ex}[Schellekens \VOA{}s, Inner Neighborhood]\label{ex:schellekens3b}
We investigate the Schellekens \voa{}s from the point of view of inner $p$-neigh\-bor\-hood for $p$ prime, especially in view of \autoref{conj:hypgenpneighbor}.

We find that each of the hyperbolic genera A, B, C, E, F, G, H, I, K and L is connected under inner $p$-neigh\-bor\-hood for all primes $p$. For all these hyperbolic genera, there is only one \voa{} for each associated lattice so that the connectedness of the lattice genus under $p$-neigh\-bor\-hood implies the connectedness of the hyperbolic genus. The former is the case for all primes $p$ that are large enough (see \autoref{sec:lat}) so that it suffices to explicitly check the connectedness for finitely many primes.

\smallskip

In the following, we study the more interesting genera D and J in detail. The hyperbolic genus D consists of nine \voa{}s, of which six have one associated lattice and three another one in the genus $\gDD$, denoted by D1 and D2, respectively, in \cite{Hoe17}.

\setlist[description]{font=\normalfont\itshape}
\begin{description}[wide]
\item[Order 2] The lattice genus $\gDD$ is not connected under $2$-neigh\-bor\-hood. In fact, there are no $2$-neighbors whatsoever, not even self-neighbors. Consequently, by \autoref{rem:typeI}, there are no inner $2$-neighbors of type~I for the nine holomorphic \voa{}s in this hyperbolic genus.

On the other hand, because $2$ divides the order of the metric group $2_{\II}^{-10}4_{\II}^{-2}$, it is plausible (cf.\ \autoref{rem:typeIIa}) and does in fact happen that the \voa{}s in each subgenus D1 or D2 are connected under inner $2$-neigh\-bor\-hood of type~IIa.
But D1 and D2 remain disconnected from one another.
\item[Order $\geq$3] The lattice genus $\gDD$ is connected under $p$-neigh\-bor\-hood for primes $p\geq3$.
Hence, there must be a type-I orbifold construction of order~$p$ that connects a \voa{} from D1 with one from D2. In fact, we observe that all the \voa{}s in the hyperbolic genus D become connected under inner $p$-neigh\-bor\-hood of type~I for $p\geq3$.

On the other hand, as primes $p\neq2$ do not appear in the metric group $2_{\II}^{-10}4_{\II}^{-2}$, there are no inner $p$-neighbors of type~IIa at all by \autoref{rem:typeIIa}.
\end{description}

\smallskip

We then come to the hyperbolic genus J. It consists of two \voa{}s that both have the same associated lattice, which is the unique lattice in the genus $\gJJ$. Even though lattice neighborhood does not matter here, we note that the unique lattice in the genus may or may not be a self-neighbor depending on the prime $p$.

\begin{description}[wide]
\item[Order 2, 3] There are no lattice $2$- or $3$-neighbors whatsoever (not even self-neighbors). Consequently, there are no inner $2$- or $3$-neighbors of type~I for the two \voa{}s in the genus.

On the other hand, as both primes $2$ and $3$ appear in the metric group $2_{\II}^{+4}4_{\II}^{-2}3^{+5}$, it is plausible that there are inner $2$- or $3$-neighbors of type~IIa. However, the hyperbolic genus J stays disconnected (i.e.\ these neighbors are all self-neighbors).
\item[Order $\geq$5] For primes $p\geq 5$ there are lattice $p$-neighbors, which must be self-neighbors as there is only one lattice in the genus.
And in fact, we observe that there are now inner $p$-neighbors of type~I. Moreover, we observe that these $p$-neighbors connect the two \voa{}s in the hyperbolic genus.

On the other hand, as primes $p\geq5$ do not appear in the metric group $2_{\II}^{+4}4_{\II}^{-2}3^{+5}$, there are no inner $p$-neighbors of type~IIa at all.
\end{description}
\end{ex}

%%%%%%%%%%%%%%%%%%%%%%%%%%%%%%%
%%%%%%%%%%%%%%%%%%%%%%%%%%%%%%%
%%%%%%%%%%%%%%%%%%%%%%%%%%%%%%%

\section{Future Directions}\label{sec:future}

We conclude by listing a few directions for future research.
\begin{enumerate}[wide]
\item In \autoref{exph:SU(2)1symmetries}, we have argued that every finite symmetry of the $\mathrm{SU}(2)_1$ RCFT is invertible. The methods employed there can be used to systematically classify the unitary, finite symmetries of all $(c_L,c_R)=(1,1)$ unitary RCFTs, building on the work of \cite{FGRS07,TW24}.
\item It is desirable to generalize symmetry-subalgebra duality and quantum Galois theory so that they apply to general \strat{} \voa{}s, as opposed to just those that are holomorphic.
\item Certainly the strongest conjectures of this manuscript are \autoref{conj:wittorb} and \autoref{conj:wittinterface}. It would be fruitful to develop any techniques which may eventually lead to proofs. Even discovering a counterexample, e.g., to \autoref{conj:wittinterface} would be extremely interesting: by \autoref{thmph:recharacterizationmainconj}, it would lead to a CFT with vanishing gravitational anomaly that does not admit a boundary condition with finite $g$-function, suggesting that there may be further hitherto undiscovered anomalies (or obstructions) to the existence of a boundary condition.
\item It would be interesting to develop the mathematics of gauging algebra objects of finite quantum dimension inside categories of topological line operators which have infinite rank.
\item In \autoref{sec:neighborhood} we studied $p$-neighborhood of holomorphic \voa{}s. In \cite{CL19}, Chapter~5,
the authors connect the adjacency matrices of the $p$-neighborhood graphs of the even, unimodular lattices in a genus to certain Hecke operators acting on their theta series. It would be insightful to see if one can analogously define Hecke operators for the \strat{}, holomorphic \voa{}s in a bulk or hyperbolic genus. This would hint at a local theory of \voa{}s. See \cite{FM23,FM23b,FM24}, for recent developments on $p$-adic vertex algebras. See also the discussion in \cite[\autoref*{MR1sec:future}]{MR24a} regarding a possible connection to a hypothetical third version of a \voa{} genus.
\end{enumerate}

%%%%%%%%%%%%%%%%%%%%%%%%%%%%%%%
%%%%%%%%%%%%%%%%%%%%%%%%%%%%%%%

\bibliographystyle{alpha_noseriescomma}
\bibliography{references}{}

\end{document}